\def\cqedsymbol{\ifmmode$\lrcorner$\else{\unskip\nobreak\hfil
\penalty50\hskip1em\null\nobreak\hfil$\lrcorner$
\parfillskip=0pt\finalhyphendemerits=0\endgraf}\fi} 
\newcommand{\cqed}{\renewcommand{\qed}{\cqedsymbol}}
\newtheorem{theorem}{Theorem}
\newtheorem{lemma}{Lemma}
\newtheorem{proposition}[lemma]{Proposition}
\newtheorem{observation}[lemma]{Observation}
\newtheorem{corollary}[lemma]{Corollary}
\newtheorem{claim}{Claim}
\theoremstyle{definition}
\newtheorem{definition}{Definition}
\newcommand{\pmca}{(PMC1)}
\newcommand{\pmcb}{(PMC2)}
\newcommand{\Oh}{\mathcal{O}}
\newcommand{\cc}{\mathsf{cc}}
\newcommand{\Cc}{\mathcal{C}}
\newcommand{\Dd}{\mathcal{D}}
\newcommand{\Ss}{\mathcal{S}}
\newcommand{\Ff}{\mathcal{F}}
\newcommand{\Gg}{\mathcal{G}}
\newcommand{\Xx}{\mathcal{X}}
\newcommand{\Yy}{\mathcal{Y}}
\newcommand{\Om}{\Omega}
\newcommand{\bag}{\beta}
\newcommand{\adh}{\sigma}
\newcommand{\Mod}{\mathsf{Mod}}
\newcommand{\Quo}{\mathsf{Quo}}
\newcommand{\Reach}{\mathsf{Reach}}
\newcommand{\Proj}{\mathsf{Proj}}
\newcommand{\tricky}{\mathsf{Tricky}}
\newcommand{\hid}{\mathrm{hidden}}
\newcommand{\rec}{\mathrm{rec}}
\newcommand{\NP}{$\mathsf{NP}$}
\newcommand{\wD}{\widehat{D}}
\newcommand{\wOm}{\widehat{\Om}}
\newcommand{\wG}{\widehat{G}}
\title{Polynomial-time algorithm for Maximum Weight Independent Set on $P_6$-free graphs\thanks{
The majority of research for this work was done while Andrzej Grzesik held a post-doc position of Warsaw Centre of Mathematics and Computer Science (WCMCS).
WCMCS also supported a visit of Tereza Klimo\v{s}ov\'a in Warsaw, during which this research was initiated.
The first part of work of Ma. Pilipczuk is supported by the Polish National Science Centre grant UMO-2013/09/B/ST6/03136.
Mi.\ Pilipczuk was supported by the Foundation for Polish Science (FNP) via the START stipend programme.
Later stages of this research are parts of projects that have received funding from the European Research Council (ERC) under the European Union's Horizon 2020 research and innovation programme
Grant Agreements no.~714704 (Ma. Pilipczuk) and no.~677651 (Mi. Pilipczuk).
Klimo\v sov\'a was supported by ANR project Stint under reference ANR-13-BS02-0007
and by the LABEX MILYON (ANR-10-LABX-0070) of Universit\'e de Lyon, within the program “Investissements
d’Avenir” (ANR-11-IDEX-0007) operated by the French National Research Agency (ANR), by Center of Excellence -- ITI, project P202/12/G061 of GA \v{C}R and by Center for Foundations of Modern Computer Science (Charles Univ. project UNCE/SCI/004).}}
\author{ 
  Andrzej Grzesik\thanks{
    Faculty of Mathematics and Computer Science, Jagiellonian University, Krak\'ow, Poland, \texttt{andrzej.grzesik@tcs.uj.edu.pl}.
  }
  \and 
  Tereza Klimo\v{s}ov\'a\thanks{
    Department of Applied Mathematics, Charles University,  Prague, Czech Republic, \texttt{tereza@kam.mff.cuni.cz}.
  }
  \and 
  Marcin Pilipczuk\thanks{
    Institute of Informatics, University of Warsaw, Poland, \texttt{marcin.pilipczuk@mimuw.edu.pl}.
  }
  \and 
  Micha\l{} Pilipczuk\thanks{
    Institute of Informatics, University of Warsaw, Poland, \texttt{michal.pilipczuk@mimuw.edu.pl}.
  }
}
\date{}
\begin{document}
\pagenumbering{gobble}
\thispagestyle{empty}

\maketitle

\begin{textblock}{20}(0, 11.0)
\includegraphics[width=40px]{../logo-erc}%
\end{textblock}
\begin{textblock}{20}(-0.25, 11.4)
\includegraphics[width=60px]{../logo-eu}%
\end{textblock}

\begin{abstract}
In the classic {\sc{Maximum Weight Independent Set}} problem we are given a graph $G$ with a nonnegative weight function on its  vertices, 
and the goal is to find an independent set in $G$ of maximum possible weight.
While the problem is NP-hard in general, we give a polynomial-time algorithm working on any $P_6$-free graph, that is, a graph that has no path on $6$ vertices as an induced subgraph.
This improves the polynomial-time algorithm on $P_5$-free graphs of Lokshtanov et al.~\cite{LokshtanovVV14}, and the quasipolynomial-time algorithm on $P_6$-free graphs of Lokshtanov et al.~\cite{LokshtanovPL16}.
The main technical contribution leading to our main result is enumeration of a polynomial-size family $\Ff$ of vertex subsets with the following property: 
for every maximal independent set $I$ in the graph, $\Ff$ contains all maximal cliques of some minimal chordal completion of $G$ that does not add any edge incident to a vertex of $I$.

\end{abstract}

\clearpage

\pagenumbering{arabic}

\section{Introduction}\label{sec:intro}
All graphs considered in this paper are finite and simple, i.e., without multiedges or loops.
A subset $I$ of vertices of a graph $G$ is {\em{independent}} if the vertices of $I$ are pairwise non-adjacent.
The {\sc{Maximum Weight Independent Set}} ({\sc{MWIS}}) problem asks, for a given graph $G$ with nonnegative weights assigned to its vertices, for an independent set in $G$ that has the maximum possible total weight.
The problem is \NP-hard in general graphs~\cite{Karp72},
even in the case of uniform weights. Therefore, the study of {\sc{MWIS}} on restricted classes of inputs, like planar, sparse, or well-decomposable graphs,
is a recurring topic in the algorithm design.

In this work  we focus on restricting the input graph to a {\em{hereditary}} graph class, that is, a class closed under taking induced subgraphs.
A considerable amount of work has been devoted to this direction. Perhaps the most prominent result here is the polynomial-time solvability of the problem on the class of perfect graphs
using linear programming methods~\cite{GrotschelLS81}. We refer to the introductory sections of~\cite{LokshtanovVV14, LokshtanovPL16} for a wider discussion of the literature.

While a complete classification of the complexity of {\sc{MWIS}} on every hereditary graph class is most probably out of reach,
one can focus on classes of {\em{$H$-free graphs}}, that is, graphs that do not admit a fixed graph $H$ as an induced subgraph.
Alekseev~\cite{Alekseev82} proved that the {\sc{MWIS}} problem remains \NP-hard on $H$-free graphs unless every connected component of $H$ is a subdivision of a claw or a path;
this leaves only two simple families to consider.

Unfortunately, even in this restricted setting, only small progress has been achieved.
For excluding induced subdivisions of a claw, polynomial-time algorithms are known for claw-free graphs~\cite{Minty80,Sbihi80} and fork-free graphs~\cite{LozinM08}.
For excluding induced paths, there is a polynomial-time algorithm on $P_4$-free graphs (also known as {\em{cographs}}), observed by Corneil~\cite{CorneilLB81} in 1981.
Note that cographs have bounded cliquewidth, so a simple dynamic programming algorithm suffices.
Recently, Lokshtanov, Vatshelle, and Villanger~\cite{LokshtanovVV14} managed to give a polynomial-time algorithm for $P_5$-free graphs, thus making the first breakthrough in over 30 years.
Shortly later, Lokshtanov, Pilipczuk, and van Leeuwen~\cite{LokshtanovPL16} gave a quasipolynomial-time algorithm for $P_6$-free graphs, with running time $n^{\Oh(\log^2 n)}$.
However, the question whether the problem can be solved in polynomial time on $P_6$-free graphs remained open.
To the best of our knowledge, for all the other paths and subdivisions of a claw, it is still open whether the problem is polynomial-time solvable.

\paragraph*{Our contribution.} In this work we settle the complexity of {\sc{MWIS}} on $P_6$-free graphs by showing that it can be solved in polynomial time.

\begin{theorem}\label{thm:main-mwis}
The {\sc{Maximum Weight Independent Set}} problem can be solved in polynomial time on $P_6$-free graphs.
\end{theorem}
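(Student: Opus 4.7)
The plan is to combine the classical Bouchitt\'e--Todinca framework for dynamic programming over minimal triangulations with a structural enumeration tailored to $P_6$-free graphs. Recall that a \emph{minimal chordal completion} of $G$ is an inclusion-minimal supergraph $\wG$ of $G$, on the same vertex set, that is chordal; on any chordal graph, MWIS admits a standard polynomial-time dynamic program over a clique tree. The Bouchitt\'e--Todinca framework extends this in the following sense: given a family $\Ff$ of vertex subsets that includes all maximal cliques of some minimal triangulation $\wG$ of $G$, one can compute the maximum weight of an independent set in $\wG$ in time polynomial in $|V(G)|+|\Ff|$, without knowing $\wG$ explicitly in advance.

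The bridge to our problem is the observation that if $I$ is a maximum-weight independent set of $G$ and $\wG$ is a minimal triangulation of $G$ that adds no edge incident to any vertex of $I$, then $I$ remains independent in $\wG$, and hence the optimum weight in $\wG$ coincides with the optimum weight in $G$. Thus it suffices to enumerate a polynomial-size family $\Ff$ satisfying the property stated in the abstract: for every maximal independent set $I$ of $G$, $\Ff$ contains the maximal cliques of some minimal triangulation of $G$ not touching $I$. Feeding such an $\Ff$ into the Bouchitt\'e--Todinca algorithm then directly yields Theorem~\ref{thm:main-mwis}.

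Constructing $\Ff$ is the crux of the argument, and this is where $P_6$-freeness must be used decisively. I would aim for a recursive enumeration in which each stage fixes a small \emph{seed}---a vertex $v$ together with a bounded guess about its status with respect to a putative $I$, for instance whether $v\in I$, together with information about $N(v)\cap I$ and the short connections of $v$ through $V(G)\setminus I$---and then, exploiting the absence of an induced $P_6$, argues that only $O(1)$ BFS layers around $v$ can carry non-trivial interaction with $I$. For each seed, one should be able to extract a canonical separator that becomes a clique in the minimal completion being implicitly built, and then recurse on the resulting side components. The main obstacle is to keep the total number of seeds and recursive configurations polynomial: the prior quasipolynomial algorithm of Lokshtanov, Pilipczuk, and van Leeuwen already uses a recursion of logarithmic depth with polynomial branching, so a sharper structural analysis of $P_6$-free graphs---capturing how a maximal independent set partitions the graph into pieces of bounded reach---will be required to replace this by a bounded-depth scheme producing only polynomially many relevant vertex subsets.
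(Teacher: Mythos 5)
Your first two paragraphs correctly set up the same reduction the paper uses: the dynamic programming of Lemma~\ref{lem:dp} (Lokshtanov--Vatshelle--Villanger, building on Fomin--Villanger and Bouchitt\'e--Todinca) takes any family $\Ff$ with the stated property and solves MWIS in time polynomial in $|V(G)|+|\Ff|$, and the observation that an $I$-free completion preserves the optimum is exactly why the property suffices. Up to this point you are on the paper's track, which derives Theorem~\ref{thm:main-mwis} by combining Lemma~\ref{lem:dp} with Theorem~\ref{thm:main}.

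The genuine gap is that you have not proved the analogue of Theorem~\ref{thm:main}, i.e.\ the polynomial-time construction of a polynomial-size $\Ff$, and this is the entire technical content of the result. Your third paragraph is a research plan, not an argument: the claim that ``only $O(1)$ BFS layers around a seed can carry non-trivial interaction with $I$'' is neither made precise nor established, and even if it were, locality alone cannot yield the family --- already in $P_5$-free graphs the number of potential maximal cliques can be exponential, so no enumeration of ``all PMCs seen from a bounded seed'' can work; one must instead prove that \emph{some} $I$-free minimal completion uses only enumerable cliques, which requires exchange arguments. The paper does this by classifying PMCs via the Neighborhood Decomposition Lemma and modular decomposition (mesh components, tricky vertices), covering minimal separators by six neighborhoods (Separator Covering Lemma), recovering PMCs from all-but-one or all-but-two of their components, capturing only ``fuzzy versions'' of mesh components, and then modifying completions inside potential segments (Lemmas~\ref{lem:replacement}--\ref{lem:FfX}) so that the remaining, unenumerable PMCs are never needed. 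Your sketched seed-and-recurse scheme instead resembles the branching strategy of the quasipolynomial algorithm of Lokshtanov, Pilipczuk, and van Leeuwen, which the paper explicitly argues appears very hard to accelerate to polynomial time; you yourself acknowledge that keeping the number of configurations polynomial is unresolved, so the proposal does not constitute a proof of the theorem.
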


To prove Theorem~\ref{thm:main-mwis}, at high a level we employ the same methodology that led Lokshtanov et al.~\cite{LokshtanovVV14} to the polynomial-time algorithm on $P_5$-free graphs.
We now discuss the intuition behind this approach.

The idea is to find a maximum-weight independent set using dynamic programming on the structure of the input graph $G$.
Imagine that the graph admits some inclusion-wise maximal independent set $I$, unknown to the algorithm.
Since $I$ is independent, it is not hard to see that there exists a tree decomposition of the graph where every bag has at most one element of $I$.
If such a tree decomposition was given to us, for some maximum-weight independent set $I$, then we would be able to reconstruct $I$ (or some other independent set of the same weight) using dynamic
programming in polynomial time. 
Roughly, a state of the dynamic program would be formed by a bag of the decomposition together with at most one of its elements, interpreted as the intended intersection of the bag with 
the constructed independent set.

However, a priori we cannot assume that we are given such a useful tree decomposition.
Instead, the idea is to try to compute a rich enough family of candidates for bags, in hope that for some tree decomposition that can lead to the discovery of a maximum-weight solution,
its bags will be included among the enumerated candidates. Then we would be able to apply a similar dynamic programming procedure, which intuitively tries to
assemble all possible tree decompositions using given candidate bags, and compute a maximum-weight independent set along the way.
This family should be rich enough so that all bags of some useful decomposition are captured, implying that this decomposition yields a computation path leading to the discovery of an optimum solution,
but also small enough so that the whole algorithm runs in polynomial time.

Fortunately for us, there is a well-understood theory of so-called {\em{potential maximal cliques}} ({\em{PMCs}}), which are essentially candidates for bags of useful tree decompositions.
Observe that given a tree decomposition as described above, by turning each of its bags into a clique, we obtain a chordal supergraph $G+F$ of $G$. 
Here, $F$ is the {\em{chordal completion}}, or {\em{fill-in}}: the set of edges added to $G$ in order to turn it into a chordal graph. 
We can further require that this chordal completion $F$ is (a) $I$-free, meaning that it does not contain any edge incident to a vertex of $I$, and (b) 
minimal, meaning that there is no other chordal completion of $G$ that is a proper subset of $F$.

The following lemma formally summarizes the approach of Lokshtanov et al.~\cite{LokshtanovVV14}. 
As noted in~\cite{LokshtanovVV14}, the proof of this lemma is implicit in the earlier work of Fomin and Villanger~\cite{FominV10}, from which Lokshtanov et al. drew inspiration.

\begin{lemma}[Lokshtanov et al.~\cite{LokshtanovVV14}, based on Fomin and Villanger~\cite{FominV10}]\label{lem:dp}
There is an algorithm that given a graph $G$ with nonnegative weights assigned to its vertices, and a family of vertex subsets $\Ff$, works in time polynomial in the size of $G$ and $\Ff$
and finds a maximum-weight independent set in $G$ provided the following condition holds: for every inclusion-wise maximal independent set $I$ in $G$, there exists an $I$-free
minimal chordal completion $F$ of $G$ such that all maximal cliques of $G+F$ belong to $\Ff$.
\end{lemma}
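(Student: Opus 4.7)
The plan is to follow the standard Bouchitt\'e--Todinca dynamic programming approach, as refined in the maximum independent set setting by Fomin and Villanger~\cite{FominV10}, with the twist that each DP state commits to at most one independent set vertex per bag. I would work with \emph{blocks}: pairs $(S, C)$ where $S$ is a minimal separator of $G$ and $C$ is a full connected component of $G \setminus S$ (meaning $N_G(C) = S$); set $R(S,C) := S \cup C$. Since in any minimal chordal completion every minimal separator is the intersection of two bags of the clique tree, I can restrict attention to candidate separators of the form $S = \Om_1 \cap \Om_2$ for $\Om_1, \Om_2 \in \Ff$, yielding at most $\mathrm{poly}(|\Ff|, |V(G)|)$ blocks overall.

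For each block $(S, C)$ and each $v \in S \cup \{\perp\}$, I would define $\alpha(S, C, v)$ to be the maximum weight of an independent set $J$ in $G[R(S,C)]$ with $J \cap S = \{v\}$ (read as $\emptyset$ if $v = \perp$) such that $G[R(S,C)]$ admits a chordal completion that is $J$-free, adds no edge inside $S$, and whose maximal cliques all belong to $\Ff$. A virtual root block $(\emptyset, V(G))$ produces the final answer.

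The recurrence selects a PMC $\Om \in \Ff$ with $S \subseteq \Om \subseteq R(S,C)$, interpreted as the bag of the clique tree adjacent to $S$ on the $C$-side, together with an extra vertex $u \in (\Om \setminus S) \cup \{\perp\}$ that records the at most one new element of $J$ inside $\Om$. Since $\Om$ becomes a clique in the completion and the completion is $J$-free, $|J \cap \Om| \leq 1$; this forces $u = \perp$ whenever $v \neq \perp$. Each connected component $D$ of $G[R(S,C) \setminus \Om]$ yields a child block $(N_G(D), D)$ with $N_G(D) \subseteq \Om$, and the recurrence reads
\begin{equation*}
\alpha(S, C, v) = w(v) + \max_{\Om, u} \left( w(u) + \sum_{D} \alpha\bigl(N_G(D), D, \pi_D(u, v)\bigr) \right),
\end{equation*}
where $\pi_D(u, v)$ denotes the unique element of $\{u, v\} \cap N_G(D)$ if one exists, and $\perp$ otherwise; by the observation that at most one of $u, v$ differs from $\perp$, the projection $\pi_D$ is well-defined. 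Evaluating the states in order of increasing $|R(S,C)|$ runs in polynomial time, since every transition iterates over $\Om \in \Ff$ and $u \in \Om$.

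Correctness proceeds by induction on $|R(S,C)|$. For the upper bound, any sequence of DP choices assembles a genuine independent set of the claimed weight. For the lower bound I would invoke the hypothesis: given a maximum-weight maximal independent set $I$ and a witness $I$-free minimal chordal completion $F$ whose maximal cliques all lie in $\Ff$, I would walk top-down through the clique tree of $G+F$, at each step setting $\Om$ to be the current bag and $u$ to be the (at most one) vertex of $I \cap (\Om \setminus S)$; the $I$-freeness of $F$ is exactly what guarantees $|I \cap \Om| \leq 1$. The main obstacle, in my estimation, is the bookkeeping needed to handle the fact that a single vertex $v \in I$ may lie in many bags of the clique tree: once $v$ is committed in a parent block, every child block whose separator contains $v$ must inherit that commitment, while any child whose separator avoids $v$ must be entered with parameter $\perp$. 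The parameter $v \in S \cup \{\perp\}$ together with the projection $\pi_D$ is the device that propagates commitments consistently through the entire subtree spanned by $v$.
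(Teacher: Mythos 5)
The paper never proves Lemma~\ref{lem:dp}: it is imported from Lokshtanov, Vatshelle and Villanger, with the proof credited as implicit in Fomin and Villanger, so there is no in-paper argument to compare against. Judged on its own merits, your plan is indeed the intended one — the Bouchitt\'e--Todinca block dynamic programming with bags restricted to $\Ff$, the observation that $I$-freeness of the witness completion forces at most one solution vertex per bag, and propagation of the committed vertex into child blocks via the separator parameter. However, two steps fail as written.

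First, the recurrence double-counts weight: by your definition, $\alpha(N_G(D),D,\pi_D(u,v))$ is the weight of an independent set $J'$ with $J'\cap N_G(D)=\{\pi_D(u,v)\}$, so it already contains $w(u)$ or $w(v)$; these are then added again for every child whose separator contains that vertex, on top of the explicit $w(v)+w(u)$ terms, so the algorithm can report a value strictly larger than the weight of any independent set it can assemble. The standard fix is to let a block value count only the weight of the solution inside the component $C$ (the separator vertex is a compatibility constraint, not a contribution) and to add $w(u)$ exactly once at the transition. Second, requiring the block completion to add no edge inside $S$ breaks the completeness direction: in the witness clique tree of $G+F$ the set $S$ is an adhesion, hence a clique of $G+F$ but generally not of $G$, so the restriction of $F$ to $S\cup C$ does add edges inside $S$ and is not admissible for your state; the induction showing the DP attains $w(I)$ then does not go through. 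One should instead work with the realization of the block in which $S$ is completed into a clique, or drop the completion condition from the state altogether and prove soundness directly by exhibiting the assembled independent set and completeness via the witness tree. A smaller issue: for an arbitrary transition $\Om\in\Ff$, a child separator $N_G(D)$ need not be an intersection of two members of $\Ff$, so you should index blocks by the pairs $(N_G(D),D)$ with $D\in\cc(G-\Om)$, $\Om\in\Ff$ (or discard transitions with missing children); for the witness tree the needed children do exist, since there $N_G(D)$ is the adhesion toward $D$.
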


Thus, Lemma~\ref{lem:dp} reduces solving {\sc{MWIS}} to enumerating candidates for maximal cliques in some $I$-free minimal chordal completion of $G$.
Candidates for such maximal cliques are called {\em{potential maximal cliques}}, or just {\em{PMCs}}.
Formally, a subset $\Om$ is a PMC if there is a minimal chordal completion of the graph in which $\Om$ is a maximal clique.
The theory of minimal separators and PMCs was pioneered by Bouchitt\'e and Todinca~\cite{BouchitteT01,BouchitteT02}, and provides us with many tools useful for finding the candidates.
In particular, it can be checked whether a set is a PMC in polynomial time by verifying a handful of combinatorial properties, and there are multiple techniques for finding and enumerating PMCs in graphs.

Unfortunately, it was already observed by Lokshtanov et al.~\cite{LokshtanovVV14} that the total number of PMCs even in a $P_5$-free graph can be exponential, so we cannot enumerate
all the PMCs as candidates. For this reason, Lokshtanov et al.~\cite{LokshtanovVV14} performed a structural analysis that revealed that there is essentially only one type of PMCs whose number can be exponential,
and these PMCs ``live'' in closed neighborhoods of pairs of vertices of $I$. The idea now is that we do not need to enumerate all possible such PMCs, but it suffices to add to the family of candidates
the maximal cliques of an arbitrarily chosen minimal chordal completion of the neighborhood of every pair of vertices. Then an exchange argument shows that every minimal chordal completion can be modified to
one that uses only the enumerated maximal cliques.

In our setting of $P_6$-free graphs, we follow the same high-level approach of reducing {\sc{MWIS}} to enumerating candidates for maximal cliques in some $I$-free minimal chordal completion of the input graph.
More precisely, throughout the whole paper we will focus on proving the following result.

\begin{theorem}\label{thm:main}
Given a $P_6$-free graph $G$, one can in polynomial time compute a polynomial-size family $\Ff$ of vertex subsets satisfying the following:
for every maximal independent set $I$ in $G$, there exists an $I$-free minimal chordal completion $F$ of $G$ such that $\Ff$ contains all maximal cliques of $G+F$.
\end{theorem}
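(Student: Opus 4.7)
The high-level plan follows the template that Lokshtanov, Vatshelle, and Villanger~\cite{LokshantovVV14} used for $P_5$-free graphs: classify PMCs of $G$ according to their structure relative to a fixed maximal independent set $I$, enumerate a polynomial-size collection of ``tame'' candidate cliques directly, and show by a local exchange argument that any $I$-free minimal chordal completion $F$ can be modified into one whose maximal cliques are all drawn from the enumerated collection. Since $I$ is maximal, every vertex of $G$ is either in $I$ or adjacent to some vertex of $I$, so the natural parameter governing a PMC $\Om$ is the set $\{v \in I : \Om \cap N(v) \neq \emptyset\}$ together with the sizes of the individual intersections $\Om \cap N(v)$.

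Step 1: direct enumeration of structured PMCs. Using Bouchitt\'e--Todinca theory, every PMC is determined by its family of minimal separators, and one can test the PMC property in polynomial time. I would first enumerate, via standard tools, all PMCs whose ``local profile'' is small---for instance, PMCs $\Om$ such that $|\Om \cap N(v)|$ is bounded for every $v \in I$, or PMCs interacting with only boundedly many vertices of $I$. Combined with the polynomial bound on minimal separators of bounded size, this should yield a polynomial enumeration of tame candidates, which all go into $\Ff$.

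Step 2: local handling of the remaining, ``wild'' PMCs. The delicate case is when $\Om$ contains a large portion of $N(v)$ for some $v \in I$. Using $P_6$-freeness, I would argue that such a wild $\Om$ is confined to $N[S]$ for a small anchor $S \subseteq I$ of absolute constant size. In the $P_5$-free case Lokshtanov et al.\ get away with $|S| \leq 2$, but for $P_6$-free graphs the right notion of anchor is almost certainly subtler---the macros $\hid$, $\Mod$, $\Reach$, $\tricky$ in the preamble hint at an analysis involving hidden vertices, modular/quotient reductions, and short-path reachability through $I$. For every candidate anchor $S$ I would pick one arbitrary $I$-free minimal chordal completion $F_S$ of $G[N[S]]$ and add all maximal cliques of $G[N[S]] + F_S$ to $\Ff$. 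The exchange argument then asserts that any $I$-free minimal chordal completion $F$ of $G$ can be reshaped inside each $N[S]$ to coincide with $F_S$ without losing chordality or minimality outside, so the resulting global completion has all maximal cliques in $\Ff$.

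The main obstacle I expect lies entirely in Step 2. First, pinpointing the correct notion of anchor: the clean ``wild PMCs live in $N[\{u,v\}]$'' picture of the $P_5$-free case almost surely fails here, because a $P_6$ can be assembled from more intricate configurations, and a correct structural description will require a finer classification---probably tracking which vertices of a PMC are ``hidden'' behind others in the modular sense, and controlling how wild PMCs reach into $G$ through short paths from $I$. Second, making the exchange argument work when the anchor is more complex than a vertex pair is genuinely harder: one must show that chords added inside the anchor neighborhood cannot combine with longer external induced paths to create new holes (this is where $P_6$-freeness must be used in a nontrivial way), and that global minimality can be restored after the local surgery. Getting both the anchoring and the exchange right, in a way that keeps $|\Ff|$ polynomial, is where I expect the technical heart of the argument to lie.
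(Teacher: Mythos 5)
There is a genuine gap, and it sits exactly where you placed your hopes in Step~2. Your plan rests on the claim that every ``wild'' PMC $\Om$ is confined to $N[S]$ for an anchor $S\subseteq I$ of constant size, so that adding the maximal cliques of one arbitrary $I$-free completion of each $G[N[S]]$ and exchanging locally suffices. In the $P_6$-free setting this anchoring premise is false: the PMCs that cannot be enumerated are not localized near vertices of $I$ at all --- they can occur anywhere in the clique tree of the completed graph, attached to \emph{mesh} components (components of $G-\Om$ whose quotient graph is a clique), and a single such component can give rise to exponentially many PMCs with no small subset of $I$ controlling them. This is precisely the point where the paper departs from the $P_5$-free template. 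The actual argument never tries to enumerate the wild PMCs, nor to trap them in neighborhoods of anchors; instead it (i) enumerates candidates for the \emph{connected components} of $G-\Om$ (or only ``fuzzy versions'' of mesh components, i.e.\ the component together with some vertices of $\Om$ complete to it), reconstructing $\Om$ from them via lifting/recovery lemmas, and (ii) replaces the local exchange around $N[S]$ by a much more global surgery: one fixes a polynomial family of recognized minimal separators, shows that an $I$-free minimal completion can be modified so that these separators appear in its clique tree and chop it into \emph{potential segments} $\Gm(T')$ (unions of bags over subtrees) drawn from a polynomial family, each meeting $I$ in at most one interior vertex, and then re-completes the torso of each segment arbitrarily. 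These segments are in general not of the form $N[S]$ for small $S\subseteq I$ (the only case of that shape is $N[u]$ for $u\in I$), and establishing that the chopping works requires the modular-decomposition machinery (Neighborhood Decomposition Lemma, footprints of mesh components, the two-mesh-sides analysis) that your sketch only gestures at.

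Step~1 has a secondary problem: the proposed enumeration criterion --- PMCs with $|\Om\cap N(v)|$ bounded for all $v\in I$, or meeting boundedly many $N(v)$ --- is not known to yield a polynomial-size family, and no Bouchitt\'e--Todinca tool gives it; there is in particular no ``polynomial bound on minimal separators of bounded size'' that would combine to such an enumeration. The paper's tame/wild split is along entirely different structural lines (how many neighborhood-maximal components $G-\Om$ has, whether the neighborhoods of two components cover $\Om$, mesh versus non-mesh components), and the tame side already requires recognizing components rather than PMCs, via a chain of mutually dependent families. So while your high-level slogan (enumerate candidates, then exchange) matches the paper, the two concrete pillars you propose --- profile-based direct enumeration and constant-size anchors in $I$ for the exchange --- do not survive in the $P_6$-free world, and repairing them is the technical heart of the paper rather than a refinement of your outline.
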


Then Theorem~\ref{thm:main-mwis} follows by combining Theorem~\ref{thm:main} with Lemma~\ref{lem:dp}. 

One of main differences between the setting of $P_5$-free graphs and our setting is that in our case, the variety of situations where one cannot enumerate a polynomial number of candidate PMCs is far richer.
Such ``unguessable'' PMCs may occur not only near vertices of $I$, as was the case in~\cite{LokshtanovVV14}, but all over the clique tree of the completed chordal graph.
Intuitively, our main goal is to have polynomial-sized families of candidates for as many types of PMCs as possible. 
For those types of PMCs, for which such families cannot be found due to the potentially exponential number of candidates,
we would like to gain a very good combinatorial understanding of the situation around them.
This is in order to give appropriate exchange arguments in a similar, but more general spirit as the exchange argument for neighborhoods of vertices from~\cite{LokshtanovVV14}.

Unfortunately, the treatment of the $P_6$-free case is far more involved than the $P_5$-free case considered by Lokshtanov et al.~\cite{LokshtanovVV14}.
Not only we have to treat many more types of PMCs for which there are no polynomial-size families of candidates, but the structural analysis leading to their classification,
as well as enumeration of PMCs of those types that can be enumerated, requires far deeper and more complicated structural analysis.
One technique that we use here, and which was used in the quasipolynomial-time algorithm of Lokshtanov et al.~\cite{LokshtanovPL16}, is the {\em{modular decomposition}}.
This tool turns out to be invaluable for analysing the internal structure of components of the graph after removing a PMC.

Finally, we remark that our approach is far closer to lifting the polynomial-time algorithm for the $P_5$-free case of Lokshtanov et al.~\cite{LokshtanovVV14}, than
improving the running time of the quasipolynomial-time algorithm of Lokshtanov et al.~\cite{LokshtanovPL16}. This is because the latter algorithm is based on branching. The main goal there is
to find a vertex  such that the removal of its closed neighborhood shatters the graph into connected components of significantly smaller size.
Pursuing branching on whether to include such a vertex to a constructed independent set or not leads to quasipolynomial running time of the algorithm.
However, it seems very hard to use such strategy for designing a polynomial-time algorithm.
Therefore, from~\cite{LokshtanovPL16} we borrow only some technical tools related to the structural analysis of $P_6$-free graphs.

\paragraph*{Organization.}
First, in Section~\ref{sec:prelims} we establish the notation and recall definitions and known facts from the literature.
We start our proof with an overview in Section~\ref{sec:over}, where we present crucial technical tools and explain key steps of the reasoning.
In Section~\ref{sec:toolbox} we introduce some auxiliary tools of general usage.
Section~\ref{sec:capturing} describes types of PMCs for which we can find polynomial-size families of candidates.
In Section~\ref{sec:modifying} we treat the other PMCs via exchange arguments and prove the main result.

\section{Preliminaries}\label{sec:prelims}

\paragraph*{Notation.} All graphs considered in this paper are finite, simple, and undirected, unless explicitly stated.
For a graph $G$, by $V(G)$ and $E(G)$ we denote the vertex and edge set of $G$, respectively.
An edge connecting vertices $u$ and $v$ will be denoted by $uv$; if $uv$ does not belong to $E(G)$, then we will say that $uv$ is a {\em{non-edge}} in $G$.
For a vertex $v$ or a vertex subset $X$, we write $v\in G$ and $X\subseteq G$ meaning $v\in V(G)$ and $X\subseteq V(G)$, respectively.
For a subset of vertices $X$, by $G[X]$ we denote the subgraph induced by $X$, and $G-X$ denotes the induced subgraph $G[V(G)\setminus X]$.
When $X=\{v\}$ for some vertex $v$, we use the shorthand $G-v$.
The set of connected components of a graph $G$ will be denoted by $\cc(G)$.
A {\em{clique}} in a graph is a set of pairwise adjacent vertices.
A clique is {\em{maximal}} if no its proper superset is also a clique.

The {\em{open neighborhood}} of a vertex $u$ in a graph $G$ comprises all neighbors of $u$ in $G$, and is denoted by $N_G(u)$.
The {\em{closed neighborhood}} of $u$ in $G$ is then defined as $N_G[u]=N_G(u)\cup \{u\}$.
This terminology is extended to open and closed neighborhoods of any vertex subset $X\subseteq V(G)$ as follows: $N_G[X]=\bigcup_{u\in X}N_G[u]$ and $N_G(X)=N_G[X]\setminus X$.
In case $X$ consists of vertices $u_1,\ldots,u_k$, we may write $N_G[u_1,\ldots,u_k]$ and $N_G(u_1,\ldots,u_k)$ instead of $N_G[X]$ and $N_G(X)$.
Whenever the graph $G$ is clear from the context, we may omit the subscript.

For a vertex subset $X\subseteq V(G)$ in a graph $G$, and a vertex $u\in V(G)\setminus X$, we will say that $u$ is {\em{complete to}} $X$ if $X\subseteq N_G(u)$.
Disjoint vertex subsets $X,Y\subseteq V(G)$ are called {\em{complete to each other}} if each vertex of $X$ is complete to $Y$, equivalently each vertex of $Y$ is complete to $X$.
Similarly, $X$ and $Y$ are {\em{anti-complete}} if there is no edge with one endpoint in $X$ and second in $Y$.
By convention, if $X=\emptyset$ then $X$ is both complete and anti-complete to $Y$.

For a positive integer $k$, a {\em{path}} $P_k$ is a graph with vertex set $\{v_1,\ldots,v_k\}$ and the edge set $\{v_1v_2,\ldots,v_{k-1}v_k\}$, for some $v_1,v_2,\ldots,v_k$.
Such $P_k$ will be denoted by $(v_1,\ldots,v_k)$.
An {\em{induced $P_k$}} in a graph $G$ is an induced subgraph in $G$ that is a $P_k$.
A graph $G$ is {\em{$P_k$-free}} if it does not contain any induced $P_k$.
To facilitate the proofs, we introduce the following notation. Suppose $G$ is a graph and $X_1,X_2,\ldots,X_k\subseteq G$ are vertex subsets. If $P=(v_1,v_2,\ldots,v_k)$ is a $P_k$,
then $P$ will be {\em{of the form $X_1X_2\ldots X_k$}} if $v_i\in X_i$ for all $i=1,2,\ldots,k$. If $X_i=\{v\}$ for some vertex $v$, we may put vertex $v$ instead of $X_i$ in the sequence denoting the form.
For instance, a $P_4$ of the form $vAAA$ is one that starts in vertex $v$, and all the other three vertices belong to the set $A$.

For a graph $G$ and disjoint vertex sets $X,Y\subseteq V(G)$, by $\Reach_G(X,Y)$ we denote the set of all vertices $u\in V(G)\setminus Y$
for which there is a path in $G-Y$
that starts in a vertex of $X$ and ends in $u$. That is, $\Reach_G(X,Y)$ is the union of the vertex sets of those connected components of $G-Y$ that contain a vertex of $X$.
By $\Proj_G(X,Y)$ we denote the set $N_G(\Reach_G(X,Y))$; note that $\Proj_G(X,Y)\subseteq Y$.
If $X=\{x\}$, we use $\Reach_G(x,Y)$ and $\Proj_G(x,Y)$ instead, and we drop the subscript whenever $G$ is clear from the context.

\paragraph*{Chordal graphs and chordal completions.}
A \emph{hole} in a graph $H$ is an induced cycle of length at least $4$.
A graph $H$ is {\em{chordal}} if it contains no holes.
In this work, we will also rely on an equivalent definition of chordal graphs via tree decompositions.
Recall that a tree decomposition of a graph $G$ is a pair $(T,\bag)$ where $T$ is a tree and $\bag\colon V(T)\to 2^{V(G)}$ is a function that associates each {\em{node}} $x$ of $T$ with its {\em{bag}} $\bag(x)$
so that the following conditions are satisfied: (1) for each $u\in V(G)$, the set $\{x\colon u\in \bag(x)\}$ induces a non-empty and connected subtree of $T$, which we shall denote by $T_u$, 
and (2) for each $uv\in E(G)$, there is a node $x$ of $T$
such that $\{u,v\}\subseteq \bag(x)$. For an edge $xy\in E(T)$, the set $\bag(x)\cap \bag(y)$ is called the {\em{adhesion of $xy$}}, and denoted $\adh(xy)$.

The following classic result gives a structural characterization of chordal graphs.
\begin{proposition}[Folklore]
A graph $H$ is chordal if and only if $H$ admits a tree decomposition whose bags are exactly all the maximal cliques in $H$.
\end{proposition}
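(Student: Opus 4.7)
The proof splits naturally into the two implications, and I would treat them separately.

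For the easier backward direction, I would assume that $H$ admits a tree decomposition $(T,\bag)$ whose bags are exactly the maximal cliques of $H$, and derive that $H$ is chordal by contradiction. Suppose $C=(u_1,u_2,\ldots,u_k)$ with $u_ku_1\in E(H)$ is a hole, $k\geq 4$. For each $u_i$, the set $T_{u_i}$ of nodes whose bag contains $u_i$ is a non-empty subtree. Consecutive subtrees $T_{u_i}$ and $T_{u_{i+1}}$ (indices mod $k$) intersect, since the edge $u_iu_{i+1}$ must sit in some bag. On the other hand, non-consecutive $u_i,u_j$ are non-adjacent, and since every bag is a clique, $T_{u_i}\cap T_{u_j}=\emptyset$. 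The plan now is to pick a node $x_i\in T_{u_i}\cap T_{u_{i+1}}$ for every $i$ and concatenate the unique $T$-paths from $x_{i-1}$ to $x_i$ (each of which lies inside $T_{u_i}$, by subtree-connectivity). This produces a closed walk in the tree $T$ that visits the pairwise-disjoint sets $T_{u_1},\ldots,T_{u_k}$ cyclically; a short combinatorial argument using the fact that in a tree every closed walk backtracks leads to a contradiction with $k\geq 4$ and with the subtrees being pairwise disjoint for non-consecutive indices.

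For the forward direction, I would proceed by induction on $|V(H)|$, using the well-known fact that every chordal graph on at least two vertices contains a \emph{simplicial vertex}, that is, a vertex $v$ whose open neighborhood $N_H(v)$ is a clique (this is essentially the existence of a perfect elimination ordering, which I would quote as a classical fact). Let $v$ be simplicial and let $H'=H-v$; by induction, $H'$ admits a tree decomposition $(T',\bag')$ whose bags are exactly the maximal cliques of $H'$. The key observation is that $N_H[v]$ is a clique in $H$, and moreover any maximal clique of $H$ either equals $N_H[v]$ or is a maximal clique of $H'$ that does not contain $N_H(v)$ as a proper subset. I would then distinguish two cases. If $N_H(v)$ is already a maximal clique of $H'$, corresponding to some bag $\bag'(x_0)$, then in the new decomposition I replace $\bag'(x_0)$ by $N_H[v]$ and keep the same tree. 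Otherwise, $N_H(v)$ is properly contained in some maximal clique $K$ of $H'$, hence in the bag $\bag'(x_0)$ of some node $x_0$ of $T'$; then I create a new node $x^\star$ with $\bag(x^\star)=N_H[v]$ and attach it to $x_0$ by a new edge. In both cases the tree-decomposition axioms are straightforward to verify (the subtree $T_v$ is either $\{x_0\}$ or $\{x^\star\}$; existing subtrees $T_u$ for $u\in N_H(v)$ are unaffected or extended by one leaf), and the bags are exactly the maximal cliques of $H$.

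The main obstacle is the forward direction, specifically the bookkeeping needed to ensure that after reinserting the simplicial vertex $v$ the family of bags coincides \emph{exactly} with the maximal cliques of $H$ — neither accidentally turning a maximal clique of $H'$ into a non-maximal set in $H$ (this is handled because only $N_H(v)$ can be absorbed, and the case split above addresses precisely that) nor duplicating $N_H[v]$ as a bag. The backward direction is essentially a Helly-type argument about cyclic intersection patterns of subtrees of a tree, which is conceptually clean but requires a careful induction on $k$ or on the total length of the closed walk constructed above.
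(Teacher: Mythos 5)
The paper states this proposition as folklore and offers no proof of its own, so there is no in-paper argument to compare against; your route --- the subtree-intersection argument for the backward implication and induction via a simplicial vertex for the forward one --- is the standard textbook proof and is essentially sound, granting the classical fact (Dirac) that every nonempty chordal graph has a simplicial vertex. Two points should be tightened. First, your ``key observation'' in the forward direction is false as stated: a maximal clique of $H$ different from $N_H[v]$ may well contain $N_H(v)$ as a proper subset. For example, let $H$ be a triangle $\{a,b,c\}$ plus a vertex $v$ adjacent only to $a$; then $\{a,b,c\}$ is a maximal clique of $H$ properly containing $N_H(v)=\{a\}$. What your construction actually needs --- and what your case split implicitly uses --- is the correct correspondence: the maximal cliques of $H$ are exactly $N_H[v]$ together with all maximal cliques of $H'=H-v$, except that $N_H(v)$ is removed from that list when it happens to be a maximal clique of $H'$. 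This holds because a maximal clique $K'$ of $H'$ fails to be maximal in $H$ only if $K'\cup\{v\}$ is a clique, i.e.\ $K'\subseteq N_H(v)$, which by maximality of $K'$ forces $K'=N_H(v)$. With this statement in place, both of your cases (replace the bag $N_H(v)$ by $N_H[v]$, respectively attach a new leaf with bag $N_H[v]$ to a bag containing $N_H(v)$) verify cleanly, including the subtree-connectivity conditions you indicate.

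Second, in the backward direction you stop at ``a short combinatorial argument''; since this is the heart of that implication, it should be carried out. One way consistent with your setup: since $T_{u_1}$ and $T_{u_3}$ are vertex-disjoint subtrees of $T$ (here $k\geq 4$ is used), some edge $e$ of $T$ separates them, say with $T_{u_1}$ on side $A$ and $T_{u_3}$ on side $B$. Then $x_1,x_k\in A$ while $x_2,x_3\in B$. The segment of your walk from $x_1$ to $x_2$ lies inside $T_{u_2}$ and must traverse $e$, so both endpoints of $e$ belong to $T_{u_2}$; likewise, the portion of the walk from $x_3$ back to $x_k$ must traverse $e$ within some segment lying inside $T_{u_j}$ with $j\in\{4,\ldots,k\}$, so both endpoints of $e$ also belong to $T_{u_j}$. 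Hence some node's bag contains both $u_2$ and $u_j$, and since every bag is a clique, $u_2u_j\in E(H)$; but $u_2$ and $u_j$ are non-consecutive on $C$, so this is a chord, contradicting that $C$ is a hole. With these two repairs your proof is complete and correct.
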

\noindent Such a tree decomposition is often called a {\em{clique tree}} of $H$. Note that while its set of bags is unique, the tree structure may not be unique.

For a graph $G$ and a set of nonedges $F$, by $G+F$ we denote the graph obtained from $G$ by adding all elements of $F$ as edges.
If $G+F$ is chordal, then $F$ will be called a {\em{chordal completion}} (or {\em{fill-in}}) of $G$.
A chordal completion $F$ is {\em{minimal}} if there is no other chordal completion $F'$ that is a proper subset of $F$.

\paragraph*{Minimal separators and PMCs.} Suppose $G$ is a graph and $S\subseteq V(G)$ is a vertex subset.
Consider a connected component $D\in \cc(G-S)$. We say that $D$ is {\em{full}} to $S$ if every vertex of $S$ has a neighbor in $D$ (in the graph $G$); note that this is different than saying that $V(D)$ is complete
to $S$. 
The set $S$ is called a {\em{minimal separator}} if $\cc(G-S)$ contains at least two connected components that are full to $S$.

From a different perspective, for two nonadjacent vertices $u,v \in V(G)$, we say that $S \subseteq V(G)$ \emph{separates} $u$ and $v$ if $u,v \notin S$ and $u,v$ lie in distinct connected
components of $G-S$. A $u,v$-separator $S$ is \emph{minimal} if it is inclusion-wise minimal.
It is easy to see that a set is a minimal separator if it is a minimal $u,v$-separator for some choice of $u,v \in V(G)$.
Clearly, also for any nonadjacent $u,v \in V(G)$ there exists some minimal $u,v$-separator, e.g., the minimum vertex cut between $u$ and $v$.

Let $S_1$ and $S_2$ be two minimal separators in $G$. We say that $S_1$ \emph{crosses} $S_2$ if $S_2$ intersects at least two connected components of $G-S_1$. 
A standard observation is that the notion of crossing is symmetric:
\begin{lemma}[\cite{ParraS95}]\label{lem:minsep-cross}
Let $S_1$ and $S_2$ be two minimal separators in $G$. Then $S_1$ crosses $S_2$ if and only if $S_2$ crosses $S_1$.
\end{lemma}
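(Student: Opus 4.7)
The plan is to show one direction, since then symmetry gives the other. Assume $S_1$ crosses $S_2$, and let $a,b \in S_2$ be witnesses, i.e., vertices lying in distinct connected components $C_a,C_b$ of $G-S_1$. I want to produce two distinct components of $G-S_2$ each meeting $S_1$.

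To find them, I would use the minimality of $S_2$: by definition, $\cc(G-S_2)$ contains at least two full components $F_1,F_2$ with respect to $S_2$. Fix one such $F \in \{F_1,F_2\}$. The fullness of $F$ guarantees that both $a$ and $b$ have at least one neighbor in $F$, say $a' \in F \cap N_G(a)$ and $b' \in F \cap N_G(b)$. Since $F$ is a connected subgraph of $G-S_2$, there is a path $Q$ from $a'$ to $b'$ whose vertices lie entirely in $F$. Prepending $a$ and appending $b$ yields a walk $a,a',\ldots,b',b$ from $a$ to $b$ in $G$ whose internal vertices belong to $F$ (and in particular are disjoint from $S_2$).

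Now I would invoke the assumption that $a$ and $b$ lie in different components of $G-S_1$: every path from $a$ to $b$ in $G$ must contain at least one vertex of $S_1$, and $a,b \in S_2$ themselves cannot serve since $S_1 \cap S_2 \subseteq S_2 \subseteq V(G) \setminus F$ would not sit inside the internal vertices — but more directly, $a,b$ lie in $C_a,C_b$ which are components of $G-S_1$, hence $a,b \notin S_1$. Therefore the path $a,a',\ldots,b',b$ contains a vertex of $S_1$ among its internal vertices, and consequently $S_1 \cap F \neq \emptyset$. Applying this to both $F_1$ and $F_2$, I conclude that $S_1$ meets two distinct components of $G-S_2$, which means $S_2$ crosses $S_1$, as desired. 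Swapping the roles of $S_1$ and $S_2$ gives the converse.

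The argument is short and there is no serious obstacle. The only mild subtlety is to double-check that $a,b \notin S_1$ (so that they really are the endpoints and the $S_1$-vertex must appear strictly between), which follows immediately because $a \in C_a$ and $b \in C_b$ are vertices of $G-S_1$. Everything else is a direct consequence of the full-component property granted by the minimality of $S_2$.
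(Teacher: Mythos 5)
Your proof is correct and follows essentially the same route as the paper: it uses the fullness of each of the two full components of $G-S_2$ to connect the two witnesses $a,b\in S_2$ lying in different components of $G-S_1$ through that component, and connectivity then forces the component to meet $S_1$. The only cosmetic difference is that you argue via an explicit $a$--$b$ path hitting $S_1$, while the paper argues that a connected component disjoint from $S_1$ could not meet the closed neighborhoods of both $D_u$ and $D_v$; these are the same idea.
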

Lemma~\ref{lem:minsep-cross} allows us to use the phrases that two minimal separators $S_1$ and $S_2$ are \emph{crossing} or \emph{noncrossing}.

The following statement is a well-known characterization of chordal graphs in terms of minimal separators.
\begin{lemma}[\cite{Dirac61}]\label{lem:chordal-minsep}
A graph $G$ is chordal if and only if each of its minimal separators is a clique.
\end{lemma}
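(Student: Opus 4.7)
The plan is to prove both implications of Lemma~\ref{lem:chordal-minsep} by direct arguments from the definitions.

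For the forward direction, suppose $G$ is chordal and $S$ is a minimal separator. By definition, there exist two components $C_1, C_2 \in \cc(G-S)$ that are full to $S$. Pick arbitrary distinct $u, v \in S$; aiming for contradiction, assume $uv \notin E(G)$. Fullness gives shortest induced paths $P_1$ from $u$ to $v$ with all internal vertices in $V(C_1)$, and $P_2$ analogously through $V(C_2)$. Each $P_i$ has length at least $2$ since $u, v$ are nonadjacent, so gluing $P_1$ and $P_2$ along $u, v$ yields a cycle $C$ of length at least $4$. The cycle $C$ is induced: no edge runs between internal vertices of $P_1$ and $P_2$ because they lie in distinct components of $G-S$, and each $P_i$ is chordless by the shortest-path choice. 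Chordality of $G$ forces $C$ to have a chord, and the only remaining candidate is the edge $uv$, a contradiction. Hence $S$ is a clique.

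For the converse, I would argue by contrapositive: assuming $G$ contains a hole $(v_1, v_2, \ldots, v_k)$ with $k \geq 4$, I exhibit a minimal separator that is not a clique. The vertices $v_1$ and $v_3$ are nonadjacent, so let $S$ be any minimal $v_1, v_3$-separator (which exists as noted in the preceding paragraphs of the preliminaries). Every $v_1, v_3$-path in $G$ must meet $S$: applying this to the hole-path $v_1 v_2 v_3$ yields $v_2 \in S$, and applying it to the complementary hole-path $v_1 v_k v_{k-1} \cdots v_4 v_3$ yields some $v_i \in S$ with $4 \leq i \leq k$. In an induced cycle of length at least $4$, the only neighbors of $v_2$ are $v_1$ and $v_3$, so $v_2 v_i$ is a nonedge of $G$, certifying that $S$ is not a clique.

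The only mild obstacle is in the forward direction: one must verify that the cycle obtained by concatenating $P_1$ and $P_2$ along $u$ and $v$ is genuinely induced and of length at least $4$. This verification hinges on the combined use of the \emph{shortest} induced path choice inside each $C_i$ together with the fact that $C_1$ and $C_2$ are distinct components of $G-S$, which forbids any ``shortcut'' chord spanning $V(C_1)$ and $V(C_2)$. No deeper structural insight is required; the lemma is classical and each step is a routine application of the relevant definitions.
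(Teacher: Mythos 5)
Your proof is correct and follows essentially the same route as the paper: in the forward direction you glue shortest $u$--$v$ paths through the two full components of the minimal separator into a hole, and in the converse you take a minimal separator between two nonadjacent hole vertices and observe it must contain a vertex of each of the two hole-paths, which are nonadjacent. Your write-up just spells out the details (choosing $v_1,v_3$ and verifying inducedness) that the paper leaves implicit.
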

\begin{corollary}[\cite{ParraS95}]\label{cor:chordal-crossing}
A graph $G$ is chordal if and only if every two of its minimal separators are noncrossing.
\end{corollary}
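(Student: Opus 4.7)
My plan is to derive the corollary as a fairly direct consequence of Lemma~\ref{lem:chordal-minsep}, which characterizes chordality via the claim that every minimal separator is a clique. The two directions will be handled separately, and in both cases the noncrossing condition will be related to the clique condition by an elementary argument about paths.

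For the forward direction, assume $G$ is chordal, and suppose toward contradiction that two minimal separators $S_1, S_2$ cross. By Lemma~\ref{lem:chordal-minsep}, $S_2$ is a clique. Take vertices $u, v \in S_2$ that lie in distinct connected components of $G-S_1$; these exist by the definition of crossing. Then $uv$ is an edge of $G$, but $u, v \notin S_1$, so $u$ and $v$ must lie in the same connected component of $G-S_1$, a contradiction.

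For the backward direction, I will argue the contrapositive: if $G$ is not chordal, then some two of its minimal separators cross. By Lemma~\ref{lem:chordal-minsep}, non-chordality yields a minimal separator $S$ that is not a clique, so there exist nonadjacent $u, v \in S$. Let $D_1, D_2$ be two connected components of $G-S$ that are full to $S$; these exist since $S$ is a minimal separator. Now pick any minimal $u,v$-separator $S'$, which is in particular a minimal separator of $G$. Since $D_i$ is full to $S$, both $u$ and $v$ have a neighbor in $D_i$, so $G[D_i \cup \{u,v\}]$ contains a $u,v$-path through $D_i$. Any $u,v$-separator must hit each such path, and since $u, v \notin S'$, we conclude $S' \cap D_i \neq \emptyset$ for $i = 1, 2$. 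Therefore $S'$ meets at least two distinct connected components of $G - S$, which by definition means $S$ crosses $S'$, contradicting the assumption.

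I do not expect any real obstacle here: the only subtlety is making sure $S'$ is genuinely a minimal separator of $G$ (which is automatic because minimal $u,v$-separators are always minimal separators, as noted in the preliminaries) and that $S'$ avoids $u$ and $v$ (which is by definition of separator). Everything else reduces to the observation that a full component of a separator carries connecting paths between every pair of vertices in the separator.
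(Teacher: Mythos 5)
Your proof is correct and follows essentially the same route as the paper: the forward direction uses the fact that the clique $S_2$ cannot be split across components of $G-S_1$, and the backward direction takes a minimal $u,v$-separator $S'$ for nonadjacent $u,v\in S$ and observes that $S'$ must hit both full components of $G-S$, hence crosses $S$. You spell out the crossing claim in more detail than the paper (which simply asserts that the cut crosses $S_1$), but the underlying argument and its reliance on Lemma~\ref{lem:chordal-minsep} are the same.
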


For chordal completions, a crucial property is that a minimal chordal completion cannot create a new minimal separator; the following statement is one of the statements of~\cite[Theorem 2.9]{BouchitteT01}.
\begin{lemma}\label{lem:fillin-minsep}
If $G$ is a graph and $F$ is a minimal chordal completion in $G$, then every minimal separator of $G+F$ is a minimal separator of $G$ as well.
\end{lemma}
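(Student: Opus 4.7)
The plan is to prove the statement by contradiction, using the minimality of $F$ to rule out that the chord structure of $G+F$ introduces new minimal separators. Let $H = G+F$ and let $S$ be a minimal separator of $H$. By Lemma~\ref{lem:chordal-minsep}, $S$ is a clique in $H$. Pick two components $D_1, D_2 \in \cc(H-S)$ that are full to $S$ in $H$, and vertices $u \in D_1, v \in D_2$; since $G$ is a subgraph of $H$, the set $S$ also separates $u$ from $v$ in $G$, so it only remains to show that $S$ is a \emph{minimal} $u,v$-separator in $G$, as this immediately implies $S$ is a minimal separator of $G$.

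Suppose for contradiction that it is not: then the component of $G-S$ containing one of $u, v$ --- say the component $C_v$ of $v$, which lies in $D_2$ --- fails to be full to $S$ in $G$, so there exists $s \in S$ with no $G$-neighbor in $C_v$. The plan is to produce a strictly smaller chordal completion of $G$ by removing a suitable subset of fill edges, contradicting minimality of $F$. A first attempt is $F' = F \setminus F_s$, where $F_s = \{sx \in F : x \in C_v\}$ consists of all fill edges from $s$ into $C_v$. We first need to establish that $F_s \neq \emptyset$ (possibly enlarging $F_s$ to also include fill edges between $C_v$ and $D_2 \setminus C_v$ if this is needed), using that $D_2$ is full to $S$ in $H$ together with the structural fact that $C_v$ is a connected component of $G-S$ while $D_2$ is a single connected component of $H-S$.

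The heart of the argument is showing $G+F'$ is chordal. Suppose otherwise: $G+F'$ contains a hole $\gamma$. Every chord of $\gamma$ in $H$ must lie in $F \setminus F' = F_s$, forcing $s$ and some $x' \in C_v$ to lie on $\gamma$. Let $a, b$ be the two $\gamma$-neighbors of $s$. One verifies $a, b \notin C_v$ directly from the definition of $F_s$ and the assumption that $s$ has no $G$-neighbor in $C_v$. The key leverage is the cliqueness of $S$ in $H$: for any $s' \in S \cap V(\gamma)$ with $s' \neq s$, the edge $ss'$ lies in $E(H) \setminus F_s \subseteq E(G+F')$, so $s'$ must be a $\gamma$-neighbor of $s$ --- that is, equal $a$ or $b$ --- lest $ss'$ serve as a chord of $\gamma$ in $G+F'$. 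Combining this with how $\gamma$ must enter and exit $C_v$, and using that $N_{G+F'}(C_v) \subseteq (D_2 \setminus C_v) \cup S$, pins down $a, b \in S$; then the edge $ab$, which is in $E(H)$ because $S$ is a clique and is not among the removed fill edges, gives a chord $ab \in E(G+F')$ of $\gamma$, the desired contradiction.

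The main obstacle lies in controlling entries of $\gamma$ into $C_v$ via fill edges from $D_2 \setminus C_v$ that are not in $F_s$: such edges could, in principle, allow $\gamma$ to touch $C_v$ without using an $S$-vertex as an entry point, breaking the chord-finding step. The cleanest fix, which I anticipate needing, is to enlarge $F_s$ to also include all fill edges between $C_v$ and $D_2 \setminus C_v$, and then to verify that this broader removal still preserves chordality by a refined version of the above hole analysis; this additional verification is the most delicate combinatorial part of the proof.
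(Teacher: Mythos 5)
The paper itself does not prove this lemma --- it is imported verbatim from~\cite{BouchitteT01} (Theorem~2.9) --- so your argument has to stand on its own. Your overall strategy is legitimate and, in fact, completable: assume some component $C_v$ of $G-S$ lying inside a full component $D_2$ of $(G+F)-S$ is not full to $S$ in $G$, pick $s\in S$ with no $G$-neighbour in $C_v$, delete suitable fill edges incident to $C_v$, and contradict minimality of $F$. However, as written there is a genuine gap exactly at the decisive step. First, in your un-enlarged version the chord-finding argument is not correct as sketched: the two $\gamma$-neighbours $a,b$ of $s$ need not lie in $S$ --- after the removal $s$ can still have neighbours in $D_1$, in $D_2\setminus C_v$, and in other components of $(G+F)-S$ --- so nothing "pins down $a,b\in S$", and the chord $ab$ you rely on may simply not be available. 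Second, for the enlarged removal (which you correctly identify as necessary when fill edges join $C_v$ to $D_2\setminus C_v$) you only state that you \emph{anticipate} the chordality verification going through; that verification is the heart of the proof and is missing.

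The missing argument should be organized around where a putative hole enters and leaves $C_v$, not around the neighbours of $s$. After removing all fill edges from $C_v$ to $\{s\}\cup(D_2\setminus C_v)$, write $H'$ for the resulting graph and note $N_{H'}(C_v)\subseteq S\setminus\{s\}$, since $C_v$ has no $G$-edges to $D_2\setminus C_v$ or to $s$. If $\gamma$ is a hole of $H'$, every chord of $\gamma$ in $H=G+F$ is a removed edge, so some chord has an endpoint $x\in C_v\cap V(\gamma)$ and the other endpoint $y\in V(\gamma)$ with $y\in\{s\}\cup(D_2\setminus C_v)$. Let $Q$ be the maximal arc of $\gamma$ through $x$ contained in $C_v$, and let $a',b'$ be the vertices of $\gamma$ just outside its two ends; they lie in $N_{H'}(C_v)\subseteq S\setminus\{s\}$ and are distinct from $y$. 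If $a'=b'$, then $\gamma$ has a single vertex outside $C_v$ and it lies in $S\setminus\{s\}$, so no removed edge can be a chord of $\gamma$, contradicting chordality of $H$. Otherwise $a'\neq b'$ are non-consecutive on $\gamma$ (one arc between them contains $Q$, the other contains $y$), the edge $a'b'$ exists in $H$ because $S$ is a clique there (Lemma~\ref{lem:chordal-minsep}), and it is not a removed edge since both endpoints avoid $C_v$; hence it is a chord of $\gamma$ in $H'$, the desired contradiction. Nonemptiness of the removed set follows as you indicate. Alternatively, you could sidestep the enlargement entirely by first invoking the other fact the paper imports from~\cite{BouchitteT01} (stated as Claim~\ref{cl:clique-separates}): since $S$ is a clique of $G+F$ and $F$ is minimal, no fill edge joins distinct components of $G-S$, so $C_v=D_2$ and your original analysis with $F_s$ alone then suffices --- though that claim is itself cited, not proved, in this paper.
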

Thus, by the last two statements, a minimal chordal completion of $G$ corresponds to a choice of a pairwise noncrossing subset of minimal separators of $G$.
This correspondence can be made in both directions (cf.~\cite{BouchitteT01}), but we do not need the exact statements here.

A vertex subset $\Om\subseteq V(G)$ is called a {\em{potential maximal clique}} ({\em{PMC}}, for short) if the following conditions are satisfied:
\begin{description}
\item[\pmca]\label{pmca} none of the connected components of $\cc(G-\Om)$ is full to $\Om$; and
\item[\pmcb]\label{pmcb} whenever $uv$ is a non-edge with $\{u,v\}\subseteq \Om$, then there is a component $D\in \cc(G-\Om)$ such that $\{u,v\}\subseteq N(D)$.
\end{description}
In the second condition, we will say that the component $D$ {\em{covers}} the non-edge $uv$, thus this condition says that every non-edge within the PMC $\Omega$ must be covered. Observe  that it is possible to test whether a given subset of vertices is a PMC in polynomial time.
The following classic results link PMCs with chordal completions and their clique trees.

\begin{proposition}[Theorem 3.15 of~\cite{BouchitteT01}]
For a graph $G$, a vertex subset $\Omega\subseteq V(G)$ is a PMC if and only if there exists a minimal chordal completion $F$ of $G$ such that $\Omega$ is a maximal clique in $G+F$.
\end{proposition}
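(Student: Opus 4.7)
The plan is to argue both directions by leveraging the link between clique trees of chordal graphs and PMCs. Recall that the maximal cliques of $G+F$ are the bags of a clique tree of $G+F$, and by Lemma~\ref{lem:fillin-minsep} the adhesions along its edges are minimal separators of $G$.

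For the forward direction, given a PMC $\Omega$, I would construct $F$ by induction on $|V(G)|$. First I add to $F$ all non-edges within $\Omega$, making it a clique. Then for each $D \in \cc(G-\Omega)$, I consider $G_D := G[N_G[D]]$ with $N_G(D)$ forcibly turned into a clique. By \pmca, $N_G(D) \subsetneq \Omega$, so $|V(G_D)| < |V(G)|$, and by induction one obtains a minimal chordal completion $F_D$ of $G_D$. Combined together, these yield a chordal $G+F$: its clique tree is obtained by rooting the clique tree of each $G_D + F_D$ at a bag containing $N_G(D)$ and attaching it to the bag $\Omega$ along that adhesion. Maximality of $\Omega$ as a clique then follows from \pmca: any vertex outside $\Omega$ lies in some $D$, and any vertex of $\Omega \setminus N_G(D)$ serves as a non-neighbor.

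The subtle part is the minimality of $F$. Each edge internal to some $F_D$ is essential by the inductive minimality, after verifying that a witness hole inside $G_D + F_D$ lifts to a hole in $G+F$ (here one uses that the clique $\Omega$ separates $N_G[D]$ from the rest, so extraneous chords in $G+F$ are impossible). For a non-edge $uv$ with $u,v\in\Omega$, condition \pmcb{} supplies a component $D$ with $u,v \in N_G(D)$; one shows, possibly descending into the clique tree of $G_D + F_D$ to obtain a sufficiently long witness, that $G+F$ contains a cycle of length at least four passing through $u,v$ and $D\cup N_G(D)$ whose only chord is $uv$, so removing $uv$ produces a hole.

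For the reverse direction, given a minimal chordal completion $F$ with $\Omega$ a maximal clique of $G+F$, I fix a clique tree $T$ of $G+F$ in which $\Omega$ is a bag, and denote by $S_1,\ldots,S_k$ the adhesions on edges of $T$ incident to the node $\Omega$; each $S_i\subsetneq\Omega$ by maximality. To check \pmca, observe that any $D\in\cc(G-\Omega)$ is connected in $G$, hence in $G+F$, and avoids $\Omega$; it therefore lies entirely within a single subtree attached to $\Omega$, giving $N_G(D)\subseteq N_{G+F}(D)\subseteq S_i\subsetneq\Omega$ for some $i$. To check \pmcb, take a non-edge $uv$ with $u,v\in\Omega$. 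Then $uv \in F$, and by minimality $G+(F\setminus\{uv\})$ contains a hole $H$, which in $G+F$ must admit $uv$ as its sole chord. The two arcs of $H$ between $u$ and $v$ are paths in $G+F\setminus\{uv\}$; from one of them I would extract a $u$-to-$v$ path in $G$ whose internal vertices lie in a single component of $G-\Omega$. This last extraction is the principal obstacle: the arc lives in $G+F$, so it may traverse edges of $F$ and may pass through other vertices of $\Omega$. I would iteratively replace each $F$-edge along the arc by a $G$-detour (invoked by reapplying minimality of $F$ to that edge and using Corollary~\ref{cor:chordal-crossing}) and bypass any internal $\Omega$-vertex using the clique tree structure, eventually arriving at a genuine $G$-path whose connectivity forces it to live inside one component $D$ of $G-\Omega$, yielding $u,v\in N_G(D)$.
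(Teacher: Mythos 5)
The paper does not actually prove this statement---it imports it as Theorem~3.15 of Bouchitt\'e and Todinca~\cite{BouchitteT01}---so there is no proof of theirs to compare against and your proposal stands on its own terms. Your architecture is reasonable, but the forward direction has a flaw at exactly the step you flag as ``subtle''. For a non-edge $uv$ of $\Om$ you want a hole through $u,v$ and $D\cup N_G(D)$, and the parenthetical about descending into the clique tree of $G_D+F_D$ suggests you intend to find the witness inside $N_G[D]$. That cannot work: $N_G(D)$ is already a clique in $G_D$, so every $u$--$v$ cycle confined to $N_G[D]$ has its chord(s) supplied by that clique. The witness hole must leave $N_G[D]$: concatenate an induced $u$--$v$ path with interior in $D$ with the two-edge path $(u,w,v)$ for some $w\in\Om\setminus N_G(D)$ (nonempty by {\pmca}); since $w\notin N_G[D]$, it has no $G$-edge and no $F_{D'}$-edge into $D$, so this four-or-more cycle really is a hole after $uv$ is deleted. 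Your sketch never produces a vertex outside $N_G[D]$ and so cannot produce the hole.

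The reverse direction has the deeper gap, in {\pmcb}. The ``iteratively replace each $F$-edge along the arc by a $G$-detour'' plan is not an argument: it is unclear that it terminates, that the detours stay inside one component, or---crucially---that the endpoints $u,v$ end up with $G$-edges into that component. What the argument actually needs is: (i) at most one arc of the hole $C$ has an internal $\Om$-vertex, since two such vertices on distinct arcs are adjacent in $G+F$ (they are not $u$ or $v$) and would be a chord of $C$; (ii) the interior of the $\Om$-avoiding arc lies in a single component $D$ of $G-\Om$, not because of detours but because a minimal chordal completion adds no edge between components of $G-\Om$ when $\Om$ is a clique of $G+F$ (Lemma~3.5 of~\cite{BouchitteT01}, stated in this paper as the claim inside the proof of Lemma~\ref{lem:ctree-segment}); this yields only $u,v\in N_{G+F}(D)$. (iii) One must then upgrade to $u,v\in N_G(D)$, since {\pmcb} is a statement about $G$, and a priori $u$ might reach $D$ only via fill-in edges. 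Step~(iii) is a genuine extra lemma that your proposal never raises. One proves $N_G(D)=N_{G+F}(D)$ by minimality: if $w\in N_{G+F}(D)\setminus N_G(D)$, drop all $w$--$D$ fill-in edges from $F$; any hole in the resulting graph must contain $w$ and a $D$-vertex nonconsecutively, so each of its two $w$-to-$D$ arcs meets the clique $N_{G+F}(D)\setminus\{w\}$ at an internal vertex, producing a chord---so no hole exists, contradicting minimality of $F$. Without this step the argument does not close.
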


We will need a more refined understanding of the connection between minimal separators and minimal chordal completions. The following proposition essentially follows from the
toolbox introduced in~\cite{BouchitteT01}; we give a proof for the sake of completeness.

\begin{proposition}\label{prop:adh}
Suppose $F$ is a minimal chordal completion in a graph $G$, and suppose $(T,\beta)$ is a  clique tree of $G+F$. Then every adhesion $\adh(e)$ in $(T,\beta)$ is a minimal separator in $G$.
Moreover, if the removal of $e = t_1t_2$ from $T$ splits it into subtrees $T_1$ and $T_2$ with $t_i \in V(T_i)$, then there exist components $D_1,D_2\in \cc(G-\adh(e))$ that are 
full to $\adh(e)$ in $G$, $\beta(t_i) \setminus \adh(e) \subseteq D_i$, and such that $T_u\subseteq T_i$ for each $u\in D_i$ and $i=1,2$.
\end{proposition}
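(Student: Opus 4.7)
The plan is to establish the proposition's conclusions one by one, relying throughout on the fact that each bag $\beta(t_i)$ is a maximal clique of the minimal triangulation $H := G+F$, and therefore a potential maximal clique of $G$.

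First I would set up notation by letting $U_i = \bigcup_{t \in V(T_i)} \beta(t) \setminus \adh(e)$ for $i \in \{1,2\}$. A standard tree-decomposition argument, invoking the connectedness of the subtree $T_u$ for each vertex $u$, shows that any vertex appearing in bags on both sides of $e$ must belong to $\beta(t_1) \cap \beta(t_2) = \adh(e)$. Hence $U_1, U_2$ partition $V(G) \setminus \adh(e)$, and no edge of $H$---and in particular no edge of $G$---runs between $U_1$ and $U_2$. The last clause of the proposition is then automatic, since for any $u \in D_i \subseteq U_i$ the same connectedness-of-$T_u$ argument forces $T_u \subseteq T_i$.

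Next, to show that $\adh(e)$ is a minimal separator of $G$, I would use that maximality of the cliques $\beta(t_1), \beta(t_2)$ implies $\beta(t_i) \setminus \adh(e) \neq \emptyset$ (otherwise one bag would be contained in the other). Fixing $u_i \in \beta(t_i) \setminus \adh(e)$, each $u_i$ is $H$-adjacent to every vertex of $\adh(e)$ since $\beta(t_i)$ is a clique in $H$, while $u_1, u_2$ lie in different components of $H - \adh(e)$ by the $U_1, U_2$ split. Those two components are therefore both full to $\adh(e)$ in $H$, so $\adh(e)$ is a minimal separator of $H$, and Lemma~\ref{lem:fillin-minsep} transfers this to $G$.

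The two claims about $D_i$ both follow from the same template based on property \pmcb. In each case I would assume the claim fails and identify a non-edge $uw$ of $G$ lying inside the PMC $\beta(t_i)$: for the single-component claim, $u, w$ are two vertices of $\beta(t_i) \setminus \adh(e)$ supposedly in distinct components of $G - \adh(e)$; for fullness, $u \in \beta(t_i) \setminus \adh(e)$ and $w = v$ for some $v \in \adh(e)$ with no $G$-neighbor in $D_i$. Property \pmcb then supplies a component $C$ of $G - \beta(t_i)$ with $u, w \in N_G(C)$. Since $C$ is connected and disjoint from $\adh(e)$, it lies entirely in $U_1$ or entirely in $U_2$. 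The case $C \subseteq U_{3-i}$ is ruled out because it would require $u \in U_i$ to have a $G$-neighbor in $U_{3-i}$. In the case $C \subseteq U_i$, the $G$-edge from $u$ into $C$ forces $C$ to lie in the same component of $G - \adh(e)$ as $u$; applied twice this contradicts the ``distinct components'' assumption in the first claim, and in the fullness case it places $C$ inside $D_i$ and produces a $G$-neighbor of $v$ there, contradicting the choice of $v$. I expect the main obstacle to be exactly this case analysis for the covering component $C$: everything else is essentially forced by the partition $U_1 \cup U_2$ and the PMC property, and minimality of $F$ enters only indirectly, through the fact that each bag is a PMC.
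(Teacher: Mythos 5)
Your proposal is correct and follows essentially the same route as the paper's proof: the decisive step in both is that each bag $\beta(t_i)$ is a PMC of $G$ (because $F$ is a minimal completion), so condition \pmcb{} supplies covering components of $G-\beta(t_i)$ that avoid $\adh(e)$ and therefore lie inside the component of $G-\adh(e)$ meeting $\beta(t_i)\setminus\adh(e)$, which yields both fullness and the containment $\beta(t_i)\setminus\adh(e)\subseteq D_i$, while the $T_u\subseteq T_i$ claim is the same tree-decomposition connectivity argument. Your explicit $U_1,U_2$ partition and the detour via minimal separators of $G+F$ together with Lemma~\ref{lem:fillin-minsep} are only cosmetic variations (the paper cites Bouchitt\'e--Todinca for the $G+F$ fact and gets minimality in $G$ directly from the two full components $D_1\neq D_2$), so the argument is sound.
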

\begin{proof}
For $i=1,2$, let $\Om_i = \beta(t_i)$ be the maximal clique of $G+F$ associated with the endpoint $t_i$ of $e$ that is in $T_i$.
By~\cite[Proposition 2.2]{BouchitteT01}, $\adh(e)=\Om_1\cap \Om_2$ and $\adh(e)$ is a minimal separator in $G+F$.
Since $\Om_1$ and $\Om_2$ are different maximal cliques, there is a vertex $x\in \Om_1\setminus \Om_2=\Om_1\setminus \adh(e)$.
Let $D_1$ be the connected component of $x$ in $G-\adh(e)$. 

We claim $D_1$ is full to $\adh(e)$ and contains $\Om_1 \setminus \adh(e)$.
To show this, it suffices to show that every $v\in \Om_1$ satisfies $v \in N[D_1]$. If $xv\in E(G)$, then we already have $v\in N[D_1]$, so suppose $xv\notin E(G)$.
Since $\Om_1$ is a PMC, as witnessed by the minimal chordal completion $F$, there is a component $D$ of $G-\Om_1$ for which $\{x,v\}\subseteq N(D)$.
As $x\in N(D)$ and $\adh(e)\subseteq \Om_1$, we have that $D\subseteq D_1$, and hence $v\in N[D_1]$, as required.

We now claim that $T_u\subseteq T_1$ for each $u\in D_1$. This follows immediately from properties of tree decomposition, using the fact that 
$x\in \Om_1$, which is a bag associated with a bag of a node of $T_1$, and the fact that $D_1\cap \adh(e)=\emptyset$ by definition.

Hence, $D_1$ satisfies the required properties. A symmetric reasoning yields a component $D_2\in \cc(G-\adh(e))$ that is full to $\adh(e)$ in $G$ and such that $T_u\subseteq T_2$ for each $u\in D_2$.
In particular $D_1$ and $D_2$ have to be different, so $\adh(e)$ is a minimal separator in $G$.
\end{proof}

\begin{lemma}\label{lem:chordal-minsep-adh}
Let $G$ be a chordal graph and let $(T,\beta)$ be its clique tree.
A set $S \subseteq V(G)$ is a minimal separator in $G$ if and only if there exists an edge $e$ of $T$ with $\adh(e) = S$.
Furthermore, for every component $D$ of $G-S$ that is full to $S$ one can choose such an edge $e=st$ with $\beta(s) \subseteq N[D]$.
\end{lemma}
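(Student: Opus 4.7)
}
The forward implication is immediate from Proposition~\ref{prop:adh} applied with $F = \emptyset$: since $G$ itself is chordal, $(T,\beta)$ is a clique tree of $G+F=G$, and the proposition directly yields that for any edge $e$ of $T$, $\adh(e)$ is a minimal separator of $G$. All the content of the lemma is therefore in the converse together with the ``moreover'' claim, which I plan to prove simultaneously.

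The plan is to fix a minimal separator $S$ and a full component $D$ of $G-S$, and to produce an edge $e=st$ of $T$ with $\adh(e)=S$ and $\beta(s)\subseteq N[D]$. The main device is, for each connected component $C$ of $G-S$, the set of tree nodes $T^C:=\bigcup_{u\in C}T_u$. I would first observe that $T^C$ is a connected subtree of $T$, because $G[C]$ is connected and connected subtrees of a tree have the Helly property; and secondly that for distinct components $C\neq C'$ the subtrees $T^C$ and $T^{C'}$ are disjoint, since any two vertices from different components of $G-S$ are non-adjacent in $G$ and so cannot coexist in a bag (which, being a bag of a clique tree of $G$, is a clique).

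Next, I would use minimality of $S$ to pick a second full component $D'$ and consider the (by disjointness, nonempty) shortest path $P$ in $T$ between $T^D$ and $T^{D'}$; let $e=st$ be the first edge of $P$ leaving $T^D$, with $s\in V(T^D)$. Fullness of $D$ and $D'$ ensures that for every $v\in S$ the subtree $T_v$ meets both $T^D$ and $T^{D'}$, and connectedness then forces $T_v$ to contain every node of $P$; in particular $S\subseteq \adh(e)$ and $S\subseteq \beta(s)$. For the reverse containments a single observation suffices: any $w\in V(G)\setminus S$ belongs to a unique component $C$ of $G-S$, so $T_w\subseteq T^C$. Applied to $w\in \beta(s)$, combined with $s\in V(T^D)$, this forces $C=D$, whence $\beta(s)\subseteq S\cup D\subseteq N[D]$, using $S\subseteq N(D)$ from fullness. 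Applied to $w\in \adh(e)\setminus S$, it forces the other endpoint $t$ of $e$ to belong to $T^D$ as well, contradicting the choice of $s$ as the exit of $P$ from $T^D$; hence $\adh(e)=S$.

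I do not foresee a serious obstacle: once the subtrees $T^C$ are in place and their pairwise disjointness is recorded, everything reduces to walking along $P$ and invoking connectedness of the subtrees $T_v$ for $v\in S$ and $T_w$ for $w\in V(G)\setminus S$ to pin down which bags contain which vertices. The only subtlety that needs a one-line check is that $P$ has length at least one, so that the edge $e$ exists; this is precisely the disjointness of $T^D$ and $T^{D'}$ already observed.
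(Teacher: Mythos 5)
Your argument is correct and essentially matches the paper's own proof: it likewise dispatches the adhesion-to-separator direction via Proposition~\ref{prop:adh}, and for the converse builds the node-disjoint subtrees $T^{D}$ and $T^{D'}$ of nodes whose bags meet the two full components (disjoint because bags are cliques), takes the edge where the connecting path leaves $T^{D}$, and derives $S\subseteq\adh(e)$, $\adh(e)\subseteq S$, and $\beta(s)\subseteq N[D]$ just as you sketch, differing only in cosmetic details (the paper gets $\beta(s)\subseteq N[D]$ directly from the clique property of the bag, and $S\subseteq\adh(e)$ by picking, for $u\in S$, bags containing $u$ together with a neighbor in each full component). Two small remarks: connectivity of $T^{C}$ follows not from the Helly property but from the fact that for each edge $uv$ of the connected graph $G[C]$ the subtrees $T_u$ and $T_v$ share a node (any bag containing the edge $uv$), and what you call the ``forward implication'' is really the right-to-left direction of the stated equivalence.
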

\begin{proof}
The backwards direction is asserted by Proposition~\ref{prop:adh}, as the empty set is the unique minimal chordal completion of a chordal graph.

In the forward direction, let $S$ be a minimal separator in $G$ and let $D_1,D_2$ be two components of $G-S$ that are full to $S$.
For $i=1,2$, let $T_i$ be the subgraph of $T$ induced by nodes $t$ with $\beta(t) \cap D_i \neq \emptyset$; since $D_i$ is connected and nonempty, $T_i$ is a nonempty subtree of $T$.
Since there are no edges between $D_1$ and $D_2$, $T_1$ and $T_2$ are node-disjoint.

Let $P_T$ be the minimal path in $T$ with one endpoint in $T_1$ and the second endpoint in $T_2$.
Note that, as $T_1$ and $T_2$ are disjoint, $P_T$ has at least one edge.
Due to its minimality, all internal vertices of $P_T$ lie neither in $T_1$ nor in $T_2$.
Let $e = st \in E(T)$ be the edge of $P_T$ incident with the endpoint in $T_1$
and let $s \in V(T_1)$ be this endpoint. 
We claim that $\adh(e) = S$.

First, note that $\beta(s) \subseteq N[D_1]$ (as in the second statement of the lemma),
as $\beta(s) \cap D_1 \neq \emptyset$ and $(T,\beta)$ is a clique tree. Second, by the definition of $T_1$, we have $\beta(t) \cap D_1 = \emptyset$.
Consequently, $\adh(e) \subseteq N(D_1) = S$.

In the other direction, let $u \in S$ and let $v_i \in N(u) \cap D_i$ for $i=1,2$. By the properties of a tree decomposition,
there exist nodes $s_1,s_2$ with $u,v_i \in \beta(s_i)$. Clearly, $s_i \in V(T_i)$.
Hence, the edge $e$ lies on the unique path in $T$ from $s_1$ to $s_2$.
Consequently, by the properties of a tree decomposition, we have $u \in \adh(e)$, as desired.
\end{proof}

Let us now focus on the relation between minimal separators and maximal cliques in chordal graphs. By Lemma~\ref{lem:chordal-minsep}, every minimal separator in a chordal graph
is a clique; however, Lemma~\ref{lem:chordal-minsep-adh} implies that it is never a maximal one.
\begin{lemma}\label{lem:chordal-minsep-clique}
Let $S$ be a minimal separator in a chordal graph $G$, and let $D$ be a component of $G-S$ that is full to $S$.
Then, there exists a maximal clique $\Om$ in $G$ with $S \subsetneq \Om \subseteq N[D]$.
Furthermore, if there exists more than one such maximal clique $\Om$, then there exists a minimal separator $S'$ in $G$ with $S \subsetneq S' \subseteq N[D]$.
\end{lemma}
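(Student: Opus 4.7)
I would work in a clique tree $(T,\beta)$ of $G$ and locate the two required objects---the maximal clique $\Om$ and, when needed, the minimal separator $S'$---as a bag and an adhesion of $(T,\beta)$, respectively. Throughout, the key simple identity is $N[D] = D \cup S$: every vertex of $S$ has a neighbor in $D$ because $D$ is full to $S$, and no vertex outside $D \cup S$ has a neighbor in $D$ because $D$ is a component of $G - S$. For the first assertion, Lemma~\ref{lem:chordal-minsep-adh} supplies an edge $e = st \in E(T)$ with $\adh(e) = S$ and $\beta(s) \subseteq N[D]$; taking $\Om := \beta(s)$ gives a maximal clique of $G$ with $\Om \subseteq N[D]$, and $S \subsetneq \Om$ follows because two adjacent bags in a clique tree are distinct maximal cliques, so $\beta(s) \supsetneq \beta(s) \cap \beta(t) = \adh(e) = S$.

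For the second assertion, I would take two distinct maximal cliques $\Om_1, \Om_2$ as in the statement, let $t_i$ be the node with $\beta(t_i) = \Om_i$, and pass to the set $T_D^{\ast} := \{x \in V(T) : \beta(x) \cap D \neq \emptyset\}$. Since $G[D]$ is connected, $T_D^{\ast}$ is a subtree of $T$ by the standard tree-decomposition property. Both $t_i$ lie in $T_D^{\ast}$, because $\Om_i \setminus S \subseteq N[D] \setminus S = D$ is nonempty. Hence the unique $t_1$-to-$t_2$ path $P$ in $T$ lies entirely inside $T_D^{\ast}$, and for every edge $e' = xy$ on $P$ I have (i) $\adh(e') \subseteq \beta(x) \subseteq N[D]$, and (ii) $S \subseteq \adh(e')$, because every $v \in S$ lies in $\beta(t_1) \cap \beta(t_2)$ and therefore, by the subtree property of tree decompositions, in every bag along $P$. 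By Lemma~\ref{lem:chordal-minsep-adh}, any such $\adh(e')$ is a minimal separator of $G$, so it remains to produce an edge on $P$ whose adhesion \emph{strictly} contains $S$.

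This strict containment is the main obstacle, and I would handle it by contradiction via Proposition~\ref{prop:adh}. If every edge on $P$ satisfied $\adh(e') = S$, I would apply the proposition to the edge $e' = t_1 x_1$, where $x_1$ is the neighbor of $t_1$ along $P$: it yields two distinct components $D_1', D_2' \in \cc(G-S)$ containing $\beta(t_1) \setminus S$ and $\beta(x_1) \setminus S$, respectively. Both of these sets are nonempty ($\beta(t_1) \setminus S = \Om_1 \setminus S$ is nonempty by assumption, and $\beta(x_1) \cap D \neq \emptyset$ since $x_1 \in T_D^{\ast}$) and both are contained in $D$ (since $\beta(x_1) \subseteq N[D]$ gives $\beta(x_1) \setminus S \subseteq D$, and similarly for $t_1$). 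This forces $D_1' = D = D_2'$, contradicting Proposition~\ref{prop:adh}. Therefore some edge of $P$ provides the desired minimal separator $S'$ with $S \subsetneq S' \subseteq N[D]$.
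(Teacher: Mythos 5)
Your proof is correct. The first assertion is handled exactly as in the paper: take the edge $e=st$ promised by Lemma~\ref{lem:chordal-minsep-adh} and set $\Om=\beta(s)$. For the second assertion, however, you take a genuinely different route. The paper does not return to the clique tree at all: it picks $v_i\in\Om_i\setminus\Om_{3-i}$ (necessarily in $D$, since $S\subseteq\Om_1\cap\Om_2$ and $\Om_i\subseteq D\cup S$), chooses them nonadjacent using maximality of the two cliques, and takes an arbitrary minimal $v_1,v_2$-separator $S'$; then $S'\supseteq N(v_1)\cap N(v_2)\supseteq\Om_1\cap\Om_2\supseteq S$, minimality of $S'$ forces $S'\subseteq N[D]$, and $S\subsetneq S'$ because $v_1,v_2$ lie in the same component of $G-S$. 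You instead walk along the clique-tree path between the nodes carrying $\Om_1$ and $\Om_2$, note that this path stays inside the subtree of bags meeting $D$, that every adhesion on it contains $S$ and is contained in $N[D]$ (a bag meeting $D$ is a clique, hence lies in $N[D]$ --- a one-line fact you use implicitly, the same observation as in the proof of Lemma~\ref{lem:chordal-minsep-adh}), and then invoke Proposition~\ref{prop:adh} on the first edge to exclude its adhesion being exactly $S$: otherwise $\beta(t_1)\setminus S$ and $\beta(x_1)\setminus S$ would have to lie in two distinct full components of $G-S$, yet both are nonempty subsets of $D$. Both arguments are sound; the paper's is shorter, avoids the tree machinery in this part, and sidesteps the need for the classical ``nonadjacent pair across two maximal cliques'' fact, while yours stays uniformly within the adhesion framework already set up and in fact gives marginally more: the separator can be taken as the adhesion of the path edge incident to $t_1$, so one gets $S\subsetneq S'\subseteq\Om_1$ rather than merely $S'\subseteq N[D]$.
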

\begin{proof}
The first statement follows directly from Lemma~\ref{lem:chordal-minsep-adh} by picking $\Om = \beta(s)$ for the promised edge $e=st$ with $\beta(s) \subseteq N[D]$.

For the second claim, assume there are two such maximal cliques $\Om_1$ and $\Om_2$, and let $v_i \in \Om_i \setminus \Om_{3-i}$; note that $v_1,v_2 \in D$.
Furthermore, since both $\Om_1$ and $\Om_2$ are maximal, we can choose $v_1$ and $v_2$ such that they are nonadjacent.
Then, we show that any minimal $v_1,v_2$-separator $S'$ satisfies the claim.
Clearly, such an $S'$ contains $N(v_1) \cap N(v_2) \supseteq \Om_1 \cap \Om_2 \supseteq S$. Consequently, $S' \cap N[D]$ also separates $v_1$ from $v_2$; from
the minimality of $S'$ we infer that $S' \subseteq N[D]$. However, since $v_1$ and $v_2$ are vertices of the same connected component of $G-S$, the set $S$
itself does not separate $v_1$ from $v_2$, and thus $S \subsetneq S'$.
\end{proof}

In the other direction, we have the following; see Theorem 3.14 of~\cite{BouchitteT01} for a statement with less details.
\begin{lemma}\label{lem:pmc-minseps}
Suppose $\Om$ is a PMC in a graph $G$, and let $D\in \cc(G-\Om)$. Denote $S=N(D)$. Then there is a component $D_\Om\in \cc(G-S)$ whose vertex set consists of $\Om\setminus S$ and the union 
of vertex sets of all components $D'\in \cc(G-\Om)$ for which $N(D')\not\subseteq N(D)$. Moreover, $S$ is a minimal separator in $G$, where $D$ and $D_\Omega$ are two connected components of $G-S$ that 
are full to $S$.
\end{lemma}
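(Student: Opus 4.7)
The plan is to define $D_\Omega$ explicitly as the set
$X := (\Omega \setminus S) \cup \bigcup \{V(D') : D' \in \cc(G-\Omega),\ N(D') \not\subseteq S\}$,
and then verify in turn that $X$ is nonempty, induces a connected subgraph of $G-S$, is separated from the remaining vertices of $G-S$, and is full to $S$. Combined with the trivial fact that $D$ is full to $S = N(D)$, this both identifies $X$ as a component of $G-S$ and certifies that $S$ is a minimal separator.

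First I would note that $\Omega \setminus S \neq \emptyset$: otherwise $\Omega \subseteq S = N(D)$ would force every vertex of $\Omega$ to have a neighbor in $D$, making $D$ full to $\Omega$ and contradicting (PMC1). Next I would show $G[X]$ is connected in two steps. For any two non-adjacent $u,v \in \Omega \setminus S$, property (PMC2) produces a component $D' \in \cc(G-\Omega)$ with $\{u,v\} \subseteq N(D')$; since $u \in N(D') \setminus S$ we have $N(D') \not\subseteq S$, so $V(D') \subseteq X$ and $u,v$ are connected through $D'$. Adjacent pairs are connected directly, so $\Omega \setminus S$ is connected in $G[X]$. Moreover, every component $D'$ that contributes to the union of the definition of $X$ has, by the definition, some vertex of $\Omega \setminus S$ in $N(D')$, so $V(D')$ is attached to $\Omega\setminus S$ inside $G[X]$.

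Then I would argue that $X$ receives no edges of $G$ from outside $X \cup S$. The complement $V(G) \setminus (X \cup S)$ consists of $V(D)$ together with the vertex sets of components $D' \in \cc(G-\Omega)$ with $N(D') \subseteq S$; each such component has all its external neighbors inside $S$ (and $D$ itself has external neighbors exactly $S$), and thus sends no edges into $X$. Combined with the previous paragraph, $X$ is a connected component of $G-S$, which will be our $D_\Omega$; note also that $X \neq D$ because $\Omega \setminus S \subseteq X$ is nonempty and disjoint from $D$. Finally, for fullness, I would pick any $s \in S$ and any $u \in \Omega \setminus S$: if $su$ is an edge then $u \in X$ is a neighbor of $s$ in $X$, and otherwise (PMC2) gives a component $D' \in \cc(G-\Omega)$ with $\{s,u\} \subseteq N(D')$, forcing $V(D') \subseteq X$ as before, and any neighbor of $s$ in $D'$ lies in $X$. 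Hence $D$ and $D_\Omega = X$ are two distinct connected components of $G-S$ that are full to $S$, so $S$ is a minimal separator.

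The only step requiring genuine care is the separation claim of the second-to-last paragraph: one has to remember that the ``only edges leaving a component of $G-\Omega$ go into $\Omega$'' principle combined with $N(D') \subseteq S$ actually places those edges inside $S$, so that after deletion of $S$ no edges from such a $D'$ survive to reach $X$. Once this is laid out, every other claim is a direct application of (PMC1), (PMC2), or the definition of $X$.
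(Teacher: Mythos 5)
Your proposal is correct and follows essentially the same route as the paper: it defines $D_\Omega$ explicitly as $(\Omega\setminus S)$ together with the components whose neighborhoods stick out of $S$, uses (PMC1) for nonemptiness, (PMC2) for connectivity and fullness, and the observation $N(D_\Omega)\subseteq S$ (your separation step) to conclude it is a component of $G-S$. No gaps; the extra care you spend on the separation claim is exactly the point the paper dispatches with ``$N(D_{\Om})\subseteq S$ by definition.''
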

\begin{proof}
By the definition of $S$, $D$ is a connected component of $G-S$ and is full to $S$.
Let $D_{\Om}$ be the union of $\Om \setminus S$ and all connected components $D' \in \cc(G-\Om)$ with $N(D') \not\subseteq N(D)$.
We show that $D_{\Om}$ is indeed a connected component of $G-S$ that is full to $S$.

Observe that $N(D_{\Om}) \subseteq S$ by definition. Let us show that $D_{\Om}$ is connected in $G-S$. By \pmca, $\Om- S$ is nonempty and by definition of $D_{\Om}$, every $D'\in \cc(G-\Om)$ such that $D'\subseteq D_{\Om}$ has a neighbor in $\Om- S$. 
From \pmcb, it follows that $\Om- S$ lies in one connected component of $G- S$; every two vertices of $\Om- S$ are either connected by an edge or belong to $N(D')$ for some $D'\in \cc(G- \Om)$ (which therefore lies in $D_{\Om}$).
Thus, $D_{\Om}$ is indeed a connected component in $G-S$.

To show that $D_{\Om}$ is full to $S$, let $x$ be a vertex of $\Om- S$. By \pmcb, for every vertex $y\in S$, there is either $xy$ is an edge or $\{x,y\}\subseteq N(D')$ for $D'\in \cc(G- \Om)$. Such $D'$ is by definition a subset of $D_{\Om}$. It follows that $D_{\Om}$ is full to $S$ and thus, $S$ is a minimal separator.
\end{proof}

\paragraph*{Modules and modular decomposition.} In our argumentation we will use the basic properties of modules and the modular decomposition of a graph, introduced by Gallai~\cite{gallai67}.
We refer the reader to the survey of Habib and Paul~\cite{HabibP10} for a comprehensive review of modern approaches and algorithmic applications.

Suppose $G$ is a graph. A nonempty vertex subset $M\subseteq V(G)$ is a {\em{module}}
if for every vertex $u\notin M$, either $u$ is complete to $M$ or $u$ is anti-complete to $M$.
Note that if $M$ and $M'$ are two disjoint modules, then $M$ and $M'$ are either complete or anti-complete to each other; for brevity, we will just say that $M$ and $M'$ are adjacent or non-adjacent.

A module $M$ is {\em{proper}} if it is not equal to the whole vertex set, and $M$ is {\em{strong}} if for any other module $M'$, either $M\subseteq M'$, or $M\supseteq M'$, or $M\cap M'=\emptyset$.
A proper strong module is {\em{maximal}} if there is no other proper strong module that is its proper superset.
The following classic result explains the structure of maximal proper strong modules in a graph.

\begin{proposition}[cf. Lemma 2 in~\cite{HabibP10}]\label{prop:modules-partition}
For any graph $G$ with more than one vertex, maximal proper strong modules of $G$ form a partition of the vertex set of $G$.
\end{proposition}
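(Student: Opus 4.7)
The plan is to establish two facts: every vertex is contained in some maximal proper strong module (coverage), and two distinct maximal proper strong modules are disjoint (disjointness). Together these give the partition claim.

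For coverage, I would first observe that since $|V(G)|>1$, for every vertex $v\in V(G)$ the singleton $\{v\}$ is a nonempty proper subset of $V(G)$ and is trivially a module, since the condition ``every $u\notin M$ is complete or anti-complete to $M$'' is automatic when $|M|=1$. I would then check that $\{v\}$ is strong: for any module $M'$, either $v\in M'$ (in which case $\{v\}\subseteq M'$) or $v\notin M'$ (in which case $\{v\}\cap M'=\emptyset$). Hence $\{v\}$ is a proper strong module containing $v$, so the family $\Ss_v$ of proper strong modules containing $v$ is nonempty. Next I would argue that $\Ss_v$ is linearly ordered by inclusion: any two modules $M,M'\in\Ss_v$ both contain $v$, and the strongness of $M$ (applied to $M'$) excludes the option $M\cap M'=\emptyset$, leaving $M\subseteq M'$ or $M'\subseteq M$. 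Since $V(G)$ is finite, $\Ss_v$ has an inclusion-maximum element $M_v$, and $M_v$ is necessarily a maximal proper strong module; indeed any proper strong module $M'\supsetneq M_v$ would itself lie in $\Ss_v$ and contradict maximality of $M_v$ in the chain.

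For disjointness, I would take two distinct maximal proper strong modules $M_1$ and $M_2$ and apply the strongness of $M_1$ to $M_2$: one of $M_1\subseteq M_2$, $M_2\subseteq M_1$, or $M_1\cap M_2=\emptyset$ must hold. The first two options contradict the maximality of one of $M_1,M_2$ together with $M_1\neq M_2$, so we must have $M_1\cap M_2=\emptyset$.

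Combining coverage and disjointness, every vertex lies in exactly one maximal proper strong module, so these modules form a partition of $V(G)$. The only slightly delicate point, and probably the main thing to spell out carefully, is verifying the chain property of $\Ss_v$ and deducing that its maximum is genuinely a \emph{maximal} proper strong module in the global sense (not only within the chain), but this follows immediately because any strictly larger proper strong module would still contain $v$ and hence belong to $\Ss_v$.
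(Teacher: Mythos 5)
Your proof is correct. Note that the paper does not prove this proposition at all---it is quoted from the literature (cf.\ Lemma~2 in the Habib--Paul survey), so there is no in-paper argument to compare against. Your self-contained argument is the standard one and is complete: singletons are proper strong modules (using $|V(G)|>1$), the family $\Ss_v$ of proper strong modules containing a fixed vertex $v$ is a chain because strongness rules out the ``disjoint'' alternative when both modules contain $v$, finiteness gives an inclusion-maximum $M_v$, and any proper strong module strictly containing $M_v$ would again contain $v$ and contradict that maximum, so $M_v$ is maximal in the global sense; disjointness of two distinct maximal proper strong modules follows by applying strongness and maximality exactly as you write. The only stylistic remark is that the survey derives this as part of the general theory of the modular decomposition tree (the maximal proper strong modules are the children of the root), whereas your direct chain argument avoids any of that machinery---which is perfectly adequate for the statement as used here.
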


This proposition naturally leads to the following definitions.
The {\em{modular partition}} of a graph $G$, denoted $\Mod(G)$ is simply the set of maximal proper strong modules in $G$; 
Proposition~\ref{prop:modules-partition} ensures that $\Mod(G)$ is a partition of $V(G)$ provided $G$ has more than one vertex.
The {\em{quotient graph}} of $G$, denoted $\Quo(G)$, has $\Mod(G)$ as the vertex set,
and two modules are connected by an edge in $\Quo(G)$ if and only if they are adjacent in $G$. A graph $G$ is \emph{prime} if its only modules are trivial, that is, singletons.
We have the following corollary of Proposition~\ref{prop:modules-partition}.

\begin{proposition}[cf. Theorem 2 in~\cite{HabibP10}]\label{prop:modules-quo}
For every graph $G$ with more than one vertex, its quotient graph $\Quo(G)$ is
an independent set if $G$ is not connected,
a clique if the complement of $G$ is not connected,
or a prime graph otherwise.
\end{proposition}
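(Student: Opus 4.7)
My plan is to handle the three cases separately. In Case~1, when $G$ is disconnected with components $C_1, \ldots, C_k$ and $k \geq 2$, I would first argue that every module of $G$ that intersects at least two components must be a union of whole components: a vertex $u$ outside such a module $M'$ but inside one of the intersected components $C_i$ is forced to be anti-complete to $M'$ (as it is non-adjacent to a vertex of $M'$ sitting in a different component), which together with internal connectedness of $C_i$ implies $C_i \subseteq M'$. Each $C_i$ is clearly a module, and the above shows it is in fact strong. For maximality among proper strong modules, any strictly larger proper strong module would be a proper union of at least two components, which fails to be strong whenever $k \geq 3$ (two different such unions cross) and does not exist at all for $k=2$. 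Since distinct components are mutually non-adjacent, $\Quo(G)$ is an independent set.

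Case~2 follows by complementation: the notion of a module is invariant under taking complement, and hence so is $\Mod(G)$. Applying Case~1 to $\overline{G}$ identifies $\Mod(G)$ with the components of $\overline{G}$, which are pairwise complete in $G$, so $\Quo(G)$ is a clique.

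The heart of the proof is Case~3, when both $G$ and $\overline{G}$ are connected. My plan is to first establish that the maximal proper modules of $G$ (inclusion-maximal modules different from $V(G)$) form a partition of $V(G)$ coinciding with $\Mod(G)$, and then to derive primeness of $\Quo(G)$. Coverage is immediate since each singleton is a proper module, so it suffices to show that no two distinct maximal proper modules $M_1, M_2$ cross. Using the standard fact that the union of two modules sharing a vertex is again a module (a routine check using the module property at a common element), if $M_1 \cup M_2 \subsetneq V(G)$ then $M_1 \cup M_2$ would be a strictly larger proper module and contradict maximality; hence $M_1 \cup M_2 = V(G)$. Writing $A = M_1 \setminus M_2$, $B = M_1 \cap M_2$, $C = M_2 \setminus M_1$, one verifies (using that $A$, $B$, $C$ are themselves modules whenever $M_1$ and $M_2$ cross) that each of the three pair-relations $A \leftrightarrow B$, $A \leftrightarrow C$, $B \leftrightarrow C$ is either complete or anti-complete in $G$, and moreover the conditions of $M_1$ and $M_2$ being modules force all three to agree. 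If all three are complete then $\overline{G}$ splits into at least three components (one coming from each of $A$, $B$, $C$), and if all three are anti-complete then $G$ does, in both cases contradicting connectedness of $G$ or $\overline{G}$.

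It follows that distinct maximal proper modules are pairwise disjoint and together form a partition. They are automatically strong (as they cross no module); conversely, no proper strong module can strictly contain them without violating maximality, whence maximal proper modules coincide with $\Mod(G)$. Finally, to conclude that $\Quo(G)$ is prime, I would assume for contradiction the existence of a module $\mathcal{N}$ of $\Quo(G)$ with $2 \leq |\mathcal{N}| < |\Mod(G)|$ and set $M = \bigcup \mathcal{N}$. Then $M$ is a module of $G$: any $u \notin M$ lies in a unique $N_u \in \Mod(G) \setminus \mathcal{N}$, and since $\mathcal{N}$ is a module of $\Quo(G)$, $N_u$ is either complete or anti-complete to every element of $\mathcal{N}$, which translates into $u$ being complete or anti-complete to $M$ in $G$. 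But $M$ strictly contains some $N \in \mathcal{N}$, and $N$ is a maximal proper module of $G$, forcing $M = V(G)$ and hence $\mathcal{N} = \Mod(G)$, a contradiction. The main obstacle I anticipate is the case analysis in Case~3 ruling out crossings of maximal proper modules when both $G$ and $\overline{G}$ are connected; all remaining steps reduce to straightforward module arithmetic and direct applications of the definitions.
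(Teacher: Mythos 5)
Your proof is correct, but note that the paper itself does not prove this proposition: it is quoted from the literature (Gallai's theorem on modular decomposition, cited via Theorem 2 of the Habib--Paul survey), so there is no in-paper argument to compare against. Your derivation is a sound self-contained proof along the classical lines: in the disconnected (resp.\ co-disconnected) case the components of $G$ (resp.\ of $\overline{G}$) are exactly the maximal proper strong modules, and in the case where both $G$ and $\overline{G}$ are connected you show that maximal proper modules cannot overlap, since the union of two overlapping modules is a module, a crossing pair must therefore cover $V(G)$, and the three parts $A=M_1\setminus M_2$, $B=M_1\cap M_2$, $C=M_2\setminus M_1$ would be pairwise complete or pairwise anti-complete, contradicting connectedness of $\overline{G}$ or of $G$; primeness of $\Quo(G)$ then follows by pulling a nontrivial proper module of the quotient back to a proper module of $G$ that strictly contains a maximal one. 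One point you should make explicit: to conclude that a maximal proper module $M$ is \emph{strong} you need that no module $M'$ whatsoever crosses $M$, not merely no other maximal proper module, whereas your parenthetical remark is justified in the text only for pairs of maximal proper modules. The fix is immediate and uses only your existing computation: if an arbitrary module $M'$ crosses $M$, then $M\cup M'$ is a module strictly containing $M$, hence equals $V(G)$ by maximality of $M$ alone, and the $A,B,C$ analysis applies verbatim to give the same contradiction. With that sentence added, the argument is complete and matches the standard proof of the quoted theorem.
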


The next proposition explains how other modules behave with respect to modular partition.

\begin{proposition}[cf. Lemma 2 and Theorem 2 in~\cite{HabibP10}]\label{prop:modules-other}
Let $G$ be a graph. Then every proper module of $G$ is either contained in one of the maximal proper strong modules in $G$, or is the union of a collection of maximal proper strong modules in $G$.
Moreover, the latter case can happen only if the quotient graph $\Quo(G)$ is a clique or an independent set.
\end{proposition}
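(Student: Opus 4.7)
The plan is to apply the defining trichotomy of strong modules against each maximal proper strong module, and then to reinterpret the situation inside the quotient graph. Let $M$ be a proper module of $G$ and let $M_1,\ldots,M_k$ be the maximal proper strong modules, which partition $V(G)$ by Proposition~\ref{prop:modules-partition}. Because each $M_i$ is strong, its defining property applied to the pair $(M_i, M)$ yields for every $i$ one of the three alternatives $M \subseteq M_i$, $M_i \subseteq M$, or $M_i \cap M = \emptyset$.

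If $M \subseteq M_i$ for some $i$, then the first alternative in the statement holds. Otherwise, for every $i$ only the latter two subcases are possible, so each $M_i$ is either entirely contained in $M$ or disjoint from it. Since the $M_i$ partition $V(G)$, this forces $M = \bigcup_{i \colon M_i \subseteq M} M_i$, which establishes the first claim.

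For the moreover part, assume that $M$ is not contained in any single $M_i$, so that the decomposition above uses at least two summands; set $S = \{M_i : M_i \subseteq M\} \subseteq \Mod(G) = V(\Quo(G))$, and observe that $2 \leq |S| < k$ because $M$ is proper. The key step, which I view as the only genuinely delicate point of the argument, is to verify that $S$ is a module of $\Quo(G)$. Fix any $M_j \notin S$, so that $M_j \cap M = \emptyset$ and hence $M_j \subseteq V(G) \setminus M$; since $M$ is a module in $G$, each vertex of $M_j$ is either complete or anti-complete to $M$, and since $M_j$ is itself a module, these choices coincide across $M_j$, so $M_j$ as a whole is complete or anti-complete to $M$. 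In particular, for every $M_i \in S$ the pair $(M_j, M_i)$ is complete or anti-complete, with the choice depending only on $M_j$. Translated to $\Quo(G)$, this says that the vertex $M_j$ is adjacent to all of $S$ or to none of $S$, so $S$ is a non-singleton proper module of $\Quo(G)$ and hence $\Quo(G)$ is not prime. Invoking Proposition~\ref{prop:modules-quo} finishes the proof: $\Quo(G)$ must be a clique or an independent set.
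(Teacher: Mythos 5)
Your proof is correct. The paper itself states Proposition~\ref{prop:modules-other} as a citation to~\cite{HabibP10} without giving its own proof, so there is no in-paper argument to compare against. Your argument is self-contained and uses only the paper's definitions together with Propositions~\ref{prop:modules-partition} and~\ref{prop:modules-quo}: the trichotomy from the definition of a strong module splits the analysis cleanly into the ``contained in some $M_i$'' case and the ``union of some $M_i$'' case; and the key observation that for a disjoint module $M_j$ the complete/anti-complete choice toward $M$ is uniform across $M_j$ (because $M$ is a module and $M_j$ is a module) correctly shows that the collection $\{M_i : M_i \subseteq M\}$ is a nontrivial proper module of $\Quo(G)$, which rules out the prime case by Proposition~\ref{prop:modules-quo}. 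The only implicit hypothesis worth flagging is $|V(G)| > 1$ (needed for Proposition~\ref{prop:modules-partition} to apply), but the statement is vacuous otherwise since a one-vertex graph has no proper module.
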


Propositions~\ref{prop:modules-partition} and~\ref{prop:modules-other} can be used recursively to form a hierarchical decomposition of the graph into smaller and smaller modules, as defined next.
\begin{definition}
For a graph $G$, the {\em{modular decomposition}} of $G$ is a rooted tree $T$ with every node $x$ labelled with a module $M_x$ such that the following conditions hold:
\begin{itemize}
\item The root of $T$ is labelled with the module $V(G)$.
\item Each leaf of $T$ is labelled by a module consisting of one vertex.
\item For each node $x$ of $T$, the set of labels of the children of $x$ is the modular partition of $G[M_x]$.
\end{itemize}
For a node $x$, if $\Quo(G[M_x])$ is a clique then $x$ is called a {\em{clique node}}, if $\Quo(G[M_x])$ is edgeless then $x$ is called an {\em{independent set node}}, and otherwise $x$ is called a {\em{prime node}}.
\end{definition}
As discussed, recursive application of Propositions~\ref{prop:modules-partition} and~\ref{prop:modules-other} yields the following.

\begin{proposition}[cf. discussion after Theorem 2 in~\cite{HabibP10}]\label{prop:modules-decomp}
For every graph $G$ there exists a unique modular decomposition $T$ of $G$ that can be computed in polynomial time.
Moreover, every module of $G$ is either the label of some node in $T$, or the union of labels of some collection of children of some clique or independent set node. 
\end{proposition}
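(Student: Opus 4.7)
The plan is to build $T$ by direct recursion using the three previous propositions. At the root I place a node with label $V(G)$; if $|V(G)|=1$ this is a leaf, and otherwise by Proposition~\ref{prop:modules-partition} the modular partition $\Mod(G)$ is uniquely determined, so I attach one child per $M\in \Mod(G)$ and then recursively construct the modular decomposition of $G[M]$ as the subtree rooted at that child. Every $M\in \Mod(G)$ is a proper module, so $\Mod(G)$ cannot consist of a single element (which would have to equal $V(G)$); hence every internal node of $T$ has at least two children, and $T$ has at most $2n-1$ nodes in total. Existence is then immediate, and uniqueness follows from the uniqueness of $\Mod(G[M_x])$ at each internal node $x$. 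For polynomial-time computation it suffices to compute $\Mod(G[M_x])$ at each node $x$ and to classify the resulting quotient via Proposition~\ref{prop:modules-quo}; both subtasks admit standard polynomial-time algorithms, and combined with the $\Oh(n)$ bound on $|V(T)|$ this gives the required polynomial total running time.

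For the final statement I would proceed by induction on $|V(G)|$. The base case $|V(G)|=1$ is trivial. For the inductive step, let $M$ be a module of $G$. If $M=V(G)$ it is the label of the root, so assume $M$ is a proper module. Proposition~\ref{prop:modules-other} leaves two cases. In the first case, $M$ is the union of some elements of $\Mod(G)$, and the second assertion of Proposition~\ref{prop:modules-other} forces $\Quo(G)$ to be a clique or an independent set; hence the root of $T$ is a clique or independent-set node and $M$ is the union of labels of some of its children, as required. In the second case, $M\subseteq M_x$ for some $M_x\in \Mod(G)$, and here I first verify the routine fact that $M$ is also a module of $G[M_x]$: every $v\in M_x\setminus M$ lies in $V(G)\setminus M$, hence by the module property of $M$ in $G$ is complete or anti-complete to $M$ in $G$, and so also in $G[M_x]$. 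Since $|M_x|<|V(G)|$, the inductive hypothesis applies to the subtree rooted at the child labelled $M_x$ (which by construction is the modular decomposition of $G[M_x]$) and delivers the claim for $M$.

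The only step where I expect to have to pause is the restriction argument in the second case above, namely that a module of $G$ contained in a maximal proper strong module remains a module of the induced subgraph. This is immediate from the definition of a module, but must be stated explicitly for the induction to run cleanly; without it, the two cases of Proposition~\ref{prop:modules-other} would not chain properly from $G$ down to $G[M_x]$. The rest of the argument is essentially bookkeeping on top of the preceding propositions: existence and uniqueness come from Proposition~\ref{prop:modules-partition}, the partition/inclusion dichotomy for modules from Proposition~\ref{prop:modules-other}, and the clique/independent/prime labelling of quotients from Proposition~\ref{prop:modules-quo}.
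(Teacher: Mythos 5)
Your argument is correct and follows essentially the same route the paper takes: the paper does not prove this proposition but cites it from Habib and Paul, noting just before the statement that it follows by recursive application of Propositions~\ref{prop:modules-partition} and~\ref{prop:modules-other}, which is exactly the recursion (tree construction plus induction on $|V(G)|$, with the routine restriction fact that a module of $G$ inside $M_x$ is a module of $G[M_x]$) that you spell out. The only ingredient not contained in the stated propositions, namely that the modular partition itself is computable in polynomial time, you correctly defer to standard algorithms, which is consistent with the paper's treatment of the result as a known fact.
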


\begin{corollary}\label{cor:enum-modules}
For every graph $G$ on $n$ vertices, there are at most $2n-1$ modules in $G$ that induce a graph that is both connected and its complement is connected.
Moreover, these modules can be enumerated in polynomial time.
\end{corollary}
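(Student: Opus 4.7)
The plan is to reduce the claim directly to Proposition~\ref{prop:modules-decomp} by arguing that any module of interest must be the label of some node of the modular decomposition tree $T$. First I would invoke Proposition~\ref{prop:modules-decomp} to compute $T$ in polynomial time, and recall that by the same proposition any module $M$ of $G$ falls into one of two cases: either $M = M_x$ for some node $x \in V(T)$, or $M$ is the union of labels of a subset of children of some clique or independent-set node $x$.

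The key step is ruling out that a module $M$ with $G[M]$ and its complement both connected can arise from the second case with more than one child involved. If $x$ is a clique node, then by definition any two distinct children of $x$ carry labels that are complete to each other in $G$, so the complement of $G[M]$ contains no edges between distinct children's labels and is therefore disconnected; symmetrically, if $x$ is an independent-set node, then distinct children's labels are anti-complete to each other in $G$, so $G[M]$ itself is disconnected. In each case the hypothesis on $M$ is violated. Of course, if the collection in the second case has size one, then $M$ is simply the label of that single child, which is a node of $T$. Hence every module in question coincides with the label $M_x$ of some node $x$ of $T$.

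It remains to bound $|V(T)|$. The tree $T$ has exactly $n$ leaves, one for each vertex of $G$, and every internal node has at least two children since its modular partition has at least two parts (Proposition~\ref{prop:modules-partition} applied with $|M_x| \geq 2$). A routine edge-count in $T$ gives at most $n-1$ internal nodes, hence $|V(T)| \leq 2n-1$, which yields the desired numerical bound (noting that distinct nodes of $T$ carry distinct labels, so no double-counting occurs).

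For the enumeration part, after computing $T$ in polynomial time by Proposition~\ref{prop:modules-decomp}, I would simply output the labels $M_x$ of all its nodes; one may additionally filter out those $M_x$ that do not satisfy the connectivity conditions, which is a trivial polynomial-time test. The corollary is essentially a bookkeeping consequence of the modular decomposition theorem, so I do not foresee any genuine obstacle; the only conceptual point to get right is the case analysis that excludes non-trivial unions of children's labels at clique and independent-set nodes.
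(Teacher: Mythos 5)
Your proof is correct and follows essentially the same route as the paper: reduce to the modular decomposition tree via Proposition~\ref{prop:modules-decomp}, count nodes ($n$ leaves, internal nodes with at least two children), and enumerate node labels. The only difference is that you spell out explicitly the step the paper dismisses as ``easy to see'' (excluding unions of two or more children of clique or independent-set nodes), and your case analysis there is exactly the intended one.
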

\begin{proof}
Let $T$ be the modular decomposition of $G$.
From Proposition~\ref{prop:modules-decomp} it is easy to see that if a module $M$ induces a graph that is both connected and its complement is connected, then $M$ has to be the label of some node in $T$. 
Leaves of $T$ are in one-to-one correspondence with vertices of $G$, hence there is exactly $n$ of them. Since every internal node of $T$ has at least two children, it follows that $T$ has at most $2n-1$ nodes.
For enumeration, we can compute $T$ in polynomial time and output those labels of its nodes that satisfy the condition.
\end{proof}

In this paper we will often deal with graphs whose quotient graphs are cliques. This justifies introducing the following definition.

\begin{definition}
A graph $G$ is a {\em{mesh}} if $|V(G)|\geq 2$ and the quotient graph $\Quo(G)$ is a clique.
\end{definition}

\section{Overview}\label{sec:over}

\subsection{Replacement strategy}

Let $G$ be a $P_6$-free graph and let $I$ be a maximal independent set
in $G$. Recall that we are looking for a polynomially-sized
family $\Ff$ of PMCs in $G$
such that there exists an $I$-free minimal chordal completion $F$ of $G$
(i.e., one that does not add any edges incident with $I$)
with all maximal cliques of $G+F$ belonging to $\Ff$.
Such a family $\Ff$ can be then plugged into the dynamic programming
algorithm of Lemma~\ref{lem:dp}.

A useful point of view is the following. Fix an $I$-free minimal chordal
completion $F$ of $G$ and fix a maximal clique $\Om$ of $G+F$.
We would like to deduce $\Om$ by making a constant number of guesses,
each guess having one of polynomially many options. Typical guesses are:
one of the few structural options, a specific vertex or edge of $G$, 
or a strong module of $G$ or some induced subgraph of $G$.

As discussed earlier, the number of PMCs is exponential even in $P_5$-free
graphs, and thus we cannot hope to discover every such PMC $\Om$.
Thus, at some places we will need to fall back to some {\emph{greedy}} arguments, 
  where we will assume that $F$ behaves in some parts of the graph
  in a canonical way. 
  Correctness of such assumption will be ensured by a \emph{replacement}
  argument that is best illustrated on an example of a maximal clique
  $\Om$ intersecting $I$. The following argumentation is already present
  in~\cite{LokshtanovVV14}.

Assume $\Om \cap I \neq \emptyset$. Since $\Om$ is a clique in $G+F$ and
$F$ is $I$-free, we have that $|\Om \cap I| = 1$. 
Let $\Om \cap I = \{v\}$, $\mathcal{D} = \cc(G-N[v])$ and $D \in \mathcal{D}$.
Observe that $N(D)$
is a minimal separator in $G$ with $D$ as one full side and the connected component
of $G-N(D)$ containing $v$ as a second full side.
The critical insight is that $F$ needs to complete $N(D)$ into a clique
and $N(D)$ remains a minimal separator in $G+F$.
To see this, let $x,y \in N(D) \subseteq N(v)$ with $xy \notin E(G)$, and
consider a hole $H$ consisting of $x$, $v$, $y$, and a shortest path from $x$
to $y$ with internal vertices in $D$. Observe that
the only way to triangulate $H$ without
adding an edge incident with $v$ is to add the edge $xy$.

Thus, the transition (completion) from $G$ to $G+F$ can be seen as a two-stage
process. First, for every $D \in \mathcal{D}$
we complete $N(D)$ into a clique. Second, we complete $G[N[v]]$ into a chordal
graph (without adding any edge incident with $v$) 
  and, \emph{independently}, we complete every $G[N[D]]$ into a chordal
graph for every $D \in \mathcal{D}$ (without adding any edge incident with
    $I \setminus \{v\}$). 
Note that as $v \in I$, it does not really matter how in the second stage
we complete $G[N[v]]$ into a chordal graph.
That is, we can fix one chordal completion $F_v$ of $G[N[v]]$ that (i) does not add any edge incident with $v$, (ii) completes all $N(D)$ for $D \in \mathcal{D}$ into a clique, (iii) is minimal, respecting the previous requirements. 
The argumentation so far implies that if $F$ is an $I$-free minimal chordal completion with $v \in I$, then
$F' := (F \setminus \binom{N[v]}{2}) \cup F_v$, where $\binom{N[v]}{2}$ is the set of all pairs of vertices from $N[v]$, is an $I$-free minimal chordal completion as well. 
Hence, we can without loss of generality consider only completions $F$ such that $F_v \subseteq F$ for every $v \in I$.

This motivates the following step. For every $v \in V(G)$, we choose one $F_v$ as above, and add all maximal cliques of $G[N[v]] + F_v$ to $\Ff$. This results in at most $n^2$ elements of $\Ff$, while we obtain the property that every maximal clique $\Om$ of $G+F$ with $\Om \cap I \neq \emptyset$ is contained in $\Ff$.
In particular, in the rest of the argumentation, we care only about PMCs $\Om$ that are disjoint with $I$.

In our proof, we develop a number of other, more involved
\emph{replacement arguments}. However, they all share the same general structure:
we identify some part of the graph $\Gamma \subseteq V(G)$,
   argue that we can assume that $F$ turns every $N(D)$ for $D \in \cc(G-\Gamma)$
   into a clique, and then insert into $\Ff$ all cliques of 
   one arbitrarily fixed completion of $G[\Gamma]$
   that turns all $\{N(D) | D \in \cc(G-\Gamma)\}$ into cliques.
   
\subsection{PMC guessing}\label{o:ss:guess}

To successfully employ replacement arguments, our strategy is to discover (guess) 
a dense enough subset of the maximal cliques of $G+F$ such that the areas
between the guessed cliques are simple enough to admit replacement arguments.
Let us now take a look at our PMC discovery arguments.

The first important observation is that
instead of trying to guess a PMC $\Om$, we may try to find a small family of 
candidates for the components of $\cc(G-\Om)$.
This is expressed in the following lemma,
     which generalizes
an analogous result for $P_5$-free graphs from~\cite{LokshtanovVV14}.

\begin{lemma}\label{o:lem:recover1}
Suppose $G$ is a $P_6$-free graph on $n$ vertices.
Given a family $\Xx\subseteq 2^{V(G)}$, one can in time polynomial in $n$ and $|\Xx|$ compute a family $\Ff_{\rec,1}(\Xx)\subseteq 2^{V(G)}$ 
such that $|\Ff_{\rec,1}(\Xx)|\leq  3n^6\cdot |\Xx|^3$ and the following property
holds: for every PMC $\Om$ in $G$, if $\cc(G-\Om)\subseteq \Xx$ then $\Om\in \Ff_{\rec,1}(\Xx)$.
\end{lemma}

The proof of Lemma~\ref{o:lem:recover1}, presented in Section~\ref{sec:toolbox},
builds upon the corresponding proof of~\cite{LokshtanovVV14}.

Recall that if $\Om$ is a PMC, then for every $D \in \cc(G-\Om)$, $N(D)$ is a minimal separator in $G$.
Therefore, when guessing a PMC $\Om$ we will often try to guess such minimal separators $N(D)$.
For this, the next technical tool is very useful.


\begin{lemma}[Separator Covering Lemma, simplified version]\label{o:lem:covering-simple}
Suppose $G$ is a $P_6$-free graph, and suppose $S\subseteq V(G)$ is a minimal separator in $G$ with $D_1,D_2\in \cc(G-S)$ being two components full to $S$.
Then there exist nonempty sets $A_1\subseteq D_1$ and $A_2\subseteq D_2$ with $|A_1|\leq 3$ and $|A_2|\leq 3$ such that $N[A_1\cup A_2]\supseteq S$.
\end{lemma}
The statement and the proof of Lemma~\ref{o:lem:covering-simple} (presented in Section~\ref{sec:toolbox}) is inspired by~\cite[Theorem~1.3]{LokshtanovPL16}, but requires much deeper analysis of the structure of $G$.

An immediate consequence of Lemma~\ref{o:lem:covering-simple} is our ability
to guess components $D$ with neighborhoods covered by other minimal separators.
The proofs of the next few lemmata can be found in Section~\ref{sec:capturing}.
\begin{lemma}\label{o:lem:hidden}
Given an $n$-vertex $P_6$-free graph $G$, one can in polynomial time
output a family $\Xx \subseteq 2^{V(G)}$ of size at most $n^7$
such that for every potential maximal clique $\Om$ in $G$ and for every
$D \in \cc(G-\Om)$ such that there exists another $D_1 \in \cc(G-\Om)$
with $N(D) \subseteq N(D_1)$, we have $D \in \Xx$.
\end{lemma}

We now illustrate our PMC discovery techniques with two important examples.
We start with noting a simple observation.
\begin{lemma}\label{o:lem:wings}
Let $G$ be a $P_6$-free graph and let $\Omega$ be a PMC in $G$. Suppose $D_1,D_2\in \cc(G-\Omega)$ are two components.
Then either $N(D_1)\setminus N(D_2)$ is complete to $D_1$, or $N(D_2)\setminus N(D_1)$ is complete to $D_2$.
Furthermore, for all $v_1\in N(D_1)\setminus N(D_2)$ and $v_2\in N(D_2)\setminus N(D_1)$ with $v_1v_2\notin E(G)$, 
both $v_1$ is complete to $D_1$ and $v_2$ is complete to $D_2$.
\end{lemma}

Our first example case is inspired by a similar reasoning in $P_5$-free graphs~\cite{LokshtanovVV14}.
\begin{lemma}\label{o:lem:two-not-whole}
Given an $n$-vertex $P_6$-free graph $G$, one can in polynomial time compute a family $\Ff_1$ of size at most $n^8$ such that the following holds: 
for any PMC $\Om$ in $G$ and any components $D_1,D_2\in \cc(G-\Om)$ with $N(D_1)\cup N(D_2)\subsetneq \Om$,
at least one of $D_1$ or $D_2$ belongs to $\Ff_1$.
\end{lemma}
In $P_5$-free graphs, the case covered by Lemma~\ref{o:lem:two-not-whole} and the case of a PMC $\Om$ contained in $N[u,v]$ for some $u,v\in I$
turn out to exhaustively cover all the cases. In $P_6$-free graphs, the range of possible PMCs turns out to be far richer.

Our second example lemma shows that out of any three components, we can recognize at least one.
\begin{lemma}\label{o:lem:one-in-three-see}
Given an $n$-vertex $P_6$-free graph $G$, one can in polynomial time compute a family $\Ff_2$ of size at most $2n^9$ such that the following holds: for any PMC $\Om$ in $G$ and any pairwise different components $D_1,D_2,D_3\in \cc(G-\Om)$,
at least one of $D_1$, $D_2$, and $D_3$ belongs to $\Ff_2$.
\end{lemma}
Thus, we can compute a family that for any PMC $\Om$, contains all but at most two components of $G-\Om$.
Even though this seems very powerful, the issue of recognizing the remaining two components remains. 

The arguments of Section~\ref{sec:capturing} culminate with the following
statement, which essentially says that we can ``almost'' guess all PMCs that give rise to at least three neighbor-maximal components. Here, a component $D \in \cc(G-\Om)$ is \emph{neighbor-maximal}
if there is no other component $D'\in \cc(G-\Om)$ with $N(D) \subseteq N(D')$.
A \emph{fuzzy version} of a mesh component $D \in \cc(G-\Om)$ is
a vertex set $D^+$ containing $D$ and possibly some vertices of $\Om$ that are full to $D$.
\begin{lemma}\label{o:lem:summary}
Given an $n$-vertex $P_6$-free graph $G$, one can in polynomial time compute families $\Ff^1_9$ and $\Ff^2_9$, each of size at most $10^{13}\cdot n^{159}$
, such that the following holds: for any $I$-free PMC $\Om$ in $G$ with at least three neighbor-maximal components,
either $\Ff^1_9$ contains $\Om$, or $\Ff^2_9$ contains a triple $(\Om\cup D_1\cup D_2,D_1^+,D_2^+)$ for some components $D_1,D_2\in \cc(G-\Om)$ that are meshes,
where $D_i^+$ is a fuzzy version of $D_i$, for $i=1,2$.
\end{lemma}

\subsection{Mesh components}\label{o:ss:mesh}

In the arguments discussed in the previous subsection, 
the main trouble is caused by components $D \in \cc(G-\Om)$ that are
meshes. 
In particular, consider the statement of Lemma~\ref{o:lem:summary}: we are not able to guess all PMCs with at least
three neighbor-maximal components, but for some of them --- when two of these components are meshes ---
we are able to discover only an ``approximate version'' of the form $(\Om \cup D_1 \cup D_2, D_1^+, D_2^+)$.
We now discuss our methodology of dealing with such troublesome mesh components.

Let $\Om$ be a maximal clique of $G+F$ with $\Om \cap I = \emptyset$
and let $D \in \cc(G-\Om)$ be such that $D$ is a mesh, that is, the
quotient graph of $G[D]$ is a clique or, equivalently, the complement
of $G[D]$ is disconnected. Let $S = N(D)$; recall that $S$
is a minimal separator.
Observe that from the maximality of $I$ we have $D \cap I \neq \emptyset$,
while from the fact that $D$ is a mesh it follows that there exists
a unique proper strong maximal module $M_{F,D}$ of $G[D]$ that
contains all vertices of $I \cap D$.

In Section~\ref{ss:modif-mesh},
we study the structure of minimal separators in $G+F$ ``between'' $S$ and
$M_{F,D}$; see Figure~\ref{fig:modif-mesh1} on Page~\pageref{fig:modif-mesh1}.
We show that these separators in $G+F$ are linearly ordered, in particular, there is a well-defined separator $S_{F,D}$ ``closest'' to $M_{F,D}$
and some vertex $q_{F,D} \in D \setminus M_{F,D}$ that lies in the same
connected component of $G-S_{F,D}$ as elements of $I \cap M_{F,D}$.
Furthermore, we show that there exists a good notion of
a canonical separator between
$S_{F,D}$ and $S$ that can be greedily used in the considered
$I$-free completion~$F$. This canonical separator can be guessed 
if we know $M_{F,D}$, which in turn can be guessed from a fuzzy version
of $D$. (Note that the quotient
graph of $G[D^+]$, where $D^+$ is a fuzzy version of a mesh~$D$, is a clique too and proper strong maximal modules of $G[D]$
are proper strong maximal modules of $G[D^+]$ too.)

Unfortunately, sometimes we do not even have an access to a small number
of candidates for a fuzzy version of $D$.
This may happen when there exists a minimal separator of $G+F$ with two full
sides being meshes. We are able to handle this case with the following lemma,
      proven in Section~\ref{ss:twomesh}.

\begin{lemma}\label{o:lem:rozrywanie}
One can in polynomial time compute a family $\Ff^Z$ of size at most $n^6$
such that the following holds.
Let $S$ be a minimal separator in $G$ and let $D_1$ and $D_2$ be two components
of $G-S$ that are full to $S$ and are meshes.
For $i=1,2$, let $M_i^p$ be an arbitrary proper strong module of $D_i$ 
and let $q_i \in D_i \setminus M_i^p$ be arbitrary.
Then there exists an element $(Z_1,Z_2) \in \Ff^Z$ such that
$Z_1,Z_2 \subseteq V(G)$ induce connected subgraphs of $G$,
there are no edges between $Z_1$ and $Z_2$ and
 for $i=1,2$, we have
$$M_i^p \cup \{q_i\} \subseteq Z_i \subseteq (D_i \cup S) \setminus N(V(G) \setminus (S \cup D_1 \cup D_2)).$$
\end{lemma}
The usage of Lemma~\ref{o:lem:rozrywanie} is as follows. Assume $S$, $D_1$, $D_2$ are as in the statement of the lemma with $S$ being a minimal separator of $G+F$ with $S \cap I = \emptyset$. Similarly as before,
    let $M_i^p$ be the unique proper strong module of $G[D_i]$
with $M_i^p \cap I \neq \emptyset$ and let $q_i$ be a vertex in $D_i \setminus M_i^p$ that lies in the same connected component of $G-S_{F,D_i}$ as $I \cap M_i^p$.
For these objects, let $(Z_1,Z_2) \in \Ff^Z$ be a pair
as in Lemma~\ref{o:lem:rozrywanie}.
Then, we can use a replacement argument
to argue that without loss of generality in $G+F$, every vertex set
of the form $N(D)$ for $D \in \cc(G-N[Z_i])$, $i=1,2$, is a minimal separator
(and thus a clique). The knowledge and size bound on $\Ff^Z$ gives a polynomial bound
on the number of such separators.

\subsection{Segments admitting replacement arguments}

Arguments sketched in Section~\ref{o:ss:guess}
allow us to arrive at the following situation in Section~\ref{ss:chopping}.
Let $(T,\beta)$ be a clique tree of $G+F$. Color red all nodes $t \in V(T)$
and all edges $e \in E(T)$ for which we can guess the PMC $\beta(t)$
or the minimal separator $\adh(e)$. Let $T'$ consist of vertices
and edges of $T$ that are not red. The aforementioned results, 
    in particular Lemma~\ref{o:lem:summary}, allow us to deduce
    that every connected component of $T'$ is a path.

    Fix such a path $P$.
    Let $t$ be an internal node on $P$
    with incident edges $e_1=tt_1$ and $e_2=tt_2$.
   Let $\Om = \beta(t)$ and let $D_1,D_2 \in \cc(G-\Om)$ be such that
   $\beta(t_i) \setminus \beta(t) \subseteq D_i$ for every $i=1,2$.
   From the results highlighted in Section~\ref{o:ss:guess}, we deduce that:
   \begin{itemize}
   \item $N(D_1) = \adh(e_1)$ and $N(D_2) = \adh(e_2)$ 
    are the only two inclusion-wise maximal separators of the form
    $N(D)$ for some $D \in \cc(G-\beta(t))$,
    \item $N(D_1) \cup N(D_2) = \Om$, and
    \item at least one of $D_1$ and $D_2$ is a mesh.
   \end{itemize}

The last property --- that $D_1$ or $D_2$ is a mesh --- allows us to argue
that the union of bags along the path $P$ is relatively simple and bounded by canonical separators
discussed in Section~\ref{o:ss:mesh}. This in turn allows us to enumerate
a polynomial number of candidates for $\Gamma := \bigcup_{t \in V(P)} \beta(t)$
and, for each such $\Gamma$, insert all maximal cliques of a single
completion of $G[\Gamma]$ that turns every $N(D)$ for $D \in \cc(G-\Gamma)$ into
a clique. This finishes the proof of Theorem~\ref{thm:main}.

\section{Toolbox}\label{sec:toolbox}
A pair $(X,\leq)$ is a {\em{quasi-order}} if $\leq$ is a reflexive and transitive relation on $X$.
If moreover $X$ is antisymmetric, then $(X,\leq)$ is a {\em{partial order}}.
The following statement is a basic combinatorial tool that we will use in our proofs.

\begin{lemma}[Bi-ranking Lemma]\label{lem:biranking}
Suppose $X$ is a non-empty finite set and $(X,\leq_1)$ and $(X,\leq_2)$ are two quasi-orders.
Suppose further that every pair of two different elements of $X$ is comparable either with respect to $\leq_1$ or with respect to $\leq_2$.
Then there exists an element $x\in X$ such that for every $y\in X$ we have either $x\leq_1 y$ or $x\leq_2 y$.
\end{lemma}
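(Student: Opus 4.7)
Plan: I will prove the lemma via an extremal-element argument based on rank functions. For each $i \in \{1,2\}$, let $r_i(x)$ denote the length of the longest strict $<_i$-chain ending at $x$; these are well-defined nonnegative integers since $X$ is finite and each $<_i$ (the strict part of $\leq_i$) is a strict partial order. I take $x^* \in X$ minimizing the sum $r_1(x) + r_2(x)$, breaking ties arbitrarily, and claim that $x^*$ satisfies the required property.

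Suppose for contradiction that some $y \in X$ witnesses failure: $x^* \not\leq_1 y$ and $x^* \not\leq_2 y$. Then $y \neq x^*$ by reflexivity of $\leq_i$, so the comparability hypothesis gives $y <_i x^*$ strictly for some $i \in \{1,2\}$. Prepending $y <_i x^*$ to any $<_i$-chain ending at $y$ produces a $<_i$-chain ending at $x^*$ that is one longer, so $r_i(y) \leq r_i(x^*) - 1$. If in addition $y$ and $x^*$ are $\leq_{3-i}$-comparable, then because $x^* \not\leq_{3-i} y$ we actually have $y <_{3-i} x^*$ strictly, and the same argument yields $r_{3-i}(y) \leq r_{3-i}(x^*) - 1$. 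Summing, $r_1(y) + r_2(y) \leq r_1(x^*) + r_2(x^*) - 2$, contradicting the minimality of $x^*$.

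The delicate remaining case is when $y <_i x^*$ strictly while $y$ and $x^*$ are $\leq_{3-i}$-incomparable; here the sum argument alone no longer forces a contradiction, because $r_{3-i}(y)$ can exceed $r_{3-i}(x^*)$. The main obstacle is to handle exactly this case. My plan is to iterate: the witness $y$ itself lies in the set of minimizers of $r_1+r_2$ (since no strict decrease was obtained), so replacing $x^*$ by $y$ and repeating, I either find an element satisfying the property or exhibit a cycle $x_0, x_1, \dots, x_{k-1}, x_0$ in which each $x_{j+1}$ is a witness of $x_j$. Short cycles are ruled out directly: for length $2$, the two witness conditions contradict the strict $<_i$ transitions in each direction; for length $3$, a case analysis on the transition colors $i_0, i_1, i_2 \in \{1,2\}$ forces them to be pairwise distinct, impossible with only two colors. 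For general length, I would exploit the comparability hypothesis on non-adjacent cycle elements $x_j, x_{j+2}$ and chain the resulting inequalities via transitivity within each $\leq_i$ to derive a relation forbidden by some witness constraint; organizing this case analysis uniformly across all cycle lengths is the technical heart of the argument.
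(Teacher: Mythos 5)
There is a genuine gap, and it sits exactly where you say the ``technical heart'' lies. Your witness relation (arc from $x$ to $y$ when $x\not\leq_1 y$ and $x\not\leq_2 y$) is precisely the auxiliary digraph the paper builds, and the statement you still owe --- that iterating witnesses cannot produce a cycle of any length --- is precisely the paper's key claim that this digraph is acyclic (a sink is then the desired element). You rule out cycles of length $2$ and $3$, but for length $\geq 4$ you only describe a plan (``I would exploit the comparability hypothesis on non-adjacent cycle elements\ldots''), and this is not a routine extension: chaining inequalities around a long cycle does not by itself contradict any single witness condition, because the derived relation $x_{j+2}\leq_i x_j$ is not forbidden unless $(x_j,x_{j+2})$ happens to be an arc. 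The paper closes this by a shortest-cycle argument: color each arc $(a_j,a_{j+1})$ by an index $i$ with $a_{j+1}<_i a_j$, observe the cycle cannot be monochromatic (transitivity plus antisymmetry/irreflexivity of the strict part), pick adjacent arcs of different colors, and then show via transitivity in each order that the \emph{shortcut} arc must also be present in the digraph, contradicting minimality of the cycle. Some such global mechanism (shortest cycle, or an induction on cycle length) is needed; as written, your proof is incomplete.

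Two smaller points. First, your claim that the witness $y$ ``lies in the set of minimizers of $r_1+r_2$ (since no strict decrease was obtained)'' is unjustified: in the problematic case $r_{3-i}(y)$ may exceed $r_{3-i}(x^*)$ by more than one, so the sum at $y$ can be strictly larger than at $x^*$, and the iteration need not stay among minimizers. This is harmless only because you can discard the rank machinery entirely --- by finiteness, iterating witnesses from any starting point either terminates at an element with no witness (done) or revisits an element and yields a cycle --- but then the ranks do no work at all, since the case they handle (comparability in both orders) is immediate anyway. Second, your treatment of strict parts of quasi-orders is fine (they are irreflexive and transitive), so unlike the paper you do not need the preliminary reduction to partial orders; but that saving does not compensate for the missing acyclicity argument.
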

\begin{proof}
We first observe that it suffices to prove the statement for the case when $(X,\leq_1)$ and $(X,\leq_2)$ are partial orders.
Indeed, if $(X,\leq_1)$ is not a partial order, then let us modify $\leq_1$ by taking every maximal set $A$ of elements pairwise equivalent in $\leq_1$, 
and changing $\leq_1$ within $A$ so that $\leq_1$ becomes a linear order on $A$.
This does not change the set of pairs of elements comparable in $\leq_1$, while $\leq_1$ becomes a partial order on $X$ that is a subrelation of the original $\leq_1$.
The same transformation can be then applied to $(X,\leq_2)$ as well.
It follows that $\leq_1$ and $\leq_2$ still satisfy the conditions of the lemma, because the sets of pairs of comparable elements did not change, 
while any element $x$ satisfying the assertion claimed in the lemma statement after the modification, satisfies it also for the original quasi-orders $\leq_1$ and $\leq_2$.

We proceed with the proof assuming that $(X,\leq_1)$ and $(X,\leq_2)$ are partial orders. Let us construct an auxiliary directed graph $D$ with vertex set $X$ and arcs defined as follows.
For every two different elements $a,b\in X$, put
\begin{enumerate}[(a)]
\item\label{case:ab} an arc $(a,b)$ if $a\not\leq_1 b$ and $a\not\leq_2 b$;
\item\label{case:ba} an arc $(b,a)$ if $b\not\leq_1 a$ and $b\not\leq_2 a$;
\item\label{case:no} no arc otherwise.
\end{enumerate}
Note that since $a$ and $b$ are comparable in either $\leq_1$ or $\leq_2$, cases~\eqref{case:ab} and~\eqref{case:ba} are exclusive.
Moreover, case~\eqref{case:no} occurs only if $a\leq_1 b$ and $b\leq_2 a$, or $b\leq_1 a$ and $a\leq_2 b$; that is, $\leq_1$ and $\leq_2$ point in different directions on the pair $(a,b)$.

We claim that the digraph $D$ is acyclic. Let us first verify that this claim implies the statement of the lemma.
Supposing $D$ is acyclic, so let $x$ be any sink in $D$, that is, a vertex with no outgoing arcs.
Take any other vertex $y\in X$; then either there is an arc $(y,x)$, or there is no arc between $x$ and $y$ at all.
In the former case we have $y\not\leq_1 x$ and $y\not\leq_2 x$, so since $x$ and $y$ are comparable in at least one of the orders, we infer that indeed $x\leq_1 y$ or $x\leq_2 y$.
In the latter case we have that either $x\leq_1 y$ and $y\leq_2 x$, or $y\leq_1 x$ and $x\leq_2 y$, so again $x$ is smaller than $y$ in at least one of the orders.

It remains to prove the claim. For the sake of contradiction, suppose $D$ is not acyclic. 
Let $C=(a_1,a_2,\ldots,a_k)$ be the shortest cycle in $D$, where $(a_i,a_{i+1})$ is an arc for $i=1,\ldots,k$ (indices behave cyclically modulo $k$).
Every pair of vertices in $D$ is bound by at most one arc by construction, so we have $k\geq 3$.
Since $(a_i,a_{i+1})$ is an arc, we have either $a_i\geq_1 a_{i+1}$ or $a_i\geq_2 a_{i+1}$.
Color the arc $(a_i,a_{i+1})$ \textcolor{red}{red} if $a_i\geq_1 a_{i+1}$, and \textcolor{blue}{blue} if $a_i\geq_2 a_{i+1}$; in case both these assertions hold, choose any of these colors arbitrarily.
Since both $(X,\leq_1)$ and $(X,\leq_2)$ are partial orders, it cannot happen that $C$ is entirely blue or entirely red.
Then $C$ contains both a red and a blue arc. Without loss of generality, by swapping colors and shifting the cycle if necessary, we assume that $(a_1,a_2)$ is red and $(a_2,a_3)$ is blue.
In particular, $a_1\geq_1 a_2$ and $a_2\geq_2 a_3$.

Observe now that it cannot happen that $a_1\leq_1 a_3$, because then by the transitivity we would have $a_2\leq_1 a_3$, hence the arc $(a_2,a_3)$ would not be added to $D$ in the construction.
Symmetrically, it cannot happen that $a_1\leq_2 a_3$, because then we would have $a_1\leq_2 a_2$, which contradicts the existence of the arc $(a_1,a_2)$.
This means that $a_1\not\leq_1 a_3$ and $a_1\not\leq_2 a_3$, so the arc $(a_1,a_3)$ was added to $D$ in the construction.
However, then the cycle $C$ could be shortcut by omitting vertex $a_2$ and using the arc $(a_1,a_3)$ instead, a contradiction with the
minimality of $C$. This concludes the proof.
\end{proof}

We remark that the statement of the Bi-ranking Lemma (Lemma~\ref{lem:biranking}) for partial orders was used as Problem 5 in the second round of 68th Polish Mathematical Olympiad.
We refer to official solutions~\cite{om} for an alternative inductive proof. 
A problem similar to the above lemma, but for more quasi-orders, was stated by Sands, Sauer, and Woodrow in~\cite{SSW} and recently solved by Bousquet, Lochet, and Thomass\'e in~\cite{BLT}.

\paragraph*{Structure of minimal separators.} One of our key tools will be the following lemma describing the structure of interaction between a minimal separator and a component full to it.
Intuitively, it says that one can always pick two vertices $p,q$ in the component so that every vertex $u$ of the separator that is not adjacent to $p$ or $q$, behaves nicely in one of two possible ways:
either $u$ is an end-vertex of some $P_4$ sticking to the component, or the component is a mesh and the neighborhood of $u$ in the component is the union of a collection of proper strong modules of the component.
This analysis was already implicitly present in~\cite{LokshtanovPL16}, see the proof of Lemma 3.1 therein.

\begin{lemma}[Neighborhood Decomposition Lemma]\label{lem:nei-decomp}
Suppose $G$ is a graph and $D\subseteq G$ is a connected induced subgraph of $G$ with $|D|\geq 2$.
Suppose further that vertices $p,q\in D$ respectively belong to different elements $M^p,M^q$ of the modular partition $\Mod(D)$ such that $M^p$ and $M^q$ are adjacent in the 
quotient graph $\Quo(D)$.
Then, for each vertex $u\in N(D)$ at least one of the following conditions holds:
\begin{enumerate}[(a)]
\item\label{cnd:P4} $u$ is such that there exists a $P_4$ of the form $uDDD$ in $G$;
\item\label{cnd:pq} $u\in N[p,q]$;
\item\label{cnd:(} $D$ is a mesh and the neighborhood of $u$ in $D$ is the union of some collection of maximal proper strong modules in $D$.
\end{enumerate}
If $D$ is not a mesh, then the last condition cannot hold.
\end{lemma}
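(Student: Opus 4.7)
The plan is to assume that both (a) and (b) fail and deduce (c); the last sentence of the lemma is then immediate from the definition of (c). Set $N := N_G(u) \cap D$, which is non-empty since $u \in N(D)$. Because $u \notin N[p,q]$ and $M^p, M^q$ are modules of $D$, the vertex $u$ is anti-complete to $M^p \cup M^q$, so $M^p \cup M^q \subseteq D \setminus N$.

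The core structural consequence of (a) failing is the following constraint: for every $a \in N$ and every $b \in (N_G(a) \cap D) \setminus N$, we have $N_G(b) \cap D \subseteq N_G[a] \cup N$; otherwise any $c \in (N_G(b) \cap D) \setminus (N_G[a] \cup N)$ would give an induced $P_4$ of the form $uDDD$, namely $u$-$a$-$b$-$c$. A shortest-path argument based on this constraint shows at once that $N$ dominates $D$. I then use the same constraint to prove that every connected component $C$ of $D[D \setminus N]$ is a module of $D$: any $v \in D \setminus N$ outside $C$ is anti-complete to $C$ by definition of components of an induced subgraph; and any $v \in N$ adjacent to some $b_1 \in C$ must be adjacent to every other $b_2 \in C$, since propagating the constraint step by step along any path from $b_1$ to $b_2$ inside $C$ forces $v$ to be adjacent to every vertex on the path.

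I next invoke the failure of (b). Since $pq$ is an edge (the modules $M^p$ and $M^q$ are adjacent in $\Quo(D)$, hence complete in $D$) and both $p, q$ lie in $D \setminus N$, they share a single component $C^*$ of $D[D \setminus N]$, which is therefore a module of $D$ meeting two distinct elements $M^p, M^q$ of $\Mod(D)$. By Proposition~\ref{prop:modules-other}, $C^*$ cannot be contained in any single element of $\Mod(D)$, so it must be a union of such elements, which by the same proposition forces $\Quo(D)$ to be a clique or an independent set; connectedness of $D$ together with Proposition~\ref{prop:modules-quo} rules out the latter, so $D$ is a mesh. To finish, I show that every $M' \in \Mod(D)$ satisfies $M' \cap N \in \{\emptyset, M'\}$. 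For $M' \in \{M^p, M^q\}$ this is already known. Otherwise, pick any $v' \in M' \setminus N$ and let $C$ be its component in $D[D \setminus N]$; Proposition~\ref{prop:modules-other} applied to the proper module $C$ gives either $C \subseteq M'$ or $M' \subseteq C$. In the second case $M' \subseteq C \subseteq D \setminus N$, so $M' \cap N = \emptyset$. In the first case, since $D$ is a mesh, the vertex $p \in M^p$ is complete to $M' \supseteq C$, and $p \notin C$; but then $p$ would witness an edge from $D \setminus N$ to $C$, which is impossible because any such vertex would already lie in $C$, contradicting $p \in D \setminus N$.

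The main obstacle is identifying the correct invariant to extract from the absence of an induced $P_4$ of the form $uDDD$: the single condition $N_G(b) \cap D \subseteq N_G[a] \cup N$ simultaneously powers the domination claim, the modular structure of the components of $D[D \setminus N]$, and, once combined with the failure of (b) producing a nontrivial multi-module component $C^*$, both the mesh deduction and the final exchange argument via the external witness $p$.
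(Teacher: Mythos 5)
Your proof is correct and follows essentially the same route as the paper's: the induced $P_4$ of the form $uDDD$ forces each connected component of $D\setminus N(u)$ to be a module of $D$, the component containing the edge $pq$ meets two distinct maximal proper strong modules, and Propositions~\ref{prop:modules-quo} and~\ref{prop:modules-other} then give that $D$ is a mesh and that $N(u)\cap D$ is a union of maximal proper strong modules. The only differences are cosmetic --- the domination remark is never used, and your per-module endgame (ruling out $C\subseteq M'$ via completeness of $p$ to $M'$) replaces the paper's argument via connectivity of $D\setminus N(u)$ and strongness --- so nothing further is needed.
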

\begin{proof}
Let $u$ be a vertex not satisfying (\ref{cnd:P4}) and (\ref{cnd:pq}). If there is no such vertex, the lemma holds. Let $D'=D\setminus N(u)$. Since $u$ does not satisfy (\ref{cnd:pq}), $p,q\in V(D')$. Let $C$ be a connected component of $D'$ and $v$ a vertex in $D\cap N(u)$. We argue that either $C\subseteq N(v)$ or $C\cap N(v)=\emptyset$. Assume this is not the case. Then there exists an edge between a vertex $x$ in $C\cap N(v)$ and a vertex $y$ in $C\setminus N(v)$, because $C$ is connected. Then, $uvxy$ is an induced $P_4$ contradicting our assumption that $u$ does not satisfy (\ref{cnd:P4}).

Observe that 
\[\mathcal{N}=\{V(C)\,\colon\, C \mbox{ is a connected component of }D'\}\cup \{\{v\}|v\in N(u)\cap D\}\] is a modular partition of $D$ such that $p$ and $q$ belong to the same module $M$ (because $pq$ is an edge). Since $M^p$ and $M^q$ are strong modules, $M\supseteq M^p \cup M^q$. By Proposition~\ref{prop:modules-other}, it follows that $\Quo(D)$ is not a prime graph, and since it is connected, it must be a clique by Proposition~\ref{prop:modules-quo}. In particular, this implies that $D'$ is connected.
The existence of a strong module containing $D'$ would contradict maximality of $M^p$ or $M^q$, since they are distinct and both contain an element from $D'$. Thus,  every maximal strong module of $D$ is either contained in $D'$ or disjoint from $D'$. It follows that the vertex $u$ satisfies (\ref{cnd:(}).
\end{proof}

In our reasoning, it will be the vertices satisfying condition~\eqref{cnd:(} that will cause the most problems. This justifies the following definition. 

\begin{definition}\label{def:tricky}
In the setting of Lemma~\ref{lem:nei-decomp}, a vertex $u\in N(D)$ that satisfies neither condition~\eqref{cnd:P4} nor condition~\eqref{cnd:pq}---and hence satisfies condition~\eqref{cnd:(}---is called 
{\em{tricky toward $(D,p,q)$}}. By Lemma~\ref{lem:nei-decomp}, if $D$ is not a mesh, then there are no tricky vertices toward $(D,p,q)$. The set of vertices tricky toward $(D,p,q)$ will be denoted by 
$\tricky(D,p,q)$. 
\end{definition}

We can now use the Neighborhood Decomposition Lemma to show that in a $P_6$-free graph, every minimal separator can be covered by the union of neighborhoods of six vertices lying outside of it.
We will sometimes need to have a better understanding of how these six vertices can be chosen, which is encapsulated in the following general statement of the Separator Covering Lemma.

\begin{lemma}[Separator Covering Lemma, general version]\label{lem:covering-general}
Let $G$ be a $P_6$-free graph, and let $S\subseteq V(G)$ be a minimal separator in $G$ with $D_1,D_2\in \cc(G-S)$ being two components full to $S$.
Suppose that each of $D_1,D_2$ has more than one vertex. 
Let $p_1,q_1\in D_1$ be any two vertices of $D_1$ such that the modules of $\Mod(D_1)$ to which $p_1,q_1$ belong are different, but adjacent in $\Quo(D_1)$.
Similarly, let $p_2,q_2\in D_2$ be any two vertices of $D_2$ such that the modules of $\Mod(D_2)$ to which $p_2,q_2$ belong are different, but adjacent in $\Quo(D_2)$.
Then the following holds:
\begin{itemize}
\item If $\Quo(D_1)$ or $\Quo(D_2)$ is not a clique, then $N[p_1,q_1,p_2,q_2]\supseteq S$.
\item If both $\Quo(D_1)$ and $\Quo(D_2)$ are cliques, then there exist vertices $r_1\in D_1$ and $r_2\in D_2$ such that $N[p_1,q_1,r_1,p_2,q_2,r_2]\supseteq S$.
\end{itemize}
\end{lemma}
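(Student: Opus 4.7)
My strategy is to analyze every $u \in S$ by applying the Neighborhood Decomposition Lemma (Lemma~\ref{lem:nei-decomp}) twice, once toward $(D_1, p_1, q_1)$ and once toward $(D_2, p_2, q_2)$; each application classifies $u$ by one of the conditions~(a), (b), or~(c). Condition~(b) toward $D_i$ already places $u$ into $N[p_i, q_i]$, so the remaining combinations to handle are the pairings of conditions~(a) and~(c) across the two sides. I will use $P_6$-freeness to forbid, for any $u \notin N[p_1, q_1] \cup N[p_2, q_2]$, every combination in which at least one side witnesses condition~(a); this reduces both bullets to handling the single surviving case where $u$ is tricky toward both.

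\textbf{Forbidden $P_4$ combinations and first bullet.} Suppose $u$ witnesses~(a) toward $(D_1, p_1, q_1)$ via an induced $P_4$ of the form $uv_1v_2v_3$ with $v_i \in D_1$. If $u$ also witnesses~(a) toward $(D_2, p_2, q_2)$ via $uw_1w_2w_3$ with $w_i \in D_2$, then $v_3v_2v_1uw_1w_2$ is an induced $P_6$, since every within-$P_4$ non-edge is inherited, and every $v_iw_j$ non-edge is free because $v_i$ and $w_j$ lie in distinct components of $G-S$. If $u$ is instead tricky toward $(D_2, p_2, q_2)$, then $u$ has a neighbor $w \in N(u) \cap D_2$ lying in a module $M \in \Mod(D_2) \setminus \{M^{p_2}, M^{q_2}\}$ (because $u$'s tricky neighborhood is a union of modules avoiding $M^{p_2}, M^{q_2}$); since $\Quo(D_2)$ is a clique, $wp_2$ is an edge, while $up_2$ is a non-edge (as $u \notin N[p_2, q_2]$), and then $v_3v_2v_1uwp_2$ is again an induced $P_6$. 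The two symmetric combinations with $(D_1, D_2)$ swapped are forbidden analogously. For the first bullet, assume without loss of generality that $\Quo(D_1)$ is not a clique; then Lemma~\ref{lem:nei-decomp} rules out trickiness toward $(D_1, p_1, q_1)$, and combined with the above, every $u \in S$ lies in $N[p_1, q_1] \cup N[p_2, q_2]$, as required.

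\textbf{Second bullet and main obstacle.} For the second bullet both $D_i$ are meshes, so the surviving case is that every $u \in S \setminus (N[p_1, q_1] \cup N[p_2, q_2])$ is tricky toward both $(D_1, p_1, q_1)$ and $(D_2, p_2, q_2)$; hence its neighborhood in each $D_i$ is a nonempty union of modules of $\Mod(D_i) \setminus \{M^{p_i}, M^{q_i}\}$. Choosing $r_1 \in D_1$ inside some module $M^\star$ of $\Mod(D_1)$ covers exactly the tricky $u$'s that have $M^\star$ in their $D_1$-neighborhood, and analogously for $r_2$ with a module $N^\star \in \Mod(D_2)$. The plan is to show that for a suitably extremal choice of $M^\star$---for instance, a module occurring in the $D_1$-neighborhood of a tricky $u_0$ with inclusion-minimal module-neighborhood---the tricky vertices not covered by $r_1$ all share a common module of $\Mod(D_2)$ in their $D_2$-neighborhood, into which $r_2$ is then placed. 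The main obstacle I expect is proving this sharing property: I plan to rule out configurations of tricky vertices with pairwise disjoint $D_1$- and $D_2$-neighborhoods by constructing further induced $P_6$s chaining two tricky vertices through their witness-vertices on both sides together with a carefully chosen vertex in $M^{p_i}$ or $M^{q_i}$. The central difficulty is that $\Quo(D_i)$ being a clique makes $D_i$ densely connected internally, so these $P_6$s must be routed delicately to avoid chord edges coming from adjacencies between distinct modules.
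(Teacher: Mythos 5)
Your treatment of the first bullet is sound and close in spirit to the paper's: once $\Quo(D_1)$ is not a clique, trickiness toward $(D_1,p_1,q_1)$ is excluded by Lemma~\ref{lem:nei-decomp}, and your two case analyses (a $P_4$ on each side, or a $P_4$ on the $D_1$ side glued to an edge into a module of the mesh $D_2$ followed by $p_2$) do produce induced $P_6$'s; the paper instead concatenates the $uD_1D_1D_1$ path with a shortest path from $u$ to $\{p_2,q_2\}$ through $D_2$, which is the same idea. Your reduction of the second bullet to the case where every uncovered vertex of $S$ is tricky toward both $(D_1,p_1,q_1)$ and $(D_2,p_2,q_2)$ is also correct.

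However, for the second bullet the proposal stops at a plan precisely where the real work lies, so there is a genuine gap. The statement you still need is: among the doubly tricky vertices there is a \emph{single} choice of $r_1\in D_1$, $r_2\in D_2$ covering all of them, and your suggested extremal choice (a vertex $u_0$ with inclusion-minimal $D_1$-module-neighborhood, $M^\star\subseteq N(u_0)\cap D_1$) does not obviously deliver it. An uncovered vertex $w$ (one with $M^\star\cap N(w)=\emptyset$) is $\leq_1$-incomparable with $u_0$, hence (granting the comparability claim) $\leq_2$-comparable with it; but it may lie strictly \emph{below} $u_0$ in $\leq_2$, and different uncovered vertices may lie on different sides, so no common module of $\Mod(D_2)$ need survive in all their $D_2$-neighborhoods. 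What is actually needed is an element that is minimal in the disjunctive sense: some $r$ with $N[r]\cap D_1\subseteq N[w]\cap D_1$ or $N[r]\cap D_2\subseteq N[w]\cap D_2$ for \emph{every} other uncovered $w$; then any $r_1\in N[r]\cap D_1$ and $r_2\in N[r]\cap D_2$ work. This is exactly the content of the Bi-ranking Lemma (Lemma~\ref{lem:biranking}), which is a nontrivial combinatorial statement proved in the paper by an acyclicity argument on an auxiliary digraph, and it is applied after first establishing that any two uncovered vertices are comparable in $\leq_1$ or $\leq_2$ (via the induced paths $p_1u_1uu_2v_2v$ or $p_1u_1uvv_2p_2$). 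Your proposal neither proves this pairwise-comparability step (it is only announced, though it is of the kind of $P_6$ argument you describe) nor supplies a substitute for the Bi-ranking step, so the second bullet remains unproved as written.
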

\begin{proof}

Assume that there is a vertex $u\in S\setminus N[p_1,q_1,p_2,q_2]$ and a $P_4$ of the form $uD_iD_iD_i$ for some $i=1,2$. Consider a shortest path from $u$ to $p_{3-i}$ or $q_{3-i}$ with internal vertices in $D_{3-i}$. Such a path exists, since $D_{3-i}$ is connected and $u$ has a neighbor in $D_{3-i}$ (because $S$ is full to $D_{3-i}$). Moreover, the path is induced and its length is at least $2$, because $u$ is not adjacent to $p_{3-i}$ and $q_{3-i}$. The union of such a path with the $P_4$ of the form $uD_iD_iD_i$ yields a path on at least $6$ vertices, contradicting the $P_6$-freeness of $G$.

Hence, by Lemma~\ref{lem:nei-decomp}, any vertex $u\in S\setminus N[p_1,q_1,p_2,q_2]$ satisfies condition \eqref{cnd:(} with respect to $D_i,p_i$ and $q_i$ for $i=1,2$. Thus, if $S\setminus N[p_1,q_1,p_2,q_2]\neq \emptyset$ both $\Quo(D_1)$ and $\Quo(D_2)$ are cliques.


Now, assume that both $\Quo(D_1)$ and $\Quo(D_2)$ are cliques and  consider quasi-orders $\leq_1$ and $\leq_2$ on vertices of $S\setminus N[p_1,q_1,p_2,q_2]$ defined as $u\leq_i v$ if and only if $N[u]\cap D_i\subseteq N[v]\cap D_i$ for $i=1,2$. We show that these quasi-orders satisfy assumptions of Lemma~\ref{lem:biranking}, in particular, that every two vertices $u,v\in S\setminus N[p_1,q_1,p_2,q_2]$ are comparable either with respect to $\leq_1$ or with respect to $\leq_2$. Suppose that it is not the case. Then, there are vertices $u_1,v_1\in D_1$ and $u_2,v_2\in D_2$ such that $uu_i$ and $vv_i$ are edges and $uv_i$ and $vu_i$ are non-edges for $i=1,2$.

Observe that by Lemma~\ref{lem:nei-decomp}\eqref{cnd:(}, $u_i$ and $v_i$ are adjacent to $p_i$ and $q_i$ and moreover, $u_iv_i$ is an edge for $i=1,2$.
If $uv\in E(G)$, $p_1u_1uvv_2p_2$ forms an induced $P_6$. Otherwise, $p_1u_1uu_2v_2v$ forms an induced $P_6$. This contradicts the $P_6$-freeness of $G$. See Figure~\ref{fig:sep-cover} for illustration.

\begin{figure}
\centering
\includegraphics[width=\linewidth]{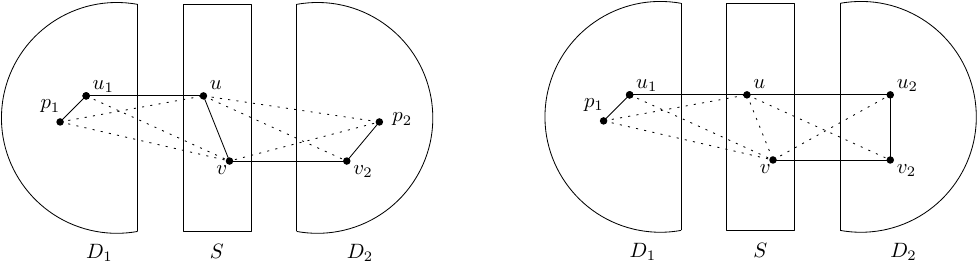}
\caption{Paths contradicting the $P_6$-freeness of $G$.}
\label{fig:sep-cover}
\end{figure}

Then, by Lemma~\ref{lem:biranking}, there exists a vertex $r$ of $S\setminus N[p_1,q_1,p_2,q_2]$ such that $N[r]\cap D_1\subseteq N[w]\cap D_1$ or $N[r]\cap D_2 \subseteq N[w]\cap D_2$ for every $w\in S\setminus N[p_1,q_1,p_2,q_2]$. Recall that $D_1$ and $D_2$ are full to $S$ and therefore $N[r]\cap D_1$ and $N[r]\cap D_2$ are nonempty. Finally, observe that any vertices $r_1\in N[r]\cap D_1$ and $r_2\in N[r]\cap D_2$ satisfy $N[p_1,q_1,r_1,p_2,q_2,r_2]\supseteq S$. 
\end{proof}

However, in most cases we can rely on the following simplified variant.

\begin{lemma}[Separator Covering Lemma, simplified version]\label{lem:covering-simple}
Suppose $G$ is a $P_6$-free graph, and suppose $S\subseteq V(G)$ is a minimal separator in $G$ with $D_1,D_2\in \cc(G-S)$ being two components full to $S$.
Then there exist nonempty sets $A_1\subseteq D_1$ and $A_2\subseteq D_2$ with $|A_1|\leq 3$ and $|A_2|\leq 3$ such that $N[A_1\cup A_2]\supseteq S$.
\end{lemma}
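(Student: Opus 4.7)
The plan is to derive Lemma~\ref{lem:covering-simple} essentially as a direct corollary of the general version, Lemma~\ref{lem:covering-general}, with only a small amount of bookkeeping to handle a degenerate case. The only arithmetic to check is that the vertex count bound in the general version splits as $3+3$ in the worst case, which matches the $|A_i|\leq 3$ required here.

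First I would handle the degenerate situation where $|D_1|=1$ or $|D_2|=1$. If, say, $D_1=\{v\}$, then because $D_1$ is full to $S$, the vertex $v$ is adjacent to every vertex of $S$; thus $A_1:=\{v\}$ and $A_2:=\emptyset$ already satisfy $N[A_1\cup A_2]\supseteq S$, and the sizes trivially fit the bound. From now on I may assume $|D_1|\geq 2$ and $|D_2|\geq 2$.

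In this main case I need to produce valid vertices $p_i,q_i\in D_i$ to feed into Lemma~\ref{lem:covering-general}. For each $i\in\{1,2\}$, since $|D_i|\geq 2$, the modular partition $\Mod(D_i)$ has at least two parts (maximal proper strong modules form a partition, and each part is a \emph{proper} module). Since $D_i$ is connected (being a connected component of $G-S$), Proposition~\ref{prop:modules-quo} implies that $\Quo(D_i)$ is either a clique or a prime graph; in both cases it is a connected graph on at least two vertices, so it contains a pair of adjacent vertices. Picking one representative from each of two such adjacent modules yields $p_i,q_i\in D_i$ satisfying the hypotheses of Lemma~\ref{lem:covering-general}.

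Now I simply apply Lemma~\ref{lem:covering-general}. If at least one of $\Quo(D_1),\Quo(D_2)$ is not a clique, then $N[p_1,q_1,p_2,q_2]\supseteq S$, so $A_1:=\{p_1,q_1\}$ and $A_2:=\{p_2,q_2\}$ work, each of size at most $2$. Otherwise both quotient graphs are cliques and the lemma gives $r_1\in D_1$, $r_2\in D_2$ with $N[p_1,q_1,r_1,p_2,q_2,r_2]\supseteq S$, so $A_1:=\{p_1,q_1,r_1\}$ and $A_2:=\{p_2,q_2,r_2\}$ work, each of size at most $3$. The main obstacle is really nil: all of the combinatorial work, including the $P_6$-free case analysis and the Bi-ranking argument, has already been packaged into Lemma~\ref{lem:covering-general}, and what remains is only to verify that the hypothesis about adjacent modules in $\Mod(D_i)$ can always be satisfied and to dispose of the trivial one-vertex-component case.
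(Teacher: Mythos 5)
Your proposal is correct and follows essentially the same route as the paper's own proof: dispose of the singleton-component case directly (fullness makes the unique vertex complete to $S$), then pick representatives of two adjacent maximal proper strong modules in each component and invoke the general Separator Covering Lemma, splitting the covering vertices as $2+2$ or $3+3$. The only cosmetic difference is that you justify the existence of adjacent modules via Proposition~\ref{prop:modules-quo}, whereas the paper simply notes that $\Quo(D_i)$ is connected because $D_i$ is; both observations are immediate.
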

\begin{proof}
If $D_1$ has one vertex, say $V(D_1)=\{u\}$ for some vertex $u$, then $u$ is complete to $S$, since $D_1$ is complete to $S$. Then we can pick $A_1=\{u\}$ and $A_2=\{v\}$, where $v$ is any vertex of $D_2$.
Similarly if $D_2$ has one vertex.
Therefore, let us assume that both $D_1$ and $D_2$ have more than one vertex.
Then the modular partition $\Mod(D_1)$ has at least two modules and the quotient graph $\Quo(D_1)$ is connected, since $D_1$ is connected.
Let us then pick any $p_1\in M^p_1$ and $q_1\in M^q_1$, where $M^p_1$ and $M^q_1$ are modules of $\Mod(D_1)$ that are adjacent.
Symmetrically, we can pick $p_2\in M^p_2$ and $q_2\in M^q_2$, where $M^p_2$ and $M^q_2$ are modules of $\Mod(D_2)$ that are adjacent.
Then, by Lemma~\ref{lem:covering-general}, we either already have that $N[p_1,q_1,p_2,q_2]\supseteq S$, in which case we can set $A_1=\{p_1,q_1\}$ and $A_2=\{p_2,q_2\}$, or
there are vertices $r_1\in D_1$ and $r_2\in D_2$ such that $N[p_1,q_1,r_1,p_2,q_2,r_2]\supseteq S$, so we can set $A_1=\{p_1,q_1,r_1\}$ and $A_2=\{p_2,q_2,r_2\}$.
\end{proof}

\paragraph*{Basic tools on potential maximal cliques}

\paragraph*{Extending PMCs.} First, we recall the approach of lifting PMCs from induced subgraphs, due to Bouchitt\'e and Todinca~\cite{BouchitteT02}.
The following Lemma is implicit in~\cite{BouchitteT02}, we give our proof for the sake of completeness.

\begin{lemma}\label{lem:extending-PMC}
Let $G$ be a graph, $w$ be a vertex in $G$, and $G'=G-w$.
Suppose $\Om'$ is a PMC in $G'$. Then exactly one of the sets $\Om'$ and $\Om'\cup \{w\}$ is a PMC in $G$.
\end{lemma}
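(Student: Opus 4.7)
The plan is to track what happens to the connected components of the graph after removing the would-be PMC. Let $D^*$ denote the connected component of $G - \Om'$ containing $w$. Then $D^* \setminus \{w\}$ is the union of those components of $G' - \Om'$ that meet $N_G(w)$, while the other components of $G - \Om'$ coincide as vertex sets, with unchanged $\Om'$-neighborhoods, with the remaining components of $G' - \Om'$. Crucially, adding $w$ to $\Om'$ ``undoes'' this merging: since $G - (\Om' \cup \{w\}) = G' - \Om'$, the two graphs have the same connected components. With this in mind, I will establish the dichotomy that $\Om'$ is a PMC in $G$ if and only if $D^*$ is \emph{not} full to $\Om'$, whereas $\Om' \cup \{w\}$ is a PMC in $G$ if and only if $D^*$ \emph{is} full to $\Om'$. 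Since exactly one of these two alternatives about $D^*$ holds, exactly one of the two candidate sets is a PMC in $G$, which is what the lemma claims.

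For the first equivalence, \pmcb{} for $\Om'$ in $G$ is inherited from \pmcb{} in $G'$: any non-edge $uv$ inside $\Om'$ is covered in $G' - \Om'$ by some component $D'$, and in $G - \Om'$ both $u$ and $v$ still have a neighbor in the (possibly larger) component that contains $D'$. For \pmca{}, every component of $G - \Om'$ different from $D^*$ is already a component of $G' - \Om'$ with unchanged $\Om'$-neighborhood, so it is not full by \pmca{} in $G'$. Therefore \pmca{} in $G$ holds precisely when $D^*$ fails to be full to $\Om'$.

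For the second equivalence, \pmca{} for $\Om' \cup \{w\}$ in $G$ is automatic: the components of $G - (\Om' \cup \{w\}) = G' - \Om'$ are not full even to $\Om'$ by \pmca{} in $G'$, let alone to its superset $\Om' \cup \{w\}$. The crucial check is \pmcb{}: non-edges lying entirely inside $\Om'$ are covered exactly as in $G'$, and the new non-edges are of the form $uw$ with $u \in \Om' \setminus N_G(w)$. Such a non-edge is covered in $G$ iff some component $D$ of $G' - \Om'$ contains both a neighbor of $w$ and a neighbor of $u$, which is the same as saying that $u$ has a neighbor in $D^* \setminus \{w\}$. Since any $u \in \Om' \cap N_G(w)$ already lies in $N_G(D^*)$ via $w$ itself, \pmcb{} for $\Om' \cup \{w\}$ is equivalent to every $u \in \Om'$ having a neighbor in $D^*$, that is, $D^*$ being full to $\Om'$.

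I do not expect a real obstacle here; the argument is careful bookkeeping rather than a deep combinatorial insight. The only subtlety worth highlighting is the ``undoing'' effect: adding $w$ to $\Om'$ splits the fattened component $D^*$ of $G - \Om'$ back into the pieces from which it was assembled in $G' - \Om'$. It is precisely this accounting that forces exactly one of $\Om'$ and $\Om' \cup \{w\}$ to be a PMC in $G$.
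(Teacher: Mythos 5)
Your proposal is correct and follows essentially the same route as the paper's proof: both hinge on the component $D^*$ (the paper's $D_w$) of $G-\Om'$ containing $w$, observe that \pmcb{} for $\Om'$ and \pmca{} for $\Om'\cup\{w\}$ are inherited from $G'$, and reduce everything to whether $D^*$ is full to $\Om'$. Your packaging as two biconditionals pivoting on fullness of $D^*$ is just a cleaner arrangement of the paper's two implications; the bookkeeping is identical and complete.
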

\begin{proof}
Observe that $\cc(G-\Om')$ may be constructed from $\cc(G'-\Om')$ by taking all components that are adjacent to $w$, and merging them (and $w$) into one connected component (if there are no such components
in $\cc(G'-\Om')$, then $w$ constitutes a new connected component). Denote by $D_w$ the connected component of $\cc(G-\Om')$ that contains $w$, and let $\Dd_w$ be the set of
connected components of $\cc(G'-\Om')$ that are contained in $D_w$; equivalently, $\Dd_w$ comprises the connected components of $G'-\Om'$ that are adjacent to $w$ in $G$. 
Then it follows that $\cc(G-\Om')=\cc(G'-\Om')\setminus \Dd_w\cup \{D_w\}$ and $N_G(D_w)\supseteq \bigcup_{D\in \Dd_w} N_{G'}(D)$.

Suppose first that $\Om'$ is not a PMC in $G$. Observe that if $uv$ is a nonedge in $\Om'$, then since $\Om'$ is a PMC in $G'$, there exists a component $D\in \cc(G'-\Om')$ that covers this nonedge, that is,
$\{u,v\}\subseteq N(D)$. If $D\notin \Dd_w$ then $D$ is also a component of $G-\Om'$ that covers $uv$, and otherwise, if $D\in \Dd_w$, then $D_w$ covers $uv$. In either case, every nonedge within $\Om'$
is covered by a component of $G-\Om'$, so $\Om'$ is not a PMC in $G$ for the reason of not satisfying the first property: there is a component of $G-\Om'$ whose neighborhood is equal to $\Om'$.
For each $D\in \cc(G-\Om')\setminus \{D_w\}$ we have that $N_{G}(D)=N_{G'}(D)$, and $\Om'$ was a PMC in $G'$, so this neighborhood cannot be equal to $\Om'$. We infer that $N(D_w)=\Om'$.

We now verify that $\Om=\Om'\cup \{w\}$ is a PMC in $G$. First, observe that since $\cc(G-\Om)=\cc(G'-\Om')$ and no component of $G'-\Om'$ was full to $\Om'$ in $G'$, 
it follows that no component of $G-\Om$ is full to $\Om$ in $G$.
Second, take any nonedge $uv$ in $\Om$. If $uv$ is a nonedge in $\Om'$, then $uv$ is covered by some component from $\cc(G-\Om)=\cc(G'-\Om')$, due to $\Om'$ being a PMC in $G'$.
Otherwise, one of endpoints of $uv$ is equal to $w$, say $v=w$; then $u\in \Om'$.
However, we argued that $N(D_w)=\Om'$, hence there exists a component $D\in \Dd_w$ such that $u\in N(D)$. Then it follows that $D$ is a component of $G-\Om$ with $\{u,w\}\subseteq N(D)$, hence this nonedge is
also covered. We conclude that $\Om$ is a PMC in $G$.

Suppose now that $\Om'$ is a PMC in $G$. As $D_w\in \cc(G-\Om')$, there must exist some vertex $u\in \Om'$ for which $u\notin N(D_w)$. As $w\in D_w$, we have $uw\notin E(G)$.
We now verify that $\Om=\Om'\cup \{w\}$ is not a PMC in $G$ by showing that the nonedge $uw$ is not covered by any component of $\cc(G-\Om)$.
Suppose, for the sake of contradiction, that there is some component $D\in \cc(G-\Om)=\cc(G'-\Om')$ for which $\{u,w\}\subseteq N_G(D)$.
Then $D\in \Dd_w$ due to $w\in N_G(D)$, hence $u\in N_{G'}(D)\subseteq N_G(D_w)$. This is a contradiction with the choice of $u$.
\end{proof}

Lemma~\ref{lem:extending-PMC} motivates the following definition. 
Let $\Om$ be a PMC in a graph $G$ and let $\Ss = (x_1, x_2, \ldots, x_k)$ be a sequence of pairwise distinct vertices of $G$.
If for every $0 \leq i \leq k$ the set $\Om \setminus \{x_1,x_2,\ldots,x_i\}$ is a PMC in the graph $G-\{x_1,x_2,\ldots,x_i\}$, then we say that the sequence
$\Ss$ is a \emph{survival sequence} for $\Om$. Furthermore, we denote $V(\Ss) = \{x_1,x_2,\ldots,x_k\}$ and we say that the survival sequence $\Ss$ for $\Om$ \emph{ends in}  $\Om \setminus V(\Ss)$
(which is a PMC in $G-V(\Ss)$).

 Lemma~\ref{lem:extending-PMC} immediately yields the following.

\begin{lemma}[PMC Lifting Lemma]\label{lem:lifting}
Let $G$ be a graph and $\Ss = (x_1,x_2,\ldots,x_k)$ be a sequence of pairwise distinct vertices of $G$.
Then for every $\Om'$ that is a PMC in $G-V(\Ss)$, there exists a unique $\Om$ that is a PMC in $G$ and $\Ss$ is a survival sequence for $\Om$ ending in $\Om'$ (in particular, $\Om \setminus V(\Ss)=\Om'$). Moreover, given $G$, $\Ss$, and $\Om'$, such $\Om$ can be computed in polynomial time.
\end{lemma}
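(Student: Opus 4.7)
The plan is to prove the lemma by a straightforward induction on $k = |\Ss|$, using Lemma~\ref{lem:extending-PMC} as the one-step lifting mechanism. Define $G_i = G - \{x_1, x_2, \ldots, x_i\}$ for $0 \leq i \leq k$, so that $G_0 = G$, $G_k = G - V(\Ss)$, and $G_{i+1} = G_i - x_{i+1}$ for each $i < k$. Set $\Om_k = \Om'$, which is by assumption a PMC in $G_k$. Then, going from $i = k-1$ down to $i = 0$, apply Lemma~\ref{lem:extending-PMC} with the roles $G \mapsto G_i$ and $w \mapsto x_{i+1}$: this tells us that exactly one of the two sets $\Om_{i+1}$ and $\Om_{i+1} \cup \{x_{i+1}\}$ is a PMC in $G_i$, and we define $\Om_i$ to be that set. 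Finally, put $\Om = \Om_0$.

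For the existence part, I would verify that $\Ss$ is a survival sequence for $\Om$ ending in $\Om'$. Observe that by construction $\Om_i \setminus \{x_{i+1}\} = \Om_{i+1}$, regardless of which of the two cases occurred at step $i$, because $\Om_{i+1} \subseteq V(G_{i+1})$ does not contain $x_{i+1}$. An easy forward induction on $i$ then yields $\Om \setminus \{x_1, \ldots, x_i\} = \Om_i$, which is a PMC in $G_i$ by construction; this is precisely the definition of $\Ss$ being a survival sequence for $\Om$ ending in $\Om_k = \Om'$.

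For uniqueness, I would argue by reverse induction on $i$. Suppose $\Om^\star$ is any PMC in $G$ satisfying the claimed properties, and set $\Om_i^\star = \Om^\star \setminus \{x_1, \ldots, x_i\}$. Then $\Om_k^\star = \Om' = \Om_k$, and for each $i < k$ we have $\Om_i^\star \in \{\Om_{i+1}^\star, \Om_{i+1}^\star \cup \{x_{i+1}\}\}$ (since $\Om_i^\star \subseteq V(G_i)$ and $\Om_i^\star \setminus \{x_{i+1}\} = \Om_{i+1}^\star$). Assuming inductively that $\Om_{i+1}^\star = \Om_{i+1}$, the uniqueness clause in Lemma~\ref{lem:extending-PMC} applied in $G_i$ forces $\Om_i^\star = \Om_i$. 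Iterating down to $i = 0$ gives $\Om^\star = \Om$. For the algorithmic part, each lifting step reduces to deciding which of two explicit candidate sets is a PMC in the current graph, which is polynomial-time checkable using properties \pmca{} and \pmcb{}; since there are only $k \leq |V(G)|$ such steps, the total running time is polynomial.

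There is no real obstacle here: the result is essentially a packaging of Lemma~\ref{lem:extending-PMC} into an iterated form, so the only things to be careful about are (i) keeping track of which set-theoretic identities hold along the chain $\Om_0 \supseteq \Om_1 \cup \{x_1\} \supseteq \ldots$ so that the survival sequence condition is read off cleanly, and (ii) making sure that uniqueness is driven by the ``exactly one'' clause of Lemma~\ref{lem:extending-PMC} at every step, rather than just existence.
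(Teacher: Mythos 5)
Your proposal is correct and follows essentially the same route as the paper: reintroduce the vertices of $\Ss$ in right-to-left order, at each step invoking Lemma~\ref{lem:extending-PMC} to identify the unique one of the two candidate sets that is a PMC, with uniqueness and the polynomial-time bound both driven by the ``exactly one'' clause of that lemma. Your write-up merely makes the survival-sequence bookkeeping and the reverse-induction uniqueness argument slightly more explicit than the paper does.
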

\begin{proof}
Starting from $G'$, reintroduce the vertices of $\Ss$ one by one in the right-to-left order (i.e., starting from $x_k$), up to obtaining $G$ at the end.
For each $i=k,k-1,\ldots,0$, we maintain the unique PMC $\Om_i$ in the graph $G_i := G-\{x_1,x_2,\ldots,x_i\}$ with $(x_{i+1}, x_{i+2}, \ldots, x_k)$ being a survival sequence
for $\Om_i$ in the graph $G_i$ ending in $\Om'$. We start with $\Om_k = \Om'$, and then observe that,
by Lemma~\ref{lem:extending-PMC}, when reintroducing the next vertex $x_i$ there is a unique alternative---include it or not--- that leads to lifting the current PMC to a PMC after the reintroduction.
Observe that at each step we can test in polynomial time which alternative should be followed, because this boils down to testing which of the two sets is a PMC in the graph after the reintroduction.
\end{proof}

\paragraph*{Deducing PMCs.} Next, we provide two auxiliary lemmas that are helpful for reconstructing PMCs in certain situations.

\begin{lemma}\label{lem:deduce-pmc-social}
Suppose $G$ is a $P_6$-free graph and $\Om$ is a PMC in $G$.
Suppose further that $v_1v_2\in \Om$ are such that $v_1v_2\notin E(G)$ and there exists exactly one component $D_0\in \cc(G-\Om)$ for which $\{v_1,v_2\}\subseteq N(D_0)$.
Then for every $D_1,D_2\in \cc(G-\Om)$ different than $D_0$ such that $v_1\in N(D_1)$ and $v_2\in N(D_2)$, the following holds:
\begin{equation}\label{eq:def-om}
\Om=N(D_0)\cup N(D_1)\cup N(D_2)\cup ((N(v_1)\cap N(v_2))\setminus D_0).
\end{equation}
\end{lemma}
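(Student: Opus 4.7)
The plan is to prove the identity by showing both inclusions; let $R$ denote the right-hand side of \eqref{eq:def-om}. The inclusion $R \subseteq \Om$ is essentially by definition: each $N(D_i)$ for $i=0,1,2$ is contained in $\Om$ because $D_i \in \cc(G-\Om)$. For the last summand, a vertex $u \in (N(v_1) \cap N(v_2)) \setminus D_0$ cannot lie outside $\Om$, since otherwise it would belong to some component $D \in \cc(G-\Om)$ with $D \ne D_0$ but with $\{v_1,v_2\} \subseteq N(D)$, contradicting the uniqueness of $D_0$.

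For the harder inclusion $\Om \subseteq R$, I would take $u \in \Om \setminus R$ and derive a contradiction by exhibiting an induced $P_6$ in $G$. Fix auxiliary vertices $d_1 \in D_1 \cap N(v_1)$ and $d_2 \in D_2 \cap N(v_2)$. Split into three cases according to the adjacency of $u$ to $\{v_1,v_2\}$. If $u$ is adjacent to both, then $u \in (N(v_1)\cap N(v_2))\setminus D_0$ (as $u \in \Om$), contradiction. If $u$ is adjacent to neither, invoke \pmcb\ twice to obtain components $D',D'' \in \cc(G-\Om)$ covering the non-edges $uv_1$ and $uv_2$. The uniqueness of $D_0$ forces $D' \ne D''$ unless one of them already equals $D_0$ (in which case $u \in N(D_0)$). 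Concatenating a shortest path from $v_1$ to $u$ through $D'$ with a shortest path from $u$ to $v_2$ through $D''$ yields an induced path on $k+l+3$ vertices, which is already a $P_{\ge 6}$ unless $k=l=1$; in the latter case one appends $d_2$, which is disjoint from the path because $D_2 \ne D',D''$ (again forced by uniqueness of $D_0$, combined with $u \notin N(D_2)$). The remaining case $u \in N(v_1) \setminus N(v_2)$ is analogous: invoke \pmcb\ on $uv_2$ to obtain $D$; dispose of $D \in \{D_0, D_2\}$ directly, and otherwise build the induced $P_6$ as $d_1 - v_1 - u - q_1 - \cdots - q_m - v_2$ (possibly appending $d_2$ if $m=1$), using a shortest $u$-to-$v_2$ path through $D$.

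The main obstacle is verifying that the candidate paths are genuinely induced, i.e., that no unwanted chords appear between $\{v_1,v_2,u,d_1,d_2\}$ and the interiors of the paths through $D_0$, $D'$, $D''$, or $D$. The core observation that makes all such chords impossible is that the uniqueness of $D_0$ implies \emph{for every} $D \in \cc(G-\Om) \setminus \{D_0\}$, at most one of $v_1,v_2$ lies in $N(D)$. Together with the fact that vertices in distinct components of $G-\Om$ are non-adjacent, this rules out every potential shortcut. So the entire proof is essentially a careful case analysis driven by the \pmcb\ property and this single separation principle.
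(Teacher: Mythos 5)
Your proposal is correct and follows essentially the same route as the paper's proof: the easy inclusion via uniqueness of $D_0$, and for the reverse inclusion an induced path built from $v_1$ to $u$ to $v_2$ through covering components (necessarily distinct from $D_0,D_1,D_2$ since $u$ avoids the right-hand side), extended by $d_1\in D_1$ and $d_2\in D_2$ to contradict $P_6$-freeness. The only differences are presentational: the paper argues with a single WLOG assumption $uv_1\notin E(G)$ and always appends both $d_1$ and $d_2$, whereas you split into adjacency cases and append just enough vertices, with inducedness justified by the same observation that no component other than $D_0$ has both $v_1$ and $v_2$ in its neighborhood.
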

\begin{proof}
To see that the right hand side of~\eqref{eq:def-om} is contained in $\Om$, observe that $\Om\supseteq N(D)$ for each $D\in \cc(G-\Om)$, 
whereas all the common neighbors of $v_1$ and $v_2$ are contained in $\Om$ or in $D_0$.
We are left with verifying that every vertex $u\in \Om$ is contained in the right hand side of~\eqref{eq:def-om}.
This is obvious for $u=v_1$ or $u=v_2$, since $\{v_1,v_2\}\subseteq N(D_0)$, hence assume that $u$ is different from $v_1$ and $v_2$.

For the sake of contradiction suppose that $u$ is not contained in the right hand side of~\eqref{eq:def-om}.
Then $uv_1\notin E(G)$ or $uv_2\notin E(G)$, as otherwise $u$ would be contained in $(N(v_1)\cap N(v_2))\setminus D_0$.
By symmetry, suppose w.l.o.g. that $uv_1\notin E(G)$.
Hence, there exists a component $D_3$ with $\{u,v_1\}\subseteq N(D_3)$. Since $u\in N(D_3)$ and $u$ is not contained in the right hand side of~\eqref{eq:def-om},
it follows that $D_3\neq D_t$ for all $t\in \{0,1,2\}$. In particular $D_3\neq D_0$, hence $v_2\notin N(D_3)$ since $D_0$ is the unique component of $G-\Om$ that covers the nonedge $v_1v_2$.
Let $Q_1$ be an induced path with endpoints $v_1$ and $u$, whose all internal vertices belong to $D_3$. Then the length of $Q_1$ is at least $2$.

Define now an induced path $Q_2$ as follows. If $uv_2\in E(G)$, then $Q_2=(u,v_2)$. Otherwise, there is a connected component $D_4$ such that $\{u,v_2\}\subseteq N(D_4)$.
Again, observe that $D_4\neq D_t$ for $t\in \{0,1,2\}$ for the same reason as for $D_3$, and also $D_4\neq D_3$ due to $v_2\in N(D_4)$ and $v_2\notin N(D_3)$.
Similarly as for $D_3$, we have that $v_1\notin N(D_4)$, for $D_0$ is the unique component of $\cc(G-\Om)$ that covers $v_1v_2$.
In this case, we define $Q_2$ to be an induced path with endpoints $u$ and $v_2$ whose all internal vertices belong to $D_4$.
In any case, the path $Q_2$ is of length at least one and has endpoints $u$ and $v_2$.

Let $d_1$ be an arbitrary neighbor of $v_1$ in $D_1$ and $d_2$ be an arbitrary neighbor of $v_2$ in $D_2$.
Observe that $v_2\notin N(D_1)$ and $v_1\notin N(D_2)$, because $D_0$ is the unique component of $\cc(G-\Om)$ that covers $v_1v_2$. 
Construct a path $Q$ by concatenating $Q_1$ and $Q_2$, and appending $d_1$ and $d_2$ at the respective ends of the obtained path.
As $v_1v_2\notin E(G)$, from all the non-adjacencies checked above it follows that $Q$ is an induced $P_\ell$ in $G$ with $\ell\geq 6$. This is a contradiction.
\end{proof}

\begin{lemma}\label{lem:deduce-pmc-solitary}
Suppose $G$ is a graph and $\Om$ is a PMC in $G$.
Suppose further that $v_1v_2\in \Om$ are such that $v_1v_2\notin E(G)$ and there exists exactly one component $D_0\in \cc(G-\Om)$ for which $\{v_1,v_2\}\subseteq N(D_0)$.
Finally, suppose that $D_0$ is the only component of $\cc(G-\Om)$ whose neighborhood contains $v_1$.
Then:
\begin{equation}\label{eq:def-om2}
\Om=(N(v_1)\setminus D_0)\cup N(D_0).
\end{equation}
\end{lemma}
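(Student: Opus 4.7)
The plan is to prove both inclusions separately; I expect this lemma to be considerably easier than its companion Lemma~\ref{lem:deduce-pmc-social}, because the extra ``solitary'' hypothesis on $v_1$ trivializes the long-path argument that was needed there. In particular, I do not expect to use $P_6$-freeness at all.

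For the inclusion $\supseteq$, I would first observe the trivial fact that $N(D_0)\subseteq \Omega$, which holds for every component of $G-\Omega$. To get $N(v_1)\setminus D_0\subseteq \Omega$, I would argue that any vertex $u\in N(v_1)$ either lies in $\Omega$ or in some component $D'\in \cc(G-\Omega)$; in the latter case $v_1\in N(D')$, so the solitary hypothesis forces $D'=D_0$, that is $u\in D_0$. Hence every neighbor of $v_1$ outside $D_0$ already lies in $\Omega$.

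For the reverse inclusion $\subseteq$, I would take an arbitrary $u\in \Omega$ and split into three cases according to the relation of $u$ to $v_1$. If $u=v_1$ then $v_1\in N(D_0)$ is given as a hypothesis, so $u$ is covered by the right-hand side. If $u\neq v_1$ and $uv_1\in E(G)$, then $u\in N(v_1)$, and since $u\in \Omega$ we have $u\notin D_0$, placing $u$ in $N(v_1)\setminus D_0$. The remaining, and only slightly interesting, case is $u\neq v_1$ with $uv_1\notin E(G)$: here $uv_1$ is a non-edge inside the PMC $\Omega$, so by property \pmcb\ there is some $D\in \cc(G-\Omega)$ with $\{u,v_1\}\subseteq N(D)$; the solitary hypothesis immediately forces $D=D_0$, yielding $u\in N(D_0)$.

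The main ``obstacle'' is essentially nonexistent: the lemma is almost a definitional unfolding, and the only substantive ingredient is \pmcb\ applied to the non-edge $uv_1$, combined with the hypothesis that $v_1$ appears in exactly one component-neighborhood. The $P_6$-freeness assumption present in the lemma statement is not actually invoked in my plan; it is the hypothesis on $v_1$ that does the real work, whereas in Lemma~\ref{lem:deduce-pmc-social} the absence of such an assumption forced the construction of an induced $P_\ell$ with $\ell\geq 6$ to rule out additional components.
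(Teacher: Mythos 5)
Your proof is correct and follows essentially the same route as the paper's: the containment $\supseteq$ is immediate from the solitary hypothesis, and the containment $\subseteq$ reduces to applying property \pmcb{} to the non-edge $uv_1$ and invoking the solitary hypothesis to force the covering component to be $D_0$. Your observation that $P_6$-freeness is never used is also consistent with the paper's argument, which likewise does not invoke it.
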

\begin{proof}
By the assumption on $D_0$ being the unique component of $G-\Om$ that has $v_1$ in its neighborhood, it is clear that the right hand side of~\eqref{eq:def-om2} is contained in $\Om$.
Hence, we are left with verifying that every vertex of $\Om$ is contained in the right hand side of~\eqref{eq:def-om2}.

Take any $u\in \Om$. If $u\in N(D_0)$, then we are already done, hence suppose otherwise.
It suffices to prove that $uv_1$ is an edge. Indeed, otherwise there would be a component $D_1\in \cc(G-\Om)$ with $\{u,v_1\}\subseteq N(D_1)$, so in particular $D_1\neq D_0$ because $u\notin N(D_0)$.
This is a contradiction with the assumption that $D_0$ is the unique component that has $v_1$ in its neighborhood.
\end{proof}

\paragraph*{Recovering PMCs.} We now provide two auxiliary lemmas about recovering a family of PMCs based on a rich enough family of structures describing them.

\begin{lemma}[cf. Lemma~\ref{o:lem:recover1}]\label{lem:recover1}
Suppose $G$ is a $P_6$-free graph on $n$ vertices.
Given a family $\Xx\subseteq 2^{V(G)}$, one can in time polynomial in $n$ and $|\Xx|$ compute a family $\Ff_{\rec,1}(\Xx)\subseteq 2^{V(G)}$ 
such that $|\Ff_{\rec,1}(\Xx)|\leq  3n^6\cdot |\Xx|^3$ and the following property
holds: for every PMC $\Om$ in $G$, if $\cc(G-\Om)\subseteq \Xx$ then $\Om\in \Ff_{\rec,1}(\Xx)$.
\end{lemma}
\begin{proof}
Fix an arbitrary enumeration $w_1,w_2,\ldots,w_n$ of $V(G)$, and consider removing vertices $w_i$ from the graph one by one.
Denote $G_i = G - \{w_1,w_2,\ldots,w_i\}$ and $\Om_i = \Om \setminus \{w_1,w_2,\ldots,w_i\}$ for $0 \leq i \leq n$.
Let $0 \leq k \leq n$ be the maximum integer such that $\Om_i$ is a PMC in $G_i$ for every $0 \leq i \leq k$; note that $\Om_0 = \Om$ is a PMC in $G_0 = G$
so such an integer exists. 
That is, $\Om_k$ is a PMC in $G_k$ and $(w_1,w_2,\ldots,w_k)$ is a survival sequence for $\Om$ ending in $\Om_k$, but either $k=n$ or
after the removal of $v = w_{k+1}$, $\Om_k \setminus \{v\}$ is no longer a PMC in $G_k - v$.

In what follows, we mostly analyze the graph $G_k$ with the PMC $\Om_k$ and the connected components of $G_k-\Om_k$.
By the assumption that $\cc(G-\Om) \subseteq \Xx$, 
for a fixed choice of $k$, every connected component $\wD$ of $G_k-\Om_k$ belongs to the following family:
\begin{equation}\label{eq:xxprec}
\widehat{\Xx}_k= \bigcup_{D\in \Xx} \cc(D-\{w_1,\ldots,w_k\}).
\end{equation}
Furthermore, note that we have $|\widehat{\Xx}_k| \leq (n-k) |\Xx|$.

We proceed by a case study depending on whether $k=n$ or not and, in the latter case, where $v = w_{k+1}$ lies.

\bigskip

\noindent {\bf{Case 0}}: $k = n$. Observe that it must be that $(w_1,w_2,\ldots,w_n)$ is a survival sequence for $\Om$ ending in $\emptyset$.
Then, Lemma~\ref{lem:lifting} asserts that there is a unique PMC $\Om$ in $G$ satisfying the above, and that it can be computed in polynomial time. 
We set $\Gg_0 = \{\Om\}$ for such $\Om$.

\bigskip

\noindent {\bf{Case 1}}: $k < n$ and $v\in \Om$. Observe that then it must be that $\Om_k\setminus \{v\}$ is not a PMC in $G_k-v$ because there is some component $\wD$ of $G_k-\Om_k$
such that $N_{G_k}(\wD)=\Om_k\setminus \{v\}$, so after the removal of $v$, $\Om_k\setminus \{v\}$ is equal to the neighborhood of one component. 
Consequently, $\Om_k=N_{G_k}(\wD)\cup \{v\}$.

Construct now a family $\Gg_1$ as follows. For each choice of $0 \leq k < n$ construct $G_k=G-\{w_1,\ldots,w_k\}$ and $\widehat{\Xx}_k$ using formula~\eqref{eq:xxprec}. 
Then, for each choice of $\wD\in \widehat{\Xx}_k$, compute $\Om_k=N_{G_k}(\wD)\cup \{v\}$, where $v=w_{k+1}$.
 If $\Om_k$ is not a PMC in $G_k$, we discard the choice.
Otherwise, by the PMC Lifting Lemma (Lemma~\ref{lem:lifting}) we conclude that $\Om$ is the unique PMC in $G$ for which $(w_1,w_2,\ldots,w_k)$ is a survival sequence
ending in $\Om_k$, and, moreover, $\Om$ can be computed in
polynomial time from $\Om_k$. Hence, we compute $\Om$ and include it in the constructed family $\Gg_1$. 
Observe that thus we obtain that $|\Gg_1|\leq \sum_{k=0}^{n-1} |\widehat{\Xx}_k| \leq  \binom{n+1}{2} \cdot |\Xx|$, and $\Gg_1$ contains $\Om$ provided the situation conforms to this case (i.e., $k < n$ and
    $v\in \Om_k$).

\bigskip

\noindent {\bf{Case 2}}: $k < n$ and $v\notin \Om$. Hence, $\Om_k\setminus \{v\}$ is not a PMC in $G_k-v$ due to the fact that some nonedge $t_1t_2$ with $\{t_1,t_2\}\subseteq \Om_k$
stops to be covered by the connected component $\wD_0$ of $G_k-\Om_k$ that contains $v$, because $\wD_0$ gets shattered by the removal of $v$.
In particular, $\wD_0$ is the unique component of $G_k-\Om_k$ that covers the nonedge $t_1t_2$.

Suppose first that there exist $\wD_1,\wD_2\in \cc(G_k-\Om_k)$ with $\wD_i\neq \wD_0$ for $i=1,2$ such that $t_i\in N_{G_k}(\wD_i)$
for $i=1,2$. Then, by Lemma~\ref{lem:deduce-pmc-social}, applied in $G_k$, we obtain that
\begin{equation}\label{eq:deduced1rec}
\Om_k=N_{G_k}(\wD_0)\cup N_{G_k}(\wD_1)\cup N_{G_k}(\wD_2)\cup ((N_{G_k}(t_1)\cap N_{G_k}(t_2))\setminus \wD_0).
\end{equation}
Obviously $\wD_0,\wD_1,\wD_2\in \widehat{\Xx}_k$, where $\widehat{\Xx}_k$ is defined in~\eqref{eq:xxprec}.

Suppose now that, for one of vertices $t_1,t_2$, say for $t_1$ by symmetry, $\wD_0$ is the unique component of $G_k-\Om_k$ that contains this vertex in its neighborhood.
Then by Lemma~\ref{lem:deduce-pmc-solitary} we have that 
\begin{equation}\label{eq:deduced2rec}
\wOm=(N_{G_k}(t_1)\setminus \wD_0)\cup N_{G_k}(\wD_0).
\end{equation}

Formulae~\eqref{eq:deduced1rec} and~\eqref{eq:deduced2rec} suggest the following construction of a family $\Gg_2$ that encompasses this case.
For each choice of $0 \leq k < n$ construct $G_k=G-\{w_1,\ldots,w_k\}$ and $\widehat{\Xx}_k$ using formula~\eqref{eq:xxprec}. 
Then, for each choice of $\wD_0,\wD_1,\wD_2\in \widehat{\Xx}_k$ and $t_1,t_2 \in V(\wG)$ perform the following.
Compute two candidates for $\Om_k$, using formulae~\eqref{eq:deduced1rec} and~\eqref{eq:deduced2rec}, for each of them 
obtain the unique PMC $\Om$ in $G$ using the PMC Lifting Lemma (Lemma~\ref{lem:lifting}) such that $(w_1,w_2,\ldots,w_k)$
is a survival sequence for $\Om$ ending in $\Om_k$, and include $\Om$ in $\Gg_2$.
Observe that $\Gg_2$ constructed in this manner satisfies $|\Gg_2|\leq \sum_{k=0}^{n-1} 2n^2 \cdot |\widehat{\Xx}_k|^3\leq 2n^2 \binom{n+1}{2}^2\cdot |\Xx|^3$, 
and from the discussion above it follows that $\Gg_2$ contains $\Om$ provided the situation conforms to this case.

\bigskip

To conclude, for $n = 1$ we output $\Ff_{\rec,1} = \{V(G)\}$, while for $n > 1$ we output the family
\begin{equation*}
\Ff_{\rec,1}(\Xx)=\Gg_0 \cup \Gg_1\cup \Gg_2.
\end{equation*}
It follows that $\Ff_{\rec,1}(\Xx)$ has the required property and it is easy to check that for $n > 1$ it holds that
$$|\Ff_{\rec,1}(\Xx)|\leq 1 + \binom{n+1}{2} + 2n^2\binom{n+1}{2}^2 \leq 3n^6\cdot |\Xx|^3.$$
\end{proof}

\begin{lemma}\label{lem:recover2}
Suppose $G$ is a $P_6$-free graph on $n$ vertices.
Given a family $\Xx\subseteq 2^{V(G)}$, one can in time polynomial in $n$ and $|\Xx|$ compute a family $\Ff_{\rec,2}(\Xx)\subseteq 2^{V(G)}$ such that $|\Ff_{\rec,2}(\Xx)|\leq n\cdot |\Xx|$ and the following property
holds: for every PMC $\Om$ in $G$ and each component $D\in \cc(G-\Om)$, if $\Om\cup D\in \Xx$ then $\Om\in \Ff_{\rec,2}(\Xx)$.
\end{lemma}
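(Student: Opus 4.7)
The plan is, for each $X \in \Xx$, to enumerate at most $n$ candidate PMCs; taking the union over all $X \in \Xx$ yields the desired $\Ff_{\rec,2}(\Xx)$ of size at most $n \cdot |\Xx|$.

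The key structural observation, which I would use throughout, is the following. Suppose $\Om$ is a PMC in $G$ with $\Om \cup D = X$ for some $D \in \cc(G-\Om)$. Then $N_G(D) \subseteq \Om \subseteq X$, so $D$ has no neighbor in $V(G)\setminus X$. Consequently every connected component of $G-\Om$ that lies in $V(G)\setminus X$ is actually a connected component of $G[V(G)\setminus X]$, and conversely, so
\[
\cc(G-\Om) \;=\; \{D\} \,\cup\, \cc(G[V(G)\setminus X]).
\]
Thus, modulo the unknown component $D \subseteq X$, the entire component structure of $G-\Om$ is already fixed by $X$. By Lemma~\ref{lem:pmc-minseps}, $D$ in turn determines, and is determined by, the minimal separator $S = N_G(D)$ together with the choice of $\Om$.

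Given this, the plan is to iterate, for each $X \in \Xx$, over all vertices $v \in V(G)$, and for each pair $(X,v)$ produce one candidate $\Om_{X,v}$ via the PMC Lifting Lemma (Lemma~\ref{lem:lifting}). The vertex $v$ plays the role of a witness vertex -- the intended interpretation is $v \in D$. Concretely, I would use $v$ to select a small subgraph of $G$ (obtained by removing a carefully chosen subset of $V(G) \setminus \{v\}$) in which a small ``base PMC'' determined canonically from $X$ and $v$ can be identified; lifting this base PMC back through $G$ using Lemma~\ref{lem:lifting} then produces the unique candidate $\Om_{X,v}$. The correctness claim I need to establish is that for every valid pair $(\Om,D)$ with $\Om\cup D = X$, there exists at least one vertex $v$ (for instance, a suitably chosen vertex of $D$) such that $\Om_{X,v} = \Om$. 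The structural observation above ensures that, once $v$ is chosen, the ``outside'' components $\cc(G[V(G)\setminus X])$ of $\cc(G-\Om)$ are pinned down by $X$, which is what makes a single-vertex witness plausible.

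The hardest part will be (i) giving a precise construction of $\Om_{X,v}$ and (ii) proving that a witness vertex $v$ always exists. For $|D| = 1$, the case is trivial (take $D=\{v\}$ and $\Om_{X,v} = X\setminus\{v\}$, then check PMC-ness in polynomial time). For $|D| > 1$, a single $v$ does not a priori pin down $D$, so the argument has to rely on the $P_6$-freeness of $G$, through the Neighborhood Decomposition Lemma (Lemma~\ref{lem:nei-decomp}) and the Separator Covering Lemma (Lemma~\ref{lem:covering-simple}), to exhibit a well-behaved witness vertex $v \in D$ (for example, one adjacent to enough of the minimal separator $S = N_G(D)$ that the lifting via $\{v\}$ reproduces the covering behaviour of $D$ for the non-edges inside $\Om$). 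This case analysis is the main combinatorial obstacle, and I expect it to mirror -- but be substantially simpler than -- the case analysis in the proof of Lemma~\ref{lem:recover1}, precisely because the observation above drastically limits what the ``outside'' components of $G-\Om$ can look like.
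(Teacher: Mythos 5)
There is a genuine gap: your plan stops exactly where the proof has to start. You correctly observe that $\cc(G-X)=\cc(G-\Om)\setminus\{D\}$, so all ``outside'' components and their neighborhoods are pinned down by $X$, but you never specify how the candidate $\Om_{X,v}$ is actually built, and you defer the existence of a witness to an unproven case analysis via the Neighborhood Decomposition and Separator Covering Lemmas. The route you sketch -- a witness $v\in D$, removal of vertices, and reconstruction through the PMC Lifting Lemma -- faces precisely the obstacle you name yourself: for $|D|>1$ a single vertex of $D$ does not determine $D$, nor $N(D)$, and nothing in the proposal shows how lifting from a ``base PMC'' would recover the part of $\Om$ not seen by the outside components. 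So as written this is a proof plan with its central combinatorial step missing, not a proof.

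The missing idea is to choose the witness \emph{outside} $N[D]$ but inside $\Om$, not inside $D$. By \pmca{} no component of $G-\Om$ is full to $\Om$, so $N(D)\subsetneq\Om$ and one can pick $s\in\Om\setminus N(D)$; then one checks directly that
\[
\Om=(N[s]\cap X)\cup \bigcup_{C\in \cc(G-X)} N(C).
\]
Indeed, $N[s]\cap X\subseteq\Om$ since $s$ has no neighbor in $D$, and each $N(C)\subseteq\Om$; conversely, for $u\in\Om$ with $us\notin E(G)$, \pmcb{} gives a component $D'$ of $G-\Om$ covering the nonedge $us$, and $D'\neq D$ because $s\notin N(D)$, so $D'\in\cc(G-X)$ and $u\in N(D')$. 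Enumerating all pairs $(X,s)$ with $s\in X$ yields at most $n\cdot|\Xx|$ candidates, and no lifting, no case analysis, and in fact no $P_6$-freeness is needed. Your single-vertex-witness intuition is right, but only with this choice of witness does the recovery become an explicit formula rather than an open-ended reconstruction.
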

\begin{proof}
Let us fix a PMC $\Om$ and a component $D\in \cc(G-\Om)$; denote $X=\Om\cup D$.
Since $\Om$ is a PMC, we have that $N(D)\subsetneq \Om$.
Let us pick an arbitrary vertex $s\in \Om\setminus N(D)$.
We claim that
\begin{equation}\label{eq:filter-om}
\Om=(N[s]\cap X)\cup \bigcup_{C\in \cc(G-X)} N(C).
\end{equation}
To see this, observe first that $\cc(G-X)=\cc(G-\Om) \setminus \{D\}$.
Hence, $N(C)\subseteq \Om$ for each $C\in \cc(G-X)$.
Further, $s$ has no neighbors in $D$, hence $N[s]\cap X\subseteq \Om$.
This implies that the right hand side of~\eqref{eq:filter-om} is contained in $\Om$, and we are left with verifying the reverse inclusion.

To see that the right hand side of~\eqref{eq:filter-om} contains $\Om$, pick any vertex $u\in \Om$.
If $us\in E(G)$ then $u\in N[s]\cap X$, and hence $u$ belongs to the right hand side of~\eqref{eq:filter-om}.
Suppose then that $us\notin E(G)$. Since $\Om$ is a PMC, there exists a component $D'\in \cc(G-\Om)$ with $\{u,s\}\subseteq N(D')$.
As $s\in N(D')$, we have $D\neq D'$.
Hence, $D'\in \cc(G-X)$ and, consequently, $u\in N(D')\subseteq \bigcup_{C\in \cc(G-X)} N(C)$.

To conclude, consider the following procedure for constructing $\Ff_{\rec,2}(\Xx)$ based on $\Xx$: For each $X\in \Xx$ and each $s\in X$, add the set $(N[s]\cap X)\cup \bigcup_{C\in \cc(G-X)} N(C)$
to $\Ff_{\rec,2}$. Clearly $|\Ff_{\rec,2}|\leq n\cdot |\Xx|$ by the construction, and the analysis above shows that $\Ff_{\rec,2}$ satisfies the requires properties.
\end{proof}

\paragraph*{Analysis of neighborhoods.} Finally, we give a lemma helpful in understanding the structure imposed between neighborhoods of two components.
Note that the proof of the following lemma has already been presented in Section~\ref{sec:over} (where it bears the name of Lemma~\ref{o:lem:wings}), but we repeat the proof for convenience.

\begin{lemma}[cf. Lemma~\ref{o:lem:wings}]\label{lem:wings}
Let $G$ be a $P_6$-free graph and let $\Omega$ be a PMC in $G$. Suppose $D_1,D_2\in \cc(G-\Omega)$ are two components.
Then either $N(D_1)\setminus N(D_2)$ is complete to $D_1$, or $N(D_2)\setminus N(D_1)$ is complete to $D_2$.
Furthermore, for all $v_1\in N(D_1)\setminus N(D_2)$ and $v_2\in N(D_2)\setminus N(D_1)$ with $v_1v_2\notin E(G)$, 
both $v_1$ is complete to $D_1$ and $v_2$ is complete to $D_2$.
\end{lemma}
\begin{proof}
We start with the first part of the statement.
Towards contradiction suppose that neither $N(D_1)\setminus N(D_2)$ is complete $D_1$, nor $N(D_2)\setminus N(D_1)$ is complete $D_2$.
Then there are $u_1\in N(D_1)\setminus N(D_2)$ and $u_2\in N(D_2)\setminus N(D_1)$ such that $u_1$ is not complete to $D_1$ and $u_2$ is not complete to $D_2$.
Since $D_1$ and $D_2$ are connected, this means that there exist induced $P_3$s: $Q_1$ of the form $u_1D_1D_1$, and $Q_2$ of the form $u_2D_2D_2$.

If $u_1u_2\in E(G)$, then the concatenation of $Q_1$ and $Q_2$ would be an induced $P_6$, a contradiction. Hence suppose $u_1u_2\notin E(G)$.
Since $u_1,u_2\in \Om$ and $\Om$ is a PMC, there exists some component $D_3\in \cc(G-\Om)$ such that $\{u_1,u_2\}\subseteq N(D_3)$.
Since $u_1\notin N(D_2)$ and $u_2\notin N(D_1)$, we have that $D_3\neq D_1$ and $D_3\neq D_2$. Let $R$ be a shortest path connecting $u_1$ and $u_2$ with all internal vertices belonging to $D_3$;
by minimality, $R$ is an induced path. Then the concatenation of $Q_1$, $R$, and $Q_2$ is an induced $P_\ell$ for some $\ell>6$, a contradiction.

For the second part of the statement, suppose for the sake of contradiction that, say, $v_2$ is not complete to $D_2$.
Then there exists a induced paths: a $Q_1$ of the form $v_1D_1$ (which is a $P_2$), and $Q_2$ of the form $v_2D_2D_2$.
Since $v_1v_2\notin E(G)$, in the same manner as for the first statement we can find an induced path $R$, of length at least $2$, with endpoints $v_1$ and $v_2$ and all internal vertices
lying in some component $D_3\in \cc(G-\Om)$ with $D_3\neq D_1$, $D_3\neq D_2$. Then the concatenation of $Q_1$, $R$, and $Q_2$ would be an induced $P_\ell$ for some $\ell\geq 6$, a contradiction.
\end{proof}

\section{Capturing PMCs}\label{sec:capturing}

Throughout this section we fix a $P_6$-free graph $G$ and an inclusion-wise maximal independent set $I$ in $G$.
We denote $n=|V(G)|$.

\begin{definition}
A PMC $\Om$ in $G$ is called {\em{$I$-crossing}} if $\Om\cap I\neq \emptyset$, and {\em{$I$-free}} if $\Om\cap I=\emptyset$.
\end{definition}

\subsection{Covering easily recognizable components}

We first prove that for a PMC $\Om$, if the neighborhoods of two components of $G-\Om$ do not cover the whole $\Om$, then we can recognize at least one of them.

\begin{lemma}[cf. Lemma~\ref{o:lem:two-not-whole}]\label{lem:two-not-whole}
One can in polynomial time compute a family $\Ff_1$ of size at most $n^8$ such that the following holds: for any PMC $\Om$ in $G$ and any components $D_1,D_2\in \cc(G-\Om)$ with $N(D_1)\cup N(D_2)\subsetneq \Om$,
at least one of $D_1$ or $D_2$ belongs to $\Ff_1$.
Furthermore, if $N(D_1) \subseteq N(D_2)$, then $D_1 \in \Ff_1$.
\end{lemma}
\begin{proof}
By Lemma~\ref{lem:wings}, we have that either $N(D_1)\setminus N(D_2)$ is complete to $D_1$, or $N(D_2)\setminus N(D_1)$ is complete to $D_2$. By symmetry, assume the former.
Consider the graph $G'=G[D_1\cup D_2\cup (N(D_1)\cap N(D_2))]$. This graph is $P_6$-free and $N(D_1)\cap N(D_2)$ is a minimal separator in it, with full components $D_1$ and $D_2$.
Therefore, by Lemma~\ref{lem:covering-simple} there exist nonempty sets $A_1\subseteq D_1$ and $A_2\subseteq D_2$ with $|A_1|,|A_2|\leq 3$ such that $N(D_1)\cap N(D_2)\subseteq N[A_1\cup A_2]$.
Let $A=A_1\cup A_2$. Then $|A|\leq 6$.

Recall that $D_1$ is complete to $N(D_1) \setminus N(D_2)$. Since $A_1$ is nonempty,
we have $N(D_1) \setminus N(D_2) \subseteq N[A]$.
Together with $N(D_1) \cap N(D_2) \subseteq N[A]$, this implies the following:
\begin{align*}
N(D_1)\subseteq& N[A]\subseteq D_1\cup D_2\cup N(D_1)\cup N(D_2)\\
\subseteq& D_1\cup D_2\cup \Omega.
\end{align*}

Recall that by the assumptions of the lemma $\Om\setminus (N(D_1)\cup N(D_2)\neq \emptyset$. Let $s$ be any vertex of $\Om\setminus (N(D_1)\cup N(D_2))$. Denote $Z=\Proj_G(s,N[A])$. First, we claim that $N(D_1)\subseteq Z$.
Indeed, for every $u\in N(D_1)$ either we have $us\in E(G)$, which implies $u\in Z$, or there is component $D'\in \cc(G-\Om)$ with $\{u,s\}\subseteq N(D')$.
Component $D'$ has to be different from $D_1$ and $D_2$ due to $s\in N(D')$, and there is a path $P$ with endpoints in $u$ and $w$ whose all internal vertices belong to $D'$.
Since $N[A]\subseteq D_1\cup D_2\cup \Omega$, we infer that $D'$ is disjoint from $N[A]$, hence $P$ certifies that $u\in Z$.

Next, we observe that $Z\cap D_1=\emptyset$. Indeed, any path from $s$ to a vertex of $D_1$ has to intersect $N(D_1)$, which is contained in $N[A]$.

From these two observations---that $Z$ contains $N(D_1)$ but is disjoint from $D_1$--- it follows that $D_1=\Reach_G(v,Z)$ for any vertex $v\in D_1$.
Hence, consider the following procedure for computing $\Ff_1$: for every choice of $A\subseteq V(G)$ with $|A|\leq 6$ and $s\notin N[A]$, compute $Z=\Proj_G(s,N[A])$, then select any vertex $v\notin Z$,
and add the set $\Reach_G(v,Z)$ to $\Ff_1$. From the above analysis it follows that $\Ff_1$ constructed in this manner contains $D_1$ or $D_2$ for each choice of $\Om,D_1,D_2$ as in the lemma statement,
while $|\Ff_1|\leq n^8$ by the construction. Note that the case of containing $D_2$ arises in the second symmetric case, when $N(D_2)\setminus N(D_1)$ is complete to $D_2$.

If $N(D_1) \subseteq N(D_2)$, then in particular $D_1$ is complete to $N(D_1) \setminus N(D_2) = \emptyset$, so $D_1 \in \Ff_1$.
\end{proof}

The next lemma shows that out of any three components, we can recognize at least one. Thus, we can compute a family that for any PMC $\Om$, contains all but at most two components of $G-\Om$.
Even though this seems very powerful, the issue of recognizing the remaining two components remains.
Note that the proof of the lemma has already been presented in Section~\ref{sec:over} (where it bears the name of Lemma~\ref{o:lem:one-in-three-see}), but we repeat the proof for convenience.

\begin{lemma}[cf. Lemma~\ref{o:lem:one-in-three-see}]\label{lem:one-in-three-see}
One can in polynomial time compute a family $\Ff_2$ of size at most $2n^9$ such that the following holds: for any PMC $\Om$ in $G$ and any pairwise different components $D_1,D_2,D_3\in \cc(G-\Om)$,
at least one of $D_1$, $D_2$, and $D_3$ belongs to $\Ff_2$.
\end{lemma}
\begin{proof}
Fix a PMC $\Om$ and components $D_1,D_2,D_3\in \cc(G-\Om)$.

Observe that if $N(D_1)\cup N(D_2)\subsetneq \Om$, then either $D_1$ or $D_2$ is contained in the family $\Ff_1$ given by Lemma~\ref{lem:two-not-whole};
similarly if $N(D_2)\cup N(D_3)\subsetneq \Om$ or $N(D_3)\cup N(D_1)\subsetneq \Om$. Hence, by including the family $\Ff_1$ given by Lemma~\ref{lem:two-not-whole}
in $\Ff_2$ we ensure that $\Ff_2$ contains $D_1$, $D_2$, or $D_3$ unless the following holds:
\begin{align*}
N(D_1)\cup N(D_2)=&N(D_2)\cup N(D_3)\\
=&N(D_3)\cup N(D_1)=\Om.
\end{align*}
Therefore, from now on we focus on the case when the above holds. 
Summarizing, the PMC $\Om$ can be partitioned into sets $W_1$, $W_2$, $W_3$, and $U$, where 
\begin{align*}
W_i &= \Omega \setminus N(D_i), \\
U &= N(D_1) \cap N(D_2) \cap N(D_3).
\end{align*}
We have (see also Figure~\ref{fig:threecomps}):
\begin{align*}
N(D_1) & = W_2\cup W_3\cup U,\\
N(D_2) & = W_3\cup W_1\cup U,\\
N(D_3) & = W_1\cup W_2\cup U.
\end{align*}

\begin{figure}[htbp]
\begin{center}
\includegraphics{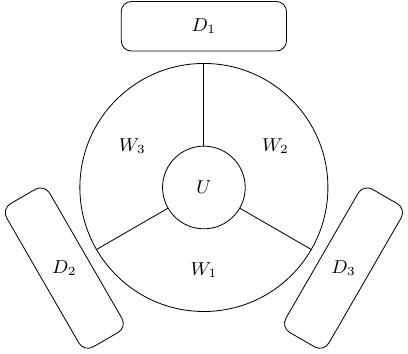}
\end{center}
\caption{Sets in the proof of Lemma~\ref{lem:one-in-three-see}.}\label{fig:threecomps}
\end{figure}

Apply Lemma~\ref{lem:wings} to the components $D_2$ and $D_3$ of $G-\Om$, thus inferring that either $W_3=N(D_2)\setminus N(D_3)$ is complete to $D_2$ or $W_2=N(D_3)\setminus N(D_2)$ is complete to $D_3$.
By symmetry, without loss of generality assume the former: $W_3$ is complete to $D_2$.
Next, apply Lemma~\ref{lem:wings} to the components $D_1$ and $D_2$, thus inferring that either $W_1$ is complete to $D_2$, or $W_2$ is complete toward $D_1$. Observe that now these cases are no longer symmetric,
so we consider them one by one.

We examine first the case when $W_2$ is complete to $D_1$.
Consider the graph $G'=G[D_1\cup D_2\cup W_3\cup U]$. Observe that $G'$ is $P_6$-free and that $W_3\cup U$ is a minimal separator in $G'$, with full sides $D_1$ and $D_2$.
By Lemma~\ref{lem:covering-simple} there exist nonempty sets $A_1\subseteq D_1$ and $A_2\subseteq D_2$ with $|A_1|,|A_2|\leq 3$ such that $W_3\cup U\subseteq N[A_1\cup A_2]$.
Since $A_1$ is nonempty and $W_2$ is complete to $D_1$, while $A_1\cup A_2\subseteq D_1\cup D_2$, we have the following:
$$W_2\cup W_3\cup U\subseteq N[A_1\cup A_2]\subseteq D_1\cup D_2\cup \Om.$$
Recall that $W_2\cup W_3\cup U=N(D_1)$, thus $N(D_1)\subseteq N[A_1\cup A_2]$.

Pick any vertices $v_2\in D_2$ and $v_3\in D_3$. Observe that since $N[A_1\cup A_2]$ is disjoint from $D_3$ and contains $N(D_1)$, we have that $\Proj_G(v_3,N[A_1\cup A_2])$ contains $N(D_3)\cap N(D_1)=U\cup W_2$,
and is disjoint from $D_1$. On the other hand, the set $N[v_2]$ contains $W_3$ due to $W_3$ being complete to $D_2$, and is also disjoint from $D_1$.
Consequently, the set $Z=\Proj_G(v_3,N[A_1\cup A_2])\cup N[v_2]$ contains $N(D_1)$ and is disjoint from $D_1$. This implies that $D_1$ is one of the connected components of $G-Z$.

Consider now the family $\Ff'_2$ constructed as follows: For every choice of a set $A\subseteq V(G)$ with $|A|\leq 6$ and $v_2,v_3\in V(G)$ such that $v_3\notin N[A]$,
compute $Z=\Proj_G(v_3,N[A])\cup N[v_2]$ and add $\cc(G-Z)$ to $\Ff'_2$.
It is clear from the construction that $|\Ff'_2|\leq n^9$, and from the above analysis it follows that it contains $D_1$ for any choice of
$\Om,D_1,D_2,D_3$ that conforms to the considered case.

We next examine the case when $W_1$ is complete to $D_2$.
Apply Lemma~\ref{lem:wings} again, this time to the components $D_1$ and $D_3$, implying that either $W_3$ is complete to $D_1$, or $W_1$ is complete to $D_3$.
In the former subcase, when $W_3$ is complete to $D_1$, observe that we can perform exactly the same reasoning as in the first case, but with the roles of $D_1$ and $D_2$ replaced respectively with $D_2$ and $D_1$.
This implies that the family $\Ff'_2$ constructed for the first case contains $D_2$.
In the latter subcase, when $W_1$ is complete to $D_3$, we again can perform exactly the same reasoning as in the first case, but with the roles of $D_1$ and $D_2$ replaces respectively with $D_3$ and $D_2$.
Again, this implies that the family $\Ff'_2$ constructed for the first case contains $D_3$.

To conclude, we set $\Ff_2=\Ff'_2\cup \Ff_1$, where $\Ff_1$ is the family given by Lemma~\ref{lem:two-not-whole}.
Then $|\Ff_2|\leq n^9+n^8\leq 2n^9$, and the above analysis shows that $\Ff_2$ contains at least one of $D_1$, $D_2$, $D_3$ in each of the cases.
\end{proof}

Next, we try to recognize components that are not meshes, as their neighborhoods have a simpler structure in the light of the Neighborhood Decomposition Lemma (Lemma~\ref{lem:nei-decomp}).
Namely, we show that they can be recognized efficiently provided some neighborhood that ``sticks out'' has already been recognized.

\begin{lemma}\label{lem:non-clique-see}
Given a family $\Xx\subseteq 2^{V(G)}$, one can in time polynomial in $n$ and $|\Xx|$ compute a family $\Ff_3(\Xx)\subseteq 2^{V(G)}$ such that $|\Ff_3(\Xx)|\leq n+2n^4\cdot |\Xx|$ and the following property
holds: for every PMC $\Om$ in $G$ and each component $D\in \cc(G-\Om)$ that is not a mesh, if there is another component $D_0\in \cc(G-\Om)$ with $N(D_0)\nsubseteq N(D)$ and 
$D_0\in \Xx$, then $D\in \Ff_3(\Xx)$.
\end{lemma}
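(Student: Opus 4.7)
The plan is to build $\Ff_3(\Xx)$ as the union of two families. First, for the corner case $|D|=1$, I add $\{v\}$ to $\Ff_3(\Xx)$ for every $v\in V(G)$, contributing the $n$ term in the bound.

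For $|D|\geq 2$: since $D$ is connected and not a mesh (so its complement is also connected), Proposition~\ref{prop:modules-quo} forces $\Quo(D)$ to be prime. In particular, $\Mod(D)$ contains two modules that are adjacent in $\Quo(D)$. By Lemma~\ref{lem:pmc-minseps}, the minimal separator $S=N(D)$ admits a second full component $D_S$ containing $\Om\setminus S$ (nonempty by \pmca) together with $D_0$; hence $|D_S|\geq 2$ and $\Mod(D_S)$ also admits two adjacent modules. Lemma~\ref{lem:covering-general} then guarantees that for any $p_1,q_1\in D$ from adjacent modules of $\Mod(D)$ and any $p_2,q_2\in D_S$ from adjacent modules of $\Mod(D_S)$, we have $S\subseteq N[\{p_1,q_1,p_2,q_2\}]$.

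The algorithm iterates over all $D_0\in\Xx$ and all $(p_1,q_1,p_2,q_2)\in V(G)^4$, which is $|\Xx|\cdot n^4$ iterations in total; per iteration it will contribute at most two candidate sets, matching the target bound $2n^4\cdot|\Xx|$. For the correct iteration, set $A=\{p_1,q_1,p_2,q_2\}$: then $N[A]\supseteq S$ and, since $p_2,q_2\in D_S$ have no neighbors in $D$, one has $D\cap N[A]\subseteq N[p_1,q_1]$. Because $D_0\subseteq D_S$ is disjoint from $N(D)$, we may pick a witness vertex $s\in D_0\setminus N[A]$; it lies on the $D_S$-side of $S$. Define a refined set $Z$ by augmenting $N[A]$ with $\Proj_G(s,N[A])$-style projections so that $Z$ still contains $S$ but avoids the $p_1,q_1$-fragment of $D$; output the component $\Reach_G(v,Z)$ for a couple of natural witness vertices $v\in D$ (for instance, any vertex in the neighborhood of $p_1$ or $q_1$ outside $N[A]$).

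The main obstacle is showing that $Z$ can be built with both $Z\supseteq S$ and $Z\cap D=\emptyset$ simultaneously. The natural cover $N[A]$ contains $S$ but hits $D$ near $p_1,q_1$, while the natural projection $\Proj_G(s,N[A])$ avoids $D$ but may fail to catch those $w\in S$ whose only $D_S$-neighbors happen to lie in $N[p_2,q_2]$. Bridging this gap is where the bulk of the technical work resides: it will exploit the Neighborhood Decomposition Lemma (Lemma~\ref{lem:nei-decomp}) applied to $D$, where the non-mesh hypothesis rules out tricky vertices toward $(D,p_1,q_1)$, combined with $P_6$-freeness of $G$ via Lemma~\ref{lem:wings}, to force the correct alignment of the two candidate sets and justify that a bounded number of candidates per iteration suffices.
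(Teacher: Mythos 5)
There is a genuine gap: the heart of the lemma is never proved. Your plan reduces to ``cover $S=N(D)$ by $N[p_1,q_1,p_2,q_2]$ via Lemma~\ref{lem:covering-general}, then build a set $Z$ with $N(D)\subseteq Z$ and $Z\cap D=\emptyset$ and output a component of $G-Z$,'' but you explicitly defer the construction of $Z$ (``bridging this gap is where the bulk of the technical work resides''), and that construction is exactly the content of the lemma. The covering set $N[A]$ is of no direct use because $N[p_1,q_1]$ pokes into $D$, and you give no mechanism for carving out the $D$-part while keeping all of $S$; the vague ``$\Proj_G(s,N[A])$-style'' augmentation is not defined, its correctness is not argued, and your accounting (``at most two candidate sets per iteration'') cannot be checked against an unspecified construction. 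There is also a secondary flaw in the part you did write: the witness $s\in D_0\setminus N[A]$ need not exist, since $p_2,q_2$ live in the full component $D_\Om\supseteq D_0$ of $G-N(D)$ and may lie in or dominate $D_0$ (e.g.\ $D_0=\{p_2\}$). Note also that your scheme never uses the one piece of information the hypothesis actually hands you, namely that $N(D_0)$ is \emph{known} because $D_0\in\Xx$; without using it, $|\Xx|$ plays no role and the approach cannot work.

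The missing idea, which is the paper's key step, is the following. With $p,q\in D$ in adjacent modules of $\Mod(D)$, the Neighborhood Decomposition Lemma (Lemma~\ref{lem:nei-decomp}) and the non-mesh hypothesis give, for every $u\in W:=N(D)\setminus N[p,q]$, an induced $P_4$ of the form $uDDD$; by $P_6$-freeness such a $u$ must be complete to the component $D_\Om$ of $G-N(D)$ provided by Lemma~\ref{lem:pmc-minseps}, and since $D_0\subseteq D_\Om$ (because $N(D_0)\nsubseteq N(D)$), we get $W\subseteq N(D_0)$ and hence $N(D)\subseteq Z:=N[p,q]\cup N(D_0)$ --- a cover computable from the enumerated data $D_0\in\Xx$ plus two guessed vertices. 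Even then one is not done, because $Z$ still meets $D$; the paper finishes with a case distinction: if $N(D)\cup N(D_0)\subsetneq\Om$, pick $s\in\Om\setminus(N(D)\cup N(D_0))$ and use the PMC covering property to show $N(D)\subseteq\Proj_G(s,Z)$ while $\Proj_G(s,Z)\cap D=\emptyset$, so $D$ is a component of $G-\Proj_G(s,Z)$; if $N(D)\cup N(D_0)=\Om$, then $\Om\subseteq Z\subseteq D\cup\Om$ and $\Om$ itself is reconstructed from $Z$ and a vertex $s\in\Om\setminus N(D)$, after which $D\in\cc(G-\Om)$. Both the covering-by-$N[p,q]\cup N(D_0)$ argument and this two-case recovery are absent from your proposal, so it does not constitute a proof.
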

\begin{proof}
Let us fix a PMC $\Om$ and components $D,D_0\in \cc(G-\Om)$ satisfying the assumptions. We assume that $|D|\geq 2$, for at the end we will add all singleton sets to the constructed family $\Ff_3(\Xx)$,
thus resolving also the case $|D|=1$.

Since $|D|\geq 2$, the modular partition $\Mod(D)$, which is also the vertex set of $\Quo(D)$, consists of at least two elements. Since $D$ is connected, so is its quotient graph $\Quo(D)$. 
Take any $M^p,M^q\in \Mod(D)$ that are different and adjacent in $\Quo(D)$, and pick any $p\in M^p$ and $q\in M^q$. By Lemma~\ref{lem:nei-decomp}, since $D$ is not a mesh,
we have that for each $u\in N(D)$ either there is an induced $P_4$ of type $uDDD$, or $u\in N[p,q]$.

Denote $W=N(D)\setminus N[p,q]$. Then for each $u\in W$ we have a $P_4$ of type $uDDD$; denote it by $Q_u$.
By Lemma~\ref{lem:pmc-minseps}, there is one connected component $D_\Om$ of $G-N(D)$ that contains all of $\Om\setminus N(D)$ and the vertex sets of all components $D'\in \cc(G-\Om)$ such that $N(D')\nsubseteq N(D)$.
Observe that $W$ has to be complete to $D_\Om$, for otherwise there would be a vertex $u\in W$ for which there is an induced $P_3$ of the form $uD_\Om D_\Om$. By concatenating such an induced $P_3$ with $Q_u$
we would obtain an induced $P_6$, a contradiction. Hence, in particular we have that $W$ is complete to every component $D'\in \cc(G-\Om)$ such that $N(D')\nsubseteq N(D)$, in particular to $D_0$.

We now consider two cases. Suppose first that $N(D)\cup N(D_0)\subsetneq \Om$.
Let $s$ be an arbitrary vertex of $\Om\setminus (N(D)\cup N(D_0))$, and let $Z=N[p,q]\cup N(D_0)$; then $s\notin Z$. Observe that $N(D)\subseteq Z$, because $N(D)\cap N[p,q]\subseteq N[p,q]$ and
$W=N(D)\setminus N[p,q]$ is complete to $D_0$, hence contained in $N(D_0)$.
Similarly as in the proof of Lemma~\ref{lem:two-not-whole}, we observe that $N(D)\subseteq \Proj_G(s,Z)$ and $D\cap \Proj_G(s,Z)=\emptyset$.
Indeed, for the first claim observe that every vertex $u\in N(D)$ is either directly adjacent to $s$, implying $u\in \Proj_G(s,Z)$, or there is a component $D'\in \cc(G-\Om)$ with $\{u,s\}\subseteq N(D')$.
In the latter case, there is a path from $s$ to $u$ whose all internal vertices belong to $D'$. This path certifies that $u\in \Proj_G(s,Z)$, since component $D'$ is different from $D$ and $D_0$ due to $s\in N(D')$. 
For the second claim, that is, $D\cap \Proj_G(s,Z)=\emptyset$, observe that every path from $s$ to a vertex of $D$ necessarily intersects $N(D)$, which is contained in $Z$.

Since $N(D)\subseteq \Proj_G(s,Z)$ and $D\cap \Proj_G(s,Z)=\emptyset$, we conclude that $D$ is one of the connected components of $G-\Proj_G(s,Z)$.
Consider then the following procedure constructing a family $\Ff'_3$: For each $D_0\in \Xx$ and each choice of different vertices $p,q\notin D_0$, construct $Z=N[p,q]\cup N(D_0)$
and, for each choice of $s\notin Z$, add all the components of $\cc(G-\Proj_G(s,Z))$ to $\Ff'_3$. Then we have that $|\Ff'_3|\leq n^4\cdot |\Xx|$, and the analysis above shows that $\Ff'_3$ 
contains $D$ provided the choice of $\Om$ and $D$ conforms to the current case (which is $N(D)\cup N(D_0)\subsetneq \Om$).

We now move to analyzing the second case, namely that $N(D)\cup N(D_0)=\Om$.
Consider again the set $Z=N[p,q]\cup N(D_0)$. Then on one hand $Z\subseteq D\cup \Om$, and on the other hand we have $\Om\subseteq Z$, 
since $\Om\setminus N(D)\subseteq N(D_0)$ and $W=N(D)\setminus N[p,q]$ is complete to $D_0$.
Thus, every connected component of $G-Z$ is either contained in $D$, or is a connected component of $G-\Om$ other than $D$.

Since $\Om$ is a PMC, we have $N(D)\subsetneq \Om$. Let us pick an arbitrary vertex $s\in \Om\setminus N(D)$.
Let $$\Cc=\{C\in \cc(G-Z)\ \colon\ s\in N(C)\}$$ be the set of those connected components of $G-Z$ whose neighborhoods contain $s$.
We claim that
\begin{equation}\label{eq:recover-om}
\Om = \bigcup_{C\in \Cc} N(C) \cup (N[s]\setminus \bigcup_{C\in \Cc} C).
\end{equation}
Indeed, to see that the right hand side of~\eqref{eq:recover-om} is contained in $\Om$, observe that each component of $\Cc$ has to be a component of $G-\Om$ due to containing $s$ in its neighborhood,
and the only neighbors of $s$ outside $\Om$ belong to the components of $\Cc$. For the second inclusion, take any vertex $u\in \Om$.
If $u=s$ or $us\in E(G)$, then $u\in N[s]\setminus \bigcup_{C\in \Cc} C$, thus $u$ is contained in the right hand side of~\eqref{eq:recover-om}.
On the other hand, if $u\notin N[s]$, then since $u\in \Om$ we have that there exists a component $D'\in \cc(G-\Om)$ with $\{u,s\}\subseteq N(D')$.
Since $s\in N(D')$, we have that $D'\neq D$, hence $D'\cap Z=\emptyset$ and $N(D')\subseteq \Om\subseteq Z$. This implies that $D'$ is a connected component of $G-Z$ that contains $s$ in its neighborhood,
so $D'\in \Cc$. Consequently, $u\in N(D')\subseteq \bigcup_{C\in \Cc} N(C)$, which is contained in the right hand side of~\eqref{eq:recover-om}.

Consider then the following procedure constructing a family $\Ff''_3$: For each $D_0\in \Xx$ and choice of different vertices $p,q\notin D_0$, construct $Z=N[p,q]\cup N(D_0)$. Then, for each choice
of $s\in Z$, define $\Cc=\{C\in \cc(G-Z)\ \colon\ s\in N(C)\}$, and construct a candidate for $\Om$ defined as $X=\bigcup_{C\in \Cc} N(C) \cup (N[s]\setminus \bigcup_{C\in \Cc} C)$.
Finally, add $\cc(G-X)$ to $\Ff''_3$. It is clear from the construction that $|\Ff''_3|\leq n^4\cdot |\Xx|$, and the analysis above shows that $\Ff''_3$ contains $D$ 
provided the choice of $\Om$ and $D$ conforms to the current case (which is $N(D)\cup N(D_0)=\Om$).

To conclude, construct the output family $\Ff_3=\Ff_3(\Xx)=\{\{u\}\colon u\in V(G)\}\cup \Ff'_3\cup \Ff''_3$. Then $|\Ff_3|\leq n+2n^4\cdot |\Xx|$ and the discussion above shows that $\Ff_3$ satisfies
the required properties.
\end{proof}

\subsection{Capturing separators and fuzzy components}

We now move to recognizing components that are meshes.
Unfortunately, we cannot do this in full exactness, but we can recognize a set consisting of the component and some elements of the PMC complete to it.
It will later appear that such a weaker structure is good enough for our purposes.

More formally, for a minimal separator $S$ and a component $D$ of $G-S$ that is
full towards $S$, we say
that a set $D^+$ is a \emph{fuzzy version} of $D$ if $D \subseteq D^+ \subseteq D \cup S$
and every vertex $v \in D^+ \setminus D$ is complete to $D$.

The most important insight from knowing a fuzzy version of a component $D$ is that we
can learn a lot about the modular decomposition of $D$ by the following straightforward
observation.
\begin{observation}\label{obs:fuzzy-module}
Let $S$ be a minimal separator and $D$ be a component of $G-S$ that is full towards $S$.
Let $X \subseteq V(G)$ be such that $X \cap D = \emptyset$ but $X$ contains all vertices
of $S$ that are not complete to $D$. 
Then $D$ is a module of $G-X$. In particular, if $D^+$ is a fuzzy version of $D$, then
$D$ is a module of $G[D^+]$.
\end{observation}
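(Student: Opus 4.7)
The plan is to unpack the definition of a module and verify it by a two-case analysis on the location of an arbitrary outside vertex. Concretely, to show that $D$ is a module of $G-X$, I would take any vertex $u \in V(G) \setminus (X \cup D)$ and argue that $u$ is either complete or anticomplete to $D$ in $G$ (which coincides with complete/anticomplete in $G-X$ since $D$ stays untouched).

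The case split is by where $u$ sits relative to $S$. If $u \in S$, then the fact that $u \notin X$, combined with the hypothesis that $X$ contains every vertex of $S$ that is not complete to $D$, forces $u$ to be complete to $D$. Otherwise $u \notin S \cup D$, which means $u$ lies in a connected component of $G - S$ different from $D$; hence there are no edges between $u$ and $D$, and $u$ is anticomplete to $D$. These two cases exhaust all possibilities, since $V(G)$ partitions into $D$, $S$, and the union of the other components of $G-S$, so the module condition holds for $D$ in $G-X$.

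For the ``in particular'' clause, I would specialize by setting $X = V(G) \setminus D^+$. Then $X \cap D = \emptyset$ because $D \subseteq D^+$, and any vertex of $S$ not in $X$ must lie in $D^+ \setminus D \subseteq S$, which by the definition of a fuzzy version is complete to $D$. Thus $X$ meets the hypotheses of the first part, and since $G - X = G[D^+]$, we conclude that $D$ is a module of $G[D^+]$.

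I do not anticipate any real obstacle here: the statement follows directly from the definitions of module, of connected component, and of fuzzy version. The only subtle point is making sure the case analysis genuinely covers every vertex of $V(G)\setminus (X\cup D)$, but this is immediate from the partition of $V(G)$ induced by $S$ together with the fact that the hypothesis on $X$ handles exactly the $S$-vertices that might fail to be complete to $D$.
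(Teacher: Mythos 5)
Your proof is correct and is exactly the routine argument the paper has in mind: the paper states this as a ``straightforward observation'' without proof, and your two-case analysis (vertices of $S\setminus X$ are complete to $D$ by hypothesis, vertices outside $S\cup D$ are anticomplete since they lie in other components of $G-S$), together with the specialization $X=V(G)\setminus D^+$ for the fuzzy-version claim, fills it in correctly.
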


We first use the above observation to capture all minimal separators that have two
non-mesh full components.

\begin{lemma}\label{lem:capture-nonmesh}
One can in polynomial time compute a family $\Ff_4$ of size at most $2n^3$ such that the
following holds: for every minimal separator $S$ in $G$ such that $G-S$ has at least
two non-mesh components that are full to $S$, every connected component of $G-S$ that is not
a mesh and is full to $S$ is contained in $\Ff_4$.
\end{lemma}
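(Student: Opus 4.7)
I will construct $\Ff_5$ in two parts. First, I include all $n$ singleton sets $\{v\}$ for $v\in V(G)$; this handles size-$1$ non-mesh components (a single vertex is formally not a mesh by the definition requiring $|V|\ge 2$). Second, for non-mesh components $D$ of size at least two, I enumerate over triples $(p,q,r)$ of distinct vertices with $pq\in E(G)$ and $r \in V(G)\setminus\{p,q\}$; this gives at most $2|E(G)|(n-2)\le n^3$ triples, so the total count is $|\Ff_5|\le n + n^3 \le 2n^3$ as required.

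\textbf{Intended semantics.} Fix such a non-mesh full component $D$, and let $D'$ be another non-mesh full component of $S$, which exists by hypothesis. Because $\Quo(D)$ is not a clique and is connected (as $D$ is connected, non-mesh, with $|D|\ge 2$), I can select $p,q \in D$ lying in distinct adjacent modules of $D$, which forces $pq \in E(G)$; similarly pick $p',q'\in D'$. By Lemma~\ref{lem:covering-general} applied to $S$ with full components $D,D'$ (neither mesh), we have $N[p,q,p',q']\supseteq S$; in particular every vertex of $W:=S\setminus N[p,q]$ lies in $N[p',q']$ and, by Lemma~\ref{lem:nei-decomp} applied to $D$, admits an induced $P_4$ of form $wDDD$. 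The triple $(p,q,r)$ will represent the guess that $p,q$ are the chosen adjacent-module vertices of $D$ and $r$ is an appropriately chosen vertex of $D'$.

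\textbf{Candidate construction and main obstacle.} For each triple $(p,q,r)$ I set $R := \Reach_G(r, N[p,q])$ (the connected component of $r$ in $G-N[p,q]$) and $S' := N_G(R)\cap N[p,q]$, and take the candidate to be the connected component of $p$ in $G-S'$. In the intended scenario $D' \cup W \subseteq R$: the first inclusion holds because $D'\cap N[p,q]=\emptyset$ (as $D$ and $D'$ are different components of $G-S$) and $D'$ is connected, and the second because every $w\in W$ has a neighbor in $D'$ by fullness. Consequently $S\cap N[p,q]\subseteq S'$, and the desired equality of the candidate with $D$ will follow once I establish $R\cap D = \emptyset$ (which also forces $S' = S\cap N[p,q]$). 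The main obstacle is exactly this disjointness: a priori some $w\in W$ could have a neighbor $v\in D\setminus N[p,q]$, dragging $v$ into $R$. To rule this out, I intend to exploit $P_6$-freeness through the $P_4$ of form $wDDD$ guaranteed by Lemma~\ref{lem:nei-decomp}, combined with the non-mesh structure of $D'$ (which forces $w$ to be adjacent to $p'$ or $q'$); concatenating the $P_4$ with a path $w\to r$ through $D'$ will, for a suitably chosen $r\in\{p',q'\}$, produce an induced path of length at least $6$, a contradiction. Since the enumeration tries all candidates for $r$, at least one choice will succeed and yield $D$ as the output candidate.
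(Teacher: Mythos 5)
Your proposal has a genuine gap, and in fact two of its load-bearing claims fail. First, the reduction of correctness to the disjointness $R\cap D=\emptyset$ is wrong. Suppose $R\cap D=\emptyset$ but $W=S\setminus N[p,q]\neq\emptyset$. Since $D$ is full to $S$, every $w\in W$ has a neighbor $v\in D$; by $R\cap D=\emptyset$ this $v$ lies in $D\cap N[p,q]$, and since $w\in R$ (every vertex of $S$ has a neighbor in $D'\subseteq R$, and $w\notin N[p,q]$), we get $v\in N(R)\cap N[p,q]=S'$. So $S'$ contains vertices of $D$, these are deleted, and the component of $p$ in $G-S'$ is a proper subset of $D$; in particular your claim that $R\cap D=\emptyset$ forces $S'=S\cap N[p,q]$ is false. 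Consequently your construction can return $D$ only when $S\subseteq N[p,q]$ for two adjacent vertices $p,q\in D$, and nothing guarantees such a pair exists: the Separator Covering Lemma (Lemma~\ref{lem:covering-general}) only covers $S$ by closed neighborhoods of up to three vertices from \emph{each} of the two full sides, not by two vertices taken from one side.

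Second, the deferred ``main obstacle'' $R\cap D=\emptyset$ is not established by the sketched argument. Concatenating the $P_4$ of the form $wDDD$ with a path from $w$ to $r\in\{p',q'\}$ through $D'$ yields a path on at least six vertices only if that path has length at least two; but since $S\subseteq N[p,q,p',q']$, each $w\in W$ is adjacent to $p'$ or $q'$, so for that choice you only get a $P_5$. The most this type of argument gives is that every $w\in W$ is complete to $D'$ (and even that is vacuous when $|D'|=1$), which constrains $w$'s neighborhood in $D'$, not in $D$, and hence cannot exclude $R\cap D\neq\emptyset$. Concretely, let $D$ be an induced path $a\,b\,c\,d$, $D'=\{u\}$, $S=\{w\}$ with $w$ adjacent exactly to $c,d,u$: this graph is $P_6$-free, both full components are non-mesh, and for the legitimate choice $p=a$, $q=b$ (adjacent singleton modules of the prime graph $D$) the vertex $w\in W$ has the neighbor $d\in D\setminus N[p,q]$, so $d\in R$. (Here another edge of $D$ happens to work, but your proof offers no argument that a good triple always exists.) The paper's proof avoids all of this by choosing $p,q$ in the \emph{other} non-mesh component $D'$: the $P_4$s of the form $wD'D'D'$ for $w\in S\setminus N[p,q]$ force every such $w$ to be complete to $D$, so $D$ is a module of $G-N[p,q]$ (Observation~\ref{obs:fuzzy-module}); being connected with connected complement (as it is non-mesh), $D$ is then among the at most $2n-1$ modules enumerated by Corollary~\ref{cor:enum-modules}, which gives the $2n^3$ bound and recovers $D$ unconditionally.
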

\begin{proof}
Let $S$ be a separator as in the statement of the lemma,
and let $D_1$ be a component of $G-S$ that is not a mesh and that is full to $S$.
By the assumptions on $S$, there exists another non-mesh component $D_2$
that is full to $S$.

Let $p_2,q_2 \in D_2$ be any two vertices of $D_2$ such that the modules of
$\Mod(D_2)$ to which $p_2,q_2$ belong are different, but adjacent in $\Quo(D_2)$.
If $|D_2| = 1$, we pick $p_2=q_2$ to be the unique vertex of $D_2$; note that then $N(p_2) = S$.
Let $X = N[p_2,q_2]$.
By the Neighborhood Decomposition Lemma (Lemma~\ref{lem:nei-decomp}),
since $D_2$ is not a mesh, for every vertex $u \in S \setminus X$
there exists a $P_4$ of the form $uD_2D_2D_2$; note that this is also true
in the case $|D_2|=1$ as then $X = S \cup D_2$.
Consequently, since $G$ is $P_6$-free, every vertex $u \in S \setminus X$ is complete to $D_1$,
as otherwise the fact that $D_1$ is full to $S$ implies that 
there would exist a $P_3$ of the form $uD_1D_1$, which together with the $P_4$ of the form
$uD_2D_2D_2$ yields a $P_6$ in $G$.

By Observation~\ref{obs:fuzzy-module}, $D_1$ is a module of $G-X$.
Furthermore, note that $D_1$ is connected but is not a mesh.
Consequently, Corollary~\ref{cor:enum-modules} yields a family of at most $2n-1$
candidates for $D_1$, given the graph $G-X$.

To sum up, we can define the family $\Ff_4$ as follows: for every two (not necessarily distinct)
vertices $p,q \in V(G)$,
we insert into $\Ff_4$ all modules yielded by Corollary~\ref{cor:enum-modules}
applied to $G-N[p,q]$.
Since there are at most  $2n-1$ such modules for a fixed choice of $p$ and $q$,
   the bound $|\Ff_4| \leq 2n^3$ follows.
\end{proof}

For mesh components, the argumentation of Lemma~\ref{lem:capture-nonmesh} breaks down
as, although $D_1$ is still a module of $G-N[p_2,q_2]$, it may be a union of an arbitrary
subset of modules of a clique node in the modular decomposition of $G-N[p_2,q_2]$, giving
us too many choices. In the following lemma we resort to capturing a fuzzy version of
a mesh component, where a second full side of the separator is not a mesh.

\begin{lemma}\label{lem:capture-mesh-nomesh}
One can in polynomial time compute a family $\Ff_5$ of size at most $n^4$ such that the
following holds: for every minimal separator $S$ in $G$ and every component $D_1$ of $G-S$
that is a mesh and is full to $S$, if there exists a different component $D_2$ of $G-S$
that is full to $S$ and not a mesh, then some fuzzy version of $D_1$ belongs to $\Ff_5$.
\end{lemma}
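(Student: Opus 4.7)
The plan is to adapt the approach of Lemma~\ref{lem:capture-nonmesh}: I would iterate over all (not necessarily distinct) pairs $p,q \in V(G)$ playing the role of vertices in two distinct adjacent modules of $\Mod(D_2)$ (or the unique vertex of $D_2$ if $|D_2|=1$), set $X = N[p,q]$, and for each choice produce candidate fuzzy versions from the modular decomposition of $G-X$. Exactly as in the proof of Lemma~\ref{lem:capture-nonmesh}, for the correct choice of $(p_2,q_2)$, since $D_2$ is not a mesh, the Neighborhood Decomposition Lemma (Lemma~\ref{lem:nei-decomp}) guarantees that every $u \in S \setminus X$ is the initial vertex of a $P_4$ of the form $u D_2 D_2 D_2$; combining this with the fact that $D_1$ is full to $S$ and connected, $P_6$-freeness of $G$ forces every such $u$ to be complete to $D_1$. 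By Observation~\ref{obs:fuzzy-module}, $D_1$ is then a module of $G - X$.

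The new difficulty, compared to Lemma~\ref{lem:capture-nonmesh}, is that $D_1$ is a mesh, so the complement of $D_1$ is disconnected and Corollary~\ref{cor:enum-modules} does not directly yield an $O(n)$ candidate set. Instead, I would take $M_x$ to be the label of the \emph{lowest} node $x$ in the modular decomposition of $G - X$ whose label contains $D_1$, and argue that $M_x$ itself is a fuzzy version of $D_1$. If $M_x = D_1$ this is immediate. Otherwise, $D_1$ is a proper sub-module of $M_x$, so by Proposition~\ref{prop:modules-other}, $D_1$ is the union of the labels of at least two children of $x$, and $x$ is either a clique or an independent-set node. Since $D_1$ is connected, $x$ cannot be an independent-set node, so $x$ is a clique node; consequently every $v \in M_x \setminus D_1$ lies in a child of $x$ that is complete to each child of $x$ contained in $D_1$, and in particular $v$ is complete to $D_1$ in $G - X$.

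The key observation is that this completeness forces $M_x \setminus D_1 \subseteq S \setminus X$. Indeed, since $p,q \in D_2$ we have $X \subseteq D_2 \cup S$, so any $v \in V(G-X) \setminus D_1$ lies either in $S \setminus X$, or in a component $D'$ of $G-S$ with $D' \neq D_1$ (possibly $D' = D_2$ restricted to $V(G) \setminus X$). In the latter case $v$ has no edges to $D_1$ in $G$ nor in $G - X$, contradicting completeness to the nonempty set $D_1$. Therefore $M_x \subseteq D_1 \cup S$ and $M_x \setminus D_1$ is complete to $D_1$, so $M_x$ is a fuzzy version of $D_1$.

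The algorithm then enumerates, for every pair of (not necessarily distinct) vertices $p,q \in V(G)$, the labels of all nodes of the modular decomposition of $G - N[p,q]$ (which is computable in polynomial time by Proposition~\ref{prop:modules-decomp}), and inserts them into $\Ff_6$. A modular decomposition tree has at most $2n-1$ nodes, giving $|\Ff_6| \le n^2 \cdot (2n-1) \le n^4$ for $n \ge 2$; the case $n=1$ is vacuous. The hard part will be the case analysis on the modular decomposition above, and in particular pinning down where each of the hypotheses is used: the mesh property of $D_1$ to allow $M_x \supsetneq D_1$ at a clique node, the non-mesh property of $D_2$ to invoke Lemma~\ref{lem:nei-decomp} and obtain the crucial $P_4$, and $P_6$-freeness to rule out both that some $u \in S \setminus X$ is non-complete to $D_1$ and that $M_x \setminus D_1$ meets another component of $G - S$.
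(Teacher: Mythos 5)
Your argument is correct, but it takes a genuinely different route from the paper. The paper enumerates \emph{four} vertices: it picks $p_1,q_1\in D_1$ in two different maximal proper strong modules (so that, $D_1$ being a mesh, $D_1\subseteq N[p_1,q_1]\subseteq D_1\cup S$) together with $p_2,q_2\in D_2$, and simply outputs $N[p_1,q_1]\setminus N[p_2,q_2]$ for every $4$-tuple; the only nontrivial step is the same one you use, namely that every $u\in S\setminus N[p_2,q_2]$ starts a $P_4$ of the form $uD_2D_2D_2$ and hence, by $P_6$-freeness and fullness of $D_1$, is complete to $D_1$. You instead enumerate only the pair $(p_2,q_2)$, note via Observation~\ref{obs:fuzzy-module} that $D_1$ is a module of $G-N[p_2,q_2]$, and then take the label $M_x$ of the lowest node of the modular decomposition of $G-N[p_2,q_2]$ containing $D_1$; the laminarity of node labels makes this node well defined, and if $M_x\supsetneq D_1$ then (since $D_1$ cannot sit inside a single child of $x$, by the choice of $x$ as lowest) $D_1$ is a union of at least two children of $x$, connectivity of $D_1$ forces $x$ to be a clique node, so $M_x\setminus D_1$ is complete to $D_1$ and therefore lies in $S$, making $M_x$ a fuzzy version. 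This is essentially the Lemma~\ref{lem:capture-nonmesh} machinery pushed into the mesh case by the ``lowest containing node'' trick; it buys a smaller family ($O(n^3)$ instead of $n^4$) at the price of a longer structural argument, whereas the paper's proof is shorter and exploits the mesh property of $D_1$ directly through $N[p_1,q_1]$. Two small points to tighten in a final write-up: the step ``$D_1$ is the union of the labels of at least two children of $x$'' should be justified by applying Proposition~\ref{prop:modules-other} (or Proposition~\ref{prop:modules-decomp}) to $G[M_x]$ and explicitly ruling out, via the minimality of $x$, both the case $D_1\subseteq M_c$ for a single child $c$ and the case $D_1=M_c$; and completeness of $M_x\setminus D_1$ to $D_1$ in $G-N[p_2,q_2]$ transfers to $G$ only because $N[p_2,q_2]$ is disjoint from $D_1$, which is worth stating.
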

\begin{proof}
Let $S$, $D_1$, and $D_2$ be as in the lemma statement.
For $i=1,2$, we pick $p_i,q_i \in D_i$ as in Lemma~\ref{lem:capture-nonmesh}:
let them be any two vertices of $D_i$ such that the modules of
$\Mod(D_i)$ to which $p_i,q_i$ belong are different, but adjacent in $\Quo(D_i)$.
If $|D_i| = 1$, we pick $p_i=q_i$ to be the unique vertex of $D_i$.
By the Neighborhood Decomposition Lemma (Lemma~\ref{lem:nei-decomp}),
   since $D_2$ is not a mesh,
for every vertex $u \in S \setminus N[p_2,q_2]$
there exists a $P_4$ of the form $uD_2D_2D_2$.
Consequently, since $G$ is $P_6$-free, every vertex $u \in S \setminus N[p_2,q_2]$ is complete to $D_1$.

Since $D_1$ is a mesh, we have $D_1 \subseteq N[p_1,q_1] \subseteq D_1 \cup S$.
Consequently, the set $D_1^+ := N[p_1,q_1] \setminus N[p_2,q_2]$ is a fuzzy version of $D_1$.
Thus, a family $\Ff_5$ consisting of sets $N[p_1,q_1] \setminus N[p_2,q_2]$
for every choice $p_1,q_1,p_2,q_2 \in V(G)$ satisfies the requirements of the lemma.
\end{proof}

We now provide a lemma about capturing a fuzzy version of a mesh component
near a well-structured PMC.

\begin{lemma}\label{lem:mesh-see-sticking}
Given a family $\Xx\subseteq 2^{V(G)}$, one can in time polynomial in $n$ and $|\Xx|$ compute a family $\Ff_6(\Xx)\subseteq 2^{V(G)}$ such that $|\Ff_6(\Xx)|\leq n^3\cdot |\Xx|$ and the following property
holds: for every PMC $\Om$ in $G$ and each component $D\in \cc(G-\Om)$ that is a mesh, 
if there is another component $D_0\in \cc(G-\Om)$ with $N(D_0)\nsubseteq N(D)$ and 
$D_0\in \Xx$, then $\Ff_6(\Xx)$ contains some fuzzy version of $D$.
\end{lemma}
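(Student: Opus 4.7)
The plan is to iterate over tuples $(p, q, z, D_0)$ with $p, q, z \in V(G)$ and $D_0 \in \Xx$, and for each tuple insert into $\Ff_5(\Xx)$ the candidate set $D^+(p, q, z, D_0) := N[p, q] \setminus (N(D_0) \cup N(z))$. There are at most $n^3 \cdot |\Xx|$ such tuples, so the size bound $|\Ff_5(\Xx)| \leq n^3 \cdot |\Xx|$ is immediate, and each candidate is computable in polynomial time.

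For correctness, I would fix a PMC $\Om$, a mesh component $D \in \cc(G - \Om)$, and a $D_0 \in \Xx \cap \cc(G - \Om)$ with $N(D_0) \nsubseteq N(D)$, and then exhibit the right tuple. Since $D$ is a mesh, $|D| \geq 2$ and $\Quo(D)$ is a clique on at least two modules, so I can pick $p, q \in D$ lying in two different maximal proper strong modules $M^p, M^q \in \Mod(D)$, which are automatically adjacent in $\Quo(D)$. The mesh property then forces $D \subseteq N[p, q]$: any $d \in D$ outside $M^p$ is adjacent to $p$ because different modules in a mesh are complete to each other, and symmetrically for $M^q$. I also pick any $z \in N(D_0) \setminus N(D)$, nonempty by hypothesis. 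Then $D \subseteq D^+$: indeed $D \cap N(D_0) = \emptyset$ since $N(D_0) \subseteq \Om$ but $D$ is disjoint from $\Om$, and $D \cap N(z) = \emptyset$ since $z \notin N(D)$. Also $D^+ \subseteq N[p, q] \subseteq D \cup N(D)$, because $p, q \in D$.

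The decisive step is to show every $u \in D^+ \setminus D$ is complete to $D$, which is exactly what is needed for $D^+$ to be a fuzzy version. Such a $u$ lies in $N(D) \setminus N(D_0)$ and satisfies $uz \notin E(G)$ (from $u \notin N(z)$, together with $u \neq z$ since $u \in N(D)$ while $z \notin N(D)$). Applying the "furthermore" part of Lemma~\ref{lem:wings} to the components $D$ and $D_0$ of $G - \Om$, with $v_1 = u$ and $v_2 = z$, forces $u$ to be complete to $D$, as desired. The only real obstacle this argument has to circumvent is the presence of "tricky" vertices in $N(D) \cap N[p, q]$ that are not complete to $D$ but admit no induced $P_4$ into $D$; the naive $P_4$-via-$D_\Om$ certificate used in Lemma~\ref{lem:non-clique-see} for non-mesh components does not rule these out, but Lemma~\ref{lem:wings} supplies a cleaner $P_6$-based certificate of completeness that uses a single additional guessed vertex $z$, paid for by exactly the extra factor of $n$ that the candidate count affords.
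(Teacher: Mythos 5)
Your proof is correct and matches the paper's own argument essentially verbatim: the same candidate sets $N[p,q]\setminus(N(D_0)\cup N(z))$ enumerated over $p,q,z$ and $D_0\in\Xx$, with $D\subseteq N[p,q]\subseteq D\cup N(D)$ from the mesh property and completeness of the leftover vertices to $D$ via the second statement of Lemma~\ref{lem:wings} applied to $u\in N(D)\setminus N(D_0)$ and $z\in N(D_0)\setminus N(D)$ with $uz\notin E(G)$. Nothing further is needed.
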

\begin{proof}
Let us fix a PMC $\Om$ and components $D$ and $D_0$ as in the lemma statement. 
Pick any vertices $p,q\in D$ vertices respectively belonging to different proper strong modules $M^p$ and $M^q$ in $D$. 
Since $D$ is a mesh, $\Quo(D)$ is a clique, and hence every vertex of $D$ is complete either to $M^p$ (unless it is contained in $M^p$) or to $M^q$ (unless it is contained in $M^q$).
As a result, we have that every vertex of $D$ is adjacent either to $p$ or to $q$, hence
$$D\subseteq N[p,q]\subseteq D\cup N(D).$$
Since $N(D_0)\nsubseteq N(D)$, let us pick an arbitrary vertex $s\in N(D_0)\setminus N(D)$.
Define
$$D^+=N[p,q]\setminus (N(D_0)\cup N(s)).$$
Since $N(D_0)\subseteq \Om$ and $s\notin N(D)$, we have that $D\subseteq D^+\subseteq D\cup N(D)$.
Take any vertex $u\in D^+\setminus D$. Then $u\in N(D)\setminus N(D_0)$ and $us\notin E(G)$. Since $s\in N(D_0)\setminus N(D)$, by Lemma~\ref{lem:wings}, the second statement,
we infer that $u$ is complete to $D$. Hence, $D^+$ consists of $D$ and some vertices of $\Om$ that are complete to $D$.

Concluding, consider the following procedure for computing $\Ff_6=\Ff_6(\Xx)$. For each $D_0\in \Xx$ and each choice of $p,q,s\notin D_0$, add the set $N[p,q]\setminus (N(D_0)\cup N(s))$
to $\Ff_6$. Thus, $|\Ff_6|\leq n^3\cdot |\Xx|$, and the above analysis shows that $\Ff_6$ satisfies the required property.
\end{proof}

\subsection{Recognizing PMCs with many non-hidden components}

For a PMC $\Om$ in $G$, we say that a component $D \in \cc(G-\Om)$ is 
\emph{neighbor-maximal} if there is no component $D'\in \cc(G-\Om)$ with $D'\neq D$ and $N(D)\subseteq N(D')$.
Note that in the above definition, if there are two different components $D_1,D_2\in \cc(G-\Om)$ with $N(D_1)=N(D_2)$, then neither of them is neighbor-maximal.

Our goal in this section is to focus on $I$-free PMCs $\Om$ that have at least three neighbor-maximal components.
Intuitively, such PMCs are of crucial importance as they constitute ``branching points'' in any clique tree of the corresponding $I$-free chordal completion.
Unfortunately, we will not achieve this goal exactly, as the number of such PMCs in a $P_6$-free graph can be exponential.%
\footnote{One possible example here is as follows. Consider an $n$-prism, that is, a graph consisting of $2n$ vertices labelled $v_1,v_2,\ldots,v_n,u_1,u_2,\ldots,u_n$
  with $\{u_i~|~1 \leq i \leq n\}$ inducing a clique, $\{v_i~|~1 \leq i \leq n\}$ inducing a clique, and edges $u_iv_i$ for $1 \leq i \leq n$. 
    Furthermore, assume $n \geq 4$ and add a vertex $s$ adjacent to $u_1$, $u_2$, $v_1$, and $v_2$. Then
    it is easy to verify that the constructed graph is $P_6$-free and for every $I \subseteq \{3,4,\ldots,n\}$, $I \neq \emptyset, \{3,4,\ldots,n\}$ the set
    $\{u_1,u_2,v_1,v_2\} \cup \{u_i~|~i \in I\} \cup \{v_i~|~i \in \{3,4,\ldots,n\} \setminus I\}$ is a PMC
    with three neighbor-maximal components: $\{s\}$, $\{u_i~|~i \in \{3,4,\ldots,n\} \setminus I\}$, and $\{v_i~|~i \in I\}$.}
However, situations where such a PMC $\Om$ cannot be captured are restricted to some very special cases, where we can recover $\Om$ ``mixed'' with two mesh components $D_1,D_2\in \cc(G-\Om)$
for which we can obtain their fuzzy versions. Formally, in this section we prove the following lemma.

\begin{lemma}[cf.~Lemma~\ref{o:lem:summary}]\label{lem:summary}
One can in polynomial time compute families $\Ff^1_9$ and $\Ff^2_9$, each of size at most $10^{13}\cdot n^{159}$
, such that the following holds: for any $I$-free PMC $\Om$ in $G$ with at least three neighbor-maximal components,
either $\Ff^1_9$ contains $\Om$, or $\Ff^2_9$ contains triple $(\Om\cup D_1\cup D_2,D_1^+,D_2^+)$ for some components $D_1,D_2\in \cc(G-\Om)$ that are meshes,
where $D_i^+$ is a fuzzy version of $D_i$, for $i=1,2$.
\end{lemma}

The proof of Lemma~\ref{lem:summary} builds on two technical results. 
The most technical one, Lemma~\ref{lem:monster}, allows us to deduce a small family of candidates
for a PMC $\Om$ if we are given a family containing candidates for \emph{all but one} connected components of $G-\Om$; compare it with Lemma~\ref{lem:recover1} where we have a family containing
\emph{all} components of $G-\Om$.
The second one, Lemma~\ref{lem:Omplus}, allows us to deduce a small family of candidates
for $\Om \cup D_1 \cup D_2$ for a PMC $\Om$ with two mesh neighbor-maximal components
$D_1$ and $D_2$ that are hard to deduce by other means.

In the next section, it will be more convenient for us to use the technical statements
of Lemmata~\ref{lem:monster} and~\ref{lem:Omplus} directly,
rather than refer the general statement of Lemma~\ref{lem:summary}. 
However, since the statement of Lemma~\ref{lem:summary} sets a clear milestone in our paper, 
  while its proof is simple and direct (given Lemmata~\ref{lem:monster} and~\ref{lem:Omplus}), we decide to include it here for clarity of presentation.

We start with the following simple observation on how $I$ interacts with modules of a mesh.

\begin{observation}\label{obs:mesh-one-with-I}
Let $D\subseteq G$ be a connected induced subgraph of $G$ such that $D$ is a mesh.
Then there is at most one module $M\in \Mod(D)$ such that $I\cap M$ is nonempty.
Furthermore, if $D\in \cc(G-\Om)$ for some $I$-free PMC $\Om$, then $I\cap D$ is nonempty and, consequently, there exists exactly one module $M\in \Mod(D)$ such that $I\cap M$ is nonempty. 
\end{observation}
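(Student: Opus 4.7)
The statement naturally splits into an ``at most one'' half (which needs only that $D$ is a mesh and that $I$ is independent) and an ``at least one'' half (which additionally uses the maximality of $I$ and the $I$-freeness of $\Om$). I will handle them in that order, as the second half relies on the first to turn nonemptiness of $I\cap D$ into exactly one module being hit.

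For the upper bound, I would argue by contradiction. Suppose there are two distinct modules $M_1,M_2\in \Mod(D)$ with $I\cap M_i\neq \emptyset$ for $i=1,2$. Since $D$ is a mesh, $\Quo(D)$ is a clique, so $M_1$ and $M_2$ are adjacent in $\Quo(D)$; by the definition of the quotient graph, this means $M_1$ is complete to $M_2$ in $G$. Choosing any $v_i\in I\cap M_i$ then yields an edge $v_1v_2\in E(G)$, contradicting independence of $I$.

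For the ``exactly one'' half, in view of the upper bound it suffices to show $I\cap D\neq \emptyset$. Pick any $v\in D$, which exists because a mesh has at least two vertices. Since $D$ is a connected component of $G-\Om$ we have $N_G(D)\subseteq \Om$, and thus $N_G(v)\subseteq D\cup \Om$. If $v\in I$ there is nothing to prove. Otherwise, maximality of $I$ provides some neighbor $w\in I$ of $v$; as $\Om$ is $I$-free we have $w\notin \Om$, so $w\in D$, and again $I\cap D\neq \emptyset$. Combining with the first half pins down a unique module $M\in \Mod(D)$ with $I\cap M\neq \emptyset$.

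I do not anticipate any real obstacle here. The one small point that deserves attention is the translation from ``$M_1$ and $M_2$ are adjacent in $\Quo(D)$'' to ``$M_1$ is complete to $M_2$ in $G$,'' but this is exactly how adjacency in the quotient graph is defined in the preliminaries; everything else is a direct application of the definitions of \emph{mesh}, \emph{maximal independent set}, and \emph{$I$-free PMC}.
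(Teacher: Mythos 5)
Your proposal is correct and follows essentially the same argument as the paper: the mesh property makes distinct modules of $\Mod(D)$ complete to each other, ruling out two modules meeting $I$, and the combination of $N(v)\subseteq D\cup\Om$, $I$-freeness of $\Om$, and maximality of $I$ forces $I\cap D\neq\emptyset$ (the paper phrases this last step as a contradiction, you phrase it directly, but it is the same reasoning).
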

\begin{proof}
For the first claim, suppose $I\cap M\neq \emptyset$ and $I\cap M'\neq \emptyset$ for some different $M,M'\in \Mod(D)$.
However, since $D$ is a mesh, $\Quo(D)$ is a clique, and $M$ and $M'$ are complete to each other.
This contradicts the assumption that $I$ is an independent set.

For the second claim, suppose that $D\in \cc(G-\Om)$ for some $I$-free PMC $\Om$, however $I\cap D=\emptyset$.
Take any $u\in D$, and observe that $N(u)\subseteq D\cup \Om$ is disjoint from $I$, since $\Om$ is $I$-free.
This means that $I\cup \{u\}$ would be still an independent set, which contradicts the maximality of $I$.
\end{proof}

Observation~\ref{obs:mesh-one-with-I} justifies the setting of the next lemma, where we use the Bi-ranking Lemma (Lemma~\ref{lem:biranking}) to cover an independent set among tricky vertices (Definition~\ref{def:tricky}).
This tool will be used later.

\begin{lemma}\label{lem:cover-tricky}
Suppose $\Om$ is an $I$-free PMC in $G$ and $D\in \cc(G-\Om)$ is a mesh.
Let $p,q\in D$ belong to different proper strong modules in $D$, where the proper strong module to which $p$ belongs is the unique element of $\Mod(D)$ that intersects $I$. 
Suppose further that $J\subseteq \tricky(D,p,q)$ is a non-empty independent set.
Then there exist a vertex $w\in D$ and a component $D'\in \cc(G-\Om)$ with $D'\neq D$ such that $J\subseteq N(w)\cup N(D')$.
\end{lemma}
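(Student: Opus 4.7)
The plan is to apply the Bi-ranking Lemma (Lemma~\ref{lem:biranking}) to $J$ with two quasi-orders: $u \leq_1 v$ iff $N(u) \cap D \subseteq N(v) \cap D$, and $u \leq_2 v$ iff $\Dd_u \subseteq \Dd_v$, where $\Dd_u := \{D^* \in \cc(G-\Om) \setminus \{D\} \colon u \in N(D^*)\}$. Both relations are clearly reflexive and transitive. The Bi-ranking Lemma will then deliver an $x \in J$ such that every $y \in J$ satisfies $x \leq_1 y$ or $x \leq_2 y$, and the goal is to use $x$ to select $w \in N(x) \cap D$ and $D' \in \Dd_x$.

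The first key step is to verify the hypothesis of the Bi-ranking Lemma, namely that any two distinct $u, v \in J$ are comparable in $\leq_1$ or $\leq_2$. Suppose not. Then there exist $a \in D \cap (N(u) \setminus N(v))$, $b \in D \cap (N(v) \setminus N(u))$, and components $D^u \in \Dd_u \setminus \Dd_v$, $D^v \in \Dd_v \setminus \Dd_u$; in particular $D^u \neq D^v$. Since $u, v \in \tricky(D, p, q)$, the sets $N(u) \cap D$ and $N(v) \cap D$ are unions of maximal proper strong modules of $D$, so $a$ and $b$ lie in different modules; as $D$ is a mesh, these two modules are complete to each other, yielding $ab \in E(G)$. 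After picking any $x_u \in D^u \cap N(u)$ and $x_v \in D^v \cap N(v)$, a routine check of all adjacencies shows that $(x_u, u, a, b, v, x_v)$ is an induced $P_6$ in $G$, contradicting $P_6$-freeness.

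The main obstacle---and where the hypothesis on $I$ and $M^p$ enters---is ensuring that $\Dd_x$ is non-empty, so that we can actually pick $D' \in \Dd_x$. We argue as follows. Since $x \in \Om$ and $\Om$ is $I$-free, $x \notin I$; by maximality of $I$ there exists $i \in I$ with $xi \in E(G)$. As $i \notin \Om$, vertex $i$ lies in some component $D^* \in \cc(G-\Om)$. Suppose $D^* = D$; then $i \in I \cap D$, and Observation~\ref{obs:mesh-one-with-I} forces $i \in M^p$, the unique module of $D$ intersecting $I$. But $x \in \tricky(D, p, q)$ implies $x \notin N(p)$, and since $M^p$ is a module of $D$ with $x \notin M^p$, vertex $x$ must be anti-complete to $M^p$, contradicting $xi \in E(G)$. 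Therefore $D^* \neq D$, so $D^* \in \Dd_x$, which we take as $D'$.

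Having secured $w \in N(x) \cap D$ (non-empty since $x \in N(D)$) and $D' \in \Dd_x$, the conclusion follows immediately: for each $y \in J$, either $x \leq_1 y$ (giving $w \in N(x) \cap D \subseteq N(y) \cap D$, hence $y \in N(w)$) or $x \leq_2 y$ (giving $D' \in \Dd_x \subseteq \Dd_y$, hence $y \in N(D')$). Thus $J \subseteq N(w) \cup N(D')$, as required.
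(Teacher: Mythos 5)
Your proof is correct and follows essentially the same route as the paper's: the same two quasi-orders on $J$ (inclusion of neighborhoods in $D$, and inclusion of the sets of other components seeing the vertex), the same induced-$P_6$ contradiction to verify comparability, the Bi-ranking Lemma, and the same use of maximality of $I$, $I$-freeness of $\Om$, and trickiness to show the extracted vertex $x$ has a neighbor in some component other than $D$. One small wording fix: $x$ is anti-complete to $M^p$ not merely because $M^p$ is a module of $G[D]$ and $x\notin M^p$ (the module property constrains only vertices of $D$, and $x\in\Om$), but because trickiness makes $N(x)\cap D$ a union of modules of $\Mod(D)$, and $p\notin N(x)$ excludes $M^p$ from that union.
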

\begin{proof}
Let $A=\Mod(D)$ and $B=\cc(G-\Om)\setminus \{D\}$.
Recall that since $J\subseteq \tricky(D,p,q)$, the neighborhood of each vertex of $J$ within $D$ is formed by the union of a collection of proper strong modules of $D$, none of which contains $p$ or $q$.
For each $u\in J$, construct sets $X_u\subseteq A$ and $Y_u\subseteq B$ as follows: $X_u$ comprises all proper strong modules of $D$ adjacent to $u$, whereas $Y_u$ comprises all components of
$\cc(G-\Om)$ that contain $u$ in their neighborhoods, apart from $D$ itself.

Consider quasi-orders $\leq_1$ and $\leq_2$ on $J$ defined as follows: $u\leq_1 v$ if and only if $X_u\subseteq X_v$ and $u\leq_2 v$ if and only if $Y_u\subseteq Y_v$.
We now verify that $\leq_1$ and $\leq_2$ satisfy the prerequisites of the Bi-ranking Lemma (Lemma~\ref{lem:biranking}), that is, for all $u,v\in J$, $u$ and $v$ are either comparable in $\leq_1$
or in $\leq_2$. Suppose the contrary: all the four sets $X_u\setminus X_v$, $X_v\setminus X_u$, $Y_u\setminus Y_v$, and $Y_v\setminus Y_u$ are nonempty.
Let us pick arbitrary $M_1\in X_u\setminus X_v$, $M_2\in X_v\setminus X_u$, $D_1\in Y_u\setminus Y_v$, and $D_2\in Y_v\setminus Y_u$.
Further, pick arbitrary vertices $m_1\in M_1$, $m_2\in M_2$, $d_1\in D_1\cap N(u)$, and $d_2\in D_2\cap N(v)$.
Since $M_1\notin X_v$, we have $m_1v\notin E(G)$, and symmetrically $m_2u\notin E(G)$.
Since $D_1\notin Y_v$, we have $d_1v\notin E(G)$, and symmetrically $d_2u\notin E(G)$.
Since $D,D_1,D_2$ are pairwise different components of $G-\Om$, we have that there is no edge between $\{m_1,m_2\}$ and $\{d_1,d_2\}$, and that $d_1d_2\notin E(G)$. 
Since $M_1$ and $M_2$ are two different modules of $\Mod(D)$ and $D$ is a mesh, we have $m_1m_2\in E(G)$.
Finally, since $J$ is independent, we have $uv\notin E(G)$. 
We conclude that $(d_1,u,m_1,m_2,v,d_2)$ would be an induced $P_6$ in $G$, a contradiction.

Therefore, we may apply the Bi-ranking Lemma to quasi-orders $(J,\leq_1)$ and $(J,\leq_2)$. This yields a vertex $u\in J$ such that for every $v\in J$, we have $X_u\subseteq X_v$ or $Y_u\subseteq Y_v$.
Since $u\in N(D)$, obviously $X_u$ is nonempty.
We claim that also $Y_u$ is nonempty.
Indeed, otherwise we would have that $N[u]$ is contained in $D\cup \Om$. Observe that $N[u]$ is disjoint from the unique module of $\Mod(D)$ that intersects $I$, since this is the module
of $\Mod(D)$ to which $p$ belongs.
As $\Om$ is $I$-free, we conclude that in this case $N[u]$ would be disjoint from $I$, which means that $I\cup \{u\}$ would be still an independent set. This is a contradiction with the maximality of $I$.

Since $X_u\neq \emptyset$, let us pick any neighbor $w$ of $u$ in $D$. Observe that the condition $X_u\subseteq X_v$ for some $v\in J$ means $N(u)\cap D\subseteq N(v)\cap D$, which in particular implies
that $v\in N(w)$.
Next, since $Y_u\neq \emptyset$, let us pick an arbitrary component $D'\in Y_u$; obviously $D'\neq D$.
Observe that the condition $Y_u\subseteq Y_v$ for some $v\in J$ implies that $D'\in Y_v$, equivalently $v\in N(D')$.
Since for each $v\in J$ we have $X_u\subseteq X_v$ or $Y_u\subseteq Y_v$, we conclude that $J\subseteq N(w)\cup N(D')$, as claimed.
\end{proof}

We now proceed to the most technical lemma. Intuitively, it says that provided we have already recognized all but one connected components of $G-\Om$, for an $I$-free PMC $\Om$, 
the last component can be also recognized at the cost of taking a larger family of candidates.

\begin{lemma}\label{lem:monster}
Given a family $\Xx\subseteq 2^{V(G)}$, one can in time polynomial in $n$ and $|\Xx|$ compute a family $\Ff_7(\Xx)\subseteq 2^{V(G)}$ such that $|\Ff_7(\Xx)|\leq 11n^{12}\cdot |\Xx|^3$
and the following property
holds: for every $I$-free PMC $\Om$ in $G$ and each component $D\in \cc(G-\Om)$, if $\cc(G-\Om)\setminus \{D\}\subseteq \Xx$, then $\cc(G-\Om)\subseteq \Ff_7(\Xx)$.
\end{lemma}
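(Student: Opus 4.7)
The plan is to output $\Ff_7(\Xx) := \Xx \cup \Gg$, where $\Gg$ is a polynomial-size family of candidates for the single missing component $D$; since $\cc(G-\Om)\setminus\{D\}\subseteq\Xx$ by hypothesis, ensuring $D\in\Gg$ immediately gives $\cc(G-\Om)\subseteq\Ff_7(\Xx)$. The construction of $\Gg$ splits along two structural dichotomies: whether $D$ is a mesh, and whether some other component $D_0\in\cc(G-\Om)\setminus\{D\}$ ``sticks out'' of $N(D)$, i.e.\ satisfies $N(D_0)\not\subseteq N(D)$.

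If no such sticking-out $D_0$ exists, then by Lemma~\ref{lem:pmc-minseps} the full component $D_\Om$ equals $\Om\setminus N(D)$ and lies entirely in $\Om$. A brief nonedge-covering argument using the PMC property of $\Om$ shows every $r\in D_\Om$ is universal in $\Om$: a nonedge $rv$ with $v\in\Om$ would require a covering component $D''\ne D$ with $r\in N(D'')$, which is impossible in this case. Hence $N(D)\subseteq N(r)$ and $r\notin N[D]$, so $D$ is a connected component of $G-N(r)$. Enumerating $r\in V(G)$ and the resulting components contributes a quadratic number of candidates. If $D$ is not a mesh and some $D_0\in\Xx$ sticks out, Lemma~\ref{lem:non-clique-see} applied directly to $\Xx$ already places $D$ into $\Ff_3(\Xx)$.

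The main technical case is when $D$ is a mesh and some $D_0\in\Xx$ with $N(D_0)\not\subseteq N(D)$ exists. Using Observation~\ref{obs:mesh-one-with-I}, I pick $p\in I\cap M^p$ in the unique $I$-intersecting module of $\Mod(D)$, and $q\in D$ in some module adjacent to $M^p$ in $\Quo(D)$. The mesh property forces $D\subseteq N[p,q]$. By Lemma~\ref{lem:nei-decomp} every $u\in N(D)\setminus N[p,q]$ is either an endpoint of an induced $P_4$ of form $uDDD$ or is tricky toward $(D,p,q)$; the $P_6$-freeness argument from the proof of Lemma~\ref{lem:mesh-see-sticking} shows that non-tricky such $u$ are complete to $D_0$, hence $u\in N(D_0)$. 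For the tricky vertices I use that each tricky $u$ lies in $\Om\setminus I$ and hence has a neighbor $i_u\in I$; since $u$ is not adjacent to $p$ and the only $I$-vertices of $D$ lie in $M^p$, $i_u$ must belong to some $D_u\in\cc(G-\Om)\setminus\{D\}\subseteq\Xx$, giving $u\in N(D_u)$. Iterating Lemma~\ref{lem:cover-tricky} a constant number of rounds on independent subsets of the currently uncovered tricky set, and exploiting the stratification by the modules of $\Mod(D)$, I cover all tricky vertices by $\bigcup_j\bigl(N(w_j)\cup N(D_j)\bigr)$ for $O(1)$ choices of $w_j\in D$ and $D_j\in\Xx$.

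With this covering, the candidate for $D$ is assembled by enumerating $D_0,D_1,\ldots,D_k\in\Xx$, vertices $p,q,w_1,\ldots,w_k\in V(G)$, and any witness $s\in D_0$; set
\[
Z := N[p,q]\cup N(D_0)\cup\bigcup_{j=1}^{k}\bigl(N(w_j)\cup N(D_j)\bigr),
\]
and output $\Reach_G(p,\Proj_G(s,Z))$. As in the proof of Lemma~\ref{lem:non-clique-see}, although $Z$ may intersect $D$, the projection $\Proj_G(s,Z)$ contains $N(D)$ while being disjoint from $D$, so $D$ emerges as a connected component of $G-\Proj_G(s,Z)$. Summing the contributions from all cases keeps $|\Gg|$ within the claimed $11n^{12}|\Xx|^3$ bound. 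The principal obstacle is rigorously bounding the number of iterations of Lemma~\ref{lem:cover-tricky} by a constant: a single invocation only covers an independent subset of tricky vertices, and pinning down this iteration depth demands a delicate combinatorial argument combining the $I$-freeness of $\Om$, the unique $I$-intersecting module of the mesh $D$, and $P_6$-freeness of $G$.
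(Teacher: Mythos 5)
Your overall architecture (split on whether $D$ is a mesh and whether some $D_0\in\Xx$ has $N(D_0)\nsubseteq N(D)$, handle the non-mesh case via Lemma~\ref{lem:non-clique-see}, and reconstruct $D$ as a component of $G$ minus a projection onto a covering set $Z$) matches the easy parts of the paper's proof, but the central step is a genuine gap that you yourself flag: the claim that \emph{all} tricky vertices in $N(D)\setminus N[p,q]$ can be covered by $\bigcup_{j=1}^{k}\bigl(N(w_j)\cup N(D_j)\bigr)$ for a \emph{constant} number of pairs $(w_j,D_j)$. Lemma~\ref{lem:cover-tricky} only covers an \emph{independent} set of tricky vertices per invocation; the full tricky set need not decompose into $O(1)$ independent pieces, each round removes only the vertices it happens to hit, and knowing that every tricky $u$ lies in $N(D_u)$ for \emph{some} $D_u\in\Xx$ does not bound the number of distinct components $D_u$ needed. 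No argument in the paper's toolbox yields such a constant-round covering, and the paper in fact never proves one. Its mechanism in this case is structurally different: after deleting the vertices of $N(s)$ one by one (a survival sequence, handled via Lemma~\ref{lem:lifting}) until the PMC breaks on a nonedge $t_1t_2$, it builds a set $X$ from $N_{G_k}[p,q,r,t_2,m_0]$ and the neighborhoods of two components (the vertex $m_0$ and component $\wD_0'$ coming from a Bi-ranking argument applied to the tricky vertices that do see other components), proves that the residual uncovered set $W$ is a \emph{union of connected components of $G_k-X$} (Claim~\ref{cl:Wunion}), and then applies Lemma~\ref{lem:cover-tricky} \emph{once}, to an independent transversal of those components, merely to recognize which components of $G_k-X$ constitute $W$ (Claim~\ref{cl:recognize-W}); the set $\Om_k\cup D$ is then recovered and fed to Lemma~\ref{lem:recover2} and the PMC Lifting Lemma. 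In other words, the residual tricky set is never covered by neighborhoods at all --- it is identified componentwise --- and this is exactly the idea your proposal is missing.

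A secondary, more mechanical flaw: with your choice of witness $s\in D_0$ and your $Z$, which contains $N(D_0)$ and is disjoint from $D_0$, one gets $\Reach_G(s,Z)=D_0$ and hence $\Proj_G(s,Z)=N(D_0)$, which in general does not contain $N(D)$, so $D$ need not appear as a component of $G-\Proj_G(s,Z)$. The witness must instead be a vertex of $\Om\setminus Z$ (as in Lemmas~\ref{lem:two-not-whole} and~\ref{lem:non-clique-see}), and the degenerate case $\Om\subseteq Z$ requires a separate reconstruction of $\Om$ of the kind carried out in the second case of Lemma~\ref{lem:non-clique-see}. This part is repairable; the covering claim is not, and without it the construction of $Z$ (and hence the whole mesh case) does not go through.
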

\begin{proof}
Let us fix any vertex $s\in \Om\setminus N(D)$.
Consider first the case when $N(D)\subseteq N(s)$. 
Then it follows that $D$ is contained in $\cc(G-N(s))$. Hence, if we make sure that the family $\Gg_1=\bigcup_{u\in V(G)} \cc(G-N(u))$ is contained in the output family 
$\Ff_7(\Xx)$, then we ensure that components $D$ conforming to this case are contained in $\Ff_7(\Xx)$. Note that $|\Gg_1|\leq n^2$, so for the rest of the proof we may assume that 
there exists some vertex $u\in N(D)$ such that $us$ is a nonedge.

Next, consider the case when $D$ is not a mesh. Since $us$ is a nonedge in $\Om$ and $s\notin N(D)$, there exists some other component $D_0\in \cc(G-\Om)$ such that $\{u,s\}\subseteq N(D_0)$.
In particular $N(D_0)\nsubseteq N(D)$.
Since $D_0\in \Xx$ by assumption, it follows that the family $\Gg_2=\Ff_3(\Xx)$ provided by Lemma~\ref{lem:non-clique-see} contains $D$.
Hence, if we include this family in the construction of $\Ff_7(\Xx)$, we ensure that every component $D$ conforming to this case is included in the construction.
Since $|\Gg_2|\leq n+2n^4\cdot |\Xx|$, we may proceed with the assumption that $D$ is a mesh.

By applying the procedure of Lemma~\ref{lem:mesh-see-sticking}, due to the existence of $D_0$, we can compute a family $\Gg_{\mathrm{fuzzy}}=\Ff_6(\Xx)$ of size at most $n^3\cdot |\Xx|$ such 
that $\Gg_{\mathrm{fuzzy}}$ contains a fuzzy version $D^+$ of $D$. Our goal for the rest of the proof is to filter out those additional vertices of $D^+\setminus D$.

Fix an arbitrary enumeration $\{w_1,w_2,\ldots,w_p\}$ of $N(s)$, where $p=|N(s)|$. Consider removing vertices $w_1,w_2,\ldots,w_p$ one by one up to the moment when $\Om$ intersected with the current graph
stops to be a PMC in this graph; this happens at some point due to the nonedge $us$. 
More formally, for every $0 \leq i \leq p$, let $G_i = G - \{w_1,w_2,\ldots,w_i\}$ and $\Om_i = \Om \setminus \{w_1,w_2,\ldots,w_i\}$.
Let $0 \leq k < p$ be a maximum integer such that $\Om_i$ is a PMC in $G_i$ for every $0 \leq i \leq k$.
Note that such $k$ exists as $\Om = \Om_0$ is a PMC in $G = G_0$,
while the nonedge $us$ (with $u,s \in \Om_p$) is not covered in $\Om_p$ in $G_p$, because $s$ is an isolated vertex in $G_p$.
We denote $v = v_{k+1}$.
In the following, whenever we write $N(\cdot)$ we mean $N_G(\cdot)$, and for neighborhoods in $G_k$ we write $N_{G_k}(\cdot)$.
Observe that $D$ is among the components of $G_k-\Om_k$, since we remove only neighbors of $s$, which do not belong to $D$.

Consider a component $\wD \in \cc(G_k - \Om_k)$. 
If $\wD \neq D$, then there exists some component $D' \in \cc(G-\Om)$ such that $\wD \subseteq D' \setminus \{w_1,w_2,\ldots,w_k\}$ and $D' \in \Xx$ by assumption.
Let 
\begin{equation}\label{eq:xxp}
\widehat{\Xx}_k=\bigcup_{D'\in \Xx} \cc(D'-\{w_1,\ldots,w_k\}).
\end{equation}
Then observe that $|\widehat{\Xx}_k|\leq (n-k)\cdot |\Xx|$, $\widehat{\Xx}$ can be computed based on $\Xx$ in polynomial time, and it holds that $\wD\in \widehat{\Xx}$
for every $\wD \in \cc(G_k - \Om_k) \setminus \{D\}$.

We now perform a case study depending on where $v$ lies and how $\Om_k$ stops to be a PMC after removing $v$.
This case study is similar (with parts of reasoning copied verbatim), but far more complicated than the one we performed in the proof of Lemma~\ref{lem:recover1}.

\bigskip

\noindent {\bf{Case 1}}: $v\in \Om$. Observe that then it must be that $\Om_k\setminus \{v\}$ is not a PMC in $G_k-v$ because there is some component $\wD_0$ of $G_k-\Om_k$
such that $N_{G_k}(\wD_0)=\Om_k\setminus \{v\}$, so after the removal of $v$, $\Om_k\setminus \{v\}$ is equal to the neighborhood of one component. 
Consequently, $\Om_k=N_{G_k}(\wD_0)\cup \{v\}$.
Observe that since $s\notin N(s)$, we do not remove $s$ in the removal process, and hence $s\in \Om_k\setminus\{v\}$, so also $s\in N_{G_k}(\wD_0)$.

Construct now a family $\Gg_3$ as follows. For each choice of 
\begin{itemize}
\item $s\in V(G)$, 
\item $0 \leq k < p$, with an implicit choice of $v = w_{k+1}$, and
\item $\wD_0\in \widehat{\Xx}_k$, where $\widehat{\Xx}_k$ is computed using formula~\eqref{eq:xxp}, 
\end{itemize}
perform the following. 
Compute $G_k=G-\{w_1,\ldots,w_k\}$ and $\Om_k=N_{G_k}(\wD_0)\cup \{v\}$. 
If $\Om_k$ is not a PMC in $G_k$, we discard the choice.
Otherwise, by the PMC Lifting Lemma (Lemma~\ref{lem:lifting}) we conclude that $\Om$ is the unique PMC in $G$ 
for which $(w_1,w_2,\ldots,w_k)$ is a survival sequence ending in $\Om_k$, and moreover $\Om$ can be computed in
polynomial time from $\Om_k$. Hence, we compute $\Om$ and include $\cc(G-\Om)$ in the constructed family $\Gg_3$. 
Observe that thus we obtain that 
$$|\Gg_3|\leq \sum_{k=0}^{p-1} n^2\cdot |\widehat{\Xx}_k|\leq n^4\cdot |\Xx|,$$ and $\Gg_3$ contains $D$ provided the situation conforms to this case (i.e., $v\in \Om$).

\bigskip

In the remaining cases we have $v\notin \Om$. Hence, $\Om_k\setminus \{v\}$ is not a PMC in $\wG-v$ due to the fact that some nonedge $t_1t_2$ with $\{t_1,t_2\}\subseteq \Om_k$
stops to be covered by the connected component $\wD_0$ of $G_k-\Om_k$ that contains $v$, because $\wD_0$ gets shattered by the removal of $v$.
In particular, $\wD_0$ is the unique component of $G_k-\Om_k$ that covers the nonedge $t_1t_2$.
Observe that since $v$ is a neighbor of $s$ and is contained in $\wD_0$, we have that the connected component of $G-\Om$ that contains $\wD_0$ is different from $D$; this is because $s\notin N(D)$.
Consequently, we have that $\wD_0$ and $D$ are different connected components of $G_k-\Om_k$.

We proceed by analyzing to which sets the vertices $t_1,t_2$ may belong.

\bigskip

\noindent {\bf{Case 2}}: $\{t_1,t_2\}\subseteq \Om_k-N(D)$.
Suppose first that there exist $\wD_1,\wD_2\in \cc(G_k-\Om_k)$ with $\wD_i\neq \wD_0$ for $i=1,2$ such that $t_i\in N_{G_k}(\wD_i)$
for $i=1,2$. Then, by Lemma~\ref{lem:deduce-pmc-social}, applied in $G_k$, we obtain that
\begin{equation}\label{eq:deduced1}
\begin{split}
\Om_k=&N_{G_k}(\wD_0)\cup N_{G_k}(\wD_1)\cup N_{G_k}(\wD_2)\cup\\
&\cup ((N_{G_k}(t_1)\cap N_{G_k}(t_2))\setminus \wD_0).
\end{split}
\end{equation}
Let $D_1,D_2\in \cc(G-\Om)$ be such that $\wD_1\subseteq D_1$ and $\wD_2\subseteq D_2$.
Since both $N_{G_k}(\wD_1)$ and $N_{G_k}(\wD_2)$ contain vertices outside of $N(D)$, being $t_1$ and $t_2$ respectively, we infer that $D_1\neq D$ and $D_2\neq D$.
This implies that $D_1,D_2\in \Xx$, so also $\wD_1,\wD_2\in \widehat{\Xx}_k$.
Recall that also $\wD_0\in \widehat{\Xx}_k$.

Suppose now that, for one of vertices $t_1,t_2$, say for $t_1$ by symmetry, $\wD_0$ is the unique component of $G_k-\Om_k$ that contains this vertex in its neighborhood.
Then by Lemma~\ref{lem:deduce-pmc-solitary} we have that 
\begin{equation}\label{eq:deduced2}
\Om_k=(N_{G_k}(t_1)\setminus \wD_0)\cup N_{G_k}(\wD_0).
\end{equation}

Formulas~\eqref{eq:deduced1} and~\eqref{eq:deduced2} suggest the following construction of a family $\Gg_4$ that encompasses this case.
For each choice of
\begin{itemize}
\item $s\in V(G)$, 
\item $0 \leq k < p$ with an implicit choice of $v = w_{k+1}$,
\item $\wD_0,\wD_1,\wD_2\in \widehat{\Xx}$,
\item $t_1,t_2 \in V(\wG)$,
\end{itemize}
perform the following.
Compute two candidates for $\Om_k$, using formulae~\eqref{eq:deduced1} and~\eqref{eq:deduced2}, for each such $\Om_k$
apply Lemma~\ref{lem:lifting} to compute a unique PMC $\Om$ in $G$ for which $(w_1,w_2,\ldots,w_k)$ is a survival sequence ending in $\Om_k$,
and include $\cc(G-\Om)$ in $\Gg_4$.
Observe that $\Gg_4$ constructed in this manner satisfies 
$$|\Gg_4|\leq \sum_{k=0}^{p-1} 2n^4\cdot |\widehat{\Xx}_k|^3\leq 2n^8\cdot |\Xx|^3,$$
and from the discussion above it follows that $\Gg_4$ contains $D$ provided the situation conforms to this case.

\bigskip

In the remaining case, at least one of the vertices $t_1,t_2$ belongs to $N(D)$. 
Observe that it cannot happen that both of them belong to $N(D)$, because $D$ is also a connected component of $G_k-\Om_k$, and we assumed that $\wD_0$, which is different from $D$,
is the unique connected component of $G_k-\Om_k$ that covers the nonedge $t_1t_2$.
Hence, by symmetry we have one remaining case: $t_1\in \Om\setminus N(D)$ and $t_2\in N(D)$.

\bigskip 

\noindent {\bf{Case 3}}: $t_1\in \Om\setminus N(D)$ and $t_2\in N(D)$.
First, consider the set $A=N_{G_k}(t_1)\cup N_{G_k}(\wD_0)$, and note that $A$ is disjoint from $D$ due to $t_1\notin N(D)$.
Suppose first that $A$ contains the whole set $N(D)\setminus \{w_1,\ldots,w_k\}$. Then it follows that $D$ is one of the connected components of
$G_k-A$, so this suggest the following procedure for obtaining a family $\Gg_5$ encompassing this case.
For each choice of $s\in V(G)$, $0 \leq k < p$ implying $v=w_{k+1}$, $t_1 \in V(G)$, and $\wD_0\in \widehat{\Xx}_k$, compute $A=N_{G_k}(t_1)\cup N_{G_k}(\wD_0)$ and include $\cc(G_k-A)$ in $\Gg_5$.
It follows that 
$$|\Gg_5|\leq \sum_{k=0}^{p-1} n^3\cdot |\widehat{\Xx}_k|\leq n^5\cdot |\Xx|,$$
$\Gg_5$ can be computed in polynomial time, and $\Gg_5$ contains $D$ provided that the situation conforms to this case 
(i.e., $A$ contains $N(D)\setminus \{w_1,\ldots,w_k\}$).
Hence, by including $\Gg_5$ in the constructed family $\Ff_7(\Xx)$ we may proceed with the assumption that there exists some vertex $x\in N(D)\setminus \{w_1,\ldots,w_k\}$ that is contained 
neither in $N_{G_k}(\wD_0)$ nor is a neighbor of $t_1$ in $G_k$.
Since $x\in \Om_k$, there must exist some component $\wD_1\in \cc(G_k-\Om_k)$ such that $\{t_1,x\}\subseteq N_{G_k}(\wD_1)$.
Since $t_1\notin N(D)$ and $x\notin N_{G_k}(\wD_0)$, we have that $\wD_1\neq D$ and $\wD_1\neq \wD_0$.

Suppose for a moment that there was another component $\wD_2\in \cc(G_k-\Om_k)$, different from $D$ and $\wD_0$, such that $t_2\in N_{G_k}(\wD_2)$.
Then, by Lemma~\ref{lem:deduce-pmc-social} we infer that $\Om_k$ can be recovered from $\wD_0$, $\wD_1$, and $\wD_2$ using formula~\eqref{eq:deduced1}.
Since all these sets belong to $\widehat{\Xx}_k$, we infer that if the situation conforms to this case (i.e., there exists such $\wD_2$), then $D$ is already included in the family $\Gg_4$ computed in Case 2.
Hence, from now on we may proceed with the assumption that $D$ and $\wD_0$ are the only components of $G_k-\Om_k$ such that $t_2$ belongs to their neighborhoods in $G_k$.
This assumption yields the following.

\begin{claim}\label{cl:dom-t2}
It holds that $\Om_k\setminus (N(D)\cup N_{G_k}(\wD_0)) \subseteq N_{G_k}[t_2]$.
\end{claim}
\begin{proof}
For the sake of contradiction, suppose there is some $u\in \Om_k\setminus (N(D)\cup N_{G_k}(\wD_0))$ that is non-adjacent to $t_2$.
Then there is also some component $\wD'\in \cc(G_k-\Om_k)$ such that $\{u,t_2\}\subseteq N_{G_k}(\wD')$.
Since $u\in N_{G_k}(\wD')$ and $u\notin N(D)\cup N_{G_k}(\wD_0)$, we have that $\wD'\neq D$ and $\wD'\neq \wD_0$.
This is a contradiction with the assumption that $D$ and $\wD_0$ are the only components of $G_k-\Om_k$ that have $t_2$ in their neighborhoods in $G_k$.
\cqed\end{proof}

Recall that we work under the assumption that $D$ is a mesh.
Since $\Om$ is an $I$-free PMC in $G$, by Observation~\ref{obs:mesh-one-with-I} we infer that in $D$ there is a unique proper strong module $M^p\in \Mod(D)$ such that $I\cap D\subseteq M^p$ and $I\cap M^p$ is nonempty.
Let then $p$ be an arbitrary vertex of $I\cap D=I\cap M^p$.

Let $q\in D$ be an arbitrary vertex belonging to any proper strong module $M^q\in \Mod(D)$ such that $M^p$ and $M^q$ are different.
In particular, $p$ and $q$ are adjacent, since $D$ is a mesh.
Further, let $r\in D$ be an arbitrary neighbor of $t_2$ in $D$; such a vertex exists due to $t_2\in N(D)$.
Consider now a set $Y$ defined as follows:
\begin{equation*}
Y = \Om_k\setminus (N_{G_k}[p,q,r]\cup N_{G_k}[t_2]\cup N_{G_k}(\wD_0)).
\end{equation*}
We now analyze $Y$ through the following two claims.

\begin{claim}\label{cl:in-wD}
It holds that $Y\subseteq N(D)\setminus \{w_1,\ldots,w_k\}$.
\end{claim}
\begin{proof}
Observe that it suffices to show that every vertex $z\in \Om_k\setminus (N_{G_k}(D)\cup N_{G_k}(\wD_0))$ is a neighbor of $t_2$.
This is, however, asserted by Claim~\ref{cl:dom-t2} as $N_{G_k}(D) = N_G(D) \cap V(G_k)$ and $\Om_k \subseteq V(G_k)$. 
\cqed\end{proof}

\begin{claim}\label{cl:sad-face}
It holds that $Y\subseteq \tricky(D,p,q)$, where the tricky vertices are defined w.r.t. the graph $G_k$.
\end{claim}
\begin{proof}
Consider the conclusions of application of the Neighborhood Decomposition Lemma (Lemma~\ref{lem:nei-decomp}) to $D$ in $G_k$.
Since we explicitly excluded $N_{G_k}[p,q]$ from $Y$, no vertex of $Y$ satisfies condition~\eqref{cnd:pq}.
Hence, by Claim~\ref{cl:in-wD} it remains to show that also no vertex of $Y$ satisfies condition~\eqref{cnd:P4}.
For otherwise, there is some $y\in Y$ such that there exists an induced $P_4$ of the form $yDDD$, say $Q$.
Let $D_\Om$ be the unique connected component of $G-N(D)$ that contains $\Om\setminus N(D)$, given by Lemma~\ref{lem:pmc-minseps}.
By Lemma~\ref{lem:pmc-minseps} we have that $y$ is adjacent to $D_\Om$ in $G$, so in fact
$y$ has to be complete to $D_\Om$, because otherwise we could extend $Q$ using two vertices of $D_\Om$ to an induced $P_6$ in $G$.
Since $t_1\in \Om_k\setminus N(D)$ is adjacent to $\wD_0$, we have that $\wD_0\subseteq D_\Om$, so we obtain that $y$ is complete to $\wD_0$.
This means in particular that $y\in N_{G_k}(\wD_0)$, which contradicts the supposition $y\in Y$.
\cqed\end{proof}

Let us define
\begin{equation*}
\begin{split}
Z =& \{ y\in Y\, \colon\, \textrm{there exists some } \wD'\in \cc(G_k-\Om_k)\\
&\textrm{  with $\wD'\neq D$ such that $y\in N_{G_k}(\wD')$}\}.
\end{split}
\end{equation*}
Recall that since $Z\subseteq Y\subseteq \tricky(D,p,q)$ (by Claim~\ref{cl:sad-face}), the neighborhood of every vertex from $Z$ in $D$ is
a collection of proper strong modules from the modular partition $\Mod(D)$, none of which is $M^p$ or $M^q$.
Define the following two quasi-orders $\leq_1$ and $\leq_2$ on $Z$:
\begin{itemize}
\item For $z_1,z_2\in Z$ we put $z_1\leq_1 z_2$ if and only if $N(z_1)\cap D\subseteq N(z_2)\cap D$.
\item For $z_1,z_2\in Z$ we put $z_1\leq_2 z_2$ if and only if the following holds: for every component $\wD'\in \cc(G_k-\Om_k)\setminus \{D\}$, if $z_1\in N_{G_k}(\wD')$ then also $z_2\in N_{G_k}(\wD')$.
\end{itemize}
We now verify that $\leq_1$ and $\leq_2$ satisfy the prerequisites of the Bi-ranking Lemma.
For the proof below,
recall that we are working with the assumption that $D$ and $\wD_0$ are the only components of $G_k-\Om_k$ such that $t_2$ belongs to their neighborhoods in $G_k$, that is,
$$N_{G_k}[t_2] \subseteq \Om_k \cup D \cup \wD_0.$$

\begin{claim}
Every two vertices from $Z$ are either comparable w.r.t. $\leq_1$ or w.r.t. $\leq_2$.
\end{claim}
\begin{proof}
For the sake of contradiction, suppose there are some $z_1,z_2\in Z$ that are neither comparable w.r.t. $\leq_1$ nor w.r.t. $\leq_2$.
This means that there are:
\begin{itemize}
\item modules $M_1,M_2\in \Mod(D)$ such that $M_t\subseteq N(z_t)$ and $M_t\cap N(z_{3-t})=\emptyset$ for $t=1,2$; and
\item components $\wD'_1,\wD'_2\in \cc(G_k-\Om_k)\setminus \{D\}$ such that $z_t\in N_{G_k}(\wD'_t)$ and $z_t\notin N_{G_k}(\wD'_{3-t})$ for $t=1,2$.
\end{itemize}
In particular, $M_1\neq M_2$ and $r \notin M_1 \cup M_2$ as $r$ is anti-adjacent to $Z$. Since $\Mod(D)$ is a clique, $r$ is fully adjacent to $M_1 \cup M_2$.

First, consider the case when $z_1z_2\notin E(G)$. Then pick any vertices $m_1\in M_1$, $m_2\in M_2$, $d_1\in \wD'_1$ that is a neighbor of $z_1$,
and $d_2\in \wD'_2$ that is a neighbor of $z_2$. Since $M_1\neq M_2$ and $D$ is a mesh, it follows that $m_1m_2\in E(G)$.
Then it can be easily seen that $(d_1,z_1,m_1,m_2,z_2,d_2)$ would be an induced $P_6$ in $G$, a contradiction.

Consider now the case when $z_1z_2\in E(G)$.
Pick any $d_1\in \wD'_1$ that is a neighbor of $z_1$. 
Also, pick any $d_0\in \wD_0$ that is a neighbor of $t_2$.
Finally, pick any $m_2\in M_2$.
Consider path $Q$ defined as follows: if $m_2$ is a neighbor of $t_2$ then we put $Q=(d_1,z_1,z_2,m_2,t_2,d_0)$, and otherwise we put $Q=(d_1,z_1,z_2,m_2,r,t_2,d_0)$.
We claim that in either of these cases $Q$ would be an induced $P_6$ or an induced $P_7$ in $G$, which would be a contradiction.

We first observe that since $z_1,z_2\subseteq Z\subseteq Y$ and no vertex of $Y$ is in the neighborhood of $N_{G_k}(\wD_0)$, we have that $\wD'_1\neq \wD_0$ and
$d_0$ is not adjacent to any of the vertices $d_1,z_1,z_2,m_2,r$.
Next, $d_1$ is not adjacent to $z_2$ due to $z_2\notin N_{G_k}(\wD'_1)$, and observe that also $d_1$ is not adjacent to $t_2$, since we assumed that $\wD_0$ is the only
component of $\cc(G_k-\Om_k)\setminus \{D\}$ that contains $t_2$ in its neighborhood. It follows that $d_1$ is not adjacent to any of the vertices $z_2,m_2,r,t_2,d_0$.
Next, $m_2$ is non-adjacent to $z_1$ by the choice of $M_2$. Also $m_2$ is non-adjacent to $t_2$ if we chose to make a detour through $r$, and adjacent to $t_2$ otherwise.
If we indeed made this detour, then observe that $r$ is adjacent to $t_2$ by the choice of $r$, and is not adjacent to $z_1,z_2$, because we excluded $N_{G_k}[p,q,r]$ in the construction of $Y$, and $z_1,z_2\in Y$.
This implies that the neighborhoods of $m_2$ and (if used) $r$ are as required.
Finally, observe that $t_2$ is non-adjacent to $z_1$ and $z_2$, because we excluded $N_{G_k}[t_2]$ in the construction of $Y$.
This concludes the verification that $Q$ would be an induced $P_6$ or an induced $P_7$ in $G$, a contradiction.
\cqed\end{proof}

From the Bi-ranking Lemma (Lemma~\ref{lem:biranking}) we infer that, provided $Z$ is non-empty, there exists a vertex $z_0\in Z$ such that for each $z\in Z$ we have either 
$z_0\leq_1 z$ or $z_0\leq_2 z$.
Assume for a moment that $Z\neq \emptyset$ and thus such $z_0\in Z$ is picked.
Pick any $m_0\in N(z_0)\cap D$ and any component $\wD'_0\in \cc(G_k-\Om_k)\setminus \{D\}$ such that $z_0\in N_{G_k}(\wD'_0)$; such $\wD'_0$ exists due to $z_0\in Z$.
Observe here that $\wD'_0\neq \wD_0$, since no vertex of $Y$ is in $N_{G_k}(\wD_0)$.
Then the conclusion that $z_0\leq_1 z$ or $z_0\leq_2 z$ for all $z\in Z$ implies that
\begin{equation}\label{eq:Zdone}
Z\subseteq N_{G_k}[m_0]\cup N_{G_k}(\wD'_0)\subseteq \Om_k\cup D.
\end{equation}
Let us now define
\begin{equation}\label{eq:defW}
W = Y\setminus (N_{G_k}[m_0]\cup N_{G_k}(\wD'_0)).
\end{equation}
By~\eqref{eq:Zdone}, we have $W\subseteq Y\setminus Z$.
Now define
\begin{equation}\label{eq:defX}
X=N_{G_k}[p,q,r,t_2,m_0]\cup N_{G_k}(\wD_0)\cup N_{G_k}(\wD'_0).
\end{equation}
Observe that by the construction and Claim~\ref{cl:in-wD}, we have that $W=(N(D)\setminus \{w_1,\ldots,w_k\})\setminus X$, and in particular $W$ and $X$ are disjoint.
Turning back to case when $Z=\emptyset$, in this case we simply define
\begin{equation}\label{eq:defXempty}
W = Y \qquad\textrm{and}\qquad X=N_{G_k}[p,q,r,t_2]\cup N_{G_k}(\wD_0).
\end{equation}
Note that then we still have that $W=(N(D)\setminus \{w_1,\ldots,w_k\})\setminus X$.

We now analyze the structure of $X$ through a series of claims. All of the following hold regardless whether $X$ and $W$ are defined using~\eqref{eq:defW} and~\eqref{eq:defX}, or using~\eqref{eq:defXempty}.

\begin{claim}\label{cl:rest-Om}
It holds that $\Om_k\setminus N(D)\subseteq X$ and $D\subseteq X$.
\end{claim}
\begin{proof}
For the first claim, observe that 
elements of $(\Om_k\setminus N(D))\cap N_{G_k}(\wD_0)$ are included in $N_{G_k}(\wD_0)$, while elements of $\Om_k\setminus (N(D)\cup N_{G_k}(\wD_0))$ are included in $N_{G_k}[t_2]$ by Claim~\ref{cl:dom-t2}.
For the second claim, observe that since $D$ is a mesh and $p,q$ belong to different modules of $\Mod(D)$, every vertex of $D$ is adjacent either to $p$ or $q$, and hence it is included in $N_{G_k}[p,q]$.
\cqed\end{proof}

Hence, by Claim~\ref{cl:rest-Om} and the fact that $W=(N(D)\setminus \{w_1,\ldots,w_k\})\setminus X$, we have that every vertex of $\Om_k\cup D$ is either in $X$ or in $W$.
Since $W\subseteq \Om_k$ and $t_2$ has only neighbors in $\Om_k$, $D$, and $\wD_0$, from~\eqref{eq:defX} or~\eqref{eq:defXempty}, respectively,  we immediately obtain the following.

\begin{claim}\label{cl:partition}
It holds that $\Om_k\cup D\subseteq X\cup W\subseteq \Om_k\cup D\cup \wD_0$.
\end{claim}

This yields the following structural claim about $W$.

\begin{claim}\label{cl:Wunion}
The graph $G_k[W]$ is the union of some collection of connected components of the graph $G_k-X$.
\end{claim}
\begin{proof}
It suffices to show that for every vertex $w\in W$,
every neighbor of $w$ in $G_k$ belongs either to $X$ or to $W$. Observe that since $w\in W\subseteq Y\setminus Z$, the only component of $G_k-\Om_k$ in which $w$ may have neighbors is $D$.
Hence $N_{G_k}(w)\subseteq \Om_k\cup D$, however Claim~\ref{cl:partition} asserts that $\Om_k\cup D\subseteq W\cup X$.
\cqed\end{proof}

From now on we assume that $W$ is non-empty, since the other case will be easy and we will resolve it at the end.
Construct a set $J$ by including one, arbitrarily chosen, vertex from each connected component of $G_k[W]$.
Clearly, $J$ is a non-empty independent set in $G$ by construction.
Since $D$ is assumed to be a mesh, whereas $\Om$ is $I$-free, we may apply Lemma~\ref{lem:cover-tricky} to $D$ and $J$ in $G$ and conclude that
there exist a vertex $w\in D$ and a component $D'\in \cc(G-\Om)$ with $D\neq D'$ such that $J\subseteq N(w)\cup N(D')$.
The following claim gives us a way to recognize those components of $G_k-X$ that are contained in $W$.

\begin{claim}\label{cl:recognize-W}
The set $W$ is the union of vertex sets of those connected components of $G_k-X$ that are not disjoint with $N(w)\cup N(D')$.
\end{claim}
\begin{proof}
Let $C$ be any connected component of $G_k-X$; by Claim~\ref{cl:Wunion} we have that either $C\subseteq W$ or $C\cap W=\emptyset$.
If $C\subseteq W$, then $C$ contains some vertex of $J$ by the definition of $J$, which is contained in $N(w)\cup N(D')$ by the choice of $w$ and $D'$.
On the other hand, suppose that $C\cap W=\emptyset$. 
Since $C\cap X=\emptyset$ by definition, and by Claim~\ref{cl:partition} we have $\Om_k\cup D\subseteq X\cup W$, we infer
that $C$ is entirely contained in some connected component of $G_k-(\Om\cup D)$. However, $N(w)\cup N(D')\subseteq D\cup \Om$, so $C$ contains no vertex of $N(w)\cup N(D')$.
\cqed\end{proof}

Claim~\ref{cl:recognize-W} suggests the following procedure for finding candidates for $D$ conforming to this case.
We shall compute an auxiliary family $\Yy$ defined as follows.
First, iterate through all choices of $s\in V(G)$, $0 \leq k < p$ implying $v=w_{k+1}$, $p,q,r,t_2\in V(G)$, and $\wD_0\in \widehat{\Xx}_k$.
For each choice, construct a collection of candidates for $X$:
one using formula~\eqref{eq:defXempty}, and one for each choice of $m_0\in V(G)$ and $\wD'_0\in \widehat{\Xx}_k$ using formula~\eqref{eq:defX}.
For each candidate $X$, include the set
\begin{equation*}
X\setminus \wD_0
\end{equation*}
in $\Yy$. Observe that this set is equal to $\Om_k\cup D$ in case $W=\emptyset$ by Claim~\ref{cl:partition}.
Finally, for each candidate $X$, 
iterate through all choices of $w\in V(G)$ and $D'\in \Xx$, and define $W$ to be the union of those connected components of $G_k-X$ that contain some vertices of $N(w)\cup N(D')$.
Include the set
\begin{equation*}
(X\cup W)\setminus \wD_0
\end{equation*}
in $\Yy$ and observe that this set is equal to $\Om_k\cup D$ in the remaining case, due to Claims~\ref{cl:recognize-W} and~\ref{cl:partition}.
Apply the algorithm of Lemma~\ref{lem:recover2} to $\Yy$ in the $P_6$-free graph $G_k$, thus obtaining a family $\Ff_{\rec,2}(\Yy)$.
Finally, construct a family $\Gg_6$ by including, for each $\Om_k\in \Ff_{\rec,2}(\Yy)$, the whole family $\cc(G_k-\Om_k)$ in $\Gg_6$.
It follows that 
\begin{align*}
|\Gg_6|\leq& n\cdot |\Ff_{\rec,2}(\Yy)|\leq n^2\cdot |\Yy|\\
\leq& n^2\cdot \sum_{k=0}^{p-1} 2n^7\cdot |\widehat{\Xx}_k|^2\cdot |\Xx|\leq 2n^{12}\cdot |\Xx|^3,
\end{align*}
and $\Gg_6$ contains $D$ provided $t_1,t_2$ conform to this case.


\bigskip

To wrap up, we output the family
\begin{equation*}
\Ff_7(\Xx)=\Xx\cup \Gg_1\cup \Gg_2\cup \Gg_3\cup \Gg_4\cup \Gg_5 \cup \Gg_6.
\end{equation*}
From the obtained bounds on the sizes of families $\Gg_t$ it follows that $|\Ff_7(\Xx)|\leq 11n^{12}\cdot |\Xx|^3$. 
Moreover, due to inclusion of $\Xx$ we have that $\cc(G-\Om)\setminus \{D\}\subseteq \Ff_7(\Xx)$, while the reasoning above also shows that $D\in \Ff_7(\Xx)$.
\end{proof}

Next, we show an auxiliary tool: for any two mesh components we can either recognize any of them, or their union with the PMC.
Note that this lemma does not assume any bounds on the number of neighbor-maximal components of $\Om$, a feature that we will use in the next section.

\begin{lemma}\label{lem:Omplus}
One can in polynomial time compute a family $\Ff_8$ of size at most $5n^9$ such that the following holds. 
Take any $I$-free PMC $\Om$ and suppose there are different components $D_1,D_2\in \cc(G-\Om)$ such that both $D_1$ and $D_2$ are meshes.
Then either $\Ff_8$ contains $D_1$, or $\Ff_8$ contains $D_2$, or $\Ff_8$ contains $\Om\cup D_1\cup D_2$.
\end{lemma}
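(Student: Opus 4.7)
The plan is to case-split based on whether $N(D_1) \cup N(D_2) = \Om$ holds, and in the equality case further analyze the ``wing structure'' provided by Lemma~\ref{lem:wings}. First I would include the family $\Ff_1$ from Lemma~\ref{lem:two-not-whole} in $\Ff_8$, which handles the case $N(D_1) \cup N(D_2) \subsetneq \Om$ since $\Ff_1$ then contains $D_1$ or $D_2$. So assume $N(D_1) \cup N(D_2) = \Om$, let $W_i := N(D_i) \setminus N(D_{3-i})$; both are nonempty by \pmca{}. By Lemma~\ref{lem:wings}, WLOG $W_1$ is complete to $D_1$ (the symmetric case is handled analogously by swapping the roles of $D_1$ and $D_2$).

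Since each $D_i$ is a mesh, I pick $p_i, q_i \in D_i$ belonging to distinct maximal strong modules of $\Mod(D_i)$, obtaining $D_i \subseteq N[p_i, q_i] \subseteq D_i \cup \Om$. Applying Lemma~\ref{lem:covering-general} to the $P_6$-free subgraph $G[D_1 \cup D_2 \cup (N(D_1) \cap N(D_2))]$ with minimal separator $N(D_1) \cap N(D_2)$ and mesh full sides, I obtain $r_1 \in D_1$, $r_2 \in D_2$ with $N(D_1) \cap N(D_2) \subseteq N[\{p_1,q_1,r_1,p_2,q_2,r_2\}]$. Setting $X := \{p_1,q_1,r_1,p_2,q_2,r_2\}$, I have $D_1 \cup D_2 \cup W_1 \cup (N(D_1)\cap N(D_2)) \subseteq N[X] \subseteq D_1 \cup D_2 \cup \Om$. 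If $W_2$ is additionally complete to $D_2$, then $W_2 \subseteq N[p_2] \subseteq N[X]$, giving the exact equality $\Om \cup D_1 \cup D_2 = N[X]$; enumerating $X$ over $V(G)^6$ and adding $N[X]$ to $\Ff_8$ handles this sub-case with $n^6$ candidates.

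The main obstacle is the remaining sub-case where $W_2$ is not complete to $D_2$. Take $v_2 \in W_2$ not complete to $D_2$. By Lemma~\ref{lem:wings} (second statement), for any $w \in W_1$ with $wv_2 \notin E(G)$, $v_2$ would have to be complete to $D_2$, a contradiction; hence $W_1 \subseteq N(v_2)$, and applying the same argument to every $v_2' \in W_2^{nc} := \{u \in W_2 \colon u \text{ not complete to } D_2\}$ yields $W_2^{nc} \subseteq N(w_1)$ for any fixed $w_1 \in W_1$. Thus $\Om \cup D_1 \cup D_2 \subseteq N[X \cup \{w_1\}]$, but the inclusion may be strict: the excess consists of vertices in $N(w_1) \cap D_j$ for components $D_j \in \cc(G-\Om) \setminus \{D_1, D_2\}$ with $w_1 \in N(D_j)$.

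To trim this excess---which is the hard part of the argument---I would enumerate up to three additional witness vertices $s, t, u \in V(G)$; for each $9$-tuple $(X, w_1, s, t, u)$, the candidate is obtained from $N[X \cup \{w_1\}]$ by removing the union of selected connected components of $G - N[X \cup \{w_1\}]$ (those reached from $s, t, u$) together with their adjacent vertices in $N(w_1)$ lying in the same $D_j$. The key structural observation ensuring correctness is: if any bad $D_j$ satisfies $N(D_j) \cup N(D_i) \subsetneq \Om$ for some $i \in \{1,2\}$, then $D_i \in \Ff_1 \subseteq \Ff_8$ by Lemma~\ref{lem:two-not-whole} and we are done; otherwise every bad $D_j$ has $N(D_j) \supseteq W_1 \cup W_2$, which via $P_6$-freeness severely constrains its interaction with $D_1 \cup D_2$ and allows a few witnesses to expose all such $D_j$'s. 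Summing the contributions, $|\Ff_8| \leq |\Ff_1| + n^6 + O(n^9) \leq 5n^9$.
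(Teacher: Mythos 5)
Your easy cases are fine and match the paper's opening moves (include $\Ff_1$ to dispose of $N(D_1)\cup N(D_2)\subsetneq\Om$; use Observation on meshes plus the general Separator Covering Lemma to get six vertices whose closed neighborhood is exactly $\Om\cup D_1\cup D_2$ when nothing of $\Om$ is left over). But the sub-case you yourself flag as ``the hard part'' is not actually proved, and the sketch you give for it breaks in several concrete ways. First, your fallback via Lemma~\ref{lem:two-not-whole} is misapplied: if a bad component $D_j$ has $N(D_j)\cup N(D_i)\subsetneq\Om$, that lemma only guarantees that \emph{one of} $D_j,D_i$ lies in $\Ff_1$ --- it may well be $D_j$, which does not satisfy the conclusion of the lemma (you need $D_1$, $D_2$, or $\Om\cup D_1\cup D_2$). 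Second, the trimming scheme is unsound: the excess is $\bigcup_{j\neq 1,2} N(w_1)\cap D_j$, and the number of such components $D_j$ adjacent to $w_1$ is not bounded by any constant, so three witness vertices selecting three components of $G-N[X\cup\{w_1\}]$ cannot in general expose them all; worse, if some bad $D_j$ is entirely contained in $N(w_1)$ (i.e.\ $w_1$ is complete to $D_j$), then $D_j$ leaves no residual component of $G-N[X\cup\{w_1\}]$ at all, so no witness can reach it and the excess $D_j$ cannot be removed by your rule. Even when a residual exists, deleting only the $N(w_1)$-vertices adjacent to it can miss vertices of $N(w_1)\cap D_j$ with no neighbor in the residual. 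The sentence ``$P_6$-freeness severely constrains its interaction \dots and allows a few witnesses to expose all such $D_j$'s'' is precisely the missing argument.

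A telling sign is that your proposal never uses the hypothesis that $\Om$ is $I$-free, nor the maximality of $I$, while the paper's proof uses both essentially in exactly this hard case: the vertices $p_i$ are chosen inside $I$ (via Observation~\ref{obs:mesh-one-with-I}), the leftover set $Z=\Om\setminus N[X]$ is shown to consist of vertices tricky toward $(D_2,p_2,q_2)$, and the key comparability claim (Claim~\ref{cl:comparable-nei}) produces the sixth vertex of a forbidden $P_6$ from a neighbor of $z\in Z$ in $I$ lying in a third component --- a neighbor which exists only because $z$ is anticomplete to $M^p_2\supseteq I\cap D_2$, $\Om$ is $I$-free, and $I$ is maximal. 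Moreover, in this hard case the paper does not attempt to output $\Om\cup D_1\cup D_2$ at all; it instead recovers $D_1$ itself, as a connected component of $G-W$ or $G-W'$ for carefully built sets $W,W'$ obtained from a minimal-neighborhood choice $y_0$ and two further vertices $s,t$ in $D_2$. So the gap is not a routine detail: the core combinatorial mechanism (tricky vertices, the $I$-based comparability of neighborhoods in $D_2$, and the switch from recovering the union to recovering $D_1$) is absent from your argument.
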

\begin{proof}
Observe that if $N(D_1)\cup N(D_2)\subsetneq \Om$, then either $D_1$ or $D_2$ is included in the family $\Ff_1$, given by Lemma~\ref{lem:two-not-whole}.
Hence, by including $\Ff_1$ in the constructed family $\Ff_8$ we may proceed with the assumption that $N(D_1)\cup N(D_2)=\Om$.
Recall that $|\Ff_1|\leq n^8$.

Our goal is to try to identify the set $\Om\cup D_1\cup D_2$ using the general version of the Separator Covering Lemma (Lemma~\ref{lem:covering-general}).
By Observation~\ref{obs:mesh-one-with-I}, in each $D_i$, for $i=1,2$, there exists a unique proper strong module $M^p_i\in \Mod(D_i)$ such that $I\cap D_i\subseteq M^p_i$ and $I\cap D_i=I\cap M^p_i$ is non-empty.
Pick any $p_1\in I\cap M^p_1$ and $p_2\in I\cap M^p_2$.
Let $H=G[D_1\cup D_2\cup (N(D_1)\cap N(D_2))]$, and observe that $N(D_1)\cap N(D_2)$ is a minimal separator in $H$, with $D_1$ and $D_2$ being full sides.
By applying the Separator Covering Lemma to $H$ we may obtain vertices $q_i,r_i$, for $i=1,2$, 
such that $\{p_i,q_i,r_i\}\subseteq D_i$ and $N(D_1)\cap N(D_2)\subseteq N[p_1,q_1,r_1,p_2,q_2,r_2]$.
Let $M^p_i$, $M^q_i$, $M^r_i$ be the modules of $\Mod(D_i)$ that contain $p_i$, $q_i$, and $r_i$, respectively; note that in the Separator Covering Lemma we choose
$q_i$ so that $M^p_i\neq M^q_i$.
Since $D_i$ is a mesh, for $i=1,2$, we infer that every vertex of $D_i$ is adjacent either to $p_i$ or to $q_i$.
Consequently, if we define
\begin{equation}\label{eq:defXZ}
X=N[p_1,q_1,r_1,p_2,q_2,r_2]\qquad\textrm{and}\qquad Z=\Om\setminus X,
\end{equation}
then we have
\begin{equation*}
\begin{split}
&D_1\cup D_2\cup(N(D_1)\cap N(D_2)) \subseteq X\\
&\subseteq D_1\cup D_2\cup N(D_1)\cup N(D_2)
=D_1\cup D_2\cup \Om.
\end{split}
\end{equation*}
Assume for a moment that $Z$ is empty.
Then we have that $X=D_1\cup D_2\cup \Om$.
In the constructed family $\Ff_8$ we will include the family $\Gg_1$ defined as follows: for each choice of $p_1, q_1, r_1, p_2, q_2, r_2\in V(G)$, include 
$N[p_1,q_1,r_1,p_2,q_2,r_2]$ in $\Gg_1$; 
note that $|\Gg_1|\leq n^6$.
This ensures that in the case when $Z$ is empty, the set $D_1\cup D_2\cup \Om$ is included in $\Ff_8$.
Hence, we may proceed with the assumption that $Z$ is non-empty.

Since $N(D_1) \cup N(D_2) = \Om$, we have that $N(D_1)\setminus N(D_2)$ and $N(D_2)\setminus N(D_1)$ are non-empty.
By Lemma~\ref{lem:wings}, we have that either $N(D_1)\setminus N(D_2)$ is complete to $D_1$, or $N(D_2)\setminus N(D_1)$ is complete to $D_2$.
By symmetry, without loss of generality assume the former. This implies that $N(D_1)\setminus N(D_2)\subseteq X$, by the definition of $X$, and hence
\begin{equation*}
D_1\cup D_2\cup N(D_1)\subseteq X\subseteq D_1\cup D_2\cup \Om.
\end{equation*}
Equivalently, $Z\subseteq N(D_2)\setminus N(D_1)$. 
Note that in particular no vertex of $D_1$ has any neighbor outside of $X$.
We now take a closer look at $Z$.

\begin{claim}\label{cl:Ztricky}
It holds that $Z\subseteq \tricky(D_2,p_2,q_2)$.
\end{claim}
\begin{proof}
Take any $z\in Z$; as we have already argued, $z\in N(D_2)\setminus N(D_1)$.
By the Neighborhood Decomposition Lemma (Lemma~\ref{lem:nei-decomp}) we have that $z$ satisfies one of the conditions: either $z\in N[p_2,q_2]$, or there exists an induced $P_4$ of the form $zD_2D_2D_2$,
or $z\in \tricky(D_2,p_2,q_2)$. Note that neighbors of $p_2$ and $q_2$ are excluded in the construction of $Z$, hence the first alternative cannot happen. Therefore, it remains to exclude the second alternative.

For the sake of contradiction, suppose there is an induced $P_4$, say $Q$, of the form $zD_2D_2D_2$. 
Let $D_\Om$ be the unique connected component of $G-N(D_2)$ that contains $\Om\setminus N(D_2)$, which exists by Lemma~\ref{lem:pmc-minseps}.
Recall that $D_\Om$ is full to $N(D_2)$, so in particular $z$ has neighbors in $D_\Om$.
Since $N(D_1)\setminus N(D_2)$ is non-empty and complete to $D_1$, we have that $D_1\subseteq D_\Om$. However, $z$ has no neighbors in $D_1$ due to $Z\subseteq N(D_2)\setminus N(D_1)$, so we conclude that
$z$ is adjacent, but not complete to $D_\Om$. It follows that $Q$ can be extended by two vertices from $D_\Om$ to an induced $P_6$ in $G$, a contradiction.
\cqed\end{proof}

\begin{claim}\label{cl:Zcomplete}
It holds that $Z$ is complete to $N(D_1)\setminus N(D_2)$.
\end{claim}
\begin{proof}
Take any $z\in Z$.
By Claim~\ref{cl:Ztricky} we have $z\in \tricky(D_2,p_2,q_2)$, so in particular $z$ is not complete to $D_2$.
Then from the second claim of Lemma~\ref{lem:wings} it follows that $z$ has to be complete to $N(D_1)\setminus N(D_2)$. 
\cqed\end{proof}

Let us now define
\begin{equation*}
Y = N(D_1)\setminus (N(V(G)\setminus X)\cup N[p_2,q_2,r_2]).
\end{equation*}
Observe that since $Z\subseteq V(G)\setminus X$ and $Z$ is complete to $N(D_1)\setminus N(D_2)$ by Claim~\ref{cl:Zcomplete}, we have that $Y\subseteq N(D_1)\cap N(D_2)$.
The next claim is crucial: the neighborhoods of vertices of $Y$ and of $Z$ within $D_2$ are always comparable.


\begin{claim}\label{cl:comparable-nei}
For each $y\in Y$ and each $z\in Z$, either $N(y)\cap D_2\subseteq N(z)\cap D_2$ or $N(y)\cap D_2\supseteq N(z)\cap D_2$.
\end{claim}
\begin{proof}
Observe first that since $z\in Z\subseteq V(G)\setminus X$ and in the construction of $Y$ we excluded all vertices with neighbors in $V(G)\setminus X$, it follows that $y$ and $z$ are non-adjacent.

For the sake of contradiction, suppose there are some $m_y\in (N(y)\setminus N(z))\cap D_2$ and $m_z\in (N(z)\setminus N(y))\cap D_2$.
Since $z\in Z\subseteq \tricky(D_2,p_2,q_2)$, the neighborhood of $z$ within $D_2$ is the union of some collection of modules from $\Mod(D_2)$.
Since $m_z\in N(z)$ and $m_y\notin N(z)$, it follows that $m_z$ must be within some module from this collection, while the module in which $m_y$ resides is outside of this collection.
In particular, $m_y$ and $m_z$ reside in different modules of $\Mod(D_2)$, so since $D_2$ is a mesh, we conclude that $m_y$ and $m_z$ are adjacent.

Since $z\in \tricky(D_2,p_2,q_2)$, we have that $z$ has no neighbors in $M^p_2$, so in particular no neighbors in $I\cap D_2$.
Since $\Om$ is $I$-free, by the maximality of $I$ we infer that $z$ has to have a neighbor in $I$, which must reside in some component $D'\in \cc(G-\Om)$ other than $D_2$.
In particular, $D'\neq D_1$ since $z\notin N(D_1)$.
Let $u$ be any neighbor of $z$ in $D'$. Since $X\subseteq D_1\cup D_2\cup \Om$, we have that $D'\subseteq V(G)\setminus X$.
As $y\in Y$ and we excluded neighbors of $V(G)\setminus X$ in the construction of $Y$, we conclude that $u$ is not adjacent to $y$.
Let now $v$ be any neighbor of $y$ in $D_1$. Then from all the assertions presented above it follows that $(u,z,m_z,m_y,y,v)$ is an induced $P_6$ in $G$, a contradiction.
\cqed\end{proof}

Assume for a moment that $Y$ is empty.
Denote $$W=N(V(G)\setminus X)\cup N[p_2, q_2, r_2].$$
Then $Y=\emptyset$ implies that $W$ contains $N(D_1)$.
On the other hand, since $X\supseteq D_1\cup N(D_1)$, we have that $W$ is disjoint from $D_1$, so $D_1$ is among the connected components
of $G-W$.
Therefore, we construct a family $\Gg_2$ as follows: for each choice of $p_1, q_1, r_1, p_2, q_2, r_2\in V(G)$, 
define $X=N[p_1,q_1,r_1,p_2,q_2,r_2]$ and $W=N(V(G)\setminus X)\cup N[p_2,q_2,r_2]$, and include $\cc(G-W)$
in $\Gg_2$. Observe that thus $|\Gg_2|\leq n^7$ and $\Gg_2$ contains $D_1$ in case $Y$ is empty.
Hence, from now on we may proceed with the assumption that $Y$ is non-empty.

Let us pick $y_0\in Y$ that has an inclusion-wise minimal neighborhood within $D_2$, i.e., $N(y_0)\cap D_2$ is inclusion-wise minimal among elements of $Y$.
Further, let $s$ be neighbor of $y_0$ in $D_2$.
Let us define 
\begin{equation}\label{eq:defXpZp}
X'=N[p_1, q_1, r_1, p_2, q_2, r_2, s]\qquad\textrm{and}\qquad Z'=\Om\setminus X'.
\end{equation}
By definition we have $X'\supseteq X$ and $Z'\subseteq Z$, so in particular $Z'\subseteq N(D_2)\setminus N(D_1)$ and $Z'\subseteq \tricky(D_2,p_2,q_2)$, by Claim~\ref{cl:Ztricky}.

Observe that if we have $Z'=\emptyset$ then $X'=D_1\cup D_2\cup \Om$, similarly as for $Z$.
Therefore, by including into $\Ff_8$ the family $\Gg_3$ constructed by taking $N[p_1,q_1,r_1,p_2,q_2,r_2,s]$
for each choice of $p_1, q_1, r_1, p_2, q_2, r_2, s\in V(G)$, we are certain that $\Ff_8$ includes $D_1\cup D_2\cup \Om$ in case $Z'$ is empty.
Note that $|\Gg_3|\leq n^7$. Hence, we may proceed with the assumption that $Z'$ is non-empty.

Pick any $z_0\in Z'$ and let $t$ be any neighbor of $z_0$ in $D_2$.
Since $z_0\in Z\subseteq \tricky(D_2,p,q)$, the neighborhood of $z_0$ within $D_2$ is a collection of proper strong modules from $\Mod(D_2)$.
Clearly, $t$ is within one of these modules, whereas $s$ is not, since $s$ is not a neighbor of $z_0$ by the definition of $Z'$.
We infer that $s$ and $t$ are contained in different modules of $\Mod(D_2)$, so as $D_2$ is a mesh, $s$ and $t$ are adjacent.

Define
\begin{equation}\label{eq:defWp}
W'=N(V(G)\setminus X)\cup N[p_2,q_2,r_2,s,t].
\end{equation}
The next claim shows that $W'$ is sufficient to recognize $D_1$.

\begin{claim}\label{cl:Wp}
It holds that $W'\supseteq N(D_1)$ and $W'\cap D_1=\emptyset$.
\end{claim}
\begin{proof}
Observe that $W'=W\cup N[s,t]$.
Since $s,t\in D_2$, from $W\cap D_1=\emptyset$ (which follows from $D_1\cup N(D_1)\subseteq X$) we infer that $W'\cap D_1=\emptyset$ as well.
Moreover, since $Y=N(D_1)\setminus W$, to prove that $W'\supseteq N(D_1)$ it suffices to show that $Y\subseteq W'$.

Recall that by Claim~\ref{cl:comparable-nei}, for each $z'\in Z'$ we have either $N(z')\cap D_2\subseteq N(y_0)\cap D_2$ or $N(z')\cap D_2\supseteq N(y_0)\cap D_2$.
However, we have that $s$ is a neighbor of $y_0$, whereas no vertex of $Z'$ is adjacent to $s$.
We conclude that for each $z'\in Z'$ we have $N(z')\cap D_2\nsupseteq N(y_0)\cap D_2$, which implies that 
\begin{equation}\label{eq:zpy0}
N(z')\cap D_2\subsetneq N(y_0)\cap D_2.
\end{equation}
Take now any $y\in Y$. We claim that $y\in N[t]$, which would imply that $y\in W'$.
Assume otherwise: $y$ is non-adjacent to $t$.
Since $z_0$ is adjacent to $t$ by definition, we infer that $N(y)\cap D_2\nsupseteq N(z_0)\cap D_2$.
By Claim~\ref{cl:comparable-nei} again, we infer that
\begin{equation}\label{eq:yz0}
N(y)\cap D_2\subsetneq N(z_0)\cap D_2.
\end{equation}
By combining~\eqref{eq:zpy0} for $z'=z_0$ and~\eqref{eq:yz0}, we conclude that
\begin{equation*}
N(y)\cap D_2\subsetneq N(y_0)\cap D_2.
\end{equation*}
This is a contradiction with the choice of $y_0$.
\cqed\end{proof}

Thus, Claim~\ref{cl:Wp} suggests the following procedure for constructing a family $\Gg_4$ encompassing the remaining case.
For each choice of $p_1,q_1,r_1,p_2,q_2,r_2,s,t\in V(G)$,
define $X$ and $W'$ as above, and include the whole family $\cc(G-W')$ in $\Gg_4$.
Observe that if the situation conforms to the current case (i.e., $Z'\neq \emptyset$), then $D_1$ is included in $\Gg_4$ constructed in this manner, by Claim~\ref{cl:Wp}.
Moreover, $|\Gg_4|\leq n^9$.

It now remains to output the family
$$\Ff_8=\Ff_1\cup \Gg_1\cup \Gg_2\cup \Gg_3\cup \Gg_4,$$
and observe that the upper bounds presented above imply that $|\Ff_8|\leq 5n^9$.
\end{proof}

We are ready to finish the proof of Lemma~\ref{lem:summary}.

\begin{proof}[Proof of Lemma~\ref{lem:summary}]
First, let $\Ff_2$ be the family given by Lemma~\ref{lem:one-in-three-see}. 
Recall that $|\Ff_2|\leq 2n^9$ and $\Ff_2$ contains all but at most two connected components of $G-\Om$.
In particular, since $\Om$ has at least three neighbor-maximal components of $G-\Om$, we have that $\Ff_2$ contains at least one such component.

Construct the family
$$\Gg=\Ff_2\cup \Ff_3(\Ff_2)\cup \Ff_8,$$
where $\Ff_3(\Ff_2)$ is the family given by Lemma~\ref{lem:non-clique-see} for $\Ff_2$, whereas $\Ff_8$ is the family given by Lemma~\ref{lem:Omplus}.
Recall that $|\Ff_3(\Ff_2)|\leq n+2n^4\cdot |\Ff_2|\leq 3n^{13}$ and $|\Ff_8|\leq 5n^9$, thus $|\Gg|\leq 10n^{13}$.
Observe that if we have that some $D\in \cc(G-\Om)$ is not a mesh, then either $\Ff_2$ contains $D$,
or $\Ff_2$ contains some neighbor-maximal component $D_0$ different from $D$.
Consequently, in the latter case we conclude by Lemma~\ref{lem:non-clique-see} that $D\in \Ff_3(\Ff_2)$.
In summary, $\Gg$ contains every component $D\in \cc(G-\Om)$ that is not a mesh.

Since $\Gg\supseteq \Ff_2$, we also have that $\Gg$ contains all but at most two connected components of $G-\Om$.
Suppose first that $\Gg$ contains all connected components of $G-\Om$. 
Then the family $\Ff_{\rec,1}(\Gg)$, given by Lemma~\ref{lem:recover1} for $\Gg$, contains $\Om$.
Hence, by including $\Ff_{\rec,1}(\Gg)$ in $\Ff^1_9$ we cover this case.
Note that 
$$|\Ff_{\rec,1}(\Gg)|\leq 3n^6\cdot|\Gg|^3\leq 3\cdot 10^3\cdot n^{45}.$$

Next, suppose that $\Gg$ contains all but one connected component of $G-\Om$, say $D$.
Then the family $\Ff_7(\Gg)$ given by Lemma~\ref{lem:monster} for $\Gg$ contains all the components of $G-\Om$.
Consequently, the family $\Ff_{\rec,1}(\Ff_7(\Gg))$, given by Lemma~\ref{lem:recover1} for $\Ff_7(\Gg)$, contains $\Om$.
Hence, by including $\Ff_{\rec,1}(\Ff_7(\Gg))$ in $\Ff^1_9$ we cover this case as well.
Note that 
\begin{align*}
|\Ff_{\rec,1}(\Ff_7(\Gg))|\leq& 3n^6\cdot |\Ff_7(\Gg)|^3\\
\leq& 3n^6\cdot 11^3n^{36}\cdot |\Gg|^9\\
\leq& 5 \cdot 10^{3+9}\cdot n^{6+36+9\cdot 13}\\
\leq& 5 \cdot 10^{12}\cdot n^{159}.
\end{align*}

Finally, suppose that $\Gg$ contains all but two connected components $D_1,D_2$ of $G-\Om$.
Observe that $D_1$ and $D_2$ have to be meshes, since otherwise they would be already included in $\Gg$.
Since $\Gg$ contains some neighbor-maximal component $D_0$ other then $D_1$ or $D_2$,
the family $\Ff_6(\Gg)$ given by Lemma~\ref{lem:mesh-see-sticking} contains some sets $D_1^+$ and $D_2^+$ that are fuzzy versions of $D_1$ and $D_2$, respectively.
Note here that $|\Ff_6(\Gg)|\leq n^3\cdot |\Gg|\leq 10n^{16}$.

Since we included $\Ff_8$ in family $\Gg$, and both $D_1$ and $D_2$ are meshes, by Lemma~\ref{lem:Omplus} we infer that $\Gg$ contains either $D_1$, or $D_2$, or $\Om\cup D_1\cup D_2$.
However, we assumed that neither $D_1$ nor $D_2$ is included in $\Gg$, so $\Om\cup D_1\cup D_2\in \Gg$.
It then follows that we may take $\Ff^2_9=\Gg\times \Ff_6(\Gg)\times \Ff_6(\Gg)$; note here that $|\Ff^2_9|\leq 10^3\cdot n^{42}\leq 10^{13}\cdot n^{159}$. 
On the other hand, to cover the previous cases we take $\Ff^1_9=\Ff_{\rec,1}(\Gg)\cup \Ff_{\rec,1}(\Ff_7(\Gg))$, which is a set of size at most $10^{13}\cdot n^{159}$.
\end{proof}

\section{Capturing by modifying}\label{sec:modifying}
In the previous section we have shown how to enumerate a polynomial-size family of PMCs in the graph which encompasses many cases of how a PMC used in an $I$-free minimal chordal completion can look like.
However, not all PMCs that are potentially necessary have been enumerated so far.
In this section our goal is to augment this enumeration by some additional PMCs with the following property: there {\em{exists}} some $I$-free minimal chordal completion whose maximal cliques
are among the enumerated PMCs. We stress that now we do not aim at recognizing all PMCs, but rather we would like to recognize a rich enough family of PMCs so that any chordal completion
can be modified to a one where maximal cliques are among the enumerated ones. We first need to understand formally how we are going to modify a chordal completion.

\newcommand{\Gm}{\Gamma}
\newcommand{\inte}{\mathsf{int}}
\newcommand{\torso}{\mathsf{torso}}
\newcommand{\cl}{\mathsf{cl}}

\subsection{Potential segments and modifying completions}

We start with the definition of a potential segment, which is essentially a portion of the graph that may correspond to a subtree in a clique tree of a minimally completed chordal graph, and related notions.
Every PMC is a (trivial) potential segment, but we will only be able to recognize larger potential segments and complete them arbitrarily. See also Figure~\ref{fig:segment1}.

\begin{figure}[tb]
\begin{center}
\includegraphics{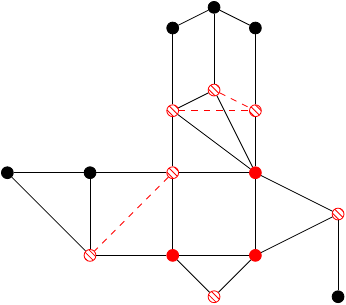}
\end{center}
\caption{An example of a potential segment of the form $N[A]$ for connected set $A$.
The vertices of $A$ are solid red, the vertices of $N(A)$ are stripped red.
The dashed red lines are the closure edges of the torso of $G[A]$.
Note that the interior of this segment consists not only of $A$, but also of the 
vertex of $N(A)$ at the bottom.}\label{fig:segment1}
\end{figure}

\begin{definition}
Let $G$ be a graph. A vertex subset $\Gm\subseteq V(G)$ is called a {\em{potential segment}} if for every $D\in \cc(G-\Gm)$ we have that $N(D)$ is a minimal separator in $G$.
The {\em{interior}} of $\Gm$, denoted $\inte(\Gm)$ is the set $V(G)\setminus N[V(G)\setminus \Gm]$, that is, $\inte(\Gm)$ comprises vertices of $\Gm$ that do not have neighbors outside of $\Gm$.
The {\em{torso}} of $\Gm$, denoted $\torso(\Gm)$ is the subgraph obtained from $G[\Gm]$ by turning $N(D)$ into a clique for each $D\in \cc(G-\Gm)$.
The set of {\em{closure edges}} of $\Gm$, denoted $\cl(\Gm)$, comprises the edges added in the construction of the torso, i.e., $\cl(\Gm)=E(\torso(\Gm))\setminus E(G[\Gm])$.
A chordal completion $F$ of $G$ {\em{respects}} $\Gm$ if for each $D\in \cc(G-\Gm)$, every edge of $F$ that has one endpoint in $D$, has the second endpoint in $N_G[D]$.
\end{definition}

\begin{lemma}\label{lem:segment-NA}
If $A \subseteq V(G)$ is a nonempty set of vertices such that $G[A]$ is connected, then $N[A]$
is a potential segment in $G$ that contains $A$ in its interior.
\end{lemma}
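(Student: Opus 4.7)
The plan is to verify the two assertions directly from the definitions, with essentially no machinery beyond basic properties of neighborhoods.

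First, for the interiority claim, I would note that if $v \in A$, then every neighbor of $v$ lies in $N[A]$ by the very definition of $N[A]$. Hence $v$ has no neighbors outside $N[A]$, which means $v \notin N[V(G) \setminus N[A]]$ and therefore $v \in \inte(N[A])$. So $A \subseteq \inte(N[A])$.

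For the potential segment property, fix a component $D \in \cc(G - N[A])$; the goal is to show $N(D)$ is a minimal separator of $G$, i.e.\ to exhibit two connected components of $G - N(D)$ that are full to $N(D)$. First observe that $N(D) \subseteq N(A)$ and $N(D) \cap A = \emptyset$: indeed, any neighbor of $D$ must lie in $N[A]$ (since $D$ is disjoint from $N[A]$ and thus any edge leaving $D$ enters $N[A]$), and it cannot lie in $A$ because then $D$ would meet $N(A) \subseteq N[A]$. Thus $N(D) \subseteq N(A) \setminus A$. The component $D$ itself is trivially a full component of $G - N(D)$ by definition of $N(D)$.

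For the second full component, since $G[A]$ is connected and $A$ is disjoint from $N(D)$, there exists a single component $A'$ of $G - N(D)$ with $A \subseteq A'$. This $A'$ is full to $N(D)$: any $v \in N(D) \subseteq N(A)$ has a neighbor in $A$, hence in $A'$. Finally, $A' \neq D$, because $A' \supseteq A$ while $A \cap D \subseteq A \cap (V(G) \setminus N[A]) = \emptyset$. Hence $D$ and $A'$ are two distinct full components of $G - N(D)$, which proves $N(D)$ is a minimal separator in $G$, as required.

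There is no real obstacle here; the only point that requires a moment of care is the containment $N(D) \subseteq N(A) \setminus A$, which rules out the degenerate situation where $N(D)$ would meet $A$ and possibly disconnect it. Once that is settled, the existence of two full components follows immediately.
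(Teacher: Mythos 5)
Your proof is correct and follows essentially the same route as the paper's: you fix $D \in \cc(G-N[A])$, observe $N(D) \subseteq N[A]$ is disjoint from $A$, and exhibit $D$ and the component containing $A$ as two full components of $G-N(D)$. The only difference is that you spell out the inclusion $N(D) \subseteq N(A)\setminus A$ explicitly where the paper treats it as clear, so nothing is missing.
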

\begin{proof}
Consider a connected component $D \in \cc(G-N[A])$ and let $S = N(D)$.
Clearly, $D \in \cc(G-S)$ and $D$ is full to $S$. Furthermore, since $G[A]$ is connected
and $N(D) \cap A = \emptyset$, there exists a connected component $D_A \in \cc(G-S)$ that contains
$A$. Since $S \subseteq N[A]$, we have that $D_A$ is full to $S$, concluding the proof that
$S$ is a minimal separator in $G$.
As the choice of $D$ is arbitrary, we have that $N[A]$ is a potential segment; the fact that it contains $A$ in its interior is straightforward.
\end{proof}
For a set $A$ as in Lemma~\ref{lem:segment-NA} (i.e., nonempty set of vertices inducing
a connected subgraph of $G$), we denote $\partial A = N(V(G) \setminus N[A])$.

We now explore the connection between minimal chordal completions of $G$ that respect a segment $\Gm$ and minimal chordal completions of the torso of $\Gm$.
Intuitively, $\Gm$ serves as a separate piece where the chordal completion can be chosen independently of the rest of the graph, provided we look only at completions that respect $\Gm$.
The first check is very easy: any minimal completion that respects $\Gm$ has to turn the subgraph induced by $\Gm$ into a supergraph of its torso, and neighborhoods of components stay minimal separators.

\begin{lemma}\label{lem:segments-basics}
Suppose $G$ is a graph, $\Gm$ is a potential segment in $G$, and $F$ is a chordal completion of $G$ that respects $\Gm$. 
Denote $H=G+F$. Then the following assertions hold:
\begin{itemize}
\item The family $\cc(G-\Gm)$, treated as a family of vertex subsets, is equal to $\cc(H-\Gm)$.
\item For each $D\in \cc(G-\Gm)$, $N(D)$ is still a minimal separator in $H$ with $D$ being still a component of $H-N(D)$ full to $N(D)$.
\item It holds that $F\supseteq \cl(\Gm)$, that is, every closure edge of $\Gm$ has to be included in $F$.
\end{itemize}
\end{lemma}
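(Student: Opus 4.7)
The plan is to peel off the three bullets in order, with each one relying on the previous. The common workhorse is the definition of \emph{respecting}: for every $D \in \cc(G-\Gm)$, the edges of $F$ incident to $D$ are confined to $N_G[D]$. This means $F$ cannot create any ``long-range'' interactions across the segment boundary.

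For the first bullet, I would argue that no edge of $F$ can connect two distinct components $D_1, D_2$ of $G-\Gm$: such an edge, having an endpoint in $D_1$, would need its other endpoint in $N_G[D_1]\subseteq D_1\cup \Gm$, but its endpoint in $D_2$ lies outside $\Gm$ and outside $D_1$. Since no pair of components of $G-\Gm$ is merged, and each single component is still connected in $H$ (edges only get added), the partition of $V(G)\setminus \Gm$ into components is unchanged. For the second bullet, fix $D \in \cc(G-\Gm)$. The same confinement argument shows $N_H(D) = N_G(D)$, so $D$ is a component of $H - N_G(D)$. Since $\Gm$ is a potential segment, $N_G(D)$ is a minimal separator in $G$, so there exists another component $D' \in \cc(G-N_G(D))$, disjoint from $D$, that is full to $N_G(D)$ in $G$. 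Fullness is preserved under edge additions, so $D'$ lies in a component of $H - N_G(D)$ that is full to $N_G(D)$ in $H$; this component is distinct from $D$ (whose component equals $D$ itself). Two full components give the minimal separator property in $H$.

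The third bullet then follows almost immediately from Lemma~\ref{lem:chordal-minsep}: in the chordal graph $H$, the minimal separator $N_G(D)$ is a clique. Hence every pair of vertices in $N_G(D)$ is joined by an edge of $E(G) \cup F$, and the pairs that are nonedges of $G[\Gm]$ — i.e.\ exactly the elements of $\cl(\Gm)$ contributed by $D$ — must come from $F$. Taking the union over all $D \in \cc(G-\Gm)$ gives $F \supseteq \cl(\Gm)$.

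I do not expect a real obstacle here; the argument is essentially structural bookkeeping. The one spot that requires a moment of care is the second bullet, where I must justify that $D$ does not get absorbed into some larger component of $H - N_G(D)$ and that the witnessing second full component $D'$ stays separated from $D$ in $H$. Both reduce to the respecting property: every $F$-edge touching $D$ stays inside $N_G[D]$, so nothing in $F$ can connect $D$ either to $D'$ or to the interior of another component of $G-\Gm$. Once this is cleanly stated, the rest is direct.
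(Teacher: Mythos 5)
Your proof is correct and follows essentially the same route as the paper: the first two bullets from the confinement of $F$-edges forced by the respecting property (with the second full component of $N(D)$ in $G$ persisting inside some component of $H-N(D)$ distinct from $D$), and the third bullet from the fact that minimal separators of the chordal graph $H$ are cliques, so all nonedges of $G$ inside each $N(D)$ must lie in $F$. No gaps worth noting.
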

\begin{proof}
The first assertion follows immediately from the assumption that $F$ respects $\Gm$.
For the second assertion, observe that $D$ remains a connected component of $H-N(D)$ due to $F$ respecting $\Gm$, and of course it remains full to $N(D)$.
Let $D'$ be some other component of $G-N(D)$ that is full to $N(D)$ in $G$, which exists due to $N(D)$ being a minimal separator in $G$.
Then $D'$ is contained in some component $D''$ of $H-N(D)$, which must be different from $D$.
Then $D''$ is also full to $N(D)$ in $H$, so $N(D)$ remains a minimal separator in $H$.
For the third assertion, as $H$ is chordal and $N(D)$ is a minimal separator in $H$, 
for each $D\in \cc(G-\Gm)$ we have that $N(D)$ is a clique in $H$ from Lemma~\ref{lem:chordal-minsep}.
It follows that $H$ is a supergraph of $\torso(\Gm)$ and $F\supseteq \cl(\Gm)$.
\end{proof}

We now formalize the intuition that a potential segment behaves independently w.r.t. completions respecting it.

\begin{figure}[tb]
\begin{center}
\includegraphics{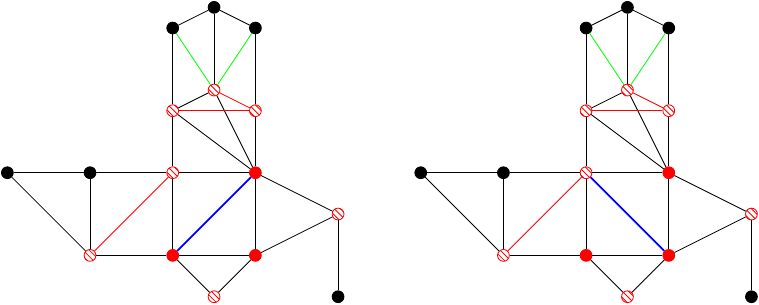}
\end{center}
\caption{Minimal completions inside segments can be chosen independently of the rest of the graph,
  as long as the closure edges of the torso are kept in the completion. 
  Here, the minimal completion consists of colorful edges (green, red, blue), and the blue completion
  of the segment can be chosen independently of the green completion outside the segment, 
  while keeping the red closure edges of the segment.}\label{fig:segment2}
\end{figure}

\begin{definition}\label{def:subst}
Let $G$ be a graph, $\Gm$ be a potential segment in $G$, and $F$ be a chordal completion of $G$ that respects $\Gm$.
Further, let $F_\Gm$ be a chordal completion of $\torso(\Gm)$.
Define {\em{$F$ with $\Gm$ filled using $F_\Gm$}}, denoted $F[\Gm\to F_\Gm]$, to be the set constructed from $F$ as follows: remove from $F$ all the edges of $(F\cap \binom{\Gm}{2})\setminus \cl(\Gm)$
and add all the edges of $F_\Gm$.
\end{definition}

\begin{lemma}\label{lem:subst}
In the setting of Definition~\ref{def:subst}, $F[\Gm\to F_\Gm]$ is also a chordal completion of $G$. 
Moreover, if $F$ is a minimal chordal completion of $G$ and $F_\Gm$ is a minimal chordal completion of $\torso(\Gm)$, then $F[\Gm\to F_\Gm]$ is a minimal chordal completion of $G$.
\end{lemma}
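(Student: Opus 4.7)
My plan is to first establish chordality of $H' := G + F[\Gm \to F_\Gm]$ via a gluing argument, and then deduce minimality by a case analysis on the position of a hypothetically redundant edge.

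For chordality, I decompose $H'$ as the edge-union of the pieces $A_0 := H'[\Gm]$ and $A_D := H'[D \cup N_G(D)]$ for $D \in \cc(G - \Gm)$. This is a valid decomposition because $F$ respects $\Gm$, so $F[\Gm \to F_\Gm]$ does as well, and every edge of $H'$ therefore either lies inside $\Gm$ or within some $D \cup N_G(D)$. The piece $A_0$ equals $\torso(\Gm) + F_\Gm$: by Lemma~\ref{lem:segments-basics} we have $\cl(\Gm) \subseteq F$, and by Definition~\ref{def:subst} the edges of $F[\Gm \to F_\Gm]$ inside $\Gm$ are exactly $\cl(\Gm) \cup F_\Gm$; hence $A_0$ is chordal by the choice of $F_\Gm$. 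Each piece $A_D$ equals $H[D \cup N_G(D)]$, because the substitution leaves edges with an endpoint in $D$ intact while edges inside $N_G(D)$ form the full clique in both $H$ and $H'$ (by Lemma~\ref{lem:segments-basics} on one side and via $\cl(\Gm)$ on the other); thus $A_D$ is chordal as an induced subgraph of $H$. Finally, any two pieces intersect in a vertex set contained in some $N_G(D)$, which is a clique in both. The standard fact that chordal graphs glued along clique separators yield a chordal graph now delivers chordality of $H'$.

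For minimality, suppose some edge $e \in F[\Gm \to F_\Gm]$ could be removed while preserving chordality. If both endpoints of $e$ lie in $\Gm$, then applying Lemma~\ref{lem:segments-basics} to the chordal completion $F[\Gm \to F_\Gm] \setminus \{e\}$ (which still respects $\Gm$) forces $\cl(\Gm) \subseteq F[\Gm \to F_\Gm] \setminus \{e\}$, hence $e \in F_\Gm$; restricting to $\Gm$ then shows that $\torso(\Gm) + F_\Gm \setminus \{e\}$ is chordal, contradicting minimality of $F_\Gm$. Otherwise $e$ has an endpoint in some component $D$, so $e \in F$. In this case I claim $G + F \setminus \{e\}$ is chordal, contradicting minimality of $F$. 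To prove the claim, I rerun the gluing argument for $G + F \setminus \{e\}$: the piece $(G + F \setminus \{e\})[D \cup N_G(D)]$ coincides with $(H' \setminus \{e\})[D \cup N_G(D)]$ --- again because $N_G(D)$ is forced complete on both sides and edges touching $D$ are unaffected by the substitution --- and is therefore chordal as an induced subgraph of the chordal graph $H' \setminus \{e\}$; the remaining pieces are, as before, induced subgraphs of $H$, hence chordal; and pairwise intersections are still cliques.

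The main obstacle is this second case of the minimality argument: naively, removing $e$ from the chordal graph $H$ might create a hole, so chordality of $(G + F \setminus \{e\})[D \cup N_G(D)]$ cannot be directly inherited from $H$. The resolution is to inherit it instead from $H' \setminus \{e\}$, and to show that the $D \cup N_G(D)$-pieces of $G + F \setminus \{e\}$ and $H' \setminus \{e\}$ are in fact identical graphs. This identification hinges on the observation that any chordal completion respecting $\Gm$ must make every $N_G(D)$ a clique, so the two alternative fillings inside $\Gm$ agree on each $N_G(D)$ --- the only region inside a single piece where the substitution could have produced a difference.
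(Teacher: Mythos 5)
Your chordality argument is fine: gluing the pieces $\torso(\Gm)+F_\Gm$ and $H[N[D]]$, $D\in\cc(G-\Gm)$, along the cliques $N(D)$ is a legitimate alternative to the paper's direct hole-chasing argument, and your key identification $H'[N[D]]=H[N[D]]$ (both complete $N(D)$ and agree on all edges meeting $D$, since $F_\Gm\subseteq\binom{\Gm}{2}$ and the substitution never touches edges with an endpoint in $D$) is correct.

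The minimality half, however, has a genuine gap. The paper defines a minimal chordal completion as one admitting \emph{no proper subset} that is a chordal completion, whereas your argument only shows that no \emph{single} edge of $F[\Gm\to F_\Gm]$ can be deleted while preserving chordality. These two statements are not trivially equivalent: chordality is not monotone under adding edges (a chordal subset $F''\subsetneq F'$ does not let you add back the missing edges one at a time through chordal intermediate graphs — already a $P_4$ plus one edge forming a $C_4$ shows this), so "no single fill edge is removable" implies subset-minimality only via the classical characterization of minimal triangulations (Rose–Tarjan–Lueker: every fill edge is the unique chord of a $4$-cycle), a nontrivial fact that the paper neither states, proves, nor cites. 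The paper's own proof is structured precisely to avoid this: it takes an arbitrary chordal completion $F''\subseteq F'$, first pins down $F''\cap\binom{\Gm}{2}=\cl(\Gm)\cup F_\Gm$ by restricting to $\Gm$ and using minimality of $F_\Gm$, and then — this is the idea your write-up is missing — applies the \emph{first} part of the lemma to $F''$ with the segment refilled by the original filling $F^0_\Gm=(F\cap\binom{\Gm}{2})\setminus\cl(\Gm)$, obtaining a chordal completion $F'''=F''[\Gm\to F^0_\Gm]\subseteq F$; minimality of $F$ then forces $F'''=F$ and hence $F''=F'$ outside $\binom{\Gm}{2}$ as well. Your case analysis could be repaired along the same lines (your gluing machinery does prove that $F'''$ is a completion, and your inside-$\Gm$ case works verbatim for an arbitrary subset), or by explicitly invoking the single-edge characterization, but as written the reduction of minimality to single-edge deletions is an unjustified nontrivial step.
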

\begin{proof}
Denote $H=G+F$, $F'=F[\Gm\to F_\Gm]$, and $H'=G+F'$. We first show that $F'$ is a chordal completion, that is, $H'$ is chordal.
Suppose, for the sake of contradiction, that $H'$ contains some induced cycle $C$ of length at least $4$.
Observe that $C$ cannot be entirely contained in $H'[\Gm]$, since $H'[\Gm]$ is equal to $\torso(\Gm)+F_\Gm$, which is chordal.
Hence, $C$ contains a vertex $u$ of some connected component $D\in \cc(G-\Om)$. Observe that the subgraphs induced by $N[D]$ in $H$ and in $H'$ are equal, and $H[N[D]]$ is chordal because $H$ is chordal.
Hence, $C$ cannot be entirely contained in $N[D]$, so it contains some vertex $v$ from $V(G)\setminus N[D]$. From the construction of $H'$ it follows that the two paths contained in $C$ that connect $u$ and $v$
have to pass through $N(D)$, and hence $C$ has a pair of non-consecutive vertices contained in $N(D)$. However, $N(D)$ is a clique in $H'$, a contradiction to the assumption that $C$ is an induced cycle in $H'$.

Next, we prove that $F'$ is minimal provided $F$ and $F_\Gm$ are minimal as in the lemma statement.
Take any $F''\subseteq F'$ that is also a chordal completion.
By the construction we have that $F'$ respects $\Gm$, hence so does $F''$ as well.
By Lemma~\ref{lem:segments-basics}, we have that both $F'$ and $F''$ contain the set of closure edges $\cl(\Gm)$.

Since $F''\subseteq F'$, we have that $\cl(\Gm)\subseteq F''\cap \binom{\Gm}{2}\subseteq \cl(\Gm)\cup F_\Gm$. 
Hence $F''\setminus \cl(\Gm)$ is a chordal completion of $\torso(\Gm)$ that is a subset of $F_\Gm$. 
By the minimality of $F_\Gm$ we conclude that $F''\cap \binom{\Gm}{2}=\cl(\Gm)\cup F_\Gm=F'\cap \binom{\Gm}{2}$.

Denote $F^0_\Gm=(F\cap \binom{\Gm}{2})\setminus \cl(\Gm)$. Since $F$ is a chordal completion of $G$, we have that $F^0_\Gm$ is a chordal completion of $\torso(\Gm)$.
From the first point of the lemma, which we have already proved, it follows that $F'''=F''[\Gm\to F^0_\Gm]$ is a chordal completion of $G$.
By the construction, it can be easily seen that $F'''\subseteq F$, hence $F'''=F$ by the minimality of $F$.
This means that $F'''\setminus \binom{\Gm}{2}$, which is equal to $F''\setminus \binom{\Gm}{2}$ by the construction, is also equal to $F\setminus \binom{\Gm}{2}$, which in turn is equal to $F'\setminus \binom{\Gm}{2}$
by the construction. Concluding, we have argued that
\begin{equation*}
F''\cap \binom{\Gm}{2}=F'\cap \binom{\Gm}{2}\qquad\textrm{and}\qquad F''\setminus \binom{\Gm}{2}=F'\setminus \binom{\Gm}{2},
\end{equation*}
so $F''=F'$.
\end{proof}

In a number of arguments, we would like to replace a minimal completion of a segment
so that it uses some particular minimal separator. 
\begin{lemma}\label{lem:force-separator}
Let $G$ be a graph, $F$ be a minimal chordal completion of $G$, and let $\Gm$
be a potential segment in $G$ that is respected by $F$.
Furthermore, let $\Ss$ be a family of pairwise noncrossing
minimal separators in $G$ such that $S \subseteq \Gm$
for every $S \in \Ss$.
Then there exists a minimal completion $F_\Gm$ of $\torso(\Gm)$ such that
every $S \in \Ss$ is a clique in $F_\Gm$. Consequently, every $S \in \Ss$
is a minimal separator in the chordal graph $G+F[\Gm \to F_\Gm]$.
\end{lemma}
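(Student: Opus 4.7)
The plan is to construct a minimal chordal completion $F_\Gm$ of $\torso(\Gm)$ in which every $S \in \Ss$ is a clique; Lemma~\ref{lem:subst} will then yield that $F[\Gm \to F_\Gm]$ is a minimal chordal completion of $G$, and the ``consequently'' claim will follow by analyzing how components of $(\torso(\Gm) + F_\Gm) - S$ interact with the components in $\cc(G-\Gm)$.

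The first step is to show that every $S \in \Ss$ is either already a clique in $\torso(\Gm)$ or is a minimal separator of $\torso(\Gm)$. Let $D_1^G, D_2^G \in \cc(G-S)$ be full components of $S$ in $G$. If some $D_i^G$ is disjoint from $\Gm$, then since $S \subseteq \Gm$ and $D_i^G$ is connected, $D_i^G$ must be a single component of $G - \Gm$ with $N(D_i^G) = S$; the definition of $\torso(\Gm)$ then makes $S$ a clique. Otherwise, both $D_i^\Gm := D_i^G \cap \Gm$ are nonempty. The key observation I would exploit is that every edge of $\torso(\Gm)$ either lies in $E(G)$ or is a closure edge joining two vertices of $N(D)$ for some $D \in \cc(G-\Gm)$, and in the latter case corresponds to a path through $D$ in $G$. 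Consequently any path in $\torso(\Gm) - S$ lifts to a walk in $G - S$, forcing $D_1^\Gm$ and $D_2^\Gm$ to lie in distinct components of $\torso(\Gm) - S$. For fullness, every $s \in S$ has a neighbor in $D_i^G$; if it lies in $\Gm$ we are done, otherwise $s \in N(D)$ for some $D \subseteq D_i^G$ with $D \in \cc(G-\Gm)$, in which case the cliquification of $N(D)$ in $\torso(\Gm)$ provides an edge from $s$ to $N(D) \setminus S \subseteq D_i^\Gm$ (which is nonempty because otherwise $D$ would equal $D_i^G$, contradicting that $D_i^\Gm$ is nonempty).

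Next I would check that the subfamily $\Ss' \subseteq \Ss$ of separators that are minimal separators of $\torso(\Gm)$ remains pairwise noncrossing in $\torso(\Gm)$: given $S_1, S_2 \in \Ss'$ with $S_2 \setminus S_1 \subseteq D^*$ for a single $D^* \in \cc(G-S_1)$, the same lifting argument---restricted to closure edges whose associated components of $\cc(G-\Gm)$ lie inside $D^*$---connects all of $S_2 \setminus S_1$ within one component of $\torso(\Gm) - S_1$. I would then extend $\Ss'$ to a maximal pairwise noncrossing family of minimal separators of $\torso(\Gm)$ and invoke the Bouchitt\'e--Todinca correspondence~\cite{BouchitteT01} to obtain a minimal chordal completion $F_\Gm$ of $\torso(\Gm)$ whose set of minimal separators includes $\Ss'$; by Lemma~\ref{lem:chordal-minsep} each such separator is a clique in $\torso(\Gm) + F_\Gm$, and the separators in $\Ss \setminus \Ss'$ are already cliques in $\torso(\Gm)$ itself.

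The last step is to verify the ``consequently'' claim. Since $F[\Gm \to F_\Gm]$ respects $\Gm$ and contains $\cl(\Gm)$, every $D \in \cc(G-\Gm)$ attaches, in $(G + F[\Gm \to F_\Gm]) - S$, to the unique component of $(\torso(\Gm) + F_\Gm) - S$ containing $N(D) \setminus S$ (or becomes isolated if $N(D) \subseteq S$); hence the original full components $D_1^G, D_2^G$ extend to two distinct components of $(G + F[\Gm \to F_\Gm]) - S$ that are full to $S$, and $S$ is a minimal separator there. The hard part will be the case analysis in the first step: carefully verifying that the cliquification of each $N(D)$ for $D \in \cc(G-\Gm)$ does not collapse $D_1^\Gm$ and $D_2^\Gm$ into a common component of $\torso(\Gm) - S$, while simultaneously being strong enough to guarantee fullness when a vertex $s \in S$ has its only $D_i^G$-neighbors outside of $\Gm$.
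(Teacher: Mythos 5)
Your plan is correct in substance, but it takes a genuinely different route from the paper. The paper never works inside $\torso(\Gm)$ at all: it observes in one line that each $S\in\Ss$ and each $N(D)$, $D\in\cc(G-\Gm)$, are noncrossing in $G$ (since $S\subseteq\Gm$ is disjoint from $D$, the set $D\cup(N(D)\setminus S)$ lies in a single component of $G-S$), so $\Ss\cup\{N(D):D\in\cc(G-\Gm)\}$ is a pairwise noncrossing family of minimal separators of $G$; it then applies~\cite[Theorem~2.9]{BouchitteT01} \emph{in $G$} to get a minimal completion $F'$ of $G$ realizing all of these as minimal separators (hence cliques), and defines $F_\Gm$ as the restriction of $F'$ to the nonedges of $\torso(\Gm)$. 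You instead prove the transfer statements intrinsically in the torso: that each $S\in\Ss$ is either already a clique of $\torso(\Gm)$ or a minimal separator of it, and that noncrossingness survives, both via the lifting of closure edges to paths through components of $G-\Gm$, and then you invoke the Bouchitt\'e--Todinca correspondence inside $\torso(\Gm)$. Both routes are sound; the paper's buys brevity (the noncrossing check in $G$ is trivial and the torso-level facts come for free from restricting $F'$, since every $N(D)$ is a clique of $G+F'$ and hence $F'$ respects $\Gm$), while yours is self-contained in the torso and, unlike the paper, makes the ``consequently'' gluing argument explicit. Two details you should tighten when writing it up: in your first step you need not only that $D_1^\Gm$ and $D_2^\Gm$ lie in distinct components of $\torso(\Gm)-S$ but that each $D_i^\Gm$ lies in (and so determines) a \emph{single} component full to $S$ --- this follows by the same closure-edge shortcutting of paths inside $D_i^G$; and in your last step, the claim that $D_1^G$ and $D_2^G$ extend to \emph{distinct} components of $(G+F[\Gm\to F_\Gm])-S$ needs the fact that a minimal completion of $\torso(\Gm)$ in which $S$ is a clique adds no edge between different components of $\torso(\Gm)-S$ (cf.\ Claim~\ref{cl:clique-separates} in the proof of Lemma~\ref{lem:ctree-segment}); alternatively, it suffices to propagate the two full components of $(\torso(\Gm)+F_\Gm)-S$ guaranteed by the Bouchitt\'e--Todinca construction rather than $D_1^G,D_2^G$ themselves.
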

\begin{proof}
First, we observe that for every $D \in \cc(G-\Gm)$ and $S \in \Ss$,
  the minimal separators $N(D)$ and $S$
do not cross. Indeed, since $S \subseteq \Gm$, $S$ is disjoint from $D$, and hence
there exists a unique connected component of $G-S$ that contains $D \cup (N(D) \setminus S)$. 
Consequently, a family $\Ss' = \Ss \cup \{N(D) : D \in \cc(G-\Gm)\}$ is a family of pairwise
noncrossing minimal separators in $G$.
By~\cite[Theorem~2.9]{BouchitteT01}, there exists a minimal chordal completion $F'$ of $G$
such that $\Ss'$ are minimal separators in $G+F'$ as well. By defining $F_\Gm \subseteq F'$
to be the set of the nonedges of $\torso(\Gm)$ that are present in $F'$ we obtain the thesis.
\end{proof}

Next, we describe how potential segments arise naturally from clique trees of completed graphs.
Recall for a tree decomposition $(T,\bag)$ and a vertex $u$, by $T_u=T[\{x\in V(T)\colon u\in\bag(x)\}]$ we denote the subtree of $T$ induced by nodes whose bags contain $u$.

\begin{lemma}\label{lem:ctree-segment}
Let $G$ be a graph, $F$ be a minimal chordal completion of $G$, and $H=G+F$.
Further, let $(T,\bag)$ be any clique tree of $H$. Then for every subtree $S$ of $T$, the set
$$\Gm(S):=\bigcup_{x\in V(S)} \bag(x)$$
is a potential segment in $G$. Moreover, $F$ respects $\Gm(S)$. 
\end{lemma}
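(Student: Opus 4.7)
The plan is to analyze the structure of $G-\Gm(S)$ via Proposition~\ref{prop:adh} applied to edges of $T$, and then to invoke minimality of $F$ for the respecting claim. Fix $D\in\cc(G-\Gm(S))$. For any $u\in D$, the subtree $T_u$ avoids $V(S)$, so it lies in a unique component $T_j$ of $T-V(S)$; using $G$-connectivity of $D$ together with the fact that $G$-adjacent vertices share a bag of $T$, all of $D$ sits inside $\wD_j:=\{v\in V(G):T_v\subseteq T_j\}$. Let $e_j=x_jy_j$ be the unique edge of $T$ joining $T_j$ to $V(S)$, with $x_j\in V(S)$ and $y_j\in T_j$. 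Applying Proposition~\ref{prop:adh} to $e_j$ identifies $\adh_j:=\bag(x_j)\cap\bag(y_j)$ as a minimal separator of $G$ with full components $D_j^{(1)}\subseteq\wD_j$ and $D_j^{(2)}$ disjoint from $\wD_j$. Tracing bags yields $N_G(D)\subseteq\adh_j$: any $w\in N_G(D)$ has $T_w$ intersecting $T_j$ (via a common bag with a vertex of $D$) and $V(S)$ (since $w\in\Gm(S)$), so $T_w$ must cross $e_j$, placing $w\in\adh_j$. The potential-segment property then follows: $D$ is one full component of $G-N_G(D)$, and the component of $G-N_G(D)$ containing $D_j^{(2)}$ is a second full component, distinct from $D$ since $D\subseteq\wD_j$ is disjoint from $D_j^{(2)}$; hence $N_G(D)$ is a minimal separator of $G$.

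For the respecting property, let $uv\in F$ with $u\in D$; I must show $v\in N_G[D]$. Minimality of $F$ forces $|T_u\cap T_v|\geq 2$: if instead $T_u\cap T_v$ were a single node, then $\{u,v\}$ would lie in no adhesion of $(T,\bag)$, hence in no minimal separator of $H$ by Lemma~\ref{lem:chordal-minsep-adh}; by the standard fact that removing an edge from a chordal graph preserves chordality precisely when its endpoints are not jointly contained in a minimal separator, $H-uv$ would remain chordal and $F\setminus\{uv\}$ would be a smaller chordal completion, contradicting the minimality of $F$. Pick any edge $e\in T_u\cap T_v$; then $\{u,v\}\subseteq\adh(e)$, which is a minimal separator of $G$ by Proposition~\ref{prop:adh}. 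Since $T_u\subseteq T_j$, the edge $e$ is internal to $T_j$, so Proposition~\ref{prop:adh} furnishes full components $A_1,A_2$ of $G-\adh(e)$ on the two sides of $T-e$; the side not reaching $V(S)$ is entirely inside $T_j$, yielding $A_2\subseteq\wD_j$.

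Because $A_2$ is full to $\adh(e)$, both $u$ and $v$ have $G$-neighbors $w_u,w_v\in A_2\subseteq\wD_j\setminus\Gm(S)$. Since $w_u$ is a $G$-neighbor of $u\in D$ lying outside $\Gm(S)$, we get $w_u\in D$; similarly $w_v$ lies in the component $D^*$ of $G-\Gm(S)$ that contains $v$ (if $v\in\wD_j$) or has $v$ as a $G$-neighbor (if $v\in\adh_j$). If $v\notin N_G[D]$, then in either subcase $D^*\neq D$: when $v\in\wD_j\setminus D$ the component $D^*\ni v$ differs from $D$ directly, and when $v\in\adh_j$ with no $G$-neighbor in $D$, any choice $w_v\in D$ would already force $v\in N_G(D)$, contradicting $v\notin N_G[D]$, so $w_v\in D^*\neq D$. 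In either case, the $G$-connectivity of $A_2$ produces a $G$-path from $w_u\in D\cap A_2$ to $w_v\in D^*\cap A_2$ staying inside $\wD_j\setminus\Gm(S)$, contradicting that $D,D^*$ are distinct components of $G-\Gm(S)$. Hence $v\in N_G[D]$, so $F$ respects $\Gm(S)$. The main obstacle is this last step: the synthesis of the minimality-driven conclusion $\{u,v\}\subseteq\adh(e)$ for an internal edge $e$ of $T_j$ with the $G$-connectivity of the full component $A_2\subseteq\wD_j$ supplied by Proposition~\ref{prop:adh}, which is what ultimately rules out any fill edge of $F$ bridging $D$ to a different part of $G-\Gm(S)$.
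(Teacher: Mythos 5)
Your proof is correct. The first half (that $N(D)$ is a minimal separator for every $D\in\cc(G-\Gm(S))$) follows essentially the same route as the paper: locate the unique edge $e_j$ joining the component $T_j$ of $T-V(S)$ that houses $D$ to $S$, show $N_G(D)\subseteq\adh(e_j)$, and use Proposition~\ref{prop:adh} to produce a second full component of $G-N_G(D)$ disjoint from $D$. Where you genuinely diverge is the respecting claim. The paper disposes of it in three lines: since $N(D)\subseteq\adh(e_j)$, the set $N(D)$ is a clique in $G+F$, and then a cited fact (Claim~\ref{cl:clique-separates}, from~\cite[Lemma 3.5]{BouchitteT01}) that a minimal completion never adds an edge between distinct components of $G-\Om$ when $\Om$ is a clique of $G+F$ immediately forbids fill edges leaving $N[D]$. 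You instead argue from first principles: minimality of $F$ forces every fill edge $uv$ to satisfy $|T_u\cap T_v|\geq 2$, hence $\{u,v\}\subseteq\adh(e)$ for an edge $e$ internal to $T_j$, and the $G$-connected full component $A_2$ supplied by Proposition~\ref{prop:adh} on the far side of $e$ lives inside $T_j$, away from $\Gm(S)$, so a fill edge from $D$ to anything outside $N_G[D]$ would merge two distinct components of $G-\Gm(S)$. This buys self-containedness within the paper's stated toolbox (Proposition~\ref{prop:adh} and Lemma~\ref{lem:chordal-minsep-adh}) at the cost of a longer argument and one imported ingredient of your own: the ``standard fact'' that $H-uv$ stays chordal when $u,v$ are not jointly contained in a minimal separator of the chordal graph $H$. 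You should either cite this or prove it; it takes two lines: a hole of $H-uv$ must be a $4$-cycle whose unique chord in $H$ is $uv$ (a longer cycle with only chord $uv$ would contain a hole of $H$), and then its two remaining vertices are nonadjacent with common neighbors $u$ and $v$, so any minimal separator between them---which is a minimal separator of $H$---contains both $u$ and $v$. With that justification supplied, your argument is complete.
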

\begin{proof}
Let $D$ be any connected component of $G-\Gm(S)$.
By properties of tree decompositions it follows that there exists a connected component $T_D$ of $T-S$ such that 
the subtree $T_u$ is entirely contained in $T_D$ for each $u\in D$.
Observe that if $e$ is the unique edge of $T$ that connects a node of $S$ with a node of $T_D$, then 
$$\Gm(S)\cap \bigcup_{x\in V(T_D)} \bag(x) \subseteq \adh(e),$$
where $\adh(vw)=\bag(v)\cap \bag(w)$ is the adhesion of an edge $vw$ of $T$.

Since $N(D)\subseteq \Gm(S)$ and every vertex of $N(D)$ has to be together in some bag with some vertex of $D$, it follows that
$$N(D)\subseteq \Gm(S)\cap \bigcup_{x\in V(T_D)} \bag(x).$$
Consequently $N(D)\subseteq \adh(e)$.
By Proposition~\ref{prop:adh}, $\adh(e)$ is a minimal separator in $G$, so let $D'$ be a component of $G-\adh(e)$ that is full to $\adh(e)$ and is disjoint with $D$.
Then the component $D''$ of $G-N(D)$ that contains $D'$ is full to $N(D)$. We conclude that there are two components of $G-N(D)$ that are full to $N(D)$, namely $D$ and $D''$, so $N(D)$ is a minimal separator in $G$. Hence, $\Gm(S)$ is a potential segment.

To see that $F$ respects $\Gm(S)$, recall the following standard claim about minimal chordal completions.
It follows, e.g., from~\cite[Lemma 3.5]{BouchitteT01}.
\begin{claim}\label{cl:clique-separates}
Suppose $F$ is a minimal chordal completion of a graph $G$ and $\Om$ is a clique in $G+F$. Then $F$ does not contains any edge with endpoints in different connected components of $G-\Om$.
\end{claim}
Take now any $D\in \cc(G-\Gm(S))$, then $D$ is also a connected component of $G-N(D)$. In the previous paragraph we have argued that $N(D)\subseteq \adh(e)$ for some edge $e$ of the clique tree $T$, 
so in particular $N(D)$ is a clique in $H$. By Claim~\ref{cl:clique-separates}, it follows that there is no edge in $F$ with one endpoint in $D$ and the other outside of $N[D]$. 
\end{proof}

In the sequel we adopt the notation $\Gm(S)$ introduced in the statement of Lemma~\ref{lem:ctree-segment}.
The following simple statement will be useful.

\begin{lemma}\label{lem:adh-torsoed}
Let $G$ be a graph and $F$ be a minimal chordal completion of $G$.
Further, let $(T,\bag)$ be any clique tree of $G+F$ and $S$ be a subtree of $T$.
Then for each edge $e\in E(T)$ that connects a node of $S$ with a node outside of $S$, there exists a component $D\in \cc(G-\Gamma(S))$ such that $\adh(e)=N(D)$.
Consequently, $\adh(e)$ is a clique in $\torso(\Gamma(S))$.
\end{lemma}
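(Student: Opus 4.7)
The plan is to apply Proposition~\ref{prop:adh} to the edge $e$, identify the component of $G-\adh(e)$ that sits on the ``outside'' side of $e$, and show that it coincides with a connected component of $G-\Gamma(S)$ whose neighborhood is exactly $\adh(e)$.

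Let $e = t_1 t_2$ with $t_1 \in V(S)$ and $t_2 \notin V(S)$, and let $T_1, T_2$ be the two subtrees obtained by removing $e$ from $T$, with $t_i \in V(T_i)$. Since $S$ is a connected subtree of $T$ containing $t_1$ but not $t_2$, we have $V(S) \subseteq V(T_1)$ and in particular $V(S) \cap V(T_2) = \emptyset$. By Proposition~\ref{prop:adh}, $\adh(e)$ is a minimal separator in $G$, and there exist components $D_1, D_2 \in \cc(G-\adh(e))$ full to $\adh(e)$ such that $\bag(t_i) \setminus \adh(e) \subseteq D_i$ and $T_u \subseteq T_i$ for every $u \in D_i$, for $i = 1,2$. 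Focus on $D_2$: for every $u \in D_2$ the subtree $T_u$ lies in $T_2$ and is therefore disjoint from $V(S)$, so $u \notin \Gamma(S)$. Hence $D_2 \subseteq V(G) \setminus \Gamma(S)$.

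Next I would let $D$ be the connected component of $G-\Gamma(S)$ that contains $D_2$ and show that $D = D_2$. Note that $\adh(e) \subseteq \bag(t_1) \subseteq \Gamma(S)$, so a path from any $u \in D$ to $D_2$ inside $G-\Gamma(S)$ avoids $\adh(e)$ altogether. Since $D_2$ is a connected component of $G-\adh(e)$, such a path must stay entirely within $D_2$, and consequently $u \in D_2$. Hence $D = D_2$, which gives $N(D) = N(D_2)$. The inclusion $N(D_2) \subseteq \adh(e)$ holds because $D_2$ is a connected component of $G-\adh(e)$, while the reverse inclusion $\adh(e) \subseteq N(D_2)$ is exactly fullness of $D_2$ to $\adh(e)$ guaranteed by Proposition~\ref{prop:adh}. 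Therefore $N(D) = \adh(e)$, as required.

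For the final consequence, the definition of $\torso(\Gamma(S))$ turns $N(D')$ into a clique for every $D' \in \cc(G-\Gamma(S))$; applying this to our $D$, we conclude that $\adh(e) = N(D)$ is a clique in $\torso(\Gamma(S))$.

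There is no real obstacle here once one spots the key point that $\adh(e) \subseteq \Gamma(S)$, which forces the component of $G-\Gamma(S)$ containing $D_2$ to not extend beyond $D_2$; the rest is bookkeeping built on top of Proposition~\ref{prop:adh} and the fact that $V(S) \subseteq V(T_1)$ because $S$ is a connected subtree containing $t_1$ but not $t_2$.
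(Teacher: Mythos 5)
Your proof is correct and follows essentially the same route as the paper: apply Proposition~\ref{prop:adh} to the edge $e$, take the full component on the side away from $S$, note that its vertices' subtrees avoid $S$ so it is disjoint from $\Gamma(S)$, and use $\adh(e)\subseteq\Gamma(S)$ together with fullness to conclude it is a component of $G-\Gamma(S)$ with neighborhood exactly $\adh(e)$. The only difference is that you spell out explicitly why the component of $G-\Gamma(S)$ containing $D_2$ cannot be strictly larger, a step the paper leaves implicit.
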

\begin{proof}
Suppose the removal of $e$ splits $T$ into two subtrees, where $T'$ is the one that does not contain $S$.
By Proposition~\ref{prop:adh}, there is a component $D$ of $G-\adh(e)$ such that $D$ is full to $\adh(e)$ and $T_u\subseteq T'$ for each $u\in D$.
Then $T_u$ is disjoint with $S$ for each $u\in D$, so in particular $D$ and $\Gamma(S)$ are disjoint.
Since $\adh(e)\subseteq \Gamma(S)$, we infer that $D$ is also a component of $G-\Gamma(S)$, and $\adh(e)=N(D)$ since $D$ is full to $\adh(e)$.
\end{proof}

\newcommand{\comp}{\mathrm{complete}}

The following lemma is the crucial replacement argument. Intuitively, it shows that we can focus on recognizing larger potential segments that have low interaction with $I$, rather than individual potential maximal
cliques.

\begin{lemma}\label{lem:replacement}
Let $G$ be a graph and $I$ be a maximal independent set in $G$. Suppose $\Xx$ is a family of subsets of $V(G)$ with the following property.
There exists an $I$-free minimal chordal completion $F_0$ of $G$, a clique tree $(T,\bag)$ of $G+F_0$, and a partition $\Ss$ of $T$ into vertex-disjoint subtrees, such that
for each $S\in \Ss$ we have that $\Gm(S)\in \Xx$, $|I\cap \Gm(S)|\leq 1$, and $I\cap \Gm(S)\subseteq \inte(\Gm(S))$.
Then, given $\Xx$, one can in polynomial time compute a family $\Ff_\comp(\Xx)$ with $|\Ff_\comp(\Xx)|\leq n^2\cdot |\Xx|$ 
such that there exists an $I$-free chordal completion $F$ such that every maximal clique of $G+F$ is contained in $\Ff_\comp(\Xx)$.
\end{lemma}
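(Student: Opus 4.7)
The strategy is to handle each segment independently. For every $\Gm\in\Xx$ and every $v\in\Gm$, I compute in polynomial time some minimal chordal completion $F_\Gm^v$ of $\torso(\Gm)$ that introduces no edge incident to $v$; such a completion always exists, as one can start from the completion that makes $\Gm\setminus\{v\}$ into a clique (which is chordal since $v$'s neighbors then form a clique) and reduce to any minimal subset. Define $\Ff_\comp(\Xx)$ as the union, over all pairs $(\Gm,v)$ with $\Gm\in\Xx$ and $v\in\Gm$, of the sets of maximal cliques of $\torso(\Gm)+F_\Gm^v$. Since every chordal graph on at most $n$ vertices has at most $n$ maximal cliques, $|\Ff_\comp(\Xx)|\le n^2\cdot|\Xx|$.

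Given the hypothesised $F_0$, $(T,\bag)$, and $\Ss$, for each $S\in\Ss$ let $v_S$ be the unique element of $I\cap\Gm(S)$ when this set is nonempty, and an arbitrary vertex of $\Gm(S)$ otherwise. I produce the promised chordal completion $F$ by starting from $F^{(0)}=F_0$ and, for each $S_i\in\Ss$ taken in any fixed order, applying Lemma~\ref{lem:subst} with $\Gm=\Gm(S_i)$ and $F_\Gm=F_{\Gm(S_i)}^{v_{S_i}}$ to obtain $F^{(i)}$. Lemma~\ref{lem:ctree-segment} gives that $F_0$ respects every $\Gm(S)$, and the existence of a $v_S$-free completion is certified a priori by $F_0\cap\binom{\Gm(S)}{2}$, which is a chordal completion of $\torso(\Gm(S))$ avoiding $v_S$ because $F_0$ is $I$-free. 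I will show inductively that each $F^{(i)}$ still respects every $\Gm(S_j)$ with $j>i$, which is exactly what is needed to invoke Lemma~\ref{lem:subst} at step $i+1$.

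The core geometric observation is the following: for distinct $S,S'\in\Ss$, if some component $D'\in\cc(G-\Gm(S'))$ meets $\Gm(S)$, then the whole subtree $S$ lies in the subtree of $T-S'$ corresponding to $D'$, and hence $\Gm(S)\subseteq D'\cup N_G(D')$ (the bordering adhesion is identified with $N_G(D')$ via Lemma~\ref{lem:adh-torsoed}). Consequently any edge added during the replacement for $\Gm(S)$ has both endpoints in $D'\cup N_G[D']$, so $F^{(i)}$ still respects $\Gm(S')$. The same lemma shows that whenever $v\in I\cap\Gm(S)\subseteq\inte(\Gm(S))$, the vertex $v$ cannot appear in $\Gm(S')$ for any $S'\neq S$: otherwise $v$ would sit in the adhesion of an edge leaving $S$, which equals $N_G(D)$ for some $D\in\cc(G-\Gm(S))$, contradicting $v\in\inte(\Gm(S))$. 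Hence the only replacement that can touch an edge incident to $v$ is the one for $\Gm(S)$, and that one is $v$-free by construction, so the final $F$ is $I$-free.

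For maximal cliques, the same observation gives $\Gm(S)\cap\Gm(S')\subseteq N_G(D)$ for some $D\in\cc(G-\Gm(S))$; therefore every non-$G$ pair inside $\Gm(S)\cap\Gm(S')$ already lies in $\cl(\Gm(S))$ as well as in $\cl(\Gm(S'))$, and neither $F_{\Gm(S)}^{v_S}$ nor $F_{\Gm(S')}^{v_{S'}}$ can touch such a pair. The per-segment replacements therefore do not interfere, and one obtains $(G+F)[\Gm(S)]=\torso(\Gm(S))+F_{\Gm(S)}^{v_S}$ for every $S\in\Ss$. Any clique of the chordal graph $G+F$ lies inside a bag of its clique tree and hence inside some $\Gm(S)$, so a maximal clique of $G+F$ is maximal in $\torso(\Gm(S))+F_{\Gm(S)}^{v_S}$ and was enumerated in $\Ff_\comp(\Xx)$. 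I expect the main subtle point to be precisely this non-interference argument on the shared adhesions; once that is settled, the remainder is a direct combination of Lemmas~\ref{lem:subst}, \ref{lem:ctree-segment}, and \ref{lem:adh-torsoed}.
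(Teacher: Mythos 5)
Your overall plan---precompute, for every candidate segment and every vertex in it, one vertex-avoiding minimal completion of its torso, then replace the completion segment by segment via Lemma~\ref{lem:subst}---is the same as the paper's, and both your family construction and your $I$-freeness argument (an interior vertex of $\Gm(S)$ cannot appear in any other $\Gm(S')$, by Lemma~\ref{lem:adh-torsoed}) are fine. The gap is in the geometric claim on which your induction rests: it is \emph{not} true that if a component $D'\in\cc(G-\Gm(S'))$ meets $\Gm(S)$ then $\Gm(S)\subseteq D'\cup N_G(D')$. From $S\subseteq T_{D'}$ you only get $\Gm(S)\subseteq\bigcup_{x\in V(T_{D'})}\bag(x)$, and this set may contain several components of $G-\Gm(S')$ (all associated with the same component of $T-S'$) as well as vertices of $\Gm(S')$ outside $N_G(D')$. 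Concretely, take $G=K_{1,3}$ with center $y$ and leaves $x,z,w$, the clique tree a path with bags $\{x,y\},\{y,z\},\{y,w\}$, $S'$ the first node and $S$ the other two: then $D'=\{z\}$ meets $\Gm(S)=\{y,z,w\}$ but $D'\cup N_G(D')=\{y,z\}$. Consequently your argument that each $F^{(i)}$ keeps respecting the unprocessed segments (needed both to invoke Lemma~\ref{lem:subst} at the next step and to rule out a replacement inside $\Gm(S)$ adding an edge from $D'\cap\Gm(S)$ to a vertex outside $N_G[D']$) has a hole---exactly at the ``non-interference'' point you yourself flagged as the crux. The needed statement is true, but proving it requires something you never use: for instance, note that in $\torso(\Gm(S))$ all neighbors of $D'\cap\Gm(S)$ lie in $(D'\cap\Gm(S))\cup(\Gm(S)\cap N_G(D'))$ (using $\cl(\Gm(S))\subseteq F_0$ and that $F_0$ respects $\Gm(S')$), that $\Gm(S)\cap N_G(D')\subseteq\Gm(S)\cap\Gm(S')$ is a clique of the torso (it sits in the adhesion of the edge of $T$ leaving $S$ toward $S'$), and then that a \emph{minimal} completion of the torso never adds an edge across a clique separator (Claim~\ref{cl:clique-separates}); minimality plays no such role in your write-up.

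A second, related gap: at the end you conclude that every maximal clique of $G+F$ lies in some $\Gm(S)$ because it ``lies inside a bag of its clique tree''---but the bags of a clique tree of $G+F$ are by definition exactly its maximal cliques, so this is circular unless you exhibit a tree decomposition of $G+F$ whose bags are already known to sit inside segments. The paper resolves both issues at once by maintaining the clique tree explicitly: in each replacement it splices a clique tree of $\torso(\Gm(S))+F_\Gm$ in place of the subtree $S$, reattaching the outgoing edges at bags containing the (clique) adhesions, and verifies that the spliced tree is again a clique tree of the new completion; this simultaneously certifies that the remaining segments stay respected and that, at the end, every maximal clique is a bag coming from some completed torso.
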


\begin{proof}
We construct $\Ff_\comp(\Xx)$ as follows.
For every $\Gm\in \Xx$, verify whether $\Gm$ is a potential segment in $G$. If this is not the case, ignore $\Gm$.
Otherwise, check whether $\inte(\Gm)$ is empty.
If this is the case, compute any minimal chordal completion of $\torso(\Gm)$ and add all its maximal cliques to $\Ff_\comp(\Xx)$.
Otherwise, for each $u\in \inte(\Gm)$, compute any $\{u\}$-free minimal chordal completion of $\torso(\Gm)$ and add all its maximal cliques to $\Ff_\comp(\Xx)$.
Since a chordal graph on $n$ vertices has at most $n$ maximal cliques, it follows that $|\Ff_\comp(\Xx)|\leq n^2\cdot |\Xx|$.
We are left with proving that $\Ff_\comp(\Xx)$ has the required properties.

Starting from $F:=F_0$, we modify $F$ and $(T,\bag)$ gradually, keeping the invariant that $(T,\bag)$ is always a clique tree of $G+F$.
Each step of the modification procedure replaces a part of the completion for one $S\in \Ss$. 
Precisely, let us fix one $S\in \Ss$ for which we apply the replacement, and denote for brevity $\Gm=\Gm(S)$.
By assumption we have $\Gm\in \Xx$, $|I\cap \Gm|\leq 1$, and $I\cap \Gm\subseteq \inte(\Gm)$.
Pick $F_\Gm$ to be a minimal chordal completion of $\torso(\Gm)$ whose bags were included in $\Ff_\comp(\Xx)$, 
where we choose $F_\Gm$ to be the $(I\cap \Gm)$-free one in case $I\cap \Gm\neq \emptyset$,
and we choose $F_\Gm$ arbitrarily otherwise.
We modify $F$ and $(T,\bag)$ as follows:
\begin{itemize}
\item Remove all edges of $(F\cap \binom{\Gm}{2})\setminus \cl(\Gm)$ from $F$ and introduce the edges of $F_\Gm$ instead.
\item Let $(S',\bag')$ be any clique tree of $\torso(\Gm)+F_\Gm$. Remove $S$ from $T$ and replace it with $S'$. For each edge $e$ of $T$ that connected a node of $S$, say $x$, with a node outside of $S$, say $y$,
find any node $x'$ of $S'$ whose bag $\bag'(x')$ contains $\adh(e)$, and connect $x$ with $y'$. Such a node $y'$ exists since $\adh(e)$ is a clique in $\torso(\Gm)$, by Lemma~\ref{lem:adh-torsoed}.
\end{itemize}
It is straightforward to see that in this manner, the (new) tree decomposition, denoted henceforth $(T',\beta')$, is still a clique tree of the (new) graph $G+F$.
First, $T'$ is a tree as we replaced a subtree $S$ of $T$ with another tree $S'$ and connect $S'$ to every connected component of $T-V(S)$ with exactly one edge.
Second, every maximal clique $\Om$ in the new $G+F$ is either entirely contained in $\Gm$, in which case $\Om$ is a maximal clique of the chordal graph $\torso(\Gm)+F_\Gm$ and is
visible as one of the bags of $S'$,
or is entirely contained in $N[D]$ for some component $D\in \cc(G-\Gm)$, in which case $\Om$ is among bags of $T\setminus S = T'-S'$.
Note here that all the maximal cliques of $\torso(\Gm)+F_\Gm$ were included in the construction of $\Ff_\comp(\Xx)$.
Observe also that from the choice of $F_\Gm$ it directly follows that the new completion $F$ is still $I$-free.

By applying this replacement procedure to each $S\in \Ss$ one by one, we eventually obtain a new $I$-free chordal completion $F$ such that every maximal clique of $G+F$ is included in $\Ff_\comp(\Xx)$.
\end{proof}

Armed with the replacement tools, we are now ready to analyze $I$-free completions.
The goal is to show that the PMCs and separators discovered in families in the previous section, together with some new arguments, 
are rich enough in the following sense: any $I$-free completion can be locally modified so that the separators belonging to the so-far discovered
families are sufficiently dense in the completion, that is, the parts of the graph between separators are simple.

\subsection{Neighborhood of an element of an independent set}\label{ss:Nv}

Our first segment of interest is a neighborhood of a vertex from the independent set $I$ in question.
The argumentation here is essentially the same as the corresponding part of the algorithm for $P_5$-free graphs of~\cite{LokshtanovVV14},
but for completeness we recall it in our notation.

\begin{lemma}\label{lem:Nv}
Let $G$ be a graph, $I$ be an independent set in $G$, and $u\in I$.
Suppose $F$ is an $I$-free minimal chordal completion of $G$ and $(T,\bag)$ is a clique tree of $G+F$.
Then $\Gm(T_u)$ is equal to $N[u]$. Consequently, $N[u]$ is a potential segment in $G$ that respects $F$ and such that $|N[u]\cap I|=1$ and $N[u]\cap I\subseteq \inte(N[u])$.
\end{lemma}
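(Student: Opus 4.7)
The plan is to establish the equality $\Gamma[T_u] = N[u]$ by proving two inclusions, and then derive the remaining properties as short consequences.

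For the inclusion $N[u] \subseteq \Gamma[T_u]$, the argument is immediate from the definition of a tree decomposition. Clearly $u \in \Gamma[T_u]$ since $T_u$ is nonempty (as $u$ must appear in at least one bag). For any $v \in N(u)$, the edge $uv \in E(G) \subseteq E(G+F)$ forces the existence of some node $x \in V(T)$ with $\{u,v\} \subseteq \bag(x)$; then $x \in V(T_u)$ and hence $v \in \Gamma[T_u]$.

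For the reverse inclusion $\Gamma[T_u] \subseteq N[u]$, this is where the $I$-freeness of $F$ is crucial. Take any $v \in \Gamma[T_u]$; then there is a node $x$ with $\{u,v\} \subseteq \bag(x)$. Since $\bag(x)$ is a clique in $G+F$, either $uv \in E(G)$ (in which case $v \in N(u) \subseteq N[u]$) or $uv \in F$. The latter case is impossible: $u \in I$ and $F$ is $I$-free by assumption, so no edge of $F$ is incident to $u$. Hence $v \in N[u]$, as required.

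Once the equality $\Gamma[T_u] = N[u]$ is established, the remaining statements follow easily. Since $T_u$ is a subtree of $T$ by the defining property of tree decompositions, Lemma~\ref{lem:ctree-segment} applied to $S = T_u$ immediately yields that $\Gamma[T_u] = N[u]$ is a potential segment in $G$ and that $F$ respects it. The property $|N[u] \cap I| = 1$ is immediate: $u \in N[u] \cap I$, while any other element of $I$ in $N[u]$ would be a neighbor of $u$ in $G$, contradicting the independence of $I$. Finally, $N[u] \cap I = \{u\}$, and $u \in \inte(N[u])$ is trivial from the definition of interior, since $u$ has no neighbors outside $N[u]$ by definition of the closed neighborhood.

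There is no substantial obstacle here: the entire argument rests on the single observation that an $I$-free completion cannot add fill edges at $u$, so any vertex sharing a bag with $u$ must already be its neighbor in $G$. The rest is bookkeeping using Lemma~\ref{lem:ctree-segment} and the definitions.
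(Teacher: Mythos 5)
Your proof is correct and follows essentially the same route as the paper: establish $\Gamma(T_u)=N_{G+F}[u]$ via the clique-tree property, note that $I$-freeness gives $N_{G+F}[u]=N[u]$, then invoke Lemma~\ref{lem:ctree-segment} for the segment properties. You spell out the two inclusions and the closing bookkeeping a bit more explicitly than the paper does, but the key observation and its use are identical.
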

\begin{proof}
The last assertion follows from the equality $\Gm(T_u)=N[u]$ due to Lemma~\ref{lem:ctree-segment}, so it suffices to prove that $\Gm(T_u)=N[u]$.
Recall that by definition, $\Gm(T_u)$ comprises all vertices that are contained in some bag of $(T,\bag)$ together with $u$.
Since $(T,\bag)$ is a clique tree of the chordal graph $G+F$, each of its bags is a clique in $G+F$, and hence $\Gm(T_u)$ is equal to $N_{G+F}[u]$.
However $F$ is $I$-free, so $N[u]=N_{G+F}[u]$. The claim follows.
\end{proof}

For an independent set $I$ in a graph $G$, consider a family $\Ff_\mathrm{ind}$ defined as follows: for every $u \in V(G)$, we take any $\{u\}$-free minimal completion $F_u$ of $\torso(N[u])$ and we insert all maximal cliques of $\torso(N[u]) + F_u$ into $\Ff_\mathrm{ind}$.
Clearly $|\Ff_\mathrm{ind}| \leq n^2$.
To streamline further arguments, we prefer to cope with replacements of segments $N[u]$ for $u \in I$ separately, without the full strength of Lemma~\ref{lem:replacement}.

\begin{lemma}\label{lem:Nv-replace}
Let $G$ be a graph and $I$ an independent set in $G$. There exists an $I$-free minimal completion $F$ of $G$ such that every maximal clique of $G+F$ that contains an element of $I$ belongs to $\Ff_\mathrm{ind}$.
\end{lemma}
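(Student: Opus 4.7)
The plan is to iteratively modify an arbitrary $I$-free minimal completion $F^{(0)}$ of $G$ by walking through the elements of $I=\{u_1,\ldots,u_k\}$ and replacing, at step $i$, the portion of the completion lying inside the segment $N[u_i]$ with the prescribed $\{u_i\}$-free minimal completion $F_{u_i}$ of $\torso(N[u_i])$. Formally, set $F^{(i)}=F^{(i-1)}[N[u_i]\to F_{u_i}]$ in the sense of Definition~\ref{def:subst} and output $F=F^{(k)}$. The aim is to maintain, as an invariant, that $F^{(i)}$ is an $I$-free minimal completion of $G$ satisfying $(G+F^{(i)})[N[u_j]]=\torso(N[u_j])+F_{u_j}$ for every $j\le i$. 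Once this invariant survives to $i=k$, the conclusion is immediate: any maximal clique $\Om$ of $G+F$ containing some $u\in I$ lies in $N_{G+F}[u]=N_G[u]$ (since $F$ is $I$-free), hence $\Om$ is a maximal clique of $\torso(N[u])+F_u$, i.e., an element of $\Ff_I$.

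The easy half of each inductive step is verifying the replacement is legitimate. By Lemma~\ref{lem:Nv}, $N[u_i]$ is a potential segment of $G$ respected by the current $I$-free minimal completion $F^{(i-1)}$, so Lemma~\ref{lem:subst} guarantees that $F^{(i)}$ is again a minimal completion of $G$. Its $I$-freeness is then routine: $\cl(N[u_i])$ avoids $u_i$ because $u_i$ has no neighbor outside $N[u_i]$ (so it never appears in any $N(D)$ for $D\in\cc(G-N[u_i])$), it avoids every other $u'\in I$ because $u'\notin N[u_i]$ by independence, and $F_{u_i}$ is $\{u_i\}$-free by choice.

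The main obstacle, and the part that requires the key idea, will be showing that the replacement at step $i$ does not spoil the property $(G+F^{(j)})[N[u_j]]=\torso(N[u_j])+F_{u_j}$ established at earlier steps $j<i$. The modification touches only edges in $\binom{N[u_i]}{2}$, and since $u_i,u_j\in I$ are non-adjacent we have $N[u_i]\cap N[u_j]=N(u_i)\cap N(u_j)$, so the only potentially problematic edges lie in $\binom{N(u_i)\cap N(u_j)}{2}$. The key observation is that on this overlap, every non-edge $vw$ of $G$ is forced to be a closure edge of \emph{both} $N[u_i]$ and $N[u_j]$: if $D_j\in\cc(G-N[u_i])$ denotes the component containing $u_j$, then $v,w\in N(u_j)\subseteq N(D_j)$, giving $vw\in\cl(N[u_i])$, and by symmetry $vw\in\cl(N[u_j])$. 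Since $F_{u_i}$ consists of non-edges of $\torso(N[u_i])$, it cannot contain $vw$ (already present as a closure edge), and the removed set $(F^{(i-1)}\cap\binom{N[u_i]}{2})\setminus\cl(N[u_i])$ likewise cannot contain $vw$. Thus $F^{(i)}$ and $F^{(i-1)}$ coincide on $\binom{N[u_j]}{2}$ and the invariant passes to step $i$. Iterating to $i=k$ completes the construction.
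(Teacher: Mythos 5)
Your proof is correct and takes essentially the same route as the paper: both rest on replacing the completion inside the segment $N[u]$ via Lemma~\ref{lem:subst} (legitimized by Lemma~\ref{lem:Nv}) together with the key observation that any non-edge with both endpoints in $N[u]\cap N[u']$ for $u,u'\in I$ is a closure edge of $\torso(N[u])$, so the replacement cannot disturb what happens at other elements of $I$. The only differences are presentational—you iterate over $I$ maintaining an invariant, whereas the paper takes a completion minimizing the number of bad maximal cliques and gets a contradiction from a single replacement—plus one harmless slip: $N(u_j)\subseteq N(D_j)$ is not literally true (neighbors of $u_j$ inside $D_j$), but the relevant vertices $v,w\in N[u_i]$ lie outside $D_j$ and hence in $N(D_j)$, which is all you use.
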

\begin{proof}
Let $F$ be an $I$-free minimal completion of $G$ that minimizes the number of maximal cliques $\Om$ that satisfy $\Om \cap I \neq \emptyset$ and $\Om \notin \Ff_\mathrm{ind}$.
We claim that there is no such $\Om$. Assume the contrary, let $\Om$ be such a maximal clique, and let $\{u\} = \Om \cap I$.

By Lemma~\ref{lem:Nv}, $N[u]$ is a potential segment respected by $F$. Consequently, $F' := F[N[u] \to F_u]$ is an $I$-free minimal completion of $G$.
Observe that, for every $u' \in I \setminus \{u\}$, $F'$ and $F$ do not differ on the set of edges with both endpoints in $N[u']$:
for any two distinct $v_1,v_2 \in N[u'] \cap N[u]$, $v_1$ and $v_2$ are contained in $N(D')$ for $D'$ being the component of $G-N[u]$ that contains $u'$, and hence
$v_1v_2 \in E(\torso(N[u]))$. 
Consequently, every maximal clique $\Om'$ of $G+F'$ that contains  $u'$ 
is also a maximal clique of $G+F$. Thus, $G+F'$ contains strictly less maximal cliques $\Om'$ with $\Om' \cap I \neq \emptyset$ and $\Om' \notin \Ff_\mathrm{ind}$: every such maximal
clique with an element of $I \setminus \{u\}$ is also present in $G+F$, while there are no such cliques $\Om'$ with $u \in \Om'$ (in particular, $\Om$ is not a maximal clique of $G+F'$). 
This contradicts the minimality of $F$.
\end{proof}

Henceforth we will focus only on $I$-free minimal completions satisfying the properties of Lemma~\ref{lem:Nv-replace} for a fixed family $\Ff_\mathrm{ind}$, and call them \emph{$I$-clean}.

\newcommand{\meshsep}{\mathcal{S}^{\mathrm{mesh}}}

\subsection{Separators in the direction of a mesh component}\label{ss:modif-mesh}

We now focus on the following setting. 
Let $F$ be an $I$-free minimal chordal completion of $G$
and let $S$ be a minimal separator in $G+F$ such that there exists a component 
$D$ of $G-S$ that is full to $S$ and that is a mesh.
We now study the structure of minimal separators in $G+F$ ``between'' $S$ and the unique
module of $G[D]$ that contains vertices of $I \cap D$, denoted later
in this section as $M_{F,D}$.

The goal of our study is to show that these separators in $G+F$ are linearly ordered, in particular, there is a well-defined separator ``closest'' to $M_{F,D}$.
Furthermore, we show that there exists a good notion of
a canonical separator between
$M_{F,D}$ and $S$ that can be greedily used in the considered
$I$-free completion. This canonical separator can be inferred 
if we know $M_{F,D}$; in particular, a small family containing a
``fuzzy version of $D$'' gives rise to a small family containing candidates
for the aforementioned canonical separator. See also Figure~\ref{fig:modif-mesh1}.

\begin{figure}[tb]
\begin{center}
\includegraphics{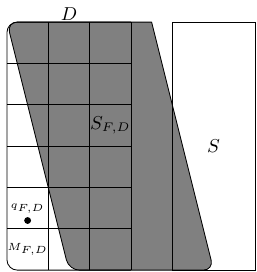}
\end{center}
\caption{The setting of Section~\ref{ss:modif-mesh}.
The separators between $S$ and $M_{F,D}$ are linearly ordered, 
with $S_{F,D}$ being the last separator. 
Given $M_{F,D}$ and a vertex $q_{F,D} \notin M_{F,D}$ that lives
in the same connected component of $G-S_{F,D}$ as (parts of) $M_{F,D}$,
one can in a canonical way define a separator $S_{F,D}'$ that lives
  between $S_{F,D}$ and $S$, and can therefore be greedily used
  in any $I$-free completion that respects the segment $S \cup S_{F,D}$.}\label{fig:modif-mesh1}
\end{figure}

Let us now proceed with a formal argument.
Recall that the set of minimal separators of $G+F$ is a subset of the family of minimal 
separators of $G$, and that the minimal separators of $G+F$ are noncrossing (cf. Corollary~\ref{cor:chordal-crossing} and Lemma~\ref{lem:fillin-minsep}).
That is, every minimal separator $S'$ of $G+F$ is either contained in $N[D]$, or disjoint with $D$.

Let $\widehat{D}$ be a component of $G-S$ that is full to $S$ and different than $D$.

Given $F$, $S$, and $D$, let $\meshsep_{F, D}$ be the family of those minimal separators $S'$
of $G+F$ that are contained in $N[D]$, are not proper subsets of $S$,
and such that $D-S'$ is not contained in any
maximal proper strong module of $G[D]$. Note that $S \in \meshsep_{F, D}$.
We claim the following:
\begin{lemma}\label{lem:meshsep}
For every $S' \in \meshsep_{F, D}$, there exists one component $D(S')$ of $G-S'$ that equals
$D-S'$ and a different one $\widehat{D}(S')$ that contains $\widehat{D} \cup (S \setminus S')$.
Furthermore, $D(S')$ and $\widehat{D}(S')$ are full to $S'$ and, if $S' \neq S$,
  these are the only two components of $G-S'$ that are full to $S'$.
\end{lemma}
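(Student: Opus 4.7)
The case $S'=S$ is immediate, with $D(S)=D$ and $\widehat{D}(S)=\widehat{D}$. Henceforth assume $S'\neq S$; combined with $S'\not\subsetneq S$, this forces $S'\setminus S\neq\emptyset$, equivalently $S'\cap D\neq\emptyset$. I first observe that every vertex of $S\setminus S'$ lies outside $S'$ and is adjacent in $G$ to $\widehat{D}$ (because $\widehat{D}$ is full to $S=N(D)$), so the component of $G-S'$ containing $\widehat{D}$, which I denote $\widehat{D}(S')$, also contains $\widehat{D}\cup(S\setminus S')$, as required.

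Next I identify $D(S'):=D\setminus S'$ as a separate component of $G-S'$. The module hypothesis says $D\setminus S'$ is not contained in any element of $\Mod(G[D])$, and since $G[D]$ is a mesh whose modules are pairwise complete, $D\setminus S'$ meets at least two modules and is therefore connected in $G$. The delicate step is to rule out an edge between $D\setminus S'$ and $S\setminus S'$. Suppose toward contradiction that some $v\in D\setminus S'$ is adjacent to some $u\in S\setminus S'$. Since $u$ is adjacent to $\widehat{D}$ and neither of $v,u$ lies in $S'$, the component $\widehat{D}(S')$ absorbs $u$ and then $v$; by the connectivity of $D\setminus S'$, it in fact absorbs all of $D\setminus S'$, giving $\widehat{D}(S')\supseteq D\setminus S'\cup\widehat{D}\cup(S\setminus S')$. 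Now $S'$ is also a minimal separator of $G$ by Lemma~\ref{lem:fillin-minsep}, so it has at least two full components in $G-S'$; any full component $C$ distinct from $\widehat{D}(S')$ is disjoint from $\widehat{D}(S')\cup S'$, and hence from $D\cup\widehat{D}\cup S$. But every $v_0\in S'\cap D$ satisfies $N_G(v_0)\subseteq D\cup S$, contradicting fullness of $C$. Hence no such edge exists, and $D\setminus S'$ is a component of $G-S'$, disjoint from $\widehat{D}(S')$.

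Fullness of $D(S')$ and $\widehat{D}(S')$, as well as the uniqueness clause in the case $S'\neq S$, then follow from one uniform observation. Pick any $v_0\in S'\cap D$; then $N_G(v_0)\subseteq D\cup S$, so every full component $C$ of $G-S'$ must intersect $D\cup S$. Since $(D\cup S)\setminus S'=(D\setminus S')\cup(S\setminus S')$, with the first summand being $D(S')$ and the second lying inside $\widehat{D}(S')$, any such $C$ must coincide with $D(S')$ or with $\widehat{D}(S')$. Because $S'$ is a minimal separator and thus admits at least two full components, both $D(S')$ and $\widehat{D}(S')$ are full and they are the only ones. I expect the main obstacle to be the no-edge argument of the second paragraph: there the module hypothesis, the minimality of $S'$ (supplying the ``far'' full component $C$), and the shape $S'\subseteq N[D]$ (forcing $v_0$'s neighbours into $D\cup S$) all have to come together, and it is the geometric heart of the lemma.
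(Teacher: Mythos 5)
Your proof is correct and follows essentially the same route as the paper's: both hinge on noting that $S'\neq S$ forces a vertex $v_0\in S'\cap D$ with $N[v_0]\subseteq N[D]=D\cup S$, using the connectivity of $D\setminus S'$ coming from the mesh/module hypothesis, and counting the at least two full components of the minimal separator $S'$ against the only two candidates $D(S')$ and $\widehat{D}(S')$. The only difference is cosmetic: you establish $D(S')=D\setminus S'$ up front via a no-edge contradiction, whereas the paper first defines $D(S')$ as the component containing an edge $pq$ between two modules and deduces the equality at the end.
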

\begin{proof}
Since $S' \in \meshsep_{F,D}$, by the definition of $\meshsep_{F,D}$, there exist
two vertices $p,q \in D \setminus S'$ that belong to two different maximal proper strong modules of $G[D]$.
Since $\Quo(D)$ is a clique, $pq \in E(G)$ and $D \subseteq N[p,q]$.
Consequently, if we define $D(S')$ to be the connected component of $G-S'$ that contains the
edge $pq$, then $D(S') \supseteq D-S'$.
Since $S' \subseteq N[D]$, there exists a connected component $\widehat{D}(S')$ of $G-S'$
that contains $\widehat{D}$. Since $\widehat{D}$ is full to $S$, $\widehat{D}(S')$ contains
all vertices of $S \setminus S'$, too.

Assume now that $S'\neq S$.
Since $S' \subseteq N[D]$ but $S'$ is not a proper subset of $S$, there 
exists $x \in S' \setminus S$; note that $x \in D$.
So we have $N[x] \subseteq N[D]$. Hence, the only components of $G-S'$ that may
potentially contain a neighbor of $x$ are $D(S')$ and $\widehat{D}(S')$.
Since the minimal separator $S'$ needs to contain at least two full components,
we infer that $D(S') \neq \widehat{D}(S')$ and these two components are the only
components of $G-S'$ that are full to $S'$.
Furthermore, $D(S')$ is disjoint with $S$ as $S \setminus S' \subseteq \widehat{D}(S')$,
  and, hence, $D(S') = D-S'$.

It remains to observe that in case $S=S'$, the claims are straightforward
with $D(S) = D$ and $\widehat{D}(S) = \widehat{D}$.
Note that in this case we do not claim that $D$ and $\widehat{D}$ are the only full components of $S=S'$.
\end{proof}

\begin{corollary}\label{cor:meshsep}
For every $S',S'' \in \meshsep_{F, D}$, we have $D(S') \subseteq D(S'')$ or $D(S') \supseteq D(S'')$.
In particular, there exists a unique separator $S_{F,D} \in \meshsep_{F,D}$ with inclusion-wise
minimal $D(S_{F,D})$.

Consequently, $\Gamma_{F,D} := S \cup S_{F,D}$ is a potential segment in $G$ respected by $F$.
\end{corollary}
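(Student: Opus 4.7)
The plan is to establish the three claims of the corollary in order: comparability of the sets $D(S')$, uniqueness of $S_{F,D}$, and the segment property of $\Gamma_{F,D}$.

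For comparability, I would exploit that both $S'$ and $S''$ are minimal separators of the chordal graph $G+F$, hence pairwise noncrossing by Corollary~\ref{cor:chordal-crossing}; since $S'$ is a clique in $G+F$, Claim~\ref{cl:clique-separates} ensures no edge of $F$ goes between distinct components of $G-S'$, so this noncrossing transfers to $G$. Writing $A'=S'\cap D$, $B'=S'\cap S$, and analogously $A''$, $B''$, noncrossing means $S''\setminus S'$ lies in a single component $C$ of $G-S'$. I would distinguish three cases for $C$. If $C=\widehat{D}(S')$, then $S''\setminus S'$ avoids $D$, so $A''\subseteq A'$ and $D(S')\subseteq D(S'')$. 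If $C$ is a component of $G-S'$ other than $D(S')$ and $\widehat{D}(S')$, then, by Lemma~\ref{lem:meshsep} together with $S\setminus B'\subseteq\widehat{D}(S')$, every vertex of $N[D]\setminus S'$ lies in $D(S')\cup\widehat{D}(S')$, so $C\subseteq V(G)\setminus N[D]$; combined with $S''\subseteq N[D]$, this forces $S''\subseteq S'$ and again $A''\subseteq A'$. Arguing symmetrically with $S'$ relative to $S''$, the only combination that does not immediately yield comparability is when both $S''\setminus S'\subseteq D(S')$ and $S'\setminus S''\subseteq D(S'')$.

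This delicate subcase is the main obstacle, and it would be resolved by showing $S'=S''$. First, $S''\setminus S'\subseteq D$ has no elements in $S$, so $B''\subseteq B'$, and symmetrically $B'=B''$. Next, unpacking Lemma~\ref{lem:meshsep} for either $T=S'$ or $T=S''$: the full component $\widehat{D}(T)$ contains $S\setminus B_T$, and since $D(T)=D\setminus A_T$ is a different component, no $s\in S\setminus B_T$ can be adjacent to $D\setminus A_T$, forcing $N(s)\cap D\subseteq A_T$; meanwhile, each $a\in A_T\subseteq D$ lies in the separator $T$ and hence needs a neighbor in $\widehat{D}(T)$, which for $a\in D$ can only be found in $S\setminus B_T$. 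These combine to give
\begin{equation*}
A_T=\bigcup_{s\in S\setminus B_T}N(s)\cap D,
\end{equation*}
and by minimality of $T$ each $b\in B_T$ needs a neighbor in $D\setminus A_T$, giving $B_T=\{s\in S:N(s)\cap D\not\subseteq A_T\}$. Since $B'=B''$, the union formula yields $A'=A''$, hence $S'=S''$.

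For uniqueness, the chain $\{D(S'):S'\in\meshsep_{F,D}\}$ has a unique inclusion-minimum $D_{\min}$, and the formulas above show that any $S'\in\meshsep_{F,D}$ with $D(S')=D_{\min}$ is determined by $A'=D\setminus D_{\min}$, which in turn determines $B'=\{s\in S:N(s)\cap D\not\subseteq A'\}$. For the segment property, I would enumerate $\cc(G-\Gamma_{F,D})$, noting $\Gamma_{F,D}=S\cup A_{F,D}$. These components are $D\setminus A_{F,D}=D(S_{F,D})$ with neighborhood $S_{F,D}$ (a minimal separator of $G$ by Lemma~\ref{lem:fillin-minsep}), $\widehat{D}$ with neighborhood $S$ (a minimal separator by hypothesis), and any further component $C\in\cc(G-S)$ disjoint from $D$ and $\widehat{D}$ (untouched by the removal of $A_{F,D}\subseteq D$). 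For such $C$, $N(C)\subseteq S$ is a minimal separator of $G$ because $C$ is full to $N(C)$ by definition while the component of $G-N(C)$ containing $D$ is full to $N(C)\subseteq S$ (as $D$ is full to $S$).
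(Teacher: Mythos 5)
Your proof is correct and follows essentially the same route as the paper: noncrossing of the minimal separators of the chordal graph $G+F$ combined with the structure given by Lemma~\ref{lem:meshsep}, a case analysis on where $S''\setminus S'$ lies among the components of $G-S'$, and the same enumeration of $\cc(G-\Gamma_{F,D})$ for the segment claim. The only divergence is minor: where the paper handles the case $S''\cap D(S')\neq\emptyset$ directly by showing $\widehat{D}(S')\subseteq\widehat{D}(S'')$ and hence $D(S'')\subseteq D(S')$, you resolve the residual symmetric subcase by proving $S'=S''$ via the identity $S'\cap D=\bigcup_{s\in S\setminus S'}\left(N(s)\cap D\right)$ (which also powers your explicit uniqueness argument, where the paper implicitly uses $S'=N(D(S'))$) --- a valid, slightly longer detour rather than a genuinely different method.
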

\begin{proof}
Consider two separators $S',S'' \in \meshsep_{F,D}$. 
Since they are minimal separators of $G+F$, they are noncrossing, so in particular,
$S''$ is either contained in $N[D(S')]$ or disjoint with $D(S')$.
Clearly, in the latter case, we have $D-S'' = D(S'') \supseteq D(S')$.
If this is not the case, we have $S'' \subseteq N[D(S')]$ and $S'' \cap D(S') \neq \emptyset$.
Since $S'' \subseteq N[D(S')] = D(S') \cup S'$, we have $\widehat{D}(S'') \supseteq \widehat{D}(S')$
and thus any vertex of $S' \setminus S''$ belongs to $\widehat{D}(S'')$. 
Hence, no vertex of $S'$ can be an element of $D(S'')$, and hence $D(S'') \subseteq D(S')$.

To see that $\Gamma_{F,D} = S \cup S_{F,D}$ is a potential segment in $G$, note that $G-\Gamma_{F,D}$
has only components $(\cc(G-S) \setminus \{D\}) \cup \{D(S_{F,D})\}$. 
Then, $N(D(S_{F,D})) = S_{F,D}$, $N(\widehat{D}) = S$, and for every other component $D' \in \cc(G-S) \setminus \{D,\widehat{D}\}$
we have that $N(D') \subseteq S$ is a minimal separator in $G$.
Since $S_{F,D}$ and $S$ are both minimal separators of $G+F$, $F$ respects $\Gamma_{F,D}$.
\end{proof}

Let us now focus more closely on the separator $S_{F,D}$ and the component $D(S_{F,D})$. 
By Lemma~\ref{lem:chordal-minsep-clique}, there exists a (not necessarily unique) maximal clique
$\Om_{F,D}$ of
$G+F$ that contains $S_{F,D}$ and is contained in $N[D(S_{F,D})]$. 
The minimality of $D(S_{F,D})$ implies the following.
\begin{lemma}\label{lem:MFD}
There exists a maximal proper strong module $M_{F,D}$ of $G[D]$ that contains all components $D'$
of $G-\Om_{F,D}$ with $N(D') \not\subseteq S_{F,D}$.
\end{lemma}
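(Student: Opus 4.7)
The plan is to analyze $\Cc := \{D' \in \cc(G-\Om_{F,D}) : N(D') \not\subseteq S_{F,D}\}$ by its cardinality, invoking the minimality of $D(S_{F,D})$ in $\meshsep_{F,D}$ only in the delicate case $|\Cc|=1$. As a preliminary step, I would apply Lemma~\ref{lem:pmc-minseps} to the PMC $\Om_{F,D}$ together with the component $\widehat{D}(S_{F,D})$ of $G-\Om_{F,D}$. Since $\widehat{D}(S_{F,D})$ is a component of $G-S_{F,D}$ full to $S_{F,D}$, its $G$-neighborhood equals $S_{F,D}$, and Lemma~\ref{lem:pmc-minseps} gives that $(\Om_{F,D} \setminus S_{F,D}) \cup \bigcup \Cc$ is precisely $D(S_{F,D})=D-S_{F,D}$. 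In particular $\bigcup \Cc = D \setminus \Om_{F,D}$, and the case $\Cc=\emptyset$ is trivial.

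For the case $|\Cc|\geq 2$, a direct argument using that $G[D]$ is a mesh suffices. If two distinct components $D_1',D_2'\in \Cc$ contained vertices $u_i \in D_i' \cap M_i$ for different maximal proper strong modules $M_1, M_2$ of $G[D]$, then since $G[D]$ is a mesh we would have $u_1u_2 \in E(G)$, connecting $D_1'$ and $D_2'$ inside $G-\Om_{F,D}$ and contradicting that they are distinct components. The same reasoning rules out a single element of $\Cc$ spanning two different modules as soon as another distinct element of $\Cc$ exists. Hence every $D' \in \Cc$ lies in one common maximal proper strong module, which serves as $M_{F,D}$.

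The main obstacle is the case $\Cc = \{D'\}$ with $D'$ possibly straddling two distinct maximal proper strong modules $M_1, M_2$ of $G[D]$. I would derive a contradiction with the minimality of $D(S_{F,D})$ by producing some $S^* \in \meshsep_{F,D}$ with $D(S^*) \subsetneq D(S_{F,D})$. Fix a clique tree of $G+F$ and let $s$ be the node with $\bag(s)=\Om_{F,D}$; consider the edge $e_1$ incident to $s$ in the direction of the subtree containing $D'$, which exists because $\cc(G-\Om_{F,D})=\{\widehat{D}(S_{F,D}),D'\}$ has two components and hence $s$ has at least two neighbors in the clique tree. Set $S^* := \adh(e_1)$, a minimal separator of $G+F$ by Proposition~\ref{prop:adh}. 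A short subtree argument shows that every $u\in N_G(D')$ belongs to $\adh(e_1)$: as $u\in \Om_{F,D}$ has a $G$-neighbor in $D'$, the connected subtree $T_u$ must reach from $s$ into the subtree of $D'$ and thus contain the endpoint of $e_1$ in that subtree. Using again that $G[D]$ is a mesh, every $w\in D\cap \Om_{F,D}$, lying in some module $M_j$, is adjacent to all of $D'\cap M_i$ for any $i\neq j$; since both $D'\cap M_1$ and $D'\cap M_2$ are nonempty, $w$ has a neighbor in $D'$, yielding $D\cap \Om_{F,D}\subseteq N_G(D')\subseteq S^*$.

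Finally I would verify the defining conditions of $\meshsep_{F,D}$ for $S^*$: it is a minimal separator of $G+F$ contained in $\Om_{F,D}\subseteq N[D]$; since $\Om_{F,D}\setminus S_{F,D}\subseteq D$ is nonempty (in the chordal graph $G+F$ the minimal separator $S_{F,D}$ is a clique but not a maximal one, so $\Om_{F,D}\supsetneq S_{F,D}$) and is contained in $S^*$, while $S\cap D=\emptyset$, we obtain $S^*\not\subsetneq S$; and $D-S^*\supseteq D'$ meets both $M_1$ and $M_2$. Since moreover $S^*\cap D\supseteq (S_{F,D}\cap D)\cup(\Om_{F,D}\setminus S_{F,D}) \supsetneq S_{F,D}\cap D$, Lemma~\ref{lem:meshsep} yields $D(S^*)=D-S^*\subsetneq D-S_{F,D}=D(S_{F,D})$, the desired contradiction. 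The technical crux is the clique-tree verification that $N_G(D')\subseteq S^*$.
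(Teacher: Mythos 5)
Your proof is correct, and its core is the same as the paper's: use the inclusion-wise minimality of $D(S_{F,D})$ within $\meshsep_{F,D}$ (Corollary~\ref{cor:meshsep}) to force each relevant component into a single maximal proper strong module, and the fact that $\Quo(D)$ is a clique to force all of them into a common one. The implementation differs, though. The paper treats every $D'$ uniformly in a few lines: $D'\subsetneq D(S_{F,D})$, so by minimality $N(D')\notin\meshsep_{F,D}$, and the only condition of $\meshsep_{F,D}$ that can fail is the module condition (if $D'$ met two modules, then $D\subseteq N[D']$, so $D-N(D')=D'$ would make $N(D')$ a member with strictly smaller $D(\cdot)$). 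You instead split on the size of your family $\Cc$ of components with $N(D')\not\subseteq S_{F,D}$, dispatch $|\Cc|\ge 2$ by direct mesh adjacency without any minimality, and in the case $|\Cc|=1$ you replace the competing separator $N(D')$ by a clique-tree adhesion $S^*=\adh(e_1)\supseteq N_G(D')$. This buys something real: membership in $\meshsep_{F,D}$ requires being a minimal separator of $G+F$, which is automatic for an adhesion (Proposition~\ref{prop:adh}), whereas for $N_G(D')$ this point is left implicit in the paper; the price is a longer argument and some clique-tree bookkeeping. Your verifications of $S^*\in\meshsep_{F,D}$ and of the strict inclusion $D(S^*)\subsetneq D(S_{F,D})$ are sound.

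One inaccuracy to fix: your justification for the existence of $e_1$ asserts $\cc(G-\Om_{F,D})=\{\widehat{D}(S_{F,D}),D'\}$, which is false in general --- besides $\widehat{D}(S_{F,D})$ there may be further components of $G-\Om_{F,D}$ whose neighborhoods are contained in $S_{F,D}$ (for instance, non-full components of $G-S_{F,D}$). Fortunately nothing in your argument depends on this: $e_1$ exists simply because $D'$ is a nonempty connected set disjoint from $\bag(s)$, so the subtrees $T_w$ for $w\in D'$ all lie in a single component of $T-s$, and the rest (the containment $N_G(D')\subseteq\adh(e_1)$ and the conclusion) goes through as written.
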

\begin{proof}
Let $D'$ be a component of $G-\Om_{F,D}$ with $N(D') \not\subseteq S_{F,D}$.
Then, as $\Om_{F,D} \setminus S_{F,D} \subseteq D(S_{F,D})$, we have $D' \subsetneq D(S_{F,D})$.
By the minimality of $S_{F,D}$ in $\meshsep_{F,D}$, we have $N(D') \notin \meshsep_{F,D}$.
This can only happen due to $D'$ being contained in a single maximal proper strong module
of $G[D]$. Furthermore, since $\Quo(D)$ is a clique, all such components $D'$ are contained
in the same module, denoted henceforth $M_{F,D}$.
\end{proof}
If there are no components $D'$ as in Lemma~\ref{lem:MFD}, we take $M_{F,D}$ to be any 
maximal proper strong module of $G[D]$ that has a nonempty intersection with $D(S_{F,D})$.
Since $S_{F,D} \in \meshsep_{F,D}$, we can pick a vertex $q_{F,D} \in D(S_{F,D}) \setminus M_{F,D}$.
Note that since all components $D'$ of $G-\Om_{F,D}$ with $N(D') \not\subseteq S_{F,D}$ are contained in $M_{F,D}$, by Lemma~\ref{lem:MFD},
and $q_{F,D}\notin M_{F,D}$, it follows that $q_{F,D}\in \Om_{F,D}\setminus S_{F,D}$.

The tuple $(S_{F,D}, \Om_{F,D}, M_{F,D}, q_{F,D})$ as defined above is called a \emph{footprint} of the component $D$.
Note that, although $S_{F,D}$ is defined uniquely, the remaining components of the footprint may not be defined uniquely.

We now show that $q_{F,D}$ and $M_{F,D} \cap D(S_{F,D})$ define the separator $S_{F,D}$.
To this end, we first show the following generic lemma; recall that for a connected nonempty
set of vertices $A \subseteq V(G)$, Lemma~\ref{lem:segment-NA} asserts that $N[A]$ 
is a potential segment and we defined $\partial A = N(V(G) \setminus N[A])$.
\begin{lemma}\label{lem:SFD-make-S}
For a set $A \subseteq D$ that is not contained in a single maximal proper strong module of $G[D]$,
the set $\partial A$
is a minimal separator in $G$ with one full component containing $A$
and one full component containing $\widehat{D}$.
\end{lemma}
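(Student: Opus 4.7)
The plan is to apply Lemma~\ref{lem:segment-NA} to $A$ and then identify the two required full components of $G - \partial A$ by direct inspection, relying on the mesh structure of $G[D]$ and on $\widehat{D}$ being full to $S$. First, I would exploit the mesh property: since $A$ contains vertices $p \in M_p$ and $q \in M_q$ from two distinct maximal proper strong modules of $G[D]$, and $\Quo(D)$ is a clique, we have $pq \in E(G)$ and every vertex of $D$ is adjacent to $p$ or $q$. In particular, $D \subseteq N[p,q] \subseteq N[A]$, and every vertex of $A$ is joined to $\{p,q\}$ by an edge, so $G[A]$ is connected. Lemma~\ref{lem:segment-NA} then guarantees that $N[A]$ is a potential segment of $G$.

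Using $N_G(D) = S$, I will then record the explicit decomposition $N[A] = D \cup (N(A) \cap S)$, so $V(G) \setminus N[A] = (S \setminus N(A)) \cup (V(G) \setminus (D \cup S))$. In particular, $\widehat{D}$ is contained in $V(G) \setminus N[A]$ and is therefore disjoint from $\partial A \subseteq N[A]$, while $\partial A$ itself is a subset of $N(A) \setminus A$, since a vertex of $A$ has all its neighbors inside $N[A]$. Let $C_A$ be the component of $G - \partial A$ containing $A$ (well-defined since $G[A]$ is connected and $A \cap \partial A = \emptyset$), and let $C_{\widehat{D}}$ be the component containing $\widehat{D}$; these components are distinct, since $C_A \subseteq N[A]$ while $C_{\widehat{D}} \subseteq V(G) \setminus N[A]$.

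What remains, and is the main content of the proof, is to check that both $C_A$ and $C_{\widehat{D}}$ are full to $\partial A$; combined with distinctness, this will show $\partial A$ is a minimal separator in $G$ of the required form. For $C_A$ this is immediate: every $v \in \partial A$ lies in $N(A) \setminus A$, so it has a neighbor in $A \subseteq C_A$. For $C_{\widehat{D}}$, the key observation is that $\widehat{D}$ is full to $S$, so every vertex of $S \setminus N(A)$ has an edge to $\widehat{D}$ in $G[V(G)\setminus N[A]]$, giving $C_{\widehat{D}} \supseteq \widehat{D} \cup (S \setminus N(A))$. Now split $v \in \partial A$ according to whether $v \in S$ or $v \in D \setminus A$: in the first case, $v$ has a neighbor in $\widehat{D} \subseteq C_{\widehat{D}}$ by fullness of $\widehat{D}$ to $S$; in the second case, the neighbors of $v$ lie in $D \cup S$, but the witness neighbor outside $N[A]$ cannot lie in $D$ (as $D \subseteq N[A]$), and so must lie in $S \setminus N(A) \subseteq C_{\widehat{D}}$. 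The principal obstacle here is keeping track of which part of $S$ is inside $N[A]$ and which part is outside; once the identity $N[A] \cap S = N(A) \cap S$ is recorded, the case analysis for $C_{\widehat{D}}$ reduces to two easy sub-cases as above.
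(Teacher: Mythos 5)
Your proof is correct and follows essentially the same route as the paper: you identify the two full components of $G-\partial A$ (your $C_A$, $C_{\widehat D}$ are the paper's $D(A)$, $\widehat D(A)$), verify fullness of the first by $\partial A \subseteq N(A)$, and verify fullness of the second by the same case split on $v \in S$ vs.\ $v \in D$ that the paper uses. The extra observations you record (that the mesh hypothesis forces $G[A]$ to be connected, and the explicit decomposition $N[A] = D \cup (N(A)\cap S)$) are sound and tidy up steps the paper leaves implicit.
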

\begin{proof}
Clearly, $\partial A \subseteq N(A)$.

Let $D(A) = N[A] \setminus \partial A = N[A] \setminus N(V(G) \setminus N[A])$.
Clearly, $A \subseteq D(A)$. 
Since $G[A]$ is connected and $D(A) \subseteq N[A]$, we have that $G[D(A)]$ is connected
and therefore is a connected component of $G-\partial A$.
Since $A$ contains vertices of at least two maximal proper strong modules of $G[D]$,
we have $D \subseteq N[A]$ and $D \setminus \partial A \subseteq D(A)$.
Also, note that, as $\partial A \subseteq N(A)$ and $A \subseteq D(A)$, the component $D(A)$
is full to $\partial A$.

Since $A \subseteq D$, we have $N[A] \subseteq N[D]$, and hence there exists a unique
component $\widehat{D}(A)$ of $G-\partial A$ that contains $\widehat{D}$
and $S\setminus N[A]$. We claim that $\widehat{D}(A)$ is full to $\partial A$.
To this end, consider an arbitrary $x \in \partial A$. If $x \in S$, we have
$x \in N(\widehat{D})$ and $\widehat{D} \subseteq \widehat{D}(A)$. 
Otherwise, we have $x \in D$. Since $x \in \partial A$, there exists a neighbor
$y \in N(x) \setminus N[A]$. As $D \subseteq N[A]$, we need to have $y \in S$ and, consequently,
$y \in \widehat{D}(A)$. This proves that $\widehat{D}(A)$ is full to $\partial A$,
completing the proof of the lemma.
\end{proof}

\begin{lemma}\label{lem:SFD-def}
$S_{F,D}$ is equal to $\partial A_{F,D}$ for
$A_{F,D} = \{q_{F,D}\} \cup (M_{F,D} \cap D(S_{F,D}))$.
\end{lemma}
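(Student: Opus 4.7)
The strategy is to establish the equality $N[A] = D(S_{F,D}) \cup S_{F,D}$, from which $\partial A = S_{F,D}$ will follow at once. Observe first that $A \subseteq D(S_{F,D}) \subseteq D$, and that $M_{F,D} \cap D(S_{F,D})$ is nonempty by the very choice of $M_{F,D}$ made in the text. Since $q_{F,D} \notin M_{F,D}$ lies in some other maximal proper strong module of $G[D]$ and $\Quo(D)$ is a clique, $q_{F,D}$ is adjacent in $G$ to every vertex of $M_{F,D} \cap D(S_{F,D})$; hence $G[A]$ is connected and $A$ meets at least two maximal proper strong modules of $G[D]$. Lemma~\ref{lem:SFD-make-S} thus applies to $A$, yielding that $\partial A$ is a minimal separator in $G$ with full components on both sides, one containing $A$ and one containing $\widehat{D}$.

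The inclusion $N[A] \subseteq D(S_{F,D}) \cup S_{F,D}$ is immediate since $A \subseteq D(S_{F,D})$ and $D(S_{F,D})$ is a component of $G-S_{F,D}$ full to $S_{F,D}$. For the reverse inclusion, I split by location of $v$. A vertex $v \in D(S_{F,D}) \cap M_{F,D}$, or $v = q_{F,D}$, is in $A$; a vertex $v \in D(S_{F,D}) \setminus M_{F,D}$ belongs to a different module of $G[D]$ and is therefore complete to $M_{F,D}$ by the mesh property, so adjacent to $M_{F,D} \cap D(S_{F,D}) \subseteq A$. Analogously, for $v \in S_{F,D} \cap D$: if $v \in M_{F,D}$ then $v$ is adjacent to $q_{F,D}$ (which sits in another module of $G[D]$), and otherwise $v$ is adjacent to $M_{F,D} \cap D(S_{F,D})$.

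This leaves the sole hard case $v \in S_{F,D} \setminus D$; since $S_{F,D} \subseteq N[D]$ and $N(D) = S$, we necessarily have $v \in S$. If $vq_{F,D} \in E(G)$ we are done, so assume otherwise. Both $v$ and $q_{F,D}$ lie in the maximal clique $\Om_{F,D}$ of $G+F$, which is a PMC of $G$, so property~\pmcb\ supplies a component $D' \in \cc(G - \Om_{F,D})$ with $\{v, q_{F,D}\} \subseteq N(D')$. Pick a neighbor $d' \in D'$ of $q_{F,D}$; since $d' \notin \Om_{F,D} \supseteq S_{F,D}$ and $d'$ is adjacent to $q_{F,D} \in D(S_{F,D})$, we obtain $d' \in D(S_{F,D})$, and by connectedness of $D'$ inside $G-S_{F,D}$ this yields $D' \subseteq D(S_{F,D})$. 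Moreover $q_{F,D} \in N(D') \setminus S_{F,D}$, so $N(D') \not\subseteq S_{F,D}$, and Lemma~\ref{lem:MFD} forces $D' \subseteq M_{F,D}$. Consequently $D' \subseteq M_{F,D} \cap D(S_{F,D}) \subseteq A$ and $v \in N(D') \subseteq N(A)$.

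With $N[A] = D(S_{F,D}) \cup S_{F,D}$ in hand, the equality $\partial A = S_{F,D}$ is quick: each $v \in D(S_{F,D})$ has all its $G$-neighbors in $D(S_{F,D}) \cup S_{F,D} = N[A]$ (as $D(S_{F,D})$ is a connected component of $G - S_{F,D}$), so $v \notin \partial A$, while each $v \in S_{F,D}$ has a neighbor in the full component $\widehat{D}(S_{F,D})$, which is disjoint from $N[A]$, placing $v \in \partial A$. The main obstacle is the case $v \in S_{F,D} \setminus D$: neither the mesh structure of $D$ nor the module $M_{F,D}$ alone suffices, and one must combine property~\pmcb\ for $\Om_{F,D}$ with the defining property of $M_{F,D}$ from Lemma~\ref{lem:MFD} to locate the witness component $D'$ inside $A$.
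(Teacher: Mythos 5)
Your proposal is correct and follows essentially the same route as the paper: the crux in both is the case of a vertex $v \in S \cap S_{F,D}$ nonadjacent to $q_{F,D}$, handled by covering the nonedge $vq_{F,D}$ via \pmcb{} applied to the PMC $\Om_{F,D}$ and then using Lemma~\ref{lem:MFD} to place the covering component inside $M_{F,D} \cap D(S_{F,D}) \subseteq A$. The only differences are presentational (the paper argues by contradiction and compresses the easy inclusions into $D \subseteq N[A]$, while you do an explicit case split and spell out the passage from $N[A]=D(S_{F,D})\cup S_{F,D}$ to $\partial A = S_{F,D}$).
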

\begin{proof}
By definition, $A_{F,D} \subseteq D(S_{F,D})$ and, hence, $N[A_{F,D}] \subseteq N[D(S_{F,D})]$. 
Observe that it suffices to show that $N[A_{F,D}] = N[D(S_{F,D})]$.
Assume otherwise, and let $x \in N[D(S_{F,D})] \setminus N[A_{F,D}]$.

Since $A_{F,D}$ contains vertices from two different maximal proper strong modules of $G[D]$,
we have $D \subseteq N[A_{F,D}]$. Since $D(S_{F,D}) \subseteq D$, we have $x \in N[D]$. Hence,
$x \in S$ and $x \in S_{F,D}$.

Recall that $q_{F,D} \in \Om_{F,D} \setminus S_{F,D}$. Since $q_{F,D} \in A_{F,D}$ but $x \notin N[A]$, we have $q_{F,D}x \notin E(G)$.
As $\Om_{F,D}$ is a potential maximal clique in $G$, there exists a component $D'$ of $G-\Om_{F,D}$
with $q_{F,D}, x \in N(D')$. 

Since $q_{F,D} \notin S_{F,D}$, $D' \subseteq D(S_{F,D})$ and we have $D' \subseteq M_{F,D}$ from Lemma~\ref{lem:MFD}.
However, then $D' \subseteq A_{F,D}$, a contradiction to the fact that $x \notin N[A_{F,D}]$. 
This contradiction finishes the proof of the lemma.
\end{proof}

\begin{corollary}\label{cor:SFD}
Define $A_{F,D}' := M_{F,D} \cup \{q_{F,D}\}$. Then the following holds:
\begin{itemize}
\item $G[A_{F,D}']$ is connected,
\item $S'_{F,D} := \partial A'_{F,D}$ is a minimal separator in $G$,
\item $N[A_{F,D}'] \setminus S'_{F,D}$ contains $D(S_{F,D})$,
\item $S'_{F,D} \subseteq \Gamma_{F,D}$, in particular, $S'_{F,D}$ is disjoint with $D(S_{F,D})$.
\end{itemize}
\end{corollary}
\begin{proof}
$A_{F,D}'$ is connected, because $D$ is a mesh.
By Lemma~\ref{lem:SFD-make-S}, the separator $S'_{F,D} := \partial A'_{F,D}$ is a minimal separator
in $G$.
Clearly, $S'_{F,D} \subseteq N[D]$.
Since $A_{F,D} \subseteq A'_{F,D}$ and $S_{F,D} = \partial A_{F,D}$ by Lemma~\ref{lem:SFD-def} we have that
$S'_{F,D}$ is disjoint from $D(S_{F,D})$ and $D(S_{F,D}) \subseteq N[A_{F,D}'] \setminus S'_{F,D}$.
Consequently, we have $S'_{F,D} \subseteq \Gamma_{F,D}$.
\end{proof}

Lemma~\ref{lem:SFD-def} and Corrolary~\ref{cor:SFD} suggest the following modifying operation, which we henceforth call a \emph{footprint replacement}.
Fix a footprint $(S_{F,D}, \Om_{F,D}, M_{F,D}, q_{F,D})$ of $D$. 
Define $A'_{F,D}$ and $S'_{F,D}$ as in Corollary~\ref{cor:SFD}.
By Lemma~\ref{lem:force-separator} there exists a minimal completion $F'_{F,D}$ of $\torso(\Gamma_{F,D})$ that keeps
$S'_{F,D}$ as a minimal separator, and we can consider $F[\Gamma_{F,D} \to F'_{F,D}]$ instead of $F$ as $F$ respects $\Gamma_{F,D}$.

The footprint replacement operation will be useful later in this section, as in some cases there is only a polynomial number of choices for 
the module $M_{F,D}$ and the vertex $q_{F,D}$,
giving us polynomial number of choices for the set $A'_{F,D}$ and the separator $S'_{F,D}$. This in particular happens if, due to one of the lemmata from the previous section, there is only a polynomial number of choices for 
a fuzzy version of $D$.

\subsection{A separator with two full mesh components}\label{ss:twomesh}

In the previous section we have shown that from the knowledge of the module $M_{F,D}$ and the vertex $q_{F,D}$ in a footprint $(S_{F,D}, \Om_{F,D}, M_{F,D}, q_{F,D})$ one can
deduce a canonical separator between $S_{F,D}$ and $S$. 
A small number of choices for $M_{F,D}$ and $q_{F,D}$ follows from
a small number of choices for a fuzzy version of $D$.
Unfortunately, sometimes we do not even have the above;
this happens in case of a separator with two full components being meshes. 
In this section we focus on this situation.

Let $S$ be a minimal separator
with two full components $D_1$ and $D_2$ being meshes, and 
for $i=1,2$ let $(S_{F,D_i}, \Om_{F,D_i}, M_{F,D_i}, q_{F,D_i})$ be a footprint
of $D_i$. 
Furthermore, let $R = V(G) \setminus (D_1 \cup D_2 \cup S)$ be the \emph{rest}
of the graph.
We first observe that 
the Separator Covering Lemma (Lemma~\ref{lem:covering-general}),
   together with the assumption that $D_1$ and $D_2$ are meshes,
   asserts
existence of vertices $p_1,q_1,r_1 \in D_1$ and $p_2,q_2,r_2 \in D_2$
such that $N[p_1,q_1,r_1,p_2,r_2,q_2] = D_1 \cup D_2 \cup S = V(G) \setminus R$.
In particular, there are only $n^6$ reasonable choices for $R$.

\begin{figure}[tb]
\begin{center}
\includegraphics{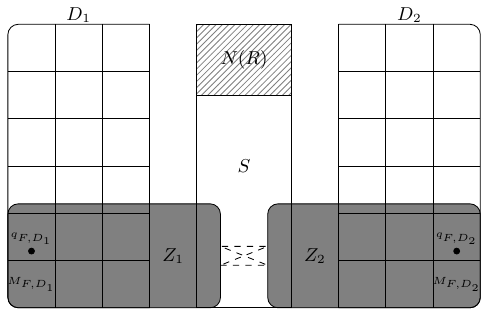}
\end{center}
\caption{The setting of Section~\ref{ss:twomesh}.
  We show that within a polynomially sized family of candidates
   there are supersets $Z_i$ of $M_{F,D_i} \cup \{q_{F,D_i}\}$
   that are disjoint with $N(R)$ and anti-adjacent to each other.}\label{fig:twomesh1}
\end{figure}

Intuitively, we would like to split the graph in a canonical way
between $S_{F,D_1}$, $S_{F,D_2}$ and $N[R]$. To this end, we show
that one can identify connected sets $Z_1$ and $Z_2$ with the following properties:
$q_{F,D_i} \in Z_i$, $M_{F,D_i} \subseteq Z_i$ for $i=1,2$, but there are no edges between
sets $R$, $Z_1$, and $Z_2$; see also Figure~\ref{fig:twomesh1}. More formally, we prove the following lemma:

\begin{lemma}\label{lem:rozrywanie}
One can in polynomial time compute a family $\Ff_9$ of size at most $n^6$
such that the following holds.
Let $S$ be a minimal separator in $G$ and let $D_1$ and $D_2$ be two components
of $G-S$ that are full to $S$ and are meshes.
For $i=1,2$, let $M_i^p$ be an arbitrary proper strong module of $D_i$ 
and let $q_i \in D_i \setminus M_i^p$ be arbitrary.
Then there exists an element $(Z_1,Z_2) \in \Ff_9$ such that
$Z_1,Z_2 \subseteq V(G)$ induce connected subgraphs of $G$,
there are no edges between $Z_1$ and $Z_2$ and
 for $i=1,2$, we have
$$M_i^p \cup \{q_i\} \subseteq Z_i \subseteq (D_i \cup S) \setminus N(V(G) \setminus (S \cup D_1 \cup D_2)).$$
\end{lemma}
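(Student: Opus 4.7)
The plan is to enumerate $\Ff_9$ by iterating over all $n^6$ six-tuples $(a_1, \ldots, a_6) \in V(G)^6$, producing one candidate pair $(Z_1, Z_2)$ per tuple. Writing $A_i := \{a_{3i-2}, a_{3i-1}, a_{3i}\}$ for the $i$-th block, the intended interpretation is $A_i = \{p_i, q_i, r_i\} \subseteq D_i$ arising from the general Separator Covering Lemma (Lemma~\ref{lem:covering-general}).

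First, I would observe that such a \emph{good} tuple exists for any valid input $(S, D_1, D_2, M_1^p, q_1, M_2^p, q_2)$. Since $D_i$ is a mesh, $\Quo(D_i)$ is a clique; picking any $p_i \in M_i^p$, the modules of $p_i$ and $q_i$ in $D_i$ are distinct and adjacent in $\Quo(D_i)$, so Lemma~\ref{lem:covering-general} yields $r_i \in D_i$ with $S \subseteq N[p_1, q_1, r_1, p_2, q_2, r_2]$. Together with the mesh identity $D_i \subseteq N[p_i, q_i]$, this upgrades to the exact equality
\[
N[p_1, q_1, r_1, p_2, q_2, r_2] = D_1 \cup D_2 \cup S,
\]
so $R = V(G) \setminus N[A_1 \cup A_2]$ is recoverable from the tuple.

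Second, the construction targets $Z_i = D_i$, which trivially satisfies all the requirements: $D_i$ is connected, contains $M_i^p \cup \{q_i\}$, is disjoint from $N(R) \subseteq S$, and has no edges to $D_{3-i}$ since these lie in distinct components of $G-S$. To recover $D_i$ purely from the tuple, set $R^{\ast} := V(G) \setminus N[A_1 \cup A_2]$ and
\[
X_i := N(R^{\ast}) \cup N[A_{3-i}],
\]
and let $Z_i$ be the connected component of $a_{3i-1}$ (playing the role of $q_i$) in $G - X_i$. Note that $X_i \cap D_i = \emptyset$ already, so if $X_i \supseteq S$ for the correct tuple, then $Z_i = D_i$ exactly.

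The main obstacle is thus verifying $S \subseteq X_i$ for the correct tuple. The Separator Covering Lemma only ensures $S \subseteq N[A_1 \cup A_2]$, so a priori some $v \in S$ might be adjacent only to $A_i$ and miss $N[A_{3-i}]$. By fullness of $S$ to $D_{3-i}$ such $v$ still has $D_{3-i}$-neighbors, and since $v$ is non-adjacent to $p_{3-i}, q_{3-i}$, and $r_{3-i}$, these neighbors form a union of modules of $D_{3-i}$ avoiding the modules of $p_{3-i}$, $q_{3-i}$, and $r_{3-i}$ (by the Neighborhood Decomposition Lemma and the mesh structure). I expect a $P_6$-path extension argument, in the spirit of the proofs of Lemma~\ref{lem:covering-general} and Lemma~\ref{lem:Omplus}, to show that any such $v$ must have a neighbor in $R$---and thus lie in $N(R) \subseteq X_i$---by building a would-be induced $P_6$ through a ``tricky'' $D_{3-i}$-module, $v$, an $A_i$-neighbor of $v$, and the mesh structure of $D_i$. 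Formalizing this structural dichotomy between $S$-vertices captured by $N[A_{3-i}]$ and those captured by $N(R)$ is the delicate technical step on which the proof hinges.
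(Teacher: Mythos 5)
Your setup (enumerating all $n^6$ six-tuples, using Lemma~\ref{lem:covering-general} to get $N[p_1,q_1,r_1,p_2,q_2,r_2]=D_1\cup D_2\cup S$ and thus recover $R$) matches the opening of the paper's proof, but the step you flag as ``the delicate technical step'' is not merely delicate: it is false, and the construction built on it fails. You need that every $v\in S$ outside $N[A_{3-i}]$ has a neighbor in $R$, i.e.\ $S\subseteq N(R^\ast)\cup N[A_{3-i}]$; since $S$ is full to both sides, any $v\in S$ escaping $X_i$ is adjacent to $D_i\subseteq Z_i$ (so $v\in Z_i$) and also to $D_{3-i}\subseteq Z_{3-i}$, destroying the required anti-adjacency of $Z_1$ and $Z_2$ --- so there is no way to salvage the scheme by tolerating $Z_i\supsetneq D_i$ without controlling these vertices. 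Such vertices genuinely exist in $P_6$-free instances, even with $R=\emptyset$ (which the hypotheses allow): take $D_1=K_2$, let $D_2$ be the complete multipartite graph with $k\geq 4$ parts $\{x_j,y_j\}$ (a mesh), and let $S=\{u_1,\dots,u_k\}$ be a clique where each $u_j$ is complete to $D_1$ and has $N(u_j)\cap D_2=\{x_j\}$. One checks that the longest induced paths here have five vertices, so the graph is $P_6$-free, and $S$ is a minimal separator with both full components meshes; yet no three vertices of $D_2$ dominate all of $S$, so for every tuple some $u_j$ lies outside $N[A_2]\cup N(R^\ast)$, lands in your $Z_1$, and is adjacent to $x_j\in D_2\subseteq Z_2$. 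These $u_j$ are exactly what the paper calls the vertices of $S_1^{P_4}$ (they admit an induced $P_4$ of the form $u_jD_2D_2D_2$, e.g.\ $(u_j,x_j,x_{j'},y_j)$, and are complete to $D_1$); $P_6$-freeness constrains them strongly but does not force them to see $R$ or the chosen triple on the other side.

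This is why the paper never attempts to output $Z_i=D_i$: exact recovery of a mesh component is precisely what is unattainable throughout (compare the ``fuzzy versions'' in Lemmas~\ref{lem:capture-mesh-nomesh} and~\ref{lem:mesh-see-sticking}), and the slack $Z_i\subseteq(D_i\cup S)\setminus N(R)$ in the statement exists to accommodate exactly these $S$-vertices. The actual proof sets $O_S=(N[A_1]\cap N[A_2])\cup N(R)$, uses Lemma~\ref{lem:nei-decomp} to partition $S\setminus O_S$ into $P_4$-type vertices $S_i^{P_4}$ and tricky vertices $S_i^{pqr}$, proves by a $P_6$-argument that the dangerous subset $\widehat{S}_i^{P_4}$ (those $P_4$-type vertices with a neighbor in $M_{3-i}^p$) is complete to the rest of $N[p_i,q_i,r_i]\setminus O_S$, exploits this to cut them off by passing to the maximal proper strong module $N_i$ of $G[N[p_i,q_i,r_i]\setminus O_S]$ containing $p_i$ (when that graph is a mesh), and finally takes $Z_i$ to be the component of $(N_i\cup\{q_i\})\setminus N[N_{3-i}]$ containing $p_i$ and $q_i$; the containment $M_i^p\subseteq Z_i$ and the anti-adjacency of $Z_1,Z_2$ are then consequences of these structural claims. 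None of this is supplied by the projection $G-(N(R^\ast)\cup N[A_{3-i}])$, so what is missing from your sketch is not a routine $P_6$-extension argument but the core content of the lemma.
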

\begin{proof}
Let $M_i^p$ and $q_i$ be as in the lemma statement.
Let $p_i \in M_i^p$ be arbitrary and let $M_i^q$ be the maximal proper strong module of $D_i$
that contains $q_i$.

By the Separator Covering Lemma (Lemma~\ref{lem:covering-general}), there exist
$r_1 \in D_1$ and $r_2 \in D_2$ such that
$$N[p_1,q_1,r_1,p_2,q_2,r_2] \supseteq S.$$
Note that, as both $D_1$ and $D_2$ are meshes and $p_i,q_i$ belong to different maximal
proper strong modules of $D_i$, we actually have
$$N[p_1,q_1,r_1,p_2,q_2,r_2] = S \cup D_1 \cup D_2.$$
Consequently, $R := V(G) \setminus N[p_1,q_1,r_1,p_2,q_2,r_2]$ is exactly the vertex set
of all connected components of $G-S$ except for $D_1$ and $D_2$, that is,
$R = V(G) \setminus (S \cup D_1 \cup D_2)$.
Our goal is to uniquely construct a pair $(Z_1,Z_2)$ given the tuple
$(p_1,q_1,r_1,p_2,q_2,r_2)$; by inserting the pair $(Z_1,Z_2)$ for every choice of this tuple
we obtain the desired family $\Ff_9$.

Define
$$O_S := (N[p_1,q_1,r_1] \cap N[p_2,q_2,r_2]) \cup N(R).$$
Clearly, $O_S \subseteq S$, and observe that $O_S$ is a function of the six vertices
$p_1,q_1,r_1,p_2,q_2,r_2$ only. Furthermore, every vertex of $S \setminus O_S$ has neighbors
only in $S \cup D_1 \cup D_2$; the notation $O_S$ can be read as vertices \emph{obviously in $S$}.

Let us partition $S \setminus O_S$ further. For $i=1,2$, let $S_i^{P_4} \subseteq S \setminus O_S$
be the set of these vertices $u \in S \setminus O_S$ for which there exists an induced $P_4$ of the form
$uD_{3-i}D_{3-i}D_{3-i}$. Note that every $u \in S_i^{P_4}$ is complete to $D_i$, as otherwise
an induced $P_3$ of the form $uD_iD_i$ and a $P_4$ of the form $uD_{3-i}D_{3-i}D_{3-i}$ would yield together an induced $P_6$.
In particular, $S_1^{P_4}$ and $S_2^{P_4}$ are disjoint.

For $i=1,2$, let
$$S_i^{pqr} = (S \setminus (O_S \cup S_i^{P_4})) \cap N[p_i,q_i,r_i].$$
Since $N[p_1,q_1,r_1] \cap N[p_2,q_2,r_2] \subseteq O_S$ and $S_{3-i}^{P_4}$ is complete to $D_{3-i}$, the set $S_i^{pqr}$ is disjoint
from both $S_{3-i}^{pqr}$ and $S_{3-i}^{P_4}$. 
Furthermore, the Neighborhood Decomposition Lemma (Lemma~\ref{lem:nei-decomp})
ensures that $S_1^{P_4}, S_2^{P_4}, S_1^{pqr}, S_2^{pqr}$ is a partition of $S \setminus O_S$
and that every $u \in S_i^{pqr}$ is tricky towards $(D_{3-i}, p_{3-i}, q_{3-i})$.
In particular, the latter ensures that every $u \in S_i^{pqr}$ does not have any neighbor
in $M_{3-i}^p \cup M_{3-i}^q$.

\begin{figure}[tb]
\begin{center}
\includegraphics{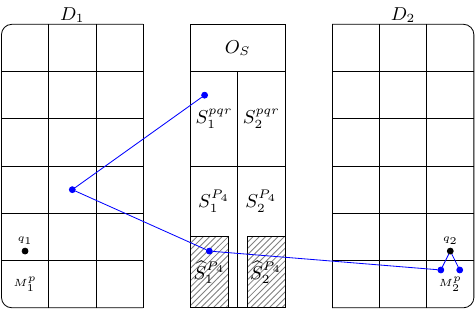}
\end{center}
\caption{Partition of $S$ analyzed in the proof of Lemma~\ref{lem:rozrywanie},
  and the sought $P_6$ depicted blue.}\label{fig:twomesh2}
\end{figure}

For $i=1,2$, let $\widehat{S}_i^{P_4} \subseteq S_i^{P_4}$ be the set of those vertices
$u \in S_i^{P_4}$ for which $N(u) \cap M_{3-i}^p \neq \emptyset$.
The crux of the proof is the following observation (cf. Figure~\ref{fig:twomesh2}):
\begin{claim}\label{cl:capture-mesh:daszek}
$\widehat{S}_i^{P_4}$
is complete to $N[p_i,q_i,r_i] \setminus (O_S \cup \widehat{S}_i^{P_4})$.
\end{claim}
\begin{proof}
Let $u \in \widehat{S}_i^{P_4}$ and $v \in N[p_i,q_i,r_i] \setminus (O_S \cup \widehat{S}_i^{P_4})$;
we want to show that $u$ and $v$ are adjacent.

If $v \in D_i$, the statement is straightforward as $S_i^{P_4}$ is complete to $D_i$, so assume
$v \in S$. Since $v \notin O_S$ nor $v \notin \widehat{S}_i^{P_4}$, we have $v \in S_i^{pqr}$
or $v \in S_i^{P_4} \setminus \widehat{S}_i^{P_4}$.
In both cases, we have that $v$ is not adjacent to $q_{3-i}$ nor to any vertex in $M_{3-i}^p$:
for $v \in S_i^{pqr}$ it follows from the fact that the vertices of $S_i^{pqr}$ are tricky
towards $(D_{3-i},p_{3-i},q_{3-i})$, and for $v \in S_i^{P_4} \setminus \widehat{S}_i^{P_4}$
it follows from the definition of $\widehat{S}_i^{P_4}$ and the fact that $S_i^{P_4} \subseteq
N[p_i,q_i,r_i]$.

Consider $A = N(u) \cap M_{3-i}^p$.
Clearly, $A \neq \emptyset$ by the definition of $\widehat{S}_i^{P_4}$.
Also, $p_{3-i} \in M_{3-i}^p \setminus A$, as otherwise
$u \in N[p_i,q_i,r_i] \cap N[p_{3-i}] \subseteq O_S$.
Since $M_{3-i}^p$ is a maximal proper strong module of a mesh $D_{3-i}$, $A$ cannot be complete
to $M_{3-i}^p\setminus A$,
that is, there exist $x \in A$ and $y \in M_{3-i}^p \setminus A$ with $xy \notin E(G)$.
Observe that $(u, x, q_{3-i}, y)$ is an induced $P_4$, denoted henceforth $P$, of the form $uD_{3-i}D_{3-i}D_{3-i}$, as
$u \notin N[p_{3-i}, q_{3-i}, r_{3-i}]$. 

Furthermore, we have that $v$ is not adjacent to any of $x$, $y$, or $q_{3-i}$.
Hence, if $vu \notin E(G)$, then $P$ together with a shortest path from $u$ to $v$
with all internal vertices lying in $D_i$ would yield an induced $P_6$ in $G$, a contradiction.
This finishes the proof of the claim.
\cqed\end{proof}

For $i=1,2$, define $N_i$ as follows.
If the quotient graph of $G[N[p_i,q_i,r_i] \setminus O_S]$ is a clique, let $N_i$ be the maximal
proper strong module of $G[N[p_i,q_i,r_i] \setminus O_S]$ that contains $p_i$; otherwise,
let $N_i = N[p_i,q_i,r_i] \setminus O_S$.
A consequence of Claim~\ref{cl:capture-mesh:daszek} is the following:
\begin{claim}\label{cl:capture-mesh:Ni}
We have $M_i^p \subseteq N_i$ and $N_i \cap N[M_{3-i}^p] = \emptyset$.
\end{claim}
\begin{proof}
For the first claim, note that $p_i \in N_i$ and $M_i^p \subseteq N[p_i,q_i,r_i] \setminus O_S$.
Thus, if $M_i^p \not\subseteq N_i$, then it must be the case that the quotient graph
of $G[N[p_i,q_i,r_i] \setminus O_S]$ is a clique and $N_i$ is one of the maximal proper strong
modules of this graph. However, then $N_i$ is complete to $N[p_i,q_i,r_i] \setminus (O_S \cup N_i)$,
in particular, $N_i \cap M_i^p$ is complete to $M_i^p \setminus N_i$. This is a contradiction
as $M_i^p$ is a maximal proper strong module of a mesh $D_i$ and both $N_i \cap M_i^p$
and $M_i^p \setminus N_i$ are nonempty.

We now move to the second claim. Note that Claim~\ref{cl:capture-mesh:daszek} ensures that if $\widehat{S}_i^{P_4}$ is nonempty, then
$\widehat{S}_i^{P_4}$ is a module of $G[N[p_i,q_i,r_i] \setminus O_S]$ complete to the rest of this graph, so in particular then $G[N[p_i,q_i,r_i] \setminus O_S]$ is a mesh.
Then, by definition of $N_i$, we have $N_i \cap \widehat{S}_i^{P_4} = \emptyset$;
note that this is also trivially true if $\widehat{S}_i^{P_4}=\emptyset$.
Consequently, $N_i \subseteq D_i \cup S_i^{pqr} \cup (S_i^{P_4} \setminus \widehat{S}_i^{P_4})$, regardless of whether $\widehat{S}_i^{P_4}$ is empty or not.
However, neither of these three sets have any neighbors in $M_{3-i}^p$:
$D_i$ is anti-complete to $D_{3-i} \supseteq M_{3-i}^p$,
 $S_i^{pqr}$ is tricky towards $(D_{3-i},p_{3-i},q_{3-i})$, and 
$\widehat{S}_i^{P_4}$ is defined as exactly these vertices of $S_i^{P_4}$ that have neighbors
in $M_{3-i}^p$.
\cqed\end{proof}

Claim~\ref{cl:capture-mesh:Ni} allows us to define $Z_i$ to be the connected component
of $G[(N_i \cup \{q_i\}) \setminus N[N_{3-i}]]$ that contains $p_i$ and $q_i$.
Note that this is a proper definition: by Claim~\ref{cl:capture-mesh:Ni}, no vertex of $M_i^p$
is adjacent to any vertex of $N_{3-i}$, and no neighbor of $q_i$ belongs to $N_{3-i}$
because $N_{3-i} \subseteq N[p_{3-i},q_{3-i},r_{3-i}] \setminus O_S$.
We summarize the properties of the sets $Z_i$ in the following claim.

\begin{claim}\label{cl:capture-mesh:Zi}
For $i=1,2$, the set $Z_i$ induces a connected subgraph of $G$, contains $q_i$ and $M_i^p$, and is contained
in $N[p_i,q_i,r_i] \setminus N(R)$ where $R = V(G) \setminus (D_1 \cup D_2 \cup S)$. 
Furthermore, $Z_1$ and $Z_2$ are disjoint and there is no edge connecting a vertex of $Z_1$ and a vertex of $Z_2$.
\end{claim}
\begin{proof}
The fact that $G[Z_i]$ is connected, the disjointness of $Z_1$ and $Z_2$, and the fact that there is no edge between $Z_1$ and $Z_2$ follows directly
from the definition of $Z_i$, as $Z_i \subseteq N_i$.
Furthermore, $Z_i \subseteq N_i$ implies $Z_i \subseteq N[p_i,q_i,r_i] \setminus N(R)$.
Finally, the fact that $q_i$ and $M_i^p$ are contained in $Z_i$ follows from the fact that both $q_i$ and $M_i^p$ are disjoint from $N[N_{3-i}]$ by Claim~\ref{cl:capture-mesh:Ni},
$N_{3-i} \subseteq N[p_{3-i},q_{3-i},r_{3-i}] \setminus O_S$, and that $q_i$ is complete to $M_i^p$.
\cqed\end{proof}
Claim~\ref{cl:capture-mesh:Zi} concludes the proof of the lemma.
\end{proof}

\subsection{Chopping into recognizable segments}\label{ss:chopping}

Let us now gather a family of candidate minimal separators that were recognized in lemmas
so far.

Before we list the candidate minimal separators, we note that it is straightforward to
verify in polynomial time if a given set $A \subseteq V(G)$ is a potential segment. 
In the list below, whenever we specify that we perform some action for a potential segment
from some set, we implicitly mean that we discard elements from the said set that
are not potential segments.
Note that potential maximal cliques, promised to be listed in
the family $\Ff_\mathrm{ind}$ of Section~\ref{ss:Nv} or the sets 
$\Omega \cup D_1 \cup D_2$, promised to be listed in Lemma~\ref{lem:Omplus},
are potential segments.

Into a family $\Ss$ we insert every set that is a minimal
separator in $G$ and is defined in one of the following fashions:
\begin{enumerate}
\item $N(D)$ for $D$ from one of the following families:
\begin{enumerate}
\item $\Ff_1$ from Lemma~\ref{lem:two-not-whole};
\item $\Ff_2$ from Lemma~\ref{lem:one-in-three-see};
\item $\Ff_4$ from Lemma~\ref{lem:capture-nonmesh};
\item $\Ff_8$ from Lemma~\ref{lem:Omplus};
\end{enumerate}
\item $N(D)$ for every $D \in \cc(G-\Gm)$ for potential segments $\Gm$ from one of the
following families:
\begin{enumerate}
\item $\Ff_8$ from Lemma~\ref{lem:Omplus};
\item $\Ff_\mathrm{ind}$ from Section~\ref{ss:Nv};
\end{enumerate}
\item $N(D)$ for every $D \in \cc(G-N[A])$ for potential segments $N[A]$ (cf. Lemma~\ref{lem:segment-NA}) for connected $A$ defined as follows:
\begin{enumerate}
\item for every $D^+ \in \Ff_5$ from Lemma~\ref{lem:capture-mesh-nomesh}, if $D^+$ is a mesh,
  we iterate over every $A = M \cup \{q\}$ for a maximal proper strong module $M$ of $G[D^+]$
  and $q \in D^+ \setminus M$;
\item $A = Z_1$ and $A= Z_2$ for every element $(Z_1,Z_2) \in \Ff_9$ from Lemma~\ref{lem:rozrywanie}.
\end{enumerate}
\end{enumerate}
Let $\Ff = \bigcup \{\cc(G-S) : S \in \Ss\}$.
Our goal is to show that one can choose an $I$-clean minimal chordal completion $F$ of $G$
such that the separators of $\Ss$ chop $G+F$ into relatively simple pieces.

\begin{lemma}\label{lem:FfX}
One can compute in polynomial time a family $\Ff_X$ of polynomial size, such that the following holds.
Let $G+F$ be an $I$-clean minimal chordal completion with the maximum possible number of minimal separators
that are elements of $\Ss$.
Let $(T,\beta)$ be a clique tree of $G+F$ and let $E_\Ss \subseteq E(T)$ be those edges of $T$
whose adhesions (that are minimal separators of $G+F$ due to Lemma~\ref{lem:chordal-minsep-adh})
belong to $\Ss$. 
Then there exists $E'_\Ss \supseteq E_\Ss$ such that
for every connected component $T'$ of $T-E_\Ss'$, the potential segment $\Gm(T')$%
\footnote{$\Gm(T')$ is a potential segement respected by $F$ due to Lemma~\ref{lem:ctree-segment}.}
belongs to $\Ff_X$.
\end{lemma}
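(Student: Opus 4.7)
The plan is to take $\Ff_X$ as the union of several polynomial-size families, each capturing one possible shape of an $\Ss$-atomic segment. Concretely, $\Ff_X$ will contain: (a)~every singleton from $\Ff_I\cup\Ff^1_9$, covering the case when $\Gm(T')$ is a single PMC already recognized; (b)~every first coordinate $\Om\cup D_1\cup D_2$ of a triple from $\Ff^2_9$; (c)~segments of the form $\Gamma_{F,D}$ built by the footprint-replacement construction of Section~\ref{ss:modif-mesh}, one for each candidate fuzzy version $D^+$ (from $\Ff_6$ or from the fuzzy versions recorded in $\Ff^2_9$), each choice of a maximal proper strong module $M$ of $G[D^+]$, and each $q\in D^+\setminus M$; and (d)~segments bounded by the separators $\partial Z_i$ for $(Z_1,Z_2)\in\Ff_9$ on the ``inside'' of $N[Z_i]$. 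All of these form polynomially many sets, yielding $|\Ff_X|\le\mathrm{poly}(n)$.

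I would fix an $I$-free, $I$-clean minimal chordal completion $F$ of $G$ maximizing the number of minimal separators of $G+F$ that lie in $\Ss$, and pick an arbitrary clique tree $(T,\beta)$ of $G+F$. For any component $T'$ of $T-E_\Ss$, I analyze its bags. First, an $I$-crossing bag $\Om=\beta(x)$ must lie in $\Ff_I$ by $I$-cleanness (Lemma~\ref{lem:Nv-replace}); by the contribution of $\Ff_I$ to $\Ss$, every adhesion at $x$ is already in $\Ss$, forcing $T'=\{x\}$ and $\Gm(T')=\Om\in\Ff_I$, handled by clause~(a). Hence I may assume every bag of $T'$ is an $I$-free PMC.

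For such a bag $\Om=\beta(x)$, I split on the number of neighborhood-maximal components of $G-\Om$. If there are at least three, Lemma~\ref{lem:summary} applies: in the sub-case $\Om\in\Ff^1_9$ I add to $E'_\Ss$ all edges of $T$ incident to $x$, isolating $\{x\}$ and using that the lemma only requires $E'_\Ss\supseteq E_\Ss$ (so new adhesions need not themselves lie in $\Ss$), and then $\Gm(\{x\})=\Om$ is covered by clause~(a); in the sub-case $(\Om\cup D_1\cup D_2, D_1^+, D_2^+)\in\Ff^2_9$ with meshes $D_1,D_2$, I apply Lemma~\ref{lem:rozrywanie} to the pair of full mesh components $D_1,D_2$ of the minimal separator $N(D_1)\cap N(D_2)$, obtaining a pair $(Z_1,Z_2)\in\Ff_9$; the separators $\partial Z_i$ lie inside $\Gm(T')$, and placing them into $E'_\Ss$ cuts $T'$ into a central component with $\Gm=\Om\cup D_1\cup D_2$, covered by clause~(b), and two flanking components lying inside $N[Z_i]$, hence covered by clause~(d).

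In the remaining case every bag of $T'$ is an $I$-free PMC with at most two neighborhood-maximal components in $G-\Om$, and Lemmata~\ref{lem:two-not-whole} and~\ref{lem:one-in-three-see} together with the contributions of $\Ff_1,\Ff_2,\Ff_5,\Ff_8$ to $\Ss$ force the two ``external'' full components on the two sides of each adhesion inside $T'$ to be meshes. For each such mesh full component $D$, a fuzzy version is available either via Lemma~\ref{lem:capture-mesh-nomesh} (the non-mesh-on-the-other-side subcase, contributing $\Ff_6$) or Lemma~\ref{lem:Omplus} (the both-meshes subcase, contributing via the fuzzy-version coordinates of $\Ff^2_9$), giving polynomially many candidates for the pair $(M_{F,D},q_{F,D})$ and hence for the canonical separator $S'_{F,D}=\partial(M_{F,D}\cup\{q_{F,D}\})$ from Section~\ref{ss:modif-mesh}. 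Adding these $S'_{F,D}$-based separators to $E'_\Ss$ chops $T'$ into segments covered by clause~(c). The main obstacle will be justifying that the underlying \emph{footprint replacement}---needed for $S'_{F,D}$ to actually become a minimal separator of the ambient chordal graph---does not erase any separator of $\Ss$ already present in $G+F$, so that the maximality assumption on $F$ is not violated; this is where Lemma~\ref{lem:force-separator} will be invoked, exploiting Corollary~\ref{cor:meshsep} to show that all the relevant $\Ss$-separators inside $\Gamma_{F,D}$ form a chain and can thus be simultaneously realised as minimal separators by a single minimal completion of $\torso(\Gamma_{F,D})$.
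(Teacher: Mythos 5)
Your plan founders on a basic point about what $E'_\Ss$ is allowed to be. The set $E'_\Ss$ must be a set of \emph{edges of the given clique tree} $(T,\beta)$ of $G+F$, so the only separators you can ``cut along'' are adhesions of $T$, i.e.\ minimal separators of $G+F$ itself. The separators you propose to cut with --- $\partial Z_i$ for $(Z_1,Z_2)\in\Ff_9$, and the canonical separators $S'_{F,D}=\partial(M_{F,D}\cup\{q_{F,D}\})$ from the footprint replacement --- are in general \emph{not} minimal separators of $G+F$; making them separators requires replacing the completion inside a segment (Lemma~\ref{lem:force-separator}), and in the paper this replacement is used only to derive a \emph{contradiction} with the extremal choice of $F$ (maximum number of $\Ss$-separators), never to subdivide the actual tree $T$. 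Your closing remark that the ``main obstacle'' is showing the replacement does not erase $\Ss$-separators has the logic inverted: the replacement is exactly the tool that shows the bad configuration (both sides of an edge of $T'$ containing two or more components of $G-\Gm(T')$, cf.\ Claim~\ref{cl:FfX:S-cut}) cannot occur in the extremal $F$. Relatedly, whenever you do add legitimate tree edges to $E'_\Ss$ (e.g.\ isolating a node $x$ with $\beta(x)\in\Ff^1_9$), every \emph{other} component of $T-E'_\Ss$ created by these cuts must also have its segment in $\Ff_X$; your proposal covers only the piece you cut out, not the remnants.

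Beyond this, the structural backbone of the argument is missing. The paper first uses the contribution of $\Ff_2$ to $\Ss$ to show every component $T'$ of $T-E_\Ss$ is a path (degree at most $2$), then proves the exchange claim that each internal edge of $T'$ has at most one ``big'' side, which yields a central node $t_0$ such that every component of $G-\beta(t_0)$ contains at most one component of $G-\Gm(T')$; the segment is then pinned down by a three-way case analysis at $t_0$: the endpoint case via Lemma~\ref{lem:monster} and $\Ff_{\rec,1}$, the two-mesh case via $\Ff_8$ (subtracting the at most two residual components, which lie in $\Ff$), and the non-mesh case via the Separator Covering Lemma combined with the maximality of $I$ and $\Gm(T')\cap I=\emptyset$ to recover $R$ and hence $\Gm(T')$. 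None of these steps appears in your proposal; in particular your final catch-all case (claiming the full components on both sides of each internal adhesion are ``forced'' to be meshes, and then enumerating footprints) is not supported by Lemmata~\ref{lem:two-not-whole} and~\ref{lem:one-in-three-see}, and even where both sides are meshes the paper handles the situation by the maximality-of-$F$ contradiction rather than by enumerating candidate segments $\Gamma_{F,D}$, precisely because the footprint data $(M_{F,D},q_{F,D})$ alone does not let you cut the \emph{given} tree.
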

\begin{proof}
Consider a connected component $T'$ of $T-E_\Ss$; our first goal is to show that
$T'$ is quite simple since $\Ss$ is rich enough. 
First, note that the inclusion of all separators of components from $\Ff_2$ from
Lemma~\ref{lem:one-in-three-see} imply that the maximum degree of a node in $T'$ is at most $2$,
that is, $T'$ is an isolated vertex or a path. 

If $T'$ is an isolated vertex, then $\Gm(T')$ is a PMC in $G$.
We claim that for every $D \in \cc(G-\Gm(T'))$ the set $N(D)$ is included in $\Ss$.
To this end, consider arbitrary $D \in \cc(G-\Gm(T'))$.
By the properties of a tree decomposition, there exists a component $T_D$
of $T-V(T')$ whose bags contain all vertices of $D$. 
Let $e_D$ be the edge connecting $T_D$ with $T'$.
Then either $N(D) = \adh(e_D)$ and then $N(D) \in \Ss$ due to the definition of $E_\Ss$
or $N(D) \subsetneq \adh(e_D)$ and then $N(D) \in \Ss$ due to the inclusion of
the sets $N(D)$ for $D \in \Ff_1$ of Lemma~\ref{lem:two-not-whole}.

Consequently, we infer that $\Gm(T')$
is contained in the family $\Ff_{\rec,1}(\Ff)$ given by Lemma~\ref{lem:recover1}.
We include this family in $\Ff_X$,
and continue with the case when $T'$ is a path.

\begin{claim}\label{cl:FfX:noI}
$\Gm(T') \cap I = \emptyset$.
\end{claim}
\begin{proof}
Assume the contrary. Then there exists $t \in V(T')$ with some $u \in I \cap \beta(t)$.
However, since $F$ is $I$-clean, we have that the maximal clique $\beta(t) \in \Ff_\mathrm{ind}$,
where $\Ff_\mathrm{ind}$ is defined in Section~\ref{ss:Nv}.
Then the minimal separators associated with all edges incident to $t$ belong to $\Ss$,
a contradiction with the assumption that $T'$ is not an isolated vertex.
\cqed\end{proof}

Let $t_1t_2 \in E(T')$, let $T_i$ be the component of $T-t_1t_2$ that
contains $t_i$, and similarly let $T_i'$ be the component of $T'-t_1t_2$ that contains $t_i$.
Let $S = \beta(t_1) \cap \beta(t_2)$ be the minimal separator of $G+F$ corresponding to the
edge $t_1t_2$.
\begin{claim}\label{cl:FfX:D1D2}
For $i=1,2$, there exists a unique component $D_i$ of $G-S$ that contains $\bigcup_{t' \in V(T_i')}
\beta(t') \setminus S$. Furthermore, $D_1$ and $D_2$ are full to $S$.
\end{claim}
\begin{proof}
Let $D_i$ be the component of $G-S$ that contains $\beta(t_i) \setminus S$; such a component
exists and is full to $S$ due to Proposition~\ref{prop:adh}.

Let $t_i = t_i^1, t_i^2, \ldots, t_i^{\ell_i}$ be consecutive vertices on the path $T_i'$.
Denote $t_i^0 = t_{3-i}$.
The crucial observation is that for every $1 \leq j < \ell_i$, 
we have $\adh(t_i^{j-1} t_i^j) \cup \adh(t_i^j t_i^{j+1}) = \beta(t_i^j)$,
   $\adh(t_i^{j-1} t_i^j) \not\subseteq  \adh(t_i^j t_i^{j+1})$, and
we have $\adh(t_i^{j-1} t_i^j) \not\supseteq \adh(t_i^j t_i^{j+1})$,
as otherwise $\adh(t_i^{j-1} t_i^j)$ or $\adh(t_i^j t_i^{j+1})$
would be contained in $\Ss$ due to the inclusion of 
$\{N(D) : D \in \Ff_1\}$ for $\Ff_1$ from Lemma~\ref{lem:two-not-whole}.

We show by induction that for every $1 \leq j \leq \ell_i$, the component $D_i$
\begin{itemize}
\item contains the whole set $\beta(t_i^j) \setminus S$, and
\item contains at least one vertex of $\adh(t_i^jt_i^{j+1})$ whenever $j < \ell_i$.
\end{itemize}
For the base case, $\beta(t_i^1) \setminus S = \beta(t_i) \setminus S$ is contained in $D_i$
by the definition of $D_i$ and, if $\ell_i > 1$, then
$\beta(t_i^1) \setminus S = \adh(t_i^1t_i^2) \setminus S$ by the above observation.

By the properties of a tree decomposition, $S \cap \beta(t_i^j) \subseteq \adh(t_i^jt_i^{j-1})$
for every $1 \leq j \leq \ell_i$. 
Consequently, if $\beta(t_i^j) \setminus S \subseteq D_i$ for some $1 \leq j < \ell_i$, 
then $\adh(t_i^jt_i^{j+1})$ contains a vertex of $D_i$ as it contains $\beta(t_i^j) \setminus \adh(t_i^jt_i^{j-1})$.

Consider now an index $1 < j \leq \ell_i$ and assume $D_i$ contains a vertex $v \in \adh(t_i^jt_i^{j-1})$. Let $D_i^j$ be the component of $G-\adh(t_i^jt_i^{j-1})$ that contains $\beta(t_i^j) \setminus \adh(t_i^jt_i^{j-1})$ (cf. Proposition~\ref{prop:adh}). 
Note that $D_i^j \cap S = \emptyset$, thus $D_i^j$ is contained in a single connected component
of $G-S$. Since $D_i^j$ is full to $\adh(t_i^jt_i^{j-1})$, we have $v \in N(D_i^j)$, and thus
$D_i^j$ is contained in $D_i$. 
This finishes the proof of the claim.
\cqed\end{proof}

Note that $S \subseteq \Gm(T')$, and thus each component of $G-\Gm(T')$
is a subset of some component of $G-S$.
\begin{claim}\label{cl:FfX:S-Di-sep}
Let $D_1$ and $D_2$ be as in Claim~\ref{cl:FfX:D1D2}.
Let $i \in \{1,2\}$ be such that $D_i$ is a mesh and at least two connected components
of $G-\Gm(T')$ are contained in $D_i$.
Then, there exists a footprint $(S_{F,D_i}, \Om_{F,D_i}, M_{F,D_i}, q_{F,D_i})$ of $D_i$ such that the following holds:
$D_i \setminus \Gm(T') \subseteq M_{F,D_i}$
and either $S_{F,D_i} = S$ or $S_{F,D_i}=\adh(e_i)$ for some $e_i \in E(T_i')$.
\end{claim}
\begin{proof}
For ease of presentation, let $i=1$, that is, $D_1$ is a mesh and $D_1$ contains
at least two connected components of $G-\Gm(T')$.

First note that if $D_1 \setminus \Gm(T')$ consists of at least
two connected components of $G-\Gm(T')$, then $D_1 \setminus \Gm(T')$ is contained
in a unique maximal proper strong module of $D_1$. Let us denote this module by $M_1$.

Recall that the separator $S_{F,D_1}$ is defined uniquely. 
By Lemma~\ref{lem:chordal-minsep-adh}, there exists an edge $s^1s^2 \in E(T)$
with $\adh(s^1s^2) = \beta(s^1) \cap \beta(s^2) = S_{F,D_1}$ and
$\beta(s^1) \subseteq N[D(S_{F,D_1})] \subseteq N[D_1]$.
By applying Proposition~\ref{prop:adh} to the edge $s^1s^2$, there exist components $D^1$ and $D^2$, that are full to $S_{F,D_1}$.

If $S = S_{F,D_1}$, then we can take $(s^1, s^2) = (t_1, t_2)$ and have $(D^1, D^2) = (D_1,D_2)$.
Otherwise, Lemma~\ref{lem:meshsep} asserts that $D(S_{F,D_1})$ and $\widehat{D}(S_{F,D_1})$
are the only two full components of $S_{F,D_1}$, and we have $D(S_{F,D_1}) = D^1$ and
$\widehat{D}(S_{F,D_1}) = D^2$.
This implies that in both cases, $D^1$ contains vertices from at least two maximal proper
strong modules of $D_1$ and, consequently, contains a vertex of $D_1 \cap \Gm(T')$
as $D_1 \setminus \Gm(T') \subseteq M_1$.
Also, $D^2 \supseteq D_2$ and, since $\beta(t_2) \setminus S \subseteq D_2$ and $\beta(t_2) \subseteq \Gm(T')$, $D_2 \cap \Gm(T') \neq \emptyset$ and 
we have that $D^2$ contains a vertex of $D_2 \cap \Gm(T')$. 
We infer that $s^1s^2$ is an edge of $T'$. Furthermore, if $S_{F,D_1} \neq S$ (i.e., $s^1s^2 \neq t_1t_2$), then
$S_{F,D_1}$ contains a vertex of $D_1$ and, consequently, $s^1s^2$ is an edge of $T_1'$.

We proceed to constructing a footprint $(S_{F,D_i}, \Om_{F,D_i}, M_{F,D_i}, q_{F,D_i})$
of $D_i$ as in the claim statement.
Note that we can take $\Om_{F,D_i} = \beta(s^1)$.
It remains to show that we can take $M_{F,D_1} = M_1$ in Lemma~\ref{lem:MFD}
for a footprint of the mesh component $D_1$ of $G-S$,
as the latter choice of $q_{F,D_i}$ is arbitrary from $D(S_{F,D_1}) \setminus M_{F,D_1}$.
That is, we need to show that the maximal proper strong module $M_1$ of $D_1$
that contains $D_1 \setminus \Gm(T')$ also contains all components $D'$ of $G-\beta(s^1)$
with $N(D')\not\subseteq S_{F,D_1}$. 

Recall that $D^1 = D(S_{F,D_1})$.
Since $\beta(s^1) \setminus S_{F,D_i} \subseteq D^1$, all such connected components $D'$
are subsets of $D^1 \setminus \beta(s^1)$. 
By the definition of $S_{F,D_1}$, every connected component $D'$ of $G-\beta(s^1)$ 
with $D' \subseteq D^1$ lies in a single maximal proper strong module of $D_1$. 
Hence, as $\Quo(D_1)$ is a clique, 
there exists a maximal proper strong module of $D_1$ that contains $D^1 \setminus \beta(s^1)$.

As $s^1 \in V(T')$, we have $\beta(s^1) \cup S \subseteq \Gm(T')$, that is,
$D_1 \setminus \Gm(T') \subseteq D_1 \setminus \beta(s^1)$.
Note that $D^1 = D_1 \setminus S_{F,D_1}$, and hence $D^1 \setminus \beta(s^1) = D_1 \setminus \beta(s^1)$. Hence, $D_1 \setminus \Gm(T') \subseteq D^1 \setminus \beta(s^1)$.
Consequently, $M_1$ is the maximal proper strong module of $D_1$ that contains $D^1 \setminus \beta(s^1)$. This finishes the proof that we can take $M_{F,D_1} = M_1$ for a footprint of $D_1$
in $G-S$, finishing the proof of the claim.
\cqed\end{proof}

\begin{claim}\label{cl:FfX:S-cut}
At most one of the components $D_1$ and $D_2$
from Claim~\ref{cl:FfX:D1D2} contains more than one component of $G-\Gm(T')$.
Furthermore, if the component $D_i$ is not a mesh for $i=1,2$, then
the component $D_{3-i}$ contains at most one component of $G-\Gm(T')$.
\end{claim}
\begin{proof}
Assume one of the two claims of the lemma is not true for separator $S$ with components $D_1$ and $D_2$.
Since the family $\Ff_4$ from Lemma~\ref{lem:capture-nonmesh} is taken into account
in the construction of $\Ss$, we have that either $D_1$ or $D_2$ is a mesh.
Note that if there exists $i \in \{1,2\}$ such that $D_i$ is a mesh but $D_i$ contains at most one component of $G-\Gm(T')$, then both of the claims are automatically satisfied.
Hence, we are interested only in the following case: at least one of $D_1$ or $D_2$ is a mesh,
  and, for every $i=1,2$ such that $D_i$ is a mesh, $D_i$
  contains at least two connected components of $G-\Gm(T')$.

For $i=1,2$ such that $D_i$ is a mesh, we invoke Claim~\ref{cl:FfX:S-Di-sep}, yielding
and edge $e_i \in E(T_i') \cup \{t_1t_2\}$ with $\adh(e_i) = S_{F,D_i}$.
Furthermore, if $D_i$ is not a mesh, we define $e_i = t_1t_2$.
We define a potential segment $\Gm$ to be equal $S$ if
$e_1=e_2=t_1t_2$ and otherwise $\Gm = \Gm(T'')$ where $T''$ is the subpath of $T'$
between $e_1$ and $e_2$.
We would like to replace $F$ with $F[\Gm \to F_\Gm]$ for some minimal completion $F_\Gm$
of $\torso(\Gm)$ that includes at least one separator of $\Ss$; such a replacement would
contradict the maximality of $F$.
Furthermore, note that as $T''$ does not contain any edge that corresponds to a separator
from $\Ss$, every separator $S' \in \Ss$ that is a minimal separator of $G+F$
is either not contained in $\torso(\Gm)$ or contained in $N(D)$ for some $D \in \cc(G-\Gm)$.
Consequently, for any such replacement $F[\Gm \to F_\Gm]$, every minimal separator
$S' \in \Ss$ that is a minimal separator of $G+F$ will also be a minimal separator
of $G+F[\Gm \to F_\Gm]$. Our goal is to introduce at least one new minimal separator
from $\Ss$ into $G+F[\Gm \to F_\Gm]$.

Without loss of generality, let $D_1$ be a mesh. 
We distinguish cases depending on whether $D_2$ is a mesh or not.

\medskip

\noindent\textbf{Case 1: $D_2$ is not a mesh.}
Here, we have $\Gm = S_{F,D_1} \cup S$, and we perform
the footprint replacement for the component $D_1$ and the footprint $(S_{F,D_1}, \Om_{F,D_1}, M_{F,D_1}, q_{F,D_1})$
whose existence is asserted by Claim~\ref{cl:FfX:S-Di-sep}.
That is, we pick $F_\Gm$ to be any minimal completion of $\torso(\Gm)$
that keeps the separator $S'_{F,D} = \partial (M_{F,D_1} \cup \{q_{F,D_1}\})$ as a minimal separator
of $G+F[\Gm \to F_\Gm]$.
Since a fuzzy version of $D_1$ is included in the family $\Ff_5$
output by Lemma~\ref{lem:capture-mesh-nomesh}, we have $S'_{F,D} \in \Ss$.

Note that this case completely covers the second statement of the lemma.

\medskip

\noindent\textbf{Case 2: $D_2$ is a mesh.}
Let $(S_{F,D_i}, \Om_{F,D_i}, M_{F,D_i}, q_{F,D_i})$ be a footprint of $D_i$ for $i=1,2$, whose existence is asserted by Claim~\ref{cl:FfX:S-Di-sep}.

We have $\Gm = S_{F,D_1} \cup S \cup S_{F,D_2}$. Consequently,
we have that the connected components of $G-\Gm$ are $D(S_{F,D_1})$, 
$D(S_{F,D_2})$ and all connected components of $G-(S \cup D_1 \cup D_2)$.
Henceforth, we denote $R = V(G) \setminus (S \cup D_1 \cup D_2)$.

By Lemma~\ref{lem:rozrywanie}, the family $\Ff_9$ contains 
a pair $(Z_1,Z_2)$ for
the separator $S$ with components $D_i$, modules $M_{F,D_i}$, and elements $q_{F,D_i}$. 

The first observation is that $\partial Z_i \subseteq \Gm$ for $i=1,2$.
Indeed, first note that Corollary~\ref{lem:SFD-def} 
applied to the footprint $(S_{F,D_i}, \Om_{F,D_i}, M_{F,D_i}, q_{F,D_i})$,
together with the assumption
$M_{F,D_i} \subseteq Z_i$, $q_{F,D_i} \in Z_i$ implies that
$$D(S_{F,D_i}) \subseteq N[A'_{F,D_i}] \setminus \partial A'_{F,D_i}
\subseteq N[Z_i] \setminus \partial Z_i.$$
Together with the fact that there are no edges between $Z_1$ and $Z_2$, we obtain
that $D(S_{F,D_{3-i}})$ is disjoint from $N[Z_i]$. 
Finally, the property that $Z_i$ is disjoint from $N(R)$ implies that $N[Z_i]$ is disjoint
from $R$. This finishes the proof that $\partial Z_i \subseteq \Gm$.

Let 
$$\Ss' = \{N(D) : D \in \cc(G-N[Z_1]) \cup \cc(G-N[Z_2])\}.$$
We claim that the separators of $\Ss'$ are pairwise noncrossing. 
Indeed, consider two separators $S^1 = N(D^1)$ and $S^2 = N(D^2)$ from $\Ss'$.
By symmetry, we can assume $D^1 \in \cc(G-N[Z_1])$ and $D^2 \in \cc(G-N[Z_j])$
for some $j=1,2$. Then, the assumption that there are no edges from $Z_1$ to $Z_2$ implies
that $S^1 \cap D^2 = \emptyset$.
Consequently, there exists a unique connected component of $G-S^1$
that contains $D^2$ and $N(D^2) \setminus S^1$, witnessing that $S^2$ and $S^2$ do not cross.

Since $\Ss'$ are pairwise noncrossing family of minimal separators
contained in $\Gm$, Lemma~\ref{lem:force-separator}
asserts an existence of a minimal chordal completion $F_\Gm$ of $\torso(\Gm)$ such that
$\Ss'$ is a subset of the family of all minimal separators of $G+F[\Gm \to F_\Gm]$. 
This finishes the proof, as $\Ss' \subseteq \Ss$.
\cqed\end{proof}

Claim~\ref{cl:FfX:S-cut} implies the following:
\begin{claim}\label{cl:FfX:t0}
There exists a node $t_0 \in V(T')$ such that every connected component of $G-\beta(t_0)$
contains at most one component of $G-\Gm(T')$.
\end{claim}
\begin{proof}
For an edge $t_1t_2 \in E(T')$ and components $D_1$, $D_2$ defined as in Claim~\ref{cl:FfX:S-cut},
we say that $t_i$ is a \emph{big} side of $t_1t_2$ if $D_i$ contains more than one connected component of $G-\Gm(T')$,
and \emph{small} otherwise.
Claim~\ref{cl:FfX:S-cut} asserts that every edge of $T'$ has at most one big side.
Furthermore, Claim~\ref{cl:FfX:D1D2} implies that if $t_1,t_2,t_3$ are three consecutive nodes of $T'$, then
if $t_2$ is a small side of $t_2t_3$, then $t_1$ is a small side of $t_1t_2$.
Consequently, there exists a node $t_0 \in V(T')$ such that for every $t' \in N_{T'}(t_0)$ we have that
$t'$ is a small side of the edge $t't_0$; such a node satisfies the conditions of the claim.
\cqed\end{proof}

We need also one more simple observation:
\begin{claim}\label{cl:FfX:DGamma}
For every $D \in \cc(G-\Gamma(T'))$, the component $D$ belongs to $\Ff$.
\end{claim}
\begin{proof}
There exists a unique component $T^\Gm$
of $T-V(T')$ such that the vertices of $D^\Gm$ appear only in bags of $T^\Gm$.
Let $e^\Gm$ be the edge of $T$ connecting $T'$ with $T^\Gm$.
Then $e^\Gm \in E_\Ss$ by the definition of $T'$.
Furthermore, $N(D^\Gm) \subseteq \adh(e^\Gm)$
and $D^\Gm$ is a connected component of $G-\adh(e^\Gm)$.
We infer that $D^\Gm \in \Ff $ by the construction of $\Ff$.
\cqed\end{proof}

We now distniguish a number of cases, depending on the structure around the node $t_0$ and the maximal clique $\beta(t_0)$.
For brevity, we denote $\Gamma := \Gamma(T')$.

\medskip
\noindent\textbf{Case 1: $t_0$ is an endpoint of $T'$.}
Informally speaking, in this case
we have all but one separators of $\{N(D) : D \in \cc(G-\beta(t_0))\}$ in $\Ss$, so Lemma~\ref{lem:monster}
does the job for us.

More formally, let $t_1$ be the unique neighbor of $t_0$ in $T'$, let $S = \beta(t_0) \cap \beta(t_1)$
be the minimal separator corresponding to the edge $t_0t_1$, and let $D_1$
be the unique component of $G-S$ that contains $\Gm \setminus \beta(t_0)$ (its existence is asserted by Claim~\ref{cl:FfX:D1D2}).
As $D_1$ is a component of $G-\beta(t_0)$ as well, there exists at most one component $D_1^\Gm$ of $G-\Gm$
that is contained in $D_1$; we take $D_1^\Gm = \emptyset$ if no such component exists.

Let $D \in \cc(G-\beta(t_0)) \setminus \{D_1\}$; we claim that $N(D) \in \Ss$.
If $N(D) \subseteq N(D_1)$, then $N(D) \in \Ss$ as we have taken into account the family 
$\Ff_1$ from Lemma~\ref{lem:two-not-whole};
Otherwise, $N(D)$ contains an element of $\beta(t_0) \setminus S$, $D$ is not a connected component of $G-\beta(t_1)$, 
and, consequently, $N(D) \subseteq S'$ for some minimal separator $S'$ associated with an edge $e$ of $T$ incident 
with $t_0$, but $e \neq t_0t_1$. Hence, if $N(D) \subsetneq S'$, we have $N(D) \in \Ss$
again due to the family $\Ff_1$ from Lemma~\ref{lem:two-not-whole},
and otherwise $S' = N(D) \in \Ss$ as $t_0$ is an endpoint of the path $T'$.

Let $\Ff = \bigcup \{\cc(G-S) : S \in \Ss\}$.
We infer that the family $\Ff_7(\Ff)$ from Lemma~\ref{lem:monster} contains $D_1$, while
$\Ff$ contains $D_1^\Gm$ if it is not empty. 
Hence, if we include $N[D_1] \setminus D_1^\Gm$
for every $D_1 \in \Ff_7(\Ff)$ and $D_1^\Gm \in \Ff \cup \{\emptyset\}$ in the desired family $\Ff_X$
as well as $\Ff_{\rec,1}(\Ff_7(\Ff))$ from Lemma~\ref{lem:recover1},
the family $\Ff_X$ contains both $\beta(t_0)$ and $N[D_1] \setminus D_1^\Gm$.
This finishes the proof, as the latter is the segment $\Gm(T'-t_0)$.

\medskip
In the remaining cases, $t_0$ is not an endpoint of $T'$, and thus has
two neighbors $t_1,t_2 \in N_{T'}(t_0)$. 
For $i=1,2$, let $S_i = \beta(t_0) \cap \beta(t_i)$ be the minimal separator associated with the edge $t_0t_i$,
let $T_i'$ be the connected component of $T'-t_0$ that contains $t_i$,
let $D_i$ be the connected component of $G-\beta(t_0)$ that contains $\bigcup_{t' \in T_i'} \beta(t') \setminus \beta(t_0)$
(cf. Claim~\ref{cl:FfX:D1D2}), 
  and let $D_i^\Gm$ be the connected component of $G-\Gm$ that is contained in $D_i$
(or $D_i^\Gm = \emptyset$ if no such component exists).
By Claim~\ref{cl:FfX:DGamma}, $D_i^\Gm \in \Ff \cup \{\emptyset\}$ for $i=1,2$.

\medskip
\noindent\textbf{Case 2: both $D_1$ and $D_2$ are meshes.}
Here, we rely on Lemma~\ref{lem:Omplus} to guess $D_1 \cup D_2 \cup \beta(t_0)$.

More formally, since $S_1,S_2 \notin \Ss$, neither $D_1$ nor $D_2$ belongs
to the family $\Ff_8$ from Lemma~\ref{lem:Omplus}.
However, then $\beta(t_0) \cup D_1 \cup D_2 \in \Ff_8$.
As $D_i^\Gm \in \Ff \cup \{\emptyset\}$, we can resolve this case by inserting into $\Ff_X$ the set
$A \setminus (D_1^\Gm  \cup D_2^\Gm)$ for every choice of $A \in \Ff_8$
and $D_1^\Gm, D_2^\Gm \in \Ff \cup \{\emptyset\}$, as this set is equal to $\Gm$ provided $A=\beta(t_0) \cup D_1 \cup D_2$.

\medskip
\noindent\textbf{Case 3: $D_1$ or $D_2$ is not a mesh.}
Without loss of generality, assume that $D_1$ is not a mesh.
Let $\overline{D}_2$ be the connected component of $G-S_1$ that contains $D_2$.
Note that $\overline{D}_2$ is full to $S_1$.
By Claim~\ref{cl:FfX:S-cut}, $\overline{D}_2$ contains at most one component of $G-\Gm$;
denote it by $\overline{D}_2^\Gm$ if it exists, and take $\overline{D}_2^\Gm = \emptyset$ otherwise.

We use the Separator Covering Lemma (Lemma~\ref{lem:covering-general}) to pick
two vertices $p_1,q_1 \in D_1$ and $p_2,q_2 \in \overline{D}_2$ such that
$S_1 \subseteq N[p_1,q_1,p_2,q_2] \subseteq S_1 \cup D_1 \cup \overline{D}_2$.

Let $R = V(G) \setminus (D_1 \cup S_1 \cup \overline{D}_2)$.
We claim that
\begin{equation}\label{eq:case3-1}
V(G) = R \cup N[p_1,q_1,p_2,q_2] \cup N[D_1^\Gm] \cup N[\overline{D}_2^\Gm].
\end{equation}
Assume the contrary, let $x$ be a vertex not included in any of the four sets of the right hand side.
Recall that $I \cap \Gm = \emptyset$, which implies that $I \cap D_1 = I \cap D_1^\Gm$ and $I \cap \overline{D}_2 = I \cap \overline{D}_2^\Gm$.
Since $N(R) \subseteq S_1 \subseteq N[p_1,q_1,p_2,q_2]$, we have that $N[x] \cap I = \emptyset$, contradicting the maximality of $I$.
This proves~\eqref{eq:case3-1}.

Hence, as the sets $N[p_1,q_1,p_2,q_2]$, $N[D_1^\Gm]$, and $N[\overline{D}_2^\Gm]$ are disjoint with $R$, we have
\begin{equation}\label{eq:case3-2}
R = V(G) \setminus (N[p_1,q_1,p_2,q_2] \cup N[D_1^\Gm] \cup N[\overline{D}_2^\Gm]).
\end{equation}
That is,~\eqref{eq:case3-2} asserts that the set $R$ is fully determined by the vertices $p_1,q_1,p_2,q_2$ and components
$D_1^\Gm$ and $\overline{D}_2^\Gm$.
Recall that $D_1^\Gm, \overline{D}_2^\Gm \in \Ff \cup \{\emptyset\}$ due to Claim~\ref{cl:FfX:DGamma}. Consequently, we have $\Gm \in \Ff_X$ 
if we include into $\Ff_X$ a set
$$(N[p_1,q_1,p_2,q_2] \cup N[D_1^\Gm] \cup N[\overline{D}_2^\Gm]) \setminus (D_1^\Gm \cup \overline{D}_2^\Gm)$$
for every choice $p_1,q_1,p_2,q_2 \in V(G)$ and $D_1^\Gm, \overline{D}_2^\Gm \in \Ff \cup \{\emptyset\}$.

\medskip
This finishes the proof of the lemma.
\end{proof}

We conclude the proof 
of Theorem~\ref{thm:main} as follows.
We apply Lemma~\ref{lem:replacement} with the set $\Xx = \Ff_X$,
where $\Ff_X$ is given by Lemma~\ref{lem:FfX}, obtaining a family
$\Ff_\comp(\Ff_X)$ of polynomial size. 
Lemma~\ref{lem:FfX} asserts that $\Ff_X$ satisfies the prerequisites for 
Lemma~\ref{lem:replacement}, and thus $\Ff_\comp(\Ff_X)$ satisfies
the properties promised by Theorem~\ref{thm:main}.

\section{Conclusions}\label{sec:conc}
In this paper we have shown that the {\sc{Maximum Weight Independent Set}} problem is polynomial-time solvable on the class of $P_6$-free graphs.
The obvious open question is what the complexity of the problem is on $P_7$-free graphs and beyond.
Unfortunately, it seems that many of the most basic tools used in this work break in the $P_7$-free setting, with the most important case of the Separator Covering Lemma (Lemma~\ref{lem:covering-simple}).
Namely, it is simply not true that any minimal separator in a $P_7$-free graph can be covered by the union of neighborhoods of a constant number of vertices lying outside of this separator.
Curiously, it turns out that if we allow vertices covering the separator to lie within it, then the statement is again true in $P_7$-free graphs, and even can be generalized to covering PMCs.
However, such a relaxed statement seems of little use for our goal of capturing maximal cliques in an $I$-free minimal chordal completion.
Moreover, this relaxation turns out to be simply not true in $P_8$-free graphs.
All of the results claimed above will be covered in a separate note, in preparation.

Therefore, as far as we see some hope of lifting some of our techniques to $P_7$-free graphs, tackling $P_8$-free graphs seems to require a complete change of methodology.

\bibliographystyle{abbrv}

\bibliography{references}

\end{document}